\let\frak\mathfrak
\def\>{\relax\ifmmode\mskip.666667\thinmuskip\relax\else\kern.111111em\fi}
\def\<{\relax\ifmmode\mskip-.333333\thinmuskip\relax\else\kern-.0555556em\fi}
\def\vsk#1>{\vskip#1\baselineskip}
\def\vv#1>{\vadjust{\vsk#1>}\ignorespaces}
\def\vvn#1>{\vadjust{\nobreak\vsk#1>\nobreak}\ignorespaces}
\let\Medskip\medskip
\def\medskip{\par\Medskip}
\let\Bigskip\bigskip
\def\bigskip{\par\Bigskip}
\let\Maketitle\maketitle
\def\maketitle{\Maketitle\thispagestyle{empty}\let\maketitle\empty}
\newtheorem{thm}{Theorem}[section]
\newtheorem{cor}[thm]{Corollary}
\newtheorem{lem}[thm]{Lemma}
\newtheorem{defn}[thm]{Definition}
\numberwithin{equation}{section}
\theoremstyle{definition}
\newtheorem*{rem}{Remark}
\let\mc\mathcal
\let\nc\newcommand
\let\al\alpha
\let\la\lambda
\let\La\Lambda
\let\phi\varphi
\let\si\sigma
\let\Om\Omega
\let\der\partial
\let\ox\otimes
\let\leq\leqslant
\let\on\operatorname
\let\bi\bibitem
\let\bs\boldsymbol
\def\C{{\mathbb C}}
\def\Z{{\mathbb Z}}
\def\P{{\mathbb P}}
\def\B{{\mc B}}
\def\D{{\mc D}}
\def\F{{\mc F}}
\def\L{{\mc L}}
\def\+#1{^{\{#1\}}}
\def\End{\on{End}}
\def\Res{\on{Res}}
\def\rdet{\on{rdet}}
\def\Wr{\on{Wr}}
\def\gln{{\mathfrak{gl}_N}}
\def\gl{\mathfrak{gl}}
\def\sln{\mathfrak{sl}_N}
\def\sl{\mathfrak{sl}}
\def\Ugln{U(\gln)}
\def\beq{\begin{equation}}
\def\eeq{\end{equation}}
\def\be{\begin{equation*}}
\def\ee{\end{equation*}}
\nc{\bea}{\begin{eqnarray*}}
\nc{\eea}{\end{eqnarray*}}
\nc{\bean}{\begin{eqnarray}}
\nc{\eean}{\end{eqnarray}}
\nc{\Ref}[1]{{\rm(\ref{#1})}}
\def\h{{\mathfrak h}}
\nc{\Il}{{\mc I_{\bs\la}}}
\nc{\bla}{{\bs\la}}
\nc{\Fla}{\F_{\bs\la}}
\nc{\tfl}{{T^*\Fla}}
\nc{\GL}{{GL_n(\C)}}
\nc{\GLC}{{GL_n(\C)\times\C^*}}
\def\ka{{\kappa}}
\def\slt{{\frak{sl}_2}}
\nc{\glt}{{\frak{gl}_2}}
\def\KZ/{{\slshape KZ\/}}
\def\qKZ/{{\slshape qKZ\/}}
\def\XXX/{{\slshape XXX\/}}
\nc{\arr}{\rightarrow}
\nc{\larr}{\longrightarrow}
\nc{\A}{{\mc A}}
\nc{\Ax}{{\mc A(\xi)}}
\nc{\cdet}{{\on{cdet}\,}}
\def\ep{{\epsilon}}
\def\Fun{{\on{Fun}_{\sln}(V[0])}}
\begin{document}

\hrule width0pt
\vsk->

\hrule width0pt
\vsk->

\title[Bethe algebra,  eigenfunctions, and theta-polynomials]
{Dynamical elliptic Bethe algebra, \\
 KZB eigenfunctions, and theta-polynomials}

\author[D.\,Thompson and  A.\,Varchenko]
{Daniel Thompson$\>^\circ$  and Alexander Varchenko$\>^\star$}

\maketitle

\begin{center}
{\it $^{\circ, \star}\<$Department of Mathematics, University
of North Carolina at Chapel Hill\\ Chapel Hill, NC 27599-3250, USA\/}

\vsk.5>
{\it $^{\star}\<$Faculty of Mathematics and Mechanics, Lomonosov Moscow State
University\\ Leninskiye Gory 1, 119991 Moscow GSP-1, Russia\/}

\end{center}

\vsk>
{\leftskip3pc \rightskip\leftskip \parindent0pt \Small
{\it Key words\/}: Elliptic dynamical Bethe algebra, Bethe ansatz, theta-polynomials, \\
\phantom{aaaaaaaaaa}
 analytic and Weyl group involutions
\vsk.6>
{\it 2010 Mathematics Subject Classification\/}: 81R50 (17B37, 33C67, 14H70)
\par}

{\let\thefootnote\relax
\footnotetext{\vsk-.8>\noindent
$^\circ\<$
{\it E\>-mail}: dthomp@email.unc.edu
\\
$^\star\<$
{\it E\>-mail}:
anv@email.unc.edu\>,
supported in part by NSF grant DMS-1665239}}

\maketitle

\begin{abstract}
  Let $(\ox_{j=1}^nV_j)[0]$ be the zero weight subspace of a tensor product of finite-dimensional irreducible
$\slt$-modules. 
 The dynamical elliptic Bethe algebra is a commutative algebra of differential operators acting on $(\ox_{j=1}^nV_j)[0]$-valued
 functions on the Cartan subalgebra of $\slt$.  The algebra is generated by values of the coefficient $S_2(x)$ of a certain 
differential operator $\D=\der_x^2+S_2(x)$, defined by V.\,Rubtsov, A.\,Silantyev, D.\,Talalaev in 2009.   We express $S_2(x)$ in terms of the KZB operators introduced by
G.\,Felder and C.\,Wieszer\-kowski in 1994.  We study the eigenfunctions of the 
dynamical elliptic Bethe algebra by the Bethe ansatz method. Under certain assumptions
we show that such Bethe eigenfunctions are in one-to-one correspondence with
ordered pairs of theta-polynomials
of certain degree. The correspondence between Bethe eigenfunctions and two-dimensional  spaces, generated by 
the two theta-polynomials, 
is an analog of the non-dynamical non-elliptic correspondence between the eigenvectors of the $\glt$ Gaudin model and
the two-dimensional subspaces of the vector space $\C[x]$, due to
 E.\,Mukhin, V.\,Tarasov, A.\,Varchenko. 

We obtain a counting result for, equivalently, certain solutions of the Bethe ansatz equation, certain fibers of the 
elliptic Wronski map, or ratios of theta polynomials, whose derivative is of a certain form.
  We give an asymptotic expansion for Bethe eigenfunctions in a certain limit, and deduce from that 
that the Weyl involution acting on Bethe eigenfunctions 
coincides with the action of an analytic involution given by the transposition of
theta-polynomials in the associated ordered pair.
\end{abstract}

{\small\tableofcontents\par}

\section{Introduction}

Let  $V=\otimes_{j=1}^n V_j$ be a tensor product of $\sln$-modules, $V[0]$ the zero weight subspace,
 $z_1,\dots,z_n, \tau\in \C$ with $\on{Im}\tau>0$.
  Let $\on{Fun}_{\sln}(V[0])$  be the space of $V[0]$-valued functions
 on the Cartan subalgebra of  $\sln$. 
Given these data we follow  V.\,Rubtsov, A.\,Silantyev, D.\,Talalaev  in \cite{RST} and introduce
a  commutative algebra of linear differential operators acting on $\on{Fun}_{\sln}(V[0])$,
which we call the {\it dynamical elliptic Bethe algebra}. \!\!\footnote{In fact in \cite{RST} 
the authors consider the tensor products of $\gln$-modules, 
while in this paper we restrict ourselves to tensor products of $\sln$-modules}
 The algebra is given in the form of a differential operator
\bean
\label{UDo}
\D= \der^N_x + \sum_{j=1}^N S_j(x) \der_x^{N-j} 
\eean
with respect to the variable $x$, where each coefficient $S_j(x)$ is a differential operator
acting  on 
$\on{Fun}_{\sln}(V[0])$ and depending on  $x$ as a parameter. We call the operator $\D$ the
{\it universal differential operator} or the {\it RST-operator}. 
The dynamical elliptic Bethe algebra $\B^V(z_1,\dots,z_n,\tau)$
is the algebra generated by the identity operator and  the
operators $\{S_j(x)\ |\  x\in \C, j=1,\dots,N\}$.

Our Theorem \ref{thm trans} says that the Bethe algebra is doubly periodic,
\bea
\D(x+ k+ l \tau) = \D(x), \qquad k,l\in \Z,
\eea
Theorem \ref{thm onz} says that the Bethe algebra is conjugated by some explicit operator, if
some of the complex numbers $z_1,\dots,z_n$ are shifted by elements of the lattice $\Z+\tau\Z$.

\medskip
Let $\Psi \in \on{Fun}_{\sln}(V[0])$ be an eigenfunction of the Bethe algebra. Then
\bea
S_j(x) \Psi = B_j(x) \Psi,\qquad j=1,\dots,N,
\eea
where $B_j(x)$ is a scalar function of $x$ of eigenvalues of the operator $S_j(x)$.
 This construction assigns
 to $\Psi$ a scalar  differential operator with respect to the variable $x$,
\bean
\label{fund PsiN}
\D_{\Psi} = \der^N_x + \sum_{j=1}^N B_j(x) \der_x^{N-j}, 
\eean
called the {\it  fundamental differential operator} of the eigenfunction $\Psi$. We have
$\D_\Psi(x+ k+ l \tau) = \D_\Psi(x)$ for $k,l\in \Z$.

 The correspondence 
  $\Phi \to \D_\Psi$ between eigenfunctions of the commutative dynamical elliptic
 Bethe algebra and
ordinary differential operators of special form
 is in the spirit of the geometric Langlands correspondence. An example  of a non-dynamical
non-elliptic 
correspondence of that type see in \cite{MTV1, MTV2}, where an  eigenvector $\Psi$
of the non-dynamical $\gln$ Gaudin model  corresponds
to a differential operator, whose kernel consists of polynomials only, and 
that polynomial kernel defines a point in the Grassmannian of $N$-planes in the 
vector space $\C[x]$. 

\smallskip

In this paper we study the relation  $\Phi \to \D_\Psi$ for the Bethe eigenfunctions
in the case of the tensor product of irreducible $\slt$-modules. In that case
\bea
\D=\der_x^2 + S_2(x) \qquad\on{and}\qquad  \D_\Psi=\der_x^2 + B_2(x).
\eea

In Lemma \ref{weyl inv} we show that the universal differential operator
$\der_x^2 + S_2(x) $
is invariant with respect to the natural action of the $\slt$ Weyl group $W=\{\on{id}, s\}$. 
  Hence the Weyl group acts on the set of eigenfunctions
of the dynamical elliptic Bethe algebra, and the eigenfunctions
$\Psi$ and $s(\Psi)$ have the same fundamental differential operators.

In Theorem \ref{prop:S_2}  we show that
\bean
\label{2S}
S_2(x,z,\tau) = -2\pi i H_0(z,\tau) - \sum_{s=1}^n \Big[H_s(z,\tau)\rho(x-z_s,\tau) + c_2^{(s)} \rho'(x-z_s,\tau) \Big],
\eean
where  $H_j(z,\tau)$, $j=0,\dots,n$, are the KZB operators, introduced
 by G.\,Felder and C.\,Wieszer\-kowski in \cite{FW}
to describe the differential equations for conformal blocks on the elliptic curve $\C/\Z+\tau\Z$. The operator $c^2$ 
in \Ref{2S} is the 
quadratic central element in $U(\slt)$, and $\rho =\theta\rq{}/\theta$ is the logarithmic derivative 
of the normalized first Jacobi theta function, see \Ref{1nt}.

Formula \Ref{2S} shows that the dynamical elliptic Bethe algebra is generated by the KZB operators for the Lie algebra $\slt$. 

\medskip
The construction of eigenfunctions of the KZB operators by the Bethe ansatz method
 was suggested in \cite{FV1}.
Let $\la_{12} $ be the coordinate on the Cartan subalgebra of $\slt$. 
Let $V=\otimes_{j=1}^n V_j$
be a tensor product of irreducible $\slt$-modules with highest weights $m_j$, $j=1,\dots,n$. 
The zero weight subspace $V[0]$ in nontrivial if the sum $m_1+\dots+m_n$ is even, 
that is, the sum equals $2m$ for some $m\in\Z_{>0}$.

We introduces a certain {\it elliptic master function}
$\Phi(\mu,t,z,\tau)$, depending on complex variables $\mu, t=(t_1,\dots,t_m), z=(z_1,\dots,z_n)$, $\tau$,
and introduce a certain  meromorphic  $V[0]$-valued function $\Psi(\la_{12},\mu,t,z,\tau)$ of $\la_{12}$,
 depending on the parameters $t, z,\mu,\tau$, 
see \Ref{master} and \Ref{eq eig}. 

Given $\mu, z,\tau$, we consider  the critical point equations for the master function with respect to the variables
$t$,
\bea
\frac{\der \Phi}{\der t_j}(\mu,t,z,\tau) =0, \qquad j=1,\dots,m,
\eea
called the {\it Bethe ansatz equations.} By \cite{FV1}, if $(\mu,t,z,\tau)$ is a solutions of the Bethe ansats equations,
then the function $\Psi(\la_{12},\mu,t,z,\tau)$ of $\la_{12}$ is an eigenfunctions of the KZB operators,
\bean
\label{KZb}
H_0(z,\tau) \Psi(\la_{12}, \mu, t,  z,\tau) 
&=&
 \frac {\der \Phi}{\der \tau}(\mu, t, z,\tau)\, \Psi(\la_{12}, \mu, t,  z,\tau),
\\
\notag
H_a(z,\tau)\, \Psi(\la_{12}, \mu, t,  z,\tau) 
&=&
 \frac {\der \Phi}{\der z_a}(\mu, t, z,\tau) \,\Psi(\la_{12}, \mu, t,  z,\tau), \qquad a=1,\dots,n,
\eean
see Theorem \ref{FV thm}.  The eigenfunction $\Psi(\la_{12}, \mu, t,  z,\tau)$ has a periodicity property:
\bea
\Psi(\la_{12}+1, \mu, t,  z,\tau)= e^{\pi i \mu} \Psi(\la_{12}, \mu, t,  z,\tau).
\eea

Thus, given $z,\tau$ we have a family of meromorphic eigenfunctions of the 
elliptic dynamical Bethe
algebra depending on the parameters $(\mu, t)$ such that $(\mu, t,z,\tau)$ solve the Bethe ansatz equations.

\medskip

The fundamental differential operator of the eigenfunction $\Psi(\la_{12}, \mu, t,  z,\tau)$ is denoted by
$\D_{(\mu,t,z,\tau)}$.  Formulas \Ref{2S} and \Ref{KZb} allow us to give the following formula for
$\D_{(\mu,t,z,\tau)}$.  Let 
\bea
y(x) ={\prod}_{i=1}^m\theta(x-t_i,\tau),
\qquad
u(x)= e^{\pi i \mu x}y(x){\prod}_{s=1}^n\theta(x-z_s,\tau)^{-m_s/2}.
\eea
Then
\bean
\label{d_t}
\D_{(\mu, t, z,\tau)} =\big(\der_x + (\ln u)\rq{}\big)\big(\der_x - (\ln u)\rq{}\big),
\eean
where ${}\rq{}$ is the derivative with respect to $x$, 
see Theorem \ref{prop fund diff op}. 

\medskip
Having this formula, we restrict ourselves to  the case $V = (\C^2)^{\ox 2m}$, study the kernels of the 
fundamental differential operators  $\D_{(\mu,t,z,\tau)}$, and the dependence of the kernels on
$(\mu,t)$.

\medskip
We introduce the space of theta-polynomials of degree $m$ and show, under certain conditions, that
the kernel of $\D_{(\mu,t,z,\tau)}$ is generated by two functions $u_1(x), u_2(x)$ of the form,
\bean
\label{uu12}
u_1(x) =f/\sqrt{\Wr(f,g)},
\qquad
 u_2(x) = g/\sqrt{\Wr(f,g)},
\eean
where $f,g$ are theta-polynomials of degree $m$, and
$\Wr(f,g) = fg\rq{}-f\rq{}g$ is the Wronskian of the functions $f$ and $g$,
see  Theorems \ref{lem:sum} and \ref{lem fo=}.

Under certain conditions, we show that the  Bethe eigenfunctions
$\Psi(\la_{12}$, $\mu, t,  z,\tau)$ are in one-to-one correspondence with 
equivalence classes of ordered pairs
$(f,g)$ of theta-polynomi\-als of degree $m$, see Theorems \ref{lem:sum}, \ref{lem equ},
and \ref{thm BdE}.

\medskip

We define the {\it analytic involution} on the set of ordered
pairs of theta-polynomials by the formula $(f,g) \mapsto (g,f)$.
 The analytic involution induces an involution on the set of Bethe eigenfunctions. 
We prove that the analytic involution on the set of Bethe eigenfunctions coincides with the action of the
Weyl involution $s$,   see Theorem \ref{thm inv}, see also \cite[Section 5]{MV2}, where a non-dynamical
non-elliptic version of this result had been established.

\medskip
Under certain conditions, we
  show that  the differential operators $\D_{(\mu,t,z,\tau)}$ are in one-to-one correspondence with
the unordered pairs  $(\Psi, s(\Psi))$ of Bethe eigenfunctions, see Theorem \ref{thm BdE}. That means 
that two Bethe eigenfunctions  $\Psi(\la_{12},\mu,t,z,\tau)$ and
 $\Psi(\la_{12},\mu\rq{},t\rq{},z,\tau)$ have the same eigenvalues for every element of the dynamical 
elliptic Bethe algebra $\B^V(z_1$, \dots, $z_{2m},\tau)$, if and only if 
either $\Psi(\la_{12},\mu,t,z,\tau)=\Psi(\la_{12},\mu\rq{},t\rq{},z,\tau)$ or $s(\Psi(\la_{12},\mu,t,z,\tau))=
\Psi(\la_{12},\mu\rq{}$, $t\rq{},z,\tau)$ up to proportionality.

\medskip

In the proofs we use asymptotics of the solutions of  Bethe ansatz equations when $\mu\to\infty$,
see Theorems \ref{thm deg}, \ref{cor de}, Lemma \ref{lem 1.23}. 

\medskip

As a byproduct of  that asymptotic study we obtain some counting results.
For example, in  Theorem \ref{cor de}, 
we show that in the limit $\mu\to\infty$, the number of Bethe eingenfunctions with given $\mu, z,\tau$
equals 
$\dim (\C^2)^{\ox 2m}[0] = \binom{2m}{m}$,
 and the leading terms of asymptotics  of those Bethe eigenfunctions form
the standard weight basis of $(\C^2)^{\ox 2m}[0]$.

\medskip

We also count the number of ratios of theta-polynomials of degree $m$ with prescribed zeros of the derivative.  More precisely,
fix a fundamental parallelogram $\La\subset \C$ of the lattice $\Z+\tau\Z$ acting on $\C$.
Consider  the ratio $F$ of two theta-polynomials of degree $m$ 
with $m$ simple poles in $\La$.
Then the function $F$ can be  written uniquely in the form 
\bean
\label{hfunn}
F=g/f,\qquad 
f={\prod}_{j=1}^m\theta(x-t_j,\tau),  
\eean
where $t=(t_1,\dots,t_m)$ is a point of $\La^m/S_m$ with distinct coordinates
and $g$ is a theta-polynomial of degree $m$. The derivative $F\rq{} = \Wr(f,g)/f^2$
 can be written uniquely in the form
\bean
\label{derhn}
F\rq{} = e^{-2\pi i \mu x}\frac{\prod_{a=1}^{2m}\theta(x-z_a,\tau)}{\prod_{j=1}^m\theta(x-t_j,\tau)^2}
\eean
for some $\mu\in\C$ and $z=(z_1,\dots,z_{2m})\in\La^{2m}/S_{2m}$. By our
assumption
 the numerator and denominator of this ratio have no common zeros.
Let $\La\rq{}$ be the interior of $\La$. Assume that 
$z=(z_1,\dots,z_{2m})$ belongs $(\La\rq{})^{2m}/S_{2m}$ and has distinct coordinates.
Then there exists
$N>0$, such that for any $\mu\in \C$ with $|\mu|> N$
there exist exactly $\binom{2m}m$ functions $F(x)$ as in \Ref{hfunn} with the derivative as in
\Ref{derhn}, up to  proportionality, see Theorem \ref{thm ratio}.

\medskip
The paper is organized as follows.
 In Section \ref{sec:prelim} we collect useful formulas on theta functions. 
In Section \ref{sec RST-o}  the RST-operator and KZB operators are introduced.
First properties of the RST-operator are discussed in Section  \ref{sec:first results}.
In particular, we prove the double periodicity of the RST-operator. We describe the  transformations
of the RST-operator
under the shifts of the parameters $z_1,\dots,z_n$ by
 elements of the lattice $\Z+\tau\Z$. We describe the Laurent
expansion of the RST-operator in Theorem \ref{lem laurent} and express the coefficient $S_2(x)$ in terms of the KZB operators.
The Bethe ansatz for $\slt$ is discussed in Section \ref{sec:D}.
We restrict ourselves to the case $ V=(\C^2)^{\ox 2m}$
in Section \ref{sec 6}.  In Section \ref{sec:wron} the theta-polynomials are introduced. 
In Section \ref{sec Wrd} we study the Wronskian equation $\Wr(f,g)=h$
 for theta-polynomails $f, g, h$ 
of degrees $m, m, 2m$, respectively. In particular, we present a theorem from an unpublished paper \cite{BMV} by 
L.\,Borisov, E.\,Mukhin, A.\,Varchenko, which  says that given $f, h$, then, under certain conditions,
there exists a unique $g$ satisfying the equation $\Wr(f,g)=h$, see Theorem \ref{BMV}.
In Section \ref{sec:eigenfunctions} the interrelations between solutions of the Bethe 
ansatz equations and ordered pairs of theta-polynomials are studied. 
In Section \ref{sec BAf}  a formula for Bethe eigenfunctions is given and the properties of Bethe eigenfunctions are discussed.
In Section \ref{sec elWm} we introduce the elliptic Wronki map and describe the asymptotic behavior of its fibers in 
the limit
$\on{Im}\mu\to \infty$, see
the important for us Theorem \ref{thm deg}. We discuss the applications of Theorem \ref{thm deg}
in Section \ref{sec appls}, in particular, we give a proof of
 Theorem  \ref{thm inv} that the analytic and Weyl involutions coincide
on Bethe eigenfunctions.

\medskip

The second author thanks
 A.\,Eremenko, G.\,Felder,  A.\,Gabrielov, V.\,Rubtsov, V.\,Tarasov for useful discussions. The second author
also  thanks
P.\,Etingof for recommending the first author for a post-doc position at UNC at Chapel Hill.

\section{Preliminaries}
\label{sec:prelim}
\subsection{Theta functions}
Our notations on theta functions follow \cite{FV2}. Let $z,q\in\C$ with $|q|<1$. Denote
\bea
(z;q) = \prod_{j=0}^\infty(1-zq^j).
\eea
Let $z=e^{2\pi i x}$ and $q=e^{2\pi i \tau}$. The first Jacobi theta function is defined by the formula
\bea
\theta_1(x,\tau) = i e^{\pi i (\tau/4-x)}(z;q)(q/z;q)(q;q).
\eea
It is an entire holomorphic odd function such that
\bea
\theta_1(x + n+ m \tau,\tau) = (-1)^{m+n} e^{-\pi i m^2\tau -2\pi i mu}\theta_1(x,\tau), \qquad m,n\in \Z.
\eea
It obeys the heat equation
\bea
4\pi i \frac{\der}{\der \tau }\theta_1(x,\tau)= \theta_1^{''}(x,\tau),
\eea
where \ $'=d/dx$.
Introduce
\bean
\label{1nt}
\theta(x,\tau)
&=&
 \frac{e^{-\pi i\tau/4}}{2\pi (q;q)^3}\theta_1(x,\tau) =
- \frac{e^{-\pi i x}}{2\pi i}(z;q)(q/z;q)(q;q)^{-2}
\\
\notag
&=&\frac{\sin(\pi x)}\pi (qz;q)(q/z;q)(q;q)^{-2}.
\eean
Then
\bea
\theta(x + n+ m \tau,\tau) &=& (-1)^{m+n} e^{-\pi i m^2\tau -2\pi i mu}\theta(x,\tau),
 \qquad m,n\in \Z ,
\\
\theta'(0,\tau) 
&=& 1.
\eea
Denote
\bea
\rho(x,\tau) &=& \frac{\theta'(x,\tau)}{\theta(x,\tau)},
\quad
\phantom{aaaaa}
\sigma (x, w,\tau) =
\frac{\theta(x+w,\tau)}{\theta(x,\tau)\theta(w,\tau)},
\\
\varphi(x,w,\tau) &=& \partial_x \sigma(w,-x,\tau),
\quad
\phantom{aaaa}
\eta(x) = \rho(x)^2 + \rho'(x).
\eea
In the sequel, we will often omit the $\tau$ argument.

\subsection{Collected formulas} \label{sec col form}
We have for any $m,n\in \Z$, we have
\bea
\rho(x+ n+ m \tau) &=& \rho(x) - 2\pi i m,
\\
\rho'(x+ n+ m \tau) &=& \rho'(x),
\\
\sigma(x+ n+ m \tau,w) &=& e^{-2\pi i m w} \sigma(x,w),
\\
\varphi(x+ n+ m \tau,w) &=& e^{2\pi i m w} \varphi(x,w),
\\
\varphi(x,w+ n+ m \tau) &=& e^{2\pi i m x} \big(\varphi(x,w) + 2\pi i m\, \sigma(w,-x) \big),
\\
\eta(x+ n+ m \tau,w) &=& \eta(x,w)- 4\pi i m \,\rho(x) + (2\pi i m)^2.
\eea
If $a_1,\dots,a_n$, $z_1,\dots,z_n\in \C$ and $\sum_{j=1}^n a_j = 0$,
then the function 
\bea
f(x) =\sum_{j=1}^n a_j \rho(x-z_j,\tau)
\eea
is doubly periodic:
\bea
f(x+ n+ m \tau) = f(x), \qquad m,n\in \Z.
\eea
The function $\rho'(x,\tau)$ is even, while $\theta(x,\tau)$ and $\rho(x,\tau)$ are odd.  We also have the
identities
\bea
\sigma(-x,-w) &=& -\sigma(x,w)
\\
\varphi(-x,-w) &=& \varphi(x,w).
\eea
We have the following Laurent series expansions about $x=0$,
\bea
\theta(x) &=& x + \mc O(x^3),
\\
\rho(x) &=& x^{-1} + \mc O(x),
\\
\sigma(x,w) &=& x^{-1} + \rho(w) + \mc O(x),
\\
\varphi(x,w) &=& x^{-2} + \mc O(1),
\\
\eta(x) &=& \mc O(1).
\eea

\begin{lem}
We have the identities
\bea
\varphi(x,w) &=& \sigma(w,-x)(\rho(x-w)-\rho(x)), \\
\varphi(x,0) &=& -\rho'(x).
\eea
\end{lem}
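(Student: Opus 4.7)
The plan is to establish the first identity by a direct differentiation of the definition of $\sigma$, and then to deduce the second as the $w \to 0$ limit of the first, using the Laurent expansions recorded in Section~\ref{sec col form}.

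For the first identity, I would start from the definition $\sigma(w,-x) = \theta(w-x)/(\theta(w)\theta(-x))$. Using the fact that $\theta$ is odd, rewrite this as
\[
\sigma(w,-x) \;=\; \frac{\theta(x-w)}{\theta(w)\theta(x)}.
\]
Now apply $\partial_x$ to both sides, holding $w$ fixed, via the quotient rule:
\[
\partial_x \sigma(w,-x)
\;=\; \frac{\theta'(x-w)\theta(x) - \theta(x-w)\theta'(x)}{\theta(w)\theta(x)^2}
\;=\; \frac{\theta(x-w)}{\theta(w)\theta(x)}\Bigl(\frac{\theta'(x-w)}{\theta(x-w)} - \frac{\theta'(x)}{\theta(x)}\Bigr).
\]
The prefactor is $\sigma(w,-x)$, and the bracketed difference is $\rho(x-w) - \rho(x)$, so by the definition $\varphi(x,w) = \partial_x \sigma(w,-x)$ the first identity follows.

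For the second identity, I would take $w \to 0$ in the identity just proved. From Section~\ref{sec col form}, $\sigma(w,-x) = w^{-1} + \rho(-x) + \mc O(w) = w^{-1} - \rho(x) + \mc O(w)$, and Taylor expansion gives $\rho(x-w)-\rho(x) = -w\,\rho'(x) + \mc O(w^2)$. Multiplying:
\[
\varphi(x,w) \;=\; \bigl(w^{-1} - \rho(x) + \mc O(w)\bigr)\bigl(-w\,\rho'(x) + \mc O(w^2)\bigr)
\;=\; -\rho'(x) + \mc O(w).
\]
The singular piece $w^{-1}$ in $\sigma(w,-x)$ is constant in $x$, so it drops out under $\partial_x$ and $\varphi(x,w)$ is regular at $w=0$; setting $w=0$ therefore gives $\varphi(x,0) = -\rho'(x)$, as desired.

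The argument is mostly bookkeeping: the only subtle point is verifying that the apparent pole at $w=0$ in $\sigma(w,-x)$ is $x$-independent and so disappears after the $x$-derivative, ensuring that $\varphi(x,0)$ is actually well-defined as a value (rather than merely a limit). This observation is immediate once one writes out the Laurent expansion above, so no serious obstacle is anticipated.
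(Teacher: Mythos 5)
Your proof is correct and follows essentially the same route as the paper: the first identity via the logarithmic derivative of $\sigma(w,-x)$ (your quotient-rule computation is the same calculation), and the second via the Laurent expansion about $w=0$ of the product $\sigma(w,-x)(\rho(x-w)-\rho(x))$. Your extra remark that the $w^{-1}$ term is $x$-independent, so $\varphi(x,w)$ is regular at $w=0$, is a sound (and slightly more careful) justification of the evaluation at $w=0$.
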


\begin{proof}
The first identity is obtained by taking the logarithmic derivative:
\[
\frac{\der_x\sigma(w,-x)}{ \sigma(w,-x)} = -\frac{\der_x\theta(w-x)}{\theta(w-x)} + 
\frac{\der_x\theta(-x)}{\theta(-x)},
\]
while the second can by seen taking Laurent expansion about $w=0$:
\[
\sigma(w,-x)(\rho(x-w)-\rho(x)) = (w^{-1} + \mc O(1))(-\rho'(x)w + \mc O(w^2))
.
\]
\end{proof}

\begin{lem} 
\label{lem id1}

Assume that $z_1,z_2\in \C$ do not differ by an element of $\Z+\tau\Z$. Then
  \bean
\label{5t}
\phantom{aaa}
  \frac{\sigma(x-z_1,w)\sigma(x-z_2,-w)}{\sigma(z_1-z_2,-w)} + \rho(x-z_2) - \rho(x-z_1) 
&=& \rho(w) - \rho(w-(z_1-z_2)),
  \\
\label{5t5}
  \sigma(x,w)\sigma(x,-w) &=& \rho'(w) - \rho'(x).
  \eean
\end{lem}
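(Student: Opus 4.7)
The plan is to view each identity as an equality of meromorphic functions of $x$ with the remaining variables held fixed, show that in each case both sides are doubly periodic in $x$ with matching principal parts at the candidate poles, and conclude that the difference is constant in $x$; the constant is then pinned down by evaluating at a convenient point.

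For \Ref{5t5}, the quasi-periodicity factors $e^{-2\pi i m w}$ and $e^{2\pi i m(-w)}$ contributed by $\sigma(x,w)$ and $\sigma(x,-w)$ cancel, so both sides are doubly periodic in $x$, with the only candidate pole on $\C/(\Z+\tau\Z)$ at $x\equiv 0$. Using the Laurent expansions of Section \ref{sec col form}, the LHS has $x^{-2}+O(1)$ near $x=0$ (the $x^{-1}$ terms cancel between $x^{-1}(-\rho(w))$ and $x^{-1}\rho(w)$), and the identity $\rho(x)=x^{-1}+O(x)$ gives $-\rho'(x)=x^{-2}+O(1)$, so the RHS has the same principal part. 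Thus LHS minus RHS is holomorphic on the torus, hence constant in $x$. I would set $x=-w$: the LHS vanishes because $\sigma(-w,w)=\theta(0)/(\theta(-w)\theta(w))=0$, and the RHS vanishes because $\rho'$ is even. The constant is therefore zero.

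For \Ref{5t}, the first term on the LHS is again doubly periodic in $x$ by the same cancellation of $e^{\mp2\pi imw}$ factors, and $\rho(x-z_2)-\rho(x-z_1)$ is periodic because its coefficients sum to zero. The only candidate poles in one fundamental parallelogram are at $x=z_1$ and $x=z_2$. At $x=z_1$ the residue of the ratio is $\sigma(z_1-z_2,-w)/\sigma(z_1-z_2,-w)=1$, exactly canceling the residue $-1$ of $-\rho(x-z_1)$; at $x=z_2$, the relation $\sigma(z_2-z_1,w)=-\sigma(z_1-z_2,-w)$, which I would derive from the oddness of $\theta$, makes the residues cancel again. Hence the LHS is constant in $x$. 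To compute this constant I plan to expand near $x=z_1$ with $\epsilon=x-z_1$: the ratio contributes $\epsilon^{-1}+\rho(w)+\der_u\log\sigma(u,-w)|_{u=z_1-z_2}+O(\epsilon)$, while $\rho(x-z_2)-\rho(x-z_1)$ contributes $-\epsilon^{-1}+\rho(z_1-z_2)+O(\epsilon)$. A direct differentiation gives $\der_u\log\sigma(u,-w)=\rho(u-w)-\rho(u)$, and invoking $\rho(u-w)=-\rho(w-u)$ cancels the $\rho(z_1-z_2)$ terms, leaving precisely $\rho(w)-\rho(w-(z_1-z_2))$.

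The main technical nuisance is the sign bookkeeping in the residue calculation for \Ref{5t}, where the orientation of the arguments of $\sigma$ keeps switching; every such reduction falls out from the oddness relations $\theta(-x)=-\theta(x)$ and $\sigma(-x,-w)=-\sigma(x,w)$ already collected in Section \ref{sec col form}, so no new machinery beyond the Laurent expansions and quasi-periodicities listed there is needed.
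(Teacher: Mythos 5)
Your proof is correct and takes essentially the same route as the paper: both rely on the Liouville-type argument that a doubly periodic function of $x$ with cancelling principal parts at the candidate poles ($x\equiv z_1,z_2$ for the first identity, $x\equiv 0$ for the second) is constant. The only difference is how the constant is pinned down -- the paper also uses periodicity in $w$ and evaluates the difference at $w=x-z_2$, while you extract it from the Laurent expansion at $x=z_1$ (and, for the second identity, by evaluating at $x=-w$), which is equally valid.
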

\begin{proof}
  The difference,
  \bea
  \frac{\sigma(x-z_1,w)\sigma(x-z_2,-w)}{\sigma(z_1-z_2,-w)} + \rho(x-z_2) - \rho(x-z_1) - \rho(w) + \rho(w-(z_1-z_2)),
  \eea
  is a doubly periodic function, both in the variable $x$ 
and in the variable $w$.  The difference is entire in both variables as well,
 hence it must be a constant.  Evaluating at $w=x-z_2$, this constant is seen to be zero.  
The second identity is proved similarly.
\end{proof}

\begin{rem}
  Formula \Ref{5t} is a reformulation of the following identity:
  \[
  \frac{\prod_{1\leq j<l\leq 3} \theta(x_j + x_l)}{\theta(x_1+x_2+x_3)\prod_{j=1}^3 \theta(x_j)} - \sum_{j=1}^3 \rho(x_j) + \rho(x_1+x_2+x_3) = 0,
  \]
which  can be obtained from \Ref{5t} by a change of variables.
\end{rem}

\begin{lem} 
\label{lem id2}
The function
\[
  f(x) = \rho(x-z_1)\rho(x-z_2)+ \rho(z_1-z_2)\rho(x-z_2)
  - \rho(z_1-z_2)\rho(x-z_1)
\]
satisfies the identities
\[
f(z_1) = f(z_2) = \eta(z_1 - z_2).
\]
\end{lem}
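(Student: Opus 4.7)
The plan is to verify both equalities directly using the Laurent expansions of $\rho$ and $\rho'$ near $0$, together with the parities noted in Section \ref{sec col form}, namely that $\rho$ is odd and $\rho'$ is even. The key point is that $\rho(x) = x^{-1} + \mc O(x)$, so the Laurent expansion of $\rho$ at $0$ has vanishing constant term; this will make the singular contributions combine cleanly.

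For $f(z_1)$, I would first regroup as
\[
f(x) = \rho(x-z_1)\bigl(\rho(x-z_2) - \rho(z_1-z_2)\bigr) + \rho(z_1-z_2)\rho(x-z_2),
\]
so that the two terms individually singular at $x=z_1$ are combined into a single expression whose pole is killed by the factor in parentheses. Expanding $\rho(x-z_2) - \rho(z_1-z_2) = \rho'(z_1-z_2)(x-z_1) + \mc O((x-z_1)^2)$ and $\rho(x-z_1) = (x-z_1)^{-1}+\mc O(x-z_1)$, the first summand tends to $\rho'(z_1-z_2)$. The second summand evaluates to $\rho(z_1-z_2)^2$, giving $f(z_1) = \rho'(z_1-z_2) + \rho(z_1-z_2)^2 = \eta(z_1-z_2)$.

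For $f(z_2)$, I would regroup the first and second terms as $\rho(x-z_2)\bigl(\rho(x-z_1)+\rho(z_1-z_2)\bigr)$. Using $\rho(z_2-z_1) = -\rho(z_1-z_2)$ and $\rho'(z_2-z_1) = \rho'(z_1-z_2)$, the expansion of $\rho(x-z_1)+\rho(z_1-z_2)$ at $x=z_2$ starts with $\rho'(z_1-z_2)(x-z_2)$, so the pole of $\rho(x-z_2)$ is again absorbed and the limit is $\rho'(z_1-z_2)$. The remaining term $-\rho(z_1-z_2)\rho(x-z_1)$ evaluates at $x=z_2$ to $-\rho(z_1-z_2)\cdot(-\rho(z_1-z_2)) = \rho(z_1-z_2)^2$, and summing yields $\eta(z_1-z_2)$.

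There is no serious obstacle here; the only subtlety is the bookkeeping of which pairs of terms to combine at each point and the use of the oddness of $\rho$ to eliminate a would-be constant term in the expansion. Both computations are a few lines of Laurent arithmetic.
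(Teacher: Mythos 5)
Your proof is correct and follows essentially the same route as the paper: compute $f(z_1)$ by grouping $\rho(x-z_1)\bigl(\rho(x-z_2)-\rho(z_1-z_2)\bigr)$ and reading off the limit from the Laurent expansion, then repeat at $z_2$ (the paper only sketches the second point as ``a similar argument,'' which you carry out explicitly using the oddness of $\rho$ and evenness of $\rho'$). No gaps.
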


\begin{proof}
Observe  the Laurent expansion about $x=z_1$:
\bea
 && \rho(x-z_1)\left(\rho(x-z_2) - \rho(z_1-z_2)\right) 
\\
&&
\phantom{aaaaa}
=
  \left((x-z_1)^{-1} + \mc O(x-z_1)\right)\left(\rho'(z_1-z_2)(x-z_1) + \mc O((x-z_1)^2)\right),
\eea
so
\[
  f(z_1) = \rho'(z_1-z_2) + \rho(z_1-z_2)^2 = \eta(z_1-z_2).
\]
A similar argument shows $f(z_2) = \eta(z_2 - z_1) = \eta(z_1 - z_2)$.
\end{proof}

\section{The RST-operator}
\label{sec RST-o}

\subsection{The RST-operator for $\gln$}
 Consider the complex Lie algebra $\gln$ with standard basis $e_{jl}$, $j,l=1,\dots,N$.
Let $\h\subset \gln$ be the Cartan subalgebra generated by the elements $e_{jj},\, j=1,\dots,N$.
Let $\la\in \h$ with $\la = \la_1 e_{11}+\dots + \la_N e_{NN}$.  

\smallskip
Let $z=\{z_1,\dots,z_n\}\subset \C$ be numbers such that no pairwise difference belongs to $\Z+\tau\Z$. 

\smallskip

Let $V_1,\dots,V_n$ be $\gln$-modules and $V=\otimes_{k=1}^nV_k$.  For an element $g\in U(\gln)$, we write $g^{(k)} = 1 \otimes \cdots \otimes g \otimes \cdots \otimes 1$, with the element $g$ in the $k$-th factor.
Let  $V = \oplus_{\bla\in\h^*} V[\bla]$
be the the weight decomposition, where $ V[\bla] = \{v\in V\ |\ e_{jj}v = \bla (e_{jj})v \ \on{for}\ j=1,\dots,N \}$.
In particular,
\bea
V[0] = \{v\in V\ |\ e_{jj}v = 0, \,j=1\dots,N\}.
\eea
Denote $\la_{jl}=\la_j-\la_l$,
\bea
e^+_{jj}(x)
&=&
 {\sum}_{k=1}^n \rho(x-z_k) e_{jj}^{(k)},
 \\
e^+_{jl}(x)
&=&
{\sum}_{k=1}^n \si(x-z_k, \la_{jl}) e_{jl}^{(k)}, \quad \on{for}\ j\ne l.
\eea
Introduce the $N\times N$-matrix  $\L(x)=(\L_{jl}(x))$,
\bea
\L_{jj}(x) &=& e^+_{jj}(x) + {\sum} _{l\ne j} \rho(\la_{jl}) e_{ll},
\\
\L_{jl}(x)&=&
e^+_{lj}(x),
 \qquad \quad \on{for}\ j\ne l.
\eea
Here $e_{ll}$ acts on $V$ as $e_{ll}^{(1)}+ \dots + e_{ll}^{(n)}$.

The {\it universal dynamical differential operator} (or the {\it RST-operator})
is defined by the formula
\bean
\label{rdet}
\D = \cdet( \delta_{jl}\der_x - \delta_{jl}\der_{\la_j} + \L_{jl}),
\eean
where, for an $N\times N$-matrix $X=(X_{jl})$ with noncommuting entries the {\it column determinant} is defined by the formula
\bea
\cdet X = {\sum}_{\si\in S_N} (-1)^\si X_{\si(1),1} \dots  X_{\si(N),N}.
\eea
\label{RST}
Let us write
\bean
\label{udo}
\D= \der^N_x + {\sum}_{j=1}^N S_j(x) \der_x^{N-j} .
\eean
Recall the tensor product $V$ and the zero-weight subspace $V[0]\subset V$. We interpret every coefficient
 $S_j(x)$ as a function of $x$ with values  in the space of  differential operators in variables 
$\la_1,\dots,\la_N$   with coefficients in $\End(V)$.

\begin{thm}
[\cite{RST}]
Fix $\tau\in\C$ with $\on{Im} \tau>0$.
Fix $z=\{z_1,\dots,z_n\}\subset \C$ so that that no pairwise difference belongs to $\Z+\tau\Z$. 
Then for  $x \in \C$, and any $j=1,\dots,N$, the operator $S_j(x)$ well-defines 
a  differential operator in $\la_1,\dots,\la_N$  with coefficients in $\End(V[0])$.
Moreover,  for  any  $j,l\in\{1,\dots,N\}$, $u,v\in\C$, \ these differential operators $S_j(u), S_l(v)$ commute,
\bea
[S_j(u), S_l(v)]=0.
\eea
\end{thm}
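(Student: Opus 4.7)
The plan is to establish two distinct assertions. First, that each $S_j(x)$ descends from a formal operator on $V$-valued functions in $\la_1,\dots,\la_N$ with singularities along the hyperplanes $\la_{jl}\in\Z+\tau\Z$ to an honest differential operator acting on $V[0]$-valued functions without those singularities. Second, that the family $\{S_j(x)\}$ is commutative in both indices and spectral parameters.

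For the well-definedness, I would analyze the apparent singularities of the column determinant in the dynamical variables. The diagonal entries $\L_{jj}(x)$ contain the terms $\rho(\la_{jl})e_{ll}$ and the off-diagonal entries $\L_{jl}(x)$ contain $\si(x-z_k,\la_{lj})e_{lj}^{(k)}$, each having simple poles on the hyperplanes $\la_{jl}\in\Z+\tau\Z$. Expanding $\cdet(\delta_{jl}\der_x-\delta_{jl}\der_{\la_j}+\L_{jl})$ and computing the residue of $S_j(x)$ on the hyperplane $\la_{jl}=0$ using the Laurent expansions of $\rho$ and $\si$ from Section \ref{sec col form}, one finds that the residue, after acting on a function valued in $V[0]$, is a linear combination of operators $e_{jl}^{(k)}$ and $e_{ll}$ that cancels thanks to the zero-weight condition $\sum_k e_{ll}^{(k)}=0$ on $V[0]$ and to the careful matching between the pole of $\sigma$ and the differential $\der_{\la_j}$. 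The statement on the shifted hyperplanes $\la_{jl}=k+l\tau$ then follows from the quasi-periodicity properties of $\rho$ and $\sigma$ listed in Section \ref{sec col form}.

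For the commutativity, I would follow the approach of \cite{RST}. The matrix $\L(x)$ is built from the elliptic dynamical classical $r$-matrix, and the operator $\D$ is the elliptic-dynamical analog of Talalaev's quantum determinant. Concretely, one verifies that the matrix $\L(x)-\der_{\la}$, interpreted as a Manin-type matrix, satisfies a dynamical $RLL$-type exchange relation, then invokes the general principle (the elliptic-dynamical analog of Talalaev's theorem) that the coefficients of the column-determinant characteristic polynomial of a matrix satisfying such a relation form a commutative family. Since all manipulations preserve the weight grading, everything restricts to $V[0]$ once the well-definedness is in hand, producing the commutativity $[S_j(u),S_l(v)]=0$ for all $j,l$ and $u,v$.

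The main obstacle is the commutativity statement: verifying the precise elliptic dynamical $RLL$ relation for $\L(x)$ and deducing commutativity of column-determinant coefficients is technically substantial, as elliptic dynamical $R$-matrices are more delicate than their rational or trigonometric counterparts, and the column determinant must be treated with care because of noncommutativity of entries. In practice the cleanest route is to cite the corresponding $\gln$ result of \cite{RST} and then check that the restriction to the zero weight subspace $V[0]$ is compatible with both the matrix structure of $\L(x)$ and the cancellation of the dynamical poles established in the first step.
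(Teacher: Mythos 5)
The paper itself contains no proof of this statement: it is quoted directly from \cite{RST}, so there is no internal argument to compare yours against. Your proposal is consistent with that treatment---its endpoint is likewise an appeal to the $\gln$ result of \cite{RST}---and the ingredients you sketch (weight preservation by the column determinant, cancellation of the dynamical poles on the zero-weight subspace, Manin-matrix/Talalaev-style commutativity deduced from an exchange relation) are exactly the ones used in that reference. Note, however, that as written your text is a plan rather than a proof: neither the residue computation on the hyperplanes $\la_{jl}\in\Z+\tau\Z$ nor the elliptic dynamical exchange relation for $\L(x)$ is actually carried out, so if you intend a self-contained argument instead of a citation, both of those verifications remain to be done; if you are content to cite \cite{RST}, as the paper does, the only thing genuinely left to check is the compatibility of the restriction to $V[0]$, which you correctly identify.
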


\subsection{The RST-operator for $\sln$ and Bethe algebra}
In this paper, we are interested in  the $\sln$ version of the RST-operator.

\smallskip
The Lie algebra $\sl_N$ is a Lie subalgebra of $\gln$. We have  $\gln=\sln\oplus \C(e_{11}+\dots+e_{NN})$,
where $e_{11}+\dots+e_{NN}$ is a central element. 
Let $V_1,\dots,V_n$ be $\sln$-modules, thought of as $\gln$-modules,
 where the central element $e_{11}+\dots+e_{NN}$ acts by zero.
Let $V=\ox_{k=1}^nV_k$ be the tensor product of the $\sln$-modules.  In this paper {\it we consider only such tensor products.}

\smallskip
 We identify the algebra of functions 
on the Cartan subalgebra of $\sln$ with the algebra of functions in
the variables $\la_1,\dots,\la_N$, which depend only on  the differences $\la_{jl}=\la_i-\la_l$.
Indeed, the Cartan subalgebra of $\sln$ consists of elements $\la=\la_1e_{11}+\dots+\la_Ne_{NN}$
with $\la_1+\dots+\la_N=0$. Such elements are determined uniquely by the differences $\la_{jl}$ of the coordinates.

Denote by $\on{Fun}_{\sln}(V[0])$ the space of $V[0]$-valued meromorphic functions in 
the variables $\la_1,\dots,\la_N$, which depend only on  the differences $\la_{jl}=\la_i-\la_l$.

\smallskip
Each coefficient $S_j(x)$ of the RST-operator, associated with $V[0]$, defines a differential operator
acting on  $\Fun$. From now on {\it we consider  the coefficients $S_1(x)$, \dots, $S_N(x)$ as 
a family of  commuting differential operators  on  $\Fun$}, depending on the parameter $x$.

\smallskip
The commutative algebra $\B = \B^V(z_1,\dots,z_n, \tau)$ of operators on  $\Fun$  generated by the identity operator
 and the operators $\{ S_j(x)\ |\ j=1,\dots, N, \ x\in \C \}$
is called the {\it dynamical elliptic Bethe algebra} of $V$.

\subsection{The KZB operators}
Introduce the following elements of $\gln\otimes\gln$:
\bea
\Om_0 = {\sum}_{k=1}^N e_{kk}\otimes e_{kk},
\qquad
\Om_{jl} = e_{jl}\otimes e_{lj} \ {}\ \on{for}\ j\ne l,
\qquad \Om = \Om_0+{\sum}_{j\ne l}\Om_{jl}.
\eea
The KZB operators $H_0(z_1,\dots,z_n,\tau), \dots,H_n(z_1,\dots,z_n,\tau)$ as operators on the $V[0]$-valued functions on the Cartan subalgebra of $\sln$ were introduced in \cite{FW}. As differential operators on $\Fun$ they have the following form:
\[
H_0(z,\tau) = \frac{1}{4 \pi i} \sum_{k=1}^N \der^2_{\la_k} + \frac{1}{4\pi i} \sum_{s,p} \Big[\frac{1}{2}\eta(z_s - z_p,\tau) \Omega_0^{(s,p)}
- \sum_{j\ne l} \varphi(\la_{jl}, z_s - z_p, \tau) \Omega_{jl}^{(s,p)} \Big],
\]
\[
H_s(z,\tau) = -{\sum}_{k=1}^N e_{kk}^{(s)} \partial_{\la_k} + {\sum}_{p\,:\,p\neq s} \Big[ \rho(z_s - z_p,\tau) \Omega_0^{(s,p)}
+ {\sum}_{j\ne l} \sigma(z_s - z_p, -\la_{jl}, \tau) \Omega_{jl}^{(s,p)} \Big],
\]
see these formulas in Section 3.3 of \cite{JV}.

By \cite{FW} the operators $H_0(z,\tau),H_1(z,\tau), \dots, H_n(z,\tau)$ commute and
\bean
\label{eq kzb sum}
{\sum}_{s=1}^n H_s(z,\tau) = 0.
\eean

\subsection{Center of $\gln$}

Let $Z(x)$ be the following polynomial in the variable $x$ with coefficients
in $\Ugln$:
\vvn.3>
\beq
\label{Zx}
Z(x)\,=\,\rdet\left( \begin{matrix}
x-e_{11} & -\>e_{21}& \dots & -\>e_{N1}\\
-\>e_{12} &x+1-e_{22}& \dots & -\>e_{N2}\\
\dots & \dots &\dots &\dots \\
-\>e_{1N} & -\>e_{2N}& \dots & x+N-1-e_{NN}
\end{matrix}\right).
\vv.3>
\eeq
The next statement was proved in \cite{HU}, see also \cite[Section~2.11]{MNO}.

\begin{thm}
\label{Zcent}
The coefficients of the polynomial\/ $\,Z(x)-x^N$ are free generators
of the center of\/ $\Ugln$.
\qed
\end{thm}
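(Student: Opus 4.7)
The plan is to prove the statement via the Harish--Chandra isomorphism $\chi\colon Z(\Ugln)\xrightarrow{\sim}U(\h)^W$ with its shifted $W\cong S_N$ action, by computing $\chi(Z(x))$ explicitly and identifying its non-leading coefficients with a set of free polynomial generators of the $W$-invariants.

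First I would establish centrality. For each generator $e_{pq}\in\gln$, one verifies $[e_{pq},Z(x)]=0$ by expanding the row-determinant in \Ref{Zx} and commuting $e_{pq}$ through the row-ordered product using $[e_{ij},e_{kl}]=\delta_{jk}e_{il}-\delta_{li}e_{kj}$. This is a form of the classical Capelli identity: the integer shifts $0,1,\dots,N-1$ built into the diagonal of \Ref{Zx} are tuned precisely so that the commutator contributions produced when moving $e_{pq}$ past off-diagonal entries cancel against those coming from the shifts on the adjacent diagonal factors.

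Next I would compute $\chi(Z(x))$ by acting on a highest weight vector $v_\mu$ of weight $\mu=(\mu_1,\dots,\mu_N)$, arguing that only the identity permutation contributes in the row-determinant expansion. Indeed, for non-identity $\sigma\in S_N$, let $k_0:=\max\{k:\sigma(k)\ne k\}$; then $\sigma(k_0)<k_0$, so the row-$k_0$ factor is $-e_{\sigma(k_0),k_0}$, a \emph{raising} operator. All rows $k>k_0$ are fixed by $\sigma$ and contribute diagonal factors $x+k-1-e_{kk}$ with $k\notin\{\sigma(k_0),k_0\}$, and these commute with $e_{\sigma(k_0),k_0}$ by the $\gln$ relations. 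Hence the raising operator slides freely to the right of the product and annihilates $v_\mu$. The identity term therefore survives and yields
\[
Z(x)\,v_\mu \;=\; \prod_{k=1}^N\bigl(x+k-1-\mu_k\bigr)\,v_\mu,
\]
so, setting $y_k:=e_{kk}-(k-1)\in U(\h)$, we obtain $\chi(Z(x))=\prod_{k=1}^N(x-y_k)$.

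Finally, the shifted $W$-action on $U(\h)$ is precisely the one for which the shifted coordinates $\{y_k\}$ are freely permuted (up to an irrelevant additive constant absorbed by the choice of $\rho$), so $U(\h)^W$ is freely generated by the elementary symmetric functions $e_r(y_1,\dots,y_N)$, $r=1,\dots,N$. These are, up to sign, the coefficients of $\chi(Z(x))-x^N$; pulling back via the isomorphism $\chi$, the coefficients of $Z(x)-x^N$ are free generators of $Z(\Ugln)$. The principal obstacle is the Capelli-type commutator calculation in step one: although its conclusion is compact, the verification requires careful noncommutative bookkeeping and depends crucially on the specific integer shifts in \Ref{Zx}; the detailed combinatorial argument is carried out in \cite[Section~2.11]{MNO}.
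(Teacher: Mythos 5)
The paper does not actually prove this statement: Theorem \ref{Zcent} is quoted from \cite{HU} (see also \cite[Section~2.11]{MNO}) and stamped with a \qed, so there is no internal argument to compare against. Your outline is the standard proof, and it is essentially the one found in the cited sources: centrality of the coefficients of $Z(x)$ (a Capelli-type computation), evaluation on a highest weight vector to get the Harish--Chandra image, and identification of that image with the elementary symmetric polynomials in the shifted coordinates $y_k=e_{kk}-(k-1)$, which freely generate the invariants for the dotted $S_N$-action. The details you give are sound: for $\sigma\neq\mathrm{id}$ the row-$k_0$ factor with $k_0=\max\{k:\sigma(k)\neq k\}$ is $-e_{\sigma(k_0),k_0}$ with $\sigma(k_0)<k_0$, hence a raising operator, and it indeed commutes with the diagonal factors $x+k-1-e_{kk}$, $k>k_0$, standing to its right, so the term kills $v_\mu$; the identity term gives $\prod_k(x+k-1-\mu_k)$, and the common additive constant relating $\mu_k-(k-1)$ to $\mu_k+\rho_k$ is harmless, so the coefficients of $\chi(Z(x))-x^N$ are (up to sign) the $e_r(y_1,\dots,y_N)$ and pull back to free generators of the center. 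The one place where your proposal is not self-contained is the centrality of $Z(x)$, which you explicitly defer to \cite[Section~2.11]{MNO}; that is the genuinely delicate Capelli-type bookkeeping, so on that point your write-up is no more complete than the paper's citation -- but as an outline of the standard argument it is correct.
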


For example, the coefficient of $x^{N-2}$ equals
\bean
\label{C2}
&&
{\sum}_{1\leq j<l\leq N} ((e_{jj}-j+1)(e_{ll}-l+1) - e_{lj}e_{jl})
\\
&&
\notag
=
{\sum}_{1\leq j<l\leq N} (e_{jj}e_{ll} - e_{lj}e_{jl})+{\sum}_{j=1}^N (j-1)e_{jj} 
- \frac{N(N-1)}2{\sum}_{j=1}^N e_{jj} + \on{const}
\\
&&
\notag
=
{\sum}_{1\leq j<l\leq N} (e_{jj}e_{ll} - e_{lj}e_{jl}+ e_{ll}) -\frac{N(N-1)}2{\sum}_{j=1}^N e_{jj} + \on{const}
\\
&&
\notag
=
{\sum}_{1\leq j<l\leq N} (e_{jj}e_{ll} - e_{jl}e_{lj} +e_{jj}) -\frac{N(N-1)}2{\sum}_{j=1}^N e_{jj} + \on{const}.
\eean

\section{First properties of the Bethe algebra}
\label{sec:first results}
\subsection{Double periodicity of the RST-operator}

\begin{thm} 
 \label{thm trans}

The operator $\D$ satisfies
\bea
\D(x+ k+ l \tau) = \D(x), \qquad k,l\in \Z,
\eea
so each $S_k(x)$ is a doubly periodic function in the variable $x$.
\end{thm}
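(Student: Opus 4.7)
My plan is to treat the two lattice generators separately. The case $x \mapsto x+1$ is immediate: from Section \ref{sec col form}, both $\rho$ and $\sigma$ are $1$-periodic in their first argument, hence $\L_{jl}(x+1) = \L_{jl}(x)$ entrywise and $\D(x+1) = \D(x)$ directly from the definition of $\cdet$.

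For $x \mapsto x+\tau$, I would first use the quasi-periodicities $\rho(x+\tau) = \rho(x) - 2\pi i$ and $\sigma(x+\tau,w) = e^{-2\pi i w}\sigma(x,w)$ from Section \ref{sec col form} to compute
\[
  \L_{jj}(x+\tau) = \L_{jj}(x) - 2\pi i\, E_j, \qquad \L_{jl}(x+\tau) = e^{2\pi i \la_{jl}}\, \L_{jl}(x) \quad (j \ne l),
\]
where $E_j := \sum_{k=1}^n e_{jj}^{(k)}$ annihilates $V[0]$, since $V$ is a tensor product of $\sln$-modules and $V[0]$ is the zero-weight subspace. Setting $M_{jl} = \delta_{jl}(\der_x - \der_{\la_j}) + \L_{jl}$, I would then expand
\[
  \D(x+\tau) = \sum_{\si \in S_N} (-1)^\si\, M_{\si(1),1}(x+\tau) \cdots M_{\si(N),N}(x+\tau)
\]
and show that each $\si$-summand agrees, on $V[0]$, with the corresponding summand of $\D(x)$.

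Fix $\si$ and decompose it into nontrivial cycles and fixed points. For a nontrivial cycle $(i_1\,i_2\,\cdots\,i_k)$, the off-diagonal factors collectively contribute the scalar $\prod_\ell e^{2\pi i \la_{i_{\ell+1}, i_\ell}}$, which telescopes to $1$. Moreover each such exponential commutes with every other factor in the product: it commutes with any off-diagonal $\L$ (both are multiplications by functions of $\la$) and with $M_{jj}$ at any fixed point $j$, because $[\der_{\la_j}, e^{2\pi i \la_{pq}}] = 0$ whenever $j \notin \{p, q\}$, which holds since fixed points of $\si$ are disjoint from the supports of nontrivial cycles. Hence the exponentials can be freely collected and cancel. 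For each fixed point $j$ of $\si$, the correction $-2\pi i\, E_j$ in $M_{jj}(x+\tau)$ acts at position $j$ on the intermediate vector of weight $\mu^{(j)} = \sum_{l>j}(e_l - e_{\si(l)})$; bijectivity of $\si$ together with $\si(j) = j$ force $\mu^{(j)}_j = 0$ (since the unique preimage of $j$ under $\si$ is $j$ itself, no $l > j$ can satisfy $\si(l) = j$), so $E_j$ annihilates the intermediate vector and the correction disappears. Summing over $\si$ yields $\D(x+\tau)|_{V[0]} = \D(x)|_{V[0]}$.

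The main obstacle is the cycle-by-cycle bookkeeping: verifying both that the exponential factor from each nontrivial cycle actually commutes past every intervening factor in the product, and that each $E_j$ correction sits on a weight space where $E_j$ acts by zero. Both points reduce cleanly to the disjointness of fixed points from nontrivial cycle supports together with the bijectivity of $\si$.
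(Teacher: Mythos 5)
Your proposal is correct and follows essentially the same route as the paper's proof: expand the column determinant over permutations, observe that the off-diagonal entries pick up scalar factors $e^{2\pi i\la_{jl}}$ which commute with the relevant $\der_{\la_j}$'s (since the indices never meet a diagonal factor's index) and multiply to $1$ over each permutation, while the diagonal corrections $-2\pi i\sum_k e_{jj}^{(k)}$ contribute nothing. Your weight-of-the-intermediate-vector argument for discarding these corrections is just a rephrasing of the paper's step of commuting $e_{rr}$ past the tail of the monomial and letting it act by zero on $V[0]$, so the two proofs coincide in substance.
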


\begin{proof} Clearly $\D(x+1)=\D(x)$. 
Let us study $\D(x+\tau)-\D(x)$.
 The diagonal part the matrix  of $\D(x+\tau)-\D(x)$ is the matrix
$-2\pi i\on{diag}(e_{11},\dots,e_{NN})$.  The off-diagonal $(jl)$-th entry of  $\D(x+\tau)$ equals the
off-diagonal $(jl)$-th entry of  $\D(x)$ multiplied by $e^{2\pi i \la_{jl}}$. 

Decompose $\D(x+\tau)$ into the sum of monomials 
\bea
M_L(x+\tau):=\D_{l_1, 1}(x+\tau)\D_{l_2,2}(x+\tau)\dots \D_{l_N,N}(x+\tau) .
\eea 
Assume that such a monomial has a factor $\D_{rr}(x+\tau)=\der_x-\der_{\la_r} + \L_{rr}(x)-2\pi i e_{rr}$ for some $r$.

\begin{lem}
The presence of the  term $-2\pi i e_{rr}$ in  $\D_{rr}(x+\tau)$ does not contribute to 
the difference $\D(x+\tau)-\D(u)$.

\end{lem}

\begin{proof}
 In this monomial the element $e_{rr}$ stays in front of
the product 
\bea
\D_{l_{r+1}, r+1}(x+\tau)\D_{l_{r+2},r+2}(x+\tau)\dots \D_{l_{N},N}(x+\tau) .
\eea 
Since $r\notin\{r+1,\dots,N, l_{r+1},\dots,l_{N}\}$,  the element $e_{rr}$ commutes with 
$\D_{l_{r+1}, r+1}(x+\tau)\times$ $\D_{l_{r+2},r+2}(x+\tau)\dots \D_{l_{N},N}(x+\tau)$.
 Hence $e_{rr}$ can 
be written as the last factor of that monomial. After that  $e_{rr}$ will act on 
the zero weight subspace by zero and the corresponding product will act by zero.
\end{proof}

\begin{lem}
The operator $\der_{\la_r}$ commutes with each of the factors $e^{2\pi i \la_{r+1,l_{r+1}}}$,\dots, $e^{2\pi i \la_{N,l_N}}$ appearing
in the product 
$\D_{l_{r+1},r+1}(x+\tau)$
\dots $\D_{l_N,N}(x+\tau) $.
\end{lem}

\begin{proof} 
The lemma follows from the fact that $r\notin\{r+1,\dots,N, l_{r+1},\dots,l_N\}$
\end{proof}

By the first of the previous lemma we see that in order to prove the theorem 
we need to check that the factors $e^{2\pi i \la_{jl}}=\D_{jl}(x+\tau)/\D_{jl}(x)$ do not change the
determinant. By the second of the  previous lemma we see that those factors behave like in a standard commutative determinant
(they are not being differentiated).
Thus we need to check that in an arbitrary monomial $M_L(x+\tau)$ the total product of the factors $e^{2\pi i \la_{jl}}$ equals zero.
But that follows from the fact that $l_1,\dots,l_N $ is a permutation of $1,\dots,N$.
\end{proof}

\subsection{Dependence upon $z$}

Recall that $z=\{z_1,\dots,z_n\}$ is an $n$-tuple of complex numbers such that no pairwise 
difference belongs to $\Z+\tau\Z$.  The RST-operator depends on $z,\tau$.  It is clear from the formulas of 
Section \ref{sec col form} that
$\D(x,z_1,\dots,z_n,\tau) = \D(x,z_1,\dots,z_s+m,\dots, z_n,\tau)$ for any  $m\in \Z$.

\begin{thm}
\label{thm onz}
We have
  \bea
  \D(x,z_1,\dots,z_s+\tau,\dots, z_n,\tau) = e^{-2 \pi i \sum_l \la_l e_{ll}^{(s)}} \D(x,z_1,\dots,z_n,\tau) e^{2 \pi i \sum_l \la_l e_{ll}^{(s)}}
  \eea
\end{thm}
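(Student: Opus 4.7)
The strategy is to verify the identity entry by entry on the matrix inside $\cdet$ and then use the telescoping property of the column determinant under simultaneous conjugation.  Write $U = e^{2\pi i \sum_l \la_l e_{ll}^{(s)}}$ so that the claim reads $\D(x,\dots,z_s+\tau,\dots,\tau) = U^{\mp 1}\>\D(x,z,\tau)\>U^{\pm 1}$ (with the sign fixed by the computation below).

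\textbf{Step 1 (conjugation of building blocks).}  Since $U$ depends only on $\la$ and acts only in the $s$-th tensor factor, it commutes with $\der_x$ and with $e_{ab}^{(k)}$ for every $k\ne s$; it also commutes with the full weight generators $e_{aa}=\sum_k e_{aa}^{(k)}$.  Using the commutation relation $[e_{aa}^{(s)},e_{lj}^{(s)}]=(\delta_{al}-\delta_{aj})e_{lj}^{(s)}$ one gets $[\sum_a \la_a e_{aa}^{(s)},\,e_{lj}^{(s)}]=\la_{lj}\,e_{lj}^{(s)}$, hence
\[
U\,e_{lj}^{(s)}\,U^{-1} \;=\; e^{2\pi i\,\la_{lj}}\,e_{lj}^{(s)}.
\]
Likewise, $\der_{\la_j}U=U\bigl(\der_{\la_j}+2\pi i\,e_{jj}^{(s)}\bigr)$, which yields the conjugation formula for $\der_{\la_j}$.

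\textbf{Step 2 (shift of the elliptic building blocks).}  From the collected formulas of Section~\ref{sec col form}, specialized with $m=-1$, $n=0$, one has
\[
\rho\bigl(x-z_s-\tau\bigr)=\rho(x-z_s)+2\pi i,\qquad
\sigma\bigl(x-z_s-\tau,\,w\bigr)=e^{2\pi i w}\,\sigma(x-z_s,w).
\]
Only the single $k=s$ summand in $e_{jj}^{+}(x)$ and in $e_{lj}^{+}(x)$ is affected by the shift $z_s\mapsto z_s+\tau$.

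\textbf{Step 3 (entrywise matching).}  For off-diagonal entries, $\D_{jl}(x)=\sum_k \sigma(x-z_k,\la_{lj})e_{lj}^{(k)}$; the shift multiplies the $k=s$ summand by $e^{2\pi i \la_{lj}}$, which is exactly the factor produced by conjugating $e_{lj}^{(s)}$ by $U$ in the appropriate direction.  For diagonal entries, the shift changes $\rho(x-z_s)e_{jj}^{(s)}$ by $2\pi i\,e_{jj}^{(s)}$, which matches the effect of conjugating the $-\der_{\la_j}$ term by the same $U$ in the appropriate direction (both signs worked out via Step~1).  The remaining pieces of $\L_{jl}$, namely $\rho(\la_{jl})e_{ll}$ and the summands with $k\ne s$, are $U$-invariant and unaffected by the shift.

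\textbf{Step 4 (passing through $\cdet$).}  In the column determinant
\[
\cdet\bigl(\delta_{jl}\der_x-\delta_{jl}\der_{\la_j}+\L_{jl}\bigr)=\sum_{\si\in S_N}(-1)^\si\,\Dl_{\si(1),1}\dots \Dl_{\si(N),N},
\]
if each entry $\Dl_{jl}$ is replaced by $U\Dl_{jl}U^{-1}$, adjacent factors $U^{-1}U$ telescope within every monomial, so the whole column determinant is conjugated by the single $U$ on the outside.  Combining this with Step~3 yields the claimed identity.

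\textbf{Main obstacle.}  The only delicate point is reconciling the two distinct mechanisms by which $U$ enters: multiplicatively for the off-diagonal entries (through $\operatorname{Ad}_U e_{lj}^{(s)}$) and additively for the diagonal entries (through $[\der_{\la_j},U]$).  The tracks have to line up in the same direction of conjugation.  A small but essential bookkeeping check is that the shift formulas of Step~2 and the conjugation formulas of Step~1 produce matched signs; once this is confirmed, the passage through $\cdet$ is automatic by telescoping.
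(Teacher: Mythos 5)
Your proposal is correct and follows essentially the same route as the paper's proof: compute how conjugation by the exponential $e^{\pm 2\pi i\sum_l\la_l e_{ll}^{(s)}}$ acts on $e_{lj}^{(s)}$ and on $\der_{\la_j}$, compare this entrywise with the effect of the shift $z_s\mapsto z_s+\tau$ via the quasi-periodicity of $\rho$ and $\sigma$, and then pass the conjugation through the column determinant (your explicit telescoping in Step 4 is exactly what the paper leaves implicit after establishing $\D_{jl}(x,\dots,z_s+\tau,\dots)=A\,\D_{jl}(x,z)\,A^{-1}$ for all $j,l$). Your Step 1 formulas coincide in substance with the paper's conjugation identities, and your care in pinning down the direction of conjugation (which you leave as $U^{\pm1}$ to be fixed by the computation) is the same sign bookkeeping the paper carries out.
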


\begin{proof}
Let us calculate the effect of conjugating by $A= e^{-2 \pi i \sum_l \la_l e_{ll}^{(s)}}$:
\bea
&&
A e_{jj}^{(s)} A^{-1} = e_{jj}^{(s)},
\\
&&
A e_{jl}^{(s)} A^{-1} = e^{2\pi i \la_{jl}} e_{jl}^{(s)},
\qquad j\ne l,
\\
&&
A \partial_{\la_j} A^{-1} = \partial_{\la_j} - 2 \pi i e_{jj}^{(s)}.
\eea
  Thus, we have
\bea
&&
A \L_{jj}(x,z) A^{-1} = \L_{jj}(x,z),
\\
&&
A \L_{jl}(x,z) A^{-1} = e^{2\pi i \la_{lj}} \si(x-z_s, \la_{lj}) e_{lj}^{(s)} + 
{\sum}_{k\neq s} \si(x-z_k, \la_{lj}) e_{lj}^{(k)}.
\eea
Let $\D_{jl}(x,z) = \delta_{jl}\der_x - \delta_{jl}\der_{\la_j} + \L_{jl}$.  Then
\bea
&&
A \,\D_{jj}(x,z) A^{-1} = \der_x - \der_{\la_j} + 2 \pi i e_{jj}^{(s)} + \L_{jj} = \D_{jj}(x,z) A^{-1} + 2 \pi i e_{jj}^{(s)},
\\
&&
A \,\D_{jl}(x,z) A^{-1} = e^{2\pi i \la_{lj}} \si(x-z_s, \la_{lj}) e_{lj}^{(s)} + {\sum}_{k\neq s} \si(x-z_k, \la_{lj}) e_{lj}^{(k)}
.
\eea

On the other hand, the formulas of Section \ref{sec col form} yield
\bea
&&
\L_{jj}(x,z_1,\dots,z_s+\tau,\dots, z_n) = \L_{jj}(x,z_1,\dots, z_n) + 2\pi i e_{jj}^{(s)},
\\
&&
\L_{jl}(x,z_1,\dots,z_s+\tau,\dots, z_n) =
 e^{2\pi i \la_{lj}} \si(x-z_s, \la_{lj}) e_{lj}^{(s)} + {\sum}_{k\neq s} \si(x-z_k, \la_{lj}) e_{lj}^{(k)}
,
\eea
thus we have shown that for any $j,l$, $D_{jl}(x,z_1,\dots,z_s+\tau,\dots, z_n) = A \,\D_{jl}(x,z) A^{-1}$.  
This yields the theorem.
\end{proof}

\begin{cor} \label{cor Bet alg conjug}

The two commutative algebras $\B^V(z_1,\dots,z_n, \tau)$ and 
$\B^V(z_1,\dots,z_s+\tau,\dots,$ $z_n, \tau)$ of operators on  $\Fun$  
 are conjugated by the operator
$e^{-2 \pi i {\sum}_l \la_l e_{ll}^{(s)}}$.
\end{cor}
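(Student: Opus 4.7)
The plan is to deduce the corollary directly from Theorem \ref{thm onz} by comparing coefficients of powers of $\der_x$. Write
\begin{equation*}
\D(x,z_1,\dots,z_n,\tau) = \der_x^N + \sum_{j=1}^N S_j(x,z_1,\dots,z_n,\tau)\, \der_x^{N-j}
\end{equation*}
and set $A = e^{-2\pi i \sum_l \la_l e_{ll}^{(s)}}$. Since $A$ does not depend on $x$, it commutes with $\der_x$. Applying Theorem \ref{thm onz} and equating coefficients of $\der_x^{N-j}$ on each side yields
\begin{equation*}
S_j(x,z_1,\dots,z_s+\tau,\dots,z_n,\tau) \;=\; A\, S_j(x,z_1,\dots,z_n,\tau)\, A^{-1}, \qquad j=1,\dots,N,
\end{equation*}
for every $x\in\C$.

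The one substantive point to verify is that $A$ is well-defined as an operator on $\Fun$, i.e., that it preserves the property of depending only on the differences $\la_{jl}$. On a tensor of weight vectors $v_{\mu^{(1)}}\ox\cdots\ox v_{\mu^{(n)}}$ the operator $\sum_l \la_l e_{ll}^{(s)}$ acts as multiplication by the scalar $\sum_l \la_l \mu_l^{(s)}$. Because each $V_k$ is an $\sln$-module, the central element $\sum_l e_{ll}$ acts by zero on $V_s$, so $\sum_l \mu_l^{(s)}=0$; consequently $\sum_l \la_l \mu_l^{(s)} = \sum_{l\ge 2}(\la_l-\la_1)\mu_l^{(s)}$ depends only on the differences $\la_{jl}$. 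Hence $A$ restricts to an invertible operator on $\Fun$.

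Since the Bethe algebras $\B^V(z_1,\dots,z_n,\tau)$ and $\B^V(z_1,\dots,z_s+\tau,\dots,z_n,\tau)$ are generated by the identity together with the families $\{S_j(x,\cdot,\tau)\mid j=1,\dots,N,\ x\in\C\}$ associated with the respective parameter tuples, the coefficient-wise identity displayed above shows that conjugation by $A$ carries one generating set bijectively onto the other, hence carries one algebra onto the other. There is no real obstacle in the argument; the only delicate step is the verification that $A$ preserves $\Fun$, which relies essentially on the $\sln$ hypothesis on the modules $V_k$. Everything else is a formal consequence of extracting coefficients in an operator identity whose conjugating factor is $\der_x$-independent.
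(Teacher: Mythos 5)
Your proposal is correct and follows essentially the same route the paper intends: the corollary is an immediate consequence of Theorem \ref{thm onz}, obtained by extracting coefficients of $\der_x^{N-j}$ (legitimate since $A$ is $x$-independent) and noting that conjugation by $A$ carries the generators $S_j(x)$ for one parameter tuple onto those for the shifted tuple. Your additional check that $A$ preserves $\Fun$, using that the central element acts by zero on each $\sln$-module so the exponent depends only on the differences $\la_{jl}$, is a sensible verification that the paper leaves implicit.
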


\subsection{Laurent decomposition of the RST-operator}

\begin{thm} 
\label{lem laurent}

Let
\bea
\D(x,z,\tau) = {\sum}_{j =1}^N \frac {c_{i,j}(z,\tau)}{(x-z_i)^j} + \mc O(1), \qquad i=1,\dots,n,
\eea 
be the Laurent expansion of $\D(x,z,\tau)$  at $x=z_i$, where $c_{i,j}(z,\tau)$ are operators on $\Fun$.
  Then
\bea
\D(x,z,\tau) = {\sum}_{i=1}^n{\sum}_{j=1}^N \frac{(-1)^{j-1}}{(j-1)!}\,
\rho^{(j-1)}(x-z_i,\tau)\, c_{i,j}(z,\tau)\ + \ c_0(z,\tau),
\eea
where $\rho^{(j-1)}(x,\tau)$ is the $(j-1)$-st derivative of $\rho(x,\tau)$ with respect to $x$,
and $c_0(z,\tau)$ is an operator on $\Fun$ independent of $x$.

\end{thm}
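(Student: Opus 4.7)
The plan is to subtract off the principal parts of $\D(x,z,\tau)$ at each $z_i$ and argue that what is left is a doubly periodic entire function of $x$, hence constant in $x$ by Liouville.

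First I would check that $\D(x,z,\tau)$, as a meromorphic function of $x$, has poles only at $x \in z_i + \Z + \tau\Z$. This is immediate from the explicit formulas for the entries $\L_{jl}(x)$: the functions $\rho(x-z_k,\tau)$ and $\si(x-z_k,\la_{lj},\tau)$ are meromorphic with singularities only where $x - z_k \in \Z + \tau\Z$, and the column determinant preserves this pole structure. Next, using $\rho(x) = x^{-1} + \mc O(x)$ from Section \ref{sec col form} together with the oddness of $\rho$, one finds $\rho^{(j-1)}(x,\tau) = (-1)^{j-1}(j-1)!\, x^{-j} + (\text{holomorphic})$ near $x=0$. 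Hence the candidate right-hand side
\[
R(x) := \sum_{i=1}^n\sum_{j=1}^N\frac{(-1)^{j-1}}{(j-1)!}\,\rho^{(j-1)}(x-z_i,\tau)\, c_{i,j}(z,\tau)
\]
has principal part $\sum_{j=1}^N c_{i,j}(z,\tau)/(x-z_i)^j$ at each $z_i$, matching the principal part of $\D(x,z,\tau)$ by the very definition of the $c_{i,j}$.

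Consequently the difference $D(x) := \D(x,z,\tau) - R(x)$ is entire in $x$, with values in the space of differential operators on $\Fun$. It remains to show $D$ is doubly periodic. By Theorem \ref{thm trans}, $\D(x,z,\tau)$ is doubly periodic. For $j\geq 2$, $\rho^{(j-1)}$ is doubly periodic by the transformation formulas of Section \ref{sec col form} (the constant shift in $\rho$ disappears under differentiation). For $j=1$, the relation $\rho(x+k+l\tau,\tau) = \rho(x,\tau) - 2\pi i l$ shows that the $\tau$-shift of $R$ picks up an extra $-2\pi i \sum_{i=1}^n c_{i,1}(z,\tau)$. This sum is the total residue of the doubly periodic meromorphic operator $\D(x,z,\tau)$ over a fundamental parallelogram of $\Z+\tau\Z$, and so vanishes by the standard contour-integration argument applied entrywise (say, by testing against a dense family of functions in $\Fun$ and applying the classical residue theorem to each resulting scalar meromorphic function). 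Therefore $D(x+1) = D(x+\tau) = D(x)$.

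A doubly periodic entire function is constant, applied entrywise once $D$ is regarded as a matrix of differential operators acting on $\Fun$. Denoting this constant by $c_0(z,\tau)$ yields the claimed identity. The main thing to check carefully is the vanishing $\sum_i c_{i,1}(z,\tau) = 0$, but this is a direct application of the residue theorem to the doubly periodic function $\D(x,z,\tau)$; the rest of the argument is structural bookkeeping about principal parts and the periodicity of derivatives of $\rho$.
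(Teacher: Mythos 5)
Your proposal is correct and follows essentially the same route as the paper's proof: subtract the $\rho$-derivative combination matching the principal parts, show $\sum_i c_{i,1}(z,\tau)=0$ by integrating the doubly periodic operator $\D(x,z,\tau)$ over the boundary of a fundamental parallelogram, and conclude that the entire doubly periodic difference is constant in $x$. The only detail the paper adds is choosing the fundamental parallelogram so that each $z_i$ is congruent to an interior point (so the boundary integral avoids the poles), which your argument should also note.
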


\begin{proof}
  First, choose a fundamental parallelogram $\La$ for $\C/(\Z + \tau \Z)$ acting on $\C$
such that each $z_i$, $i=1,\dots,n$, differs by an element of $\Z + \tau \Z$ from an 
element $\tilde z_i$ in the interior of $\La$.    
Then the set 
of poles of $\D(x,z,\tau)$ in $\La$ with respect to $x$ is $\{\tilde z_1, \dots, \tilde z_n\}$. 
 Since $\D(x,z,\tau)$ is doubly periodic, the Laurent expansion of $\D(x,z,\tau)$ at $x=z_i$
 has the same coefficients as the expansion at $x=\tilde z_i$.
  
We have 
\bea
 \sum_{i=1}^n c_{i,1}(z,\tau) =\int_{\der \La} \D(x,z,\tau) dx= 0.
\eea
Now the difference $\D(x) - \sum_{i=1}^n\sum_{j=1}^N \frac{(-1)^{j-1}}{(j-1)!}\rho^{(j-1)}(x-\tilde z_i,\tau) c_{i,j}$
is an entire doubly periodic function of $x$, thus the difference is a constant with respect to $x$.
\end{proof}

\begin{cor}
The dynamical elliptic Bethe algebra
$\B^V(z,\tau)$ is generated by the $nN+1$ elements $c_{i,j}(z,\tau)$ and $c_0(z,\tau)$ of Theorem \ref{lem laurent}.
\qed
\end{cor}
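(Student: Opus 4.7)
The plan is to show the two algebras coincide: the dynamical elliptic Bethe algebra $\B^V(z,\tau)$, generated by $\{S_j(x) : 1 \leq j \leq N,\ x \in \C\}$, and the subalgebra generated by the $nN+1$ elements $\{c_{i,j}(z,\tau), c_0(z,\tau)\}$ from Theorem \ref{lem laurent}. Theorem \ref{lem laurent} itself supplies the bridging identity between the two generating sets, so the argument reduces to matching coefficients of powers of $\der_x$ and inverting a linear system of scalar evaluations.

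For the forward inclusion $\B^V(z,\tau) \subseteq \langle c_{i,j}, c_0 \rangle$, I would equate the two expressions for $\D(x,z,\tau)$, namely
\[
\der_x^N + \sum_{k=1}^N S_k(x)\,\der_x^{N-k} \;=\; \sum_{i=1}^n \sum_{j=1}^N \frac{(-1)^{j-1}}{(j-1)!}\,\rho^{(j-1)}(x-z_i,\tau)\,c_{i,j}(z,\tau) \;+\; c_0(z,\tau),
\]
and compare coefficients of each $\der_x^{N-k}$. Each $S_k(x)$ thereby appears as a $\C$-linear combination, with $x$-dependent scalar coefficients built from $\rho^{(j-1)}(x-z_i,\tau)$, of the $\der_x^{N-k}$-components of the $c_{i,j}$'s and $c_0$. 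So every generator $S_k(x)$ of $\B^V(z,\tau)$ already lies in the subalgebra generated by $\{c_{i,j}, c_0\}$.

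For the reverse inclusion $\langle c_{i,j}, c_0 \rangle \subseteq \B^V(z,\tau)$, I would invoke $\C$-linear independence of the $nN+1$ meromorphic functions $\{\rho^{(j-1)}(x-z_i,\tau)\}_{i,j} \cup \{1\}$ as functions of $x$: within the fundamental parallelogram used in the proof of Theorem \ref{lem laurent}, each $\rho^{(j-1)}(x-z_i,\tau) = (-1)^{j-1}(j-1)!(x-z_i)^{-j} + \mc O(1)$ has a pole of order exactly $j$ at $x = z_i$ and no poles at the remaining $z_{i'}$'s, while the constant function $1$ is holomorphic, so comparing principal parts at each $z_i$ instantly forces every coefficient in a vanishing linear combination to be zero. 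Choosing $nN+1$ generic evaluation points $x_1, \dots, x_{nN+1}$ then yields an invertible scalar coefficient matrix, and inverting the resulting system expresses each $c_{i,j}$ and $c_0$ as a $\C$-linear combination of the operators $\D(x_r, z, \tau)$ for $r = 1, \dots, nN+1$; after subtracting the $\der_x^N$-term and matching each $\der_x^{N-k}$-component, the coefficient operators of the $c_{i,j}$ and $c_0$ become $\C$-linear combinations of the generators $S_k(x_r)$, hence belong to $\B^V(z,\tau)$.

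No serious obstacle arises: the corollary is essentially a bookkeeping consequence of Theorem \ref{lem laurent} combined with the linear independence of the special functions $\rho^{(j-1)}(x-z_i,\tau)$, which in turn follows immediately from their prescribed pole structures in a fundamental parallelogram.
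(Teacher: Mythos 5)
Your argument is correct, and it is exactly the routine bookkeeping the paper leaves implicit (the corollary is stated with a \qed and no written proof): the forward inclusion is coefficient matching in the identity of Theorem \ref{lem laurent}, and the reverse inclusion follows because the functions $\rho^{(j-1)}(x-z_i,\tau)$ and $1$ are linearly independent, so the $c_{i,j}$ and $c_0$ can be recovered from finitely many values $\D(x_r,z,\tau)$ (equivalently, from the $S_k(x_r)$), your evaluation-matrix argument being a standard and valid way to see this (one could equally extract the $c_{i,j}$ as contour integrals of $\D(x)$ around $z_i$).
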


\subsection{The coefficients $S_1(x)$ and $S_2(x)$}

\begin{lem}
We have
\bea
S_1(x,z,\tau) = {\sum}_{j=1}^N \L_{jj}(x,z,\tau) - {\sum}_{j=1}^N \der_{\la_j} = 0 
\eea
as an operator on $\Fun$.
\qed
\end{lem}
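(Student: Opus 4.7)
The plan is to read off $S_1(x)$ directly from the column determinant defining $\D$, and then verify that each summand annihilates elements of $\Fun$. Since the off-diagonal entries of the matrix $(\delta_{jl}\der_x - \delta_{jl}\der_{\la_j} + \L_{jl})$ contain no $\der_x$, and each diagonal entry contains at most one $\der_x$, the only terms contributing to the coefficient of $\der_x^{N-1}$ in the column determinant come from products in which $N-1$ diagonal factors supply their $\der_x$ and exactly one diagonal factor supplies its $\L_{jj}(x)-\der_{\la_j}$ part. This immediately yields the stated formula for $S_1(x)$.

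Next I would argue that $\sum_{j=1}^N \der_{\la_j}$ acts as zero on $\Fun$. By definition, elements of $\Fun$ depend only on the differences $\la_{jl} = \la_j - \la_l$, so they are invariant under the simultaneous shift $\la_k \mapsto \la_k + t$. Differentiating this invariance at $t=0$ gives $\sum_{j=1}^N \der_{\la_j} f = 0$ for every $f \in \Fun$.

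Finally I would decompose
\[
\sum_{j=1}^N \L_{jj}(x) \;=\; \sum_{k=1}^n \rho(x-z_k)\sum_{j=1}^N e_{jj}^{(k)} \;+\; \sum_{l=1}^N \Big(\sum_{j\ne l}\rho(\la_{jl})\Big)\, e_{ll}.
\]
The first term vanishes identically on $V$ because each $V_k$ is an $\sln$-module, so the $\gln$-central element $\sum_j e_{jj}$ acts by zero on every tensor factor. The second term vanishes on $V[0]$ because $e_{ll} = \sum_k e_{ll}^{(k)}$ is a Cartan element of $\gln$ and $V[0]$ is by definition contained in its kernel. Combining these observations shows $S_1(x) = 0$ on $\Fun$.

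There is no real obstacle here; the lemma is essentially a bookkeeping statement. The only mild subtlety is to keep track of which sum annihilates $\Fun$ for function-theoretic reasons (invariance under the diagonal shift of the $\la_j$'s) versus for representation-theoretic reasons (vanishing of the $\gln$-center on $\sln$-modules and of Cartan elements on the zero weight space).
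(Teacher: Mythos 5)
Your proposal is correct, and it is exactly the bookkeeping the paper treats as immediate: the lemma is stated with no proof, and the intended argument is precisely your reading off of the $\der_x^{N-1}$-coefficient from the identity permutation (non-identity permutations contribute at most $\der_x^{N-2}$, and normal-ordering $\der_x$ past $\L_{jj}(x)$ only affects lower-order coefficients), followed by the three vanishing observations: $\sum_j\der_{\la_j}$ kills functions of the differences $\la_{jl}$, the central element $\sum_j e_{jj}^{(k)}$ acts by zero on each $\sln$-module $V_k$, and the Cartan elements $e_{ll}$ annihilate $V[0]$. No gap to report.
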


\begin{thm}
 \label{prop:S_2}

We have the following identity of operators on $\Fun$:
\bea
S_2(x,z,\tau) = -2\pi i H_0(z,\tau) - {\sum}_{s=1}^n \Big[H_s(z,\tau)\rho(x-z_s,\tau) + c_2^{(s)} \rho'(x-z_s,\tau) \Big],
\eea
where  $c_2 = \sum_{j<l} (e_{jj}e_{ll} - e_{jl}e_{lj} + e_{jj})$
is a central element by equation (\ref{C2}).
\end{thm}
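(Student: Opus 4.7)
The plan is to apply Theorem \ref{lem laurent} and reduce the claim to matching three pieces of $S_2(x)$ computed directly from the $\cdet$ \Ref{rdet}: the coefficient of $(x-z_s)^{-2}$, the residue at $x=z_s$, and the $x$-independent part. These should equal $c_2^{(s)}$, $-H_s(z,\tau)$, and $-2\pi iH_0(z,\tau)$ respectively, since $\rho'(x-z_s)=-(x-z_s)^{-2}+O(1)$ and $\rho(x-z_s)=(x-z_s)^{-1}+O(x-z_s)$.

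To expand the $\cdet$, I write $A_k(x)=-\der_{\la_k}+\L_{kk}(x)$ and note that only permutations with at most two non-fixed points can contribute to the coefficient of $\der_x^{N-2}$. An induction on $N$, multiplying by one column at a time and using $\der_x\L_{kk}=\L_{kk}\der_x+\L_{kk}'$, shows the identity permutation contributes $\sum_{j<l}A_jA_l+\sum_{k=1}^N(k-1)\L_{kk}'(x)$; each transposition $(j_1,j_2)$ with $j_1<j_2$ contributes $-\L_{j_2,j_1}\L_{j_1,j_2}$ at $\der_x^{N-2}$ (the commutator corrections from pushing $\der_x$'s past the off-diagonal $\L$'s land in lower orders of $\der_x$). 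Using the Laurent expansions of $\rho,\sigma$ and the identity $\sigma(x,w)\sigma(x,-w)=\rho'(w)-\rho'(x)$ from Lemma \ref{lem id1}, the $(x-z_s)^{-2}$-coefficient of $S_2(x)$ works out to $\sum_{j<l}(e_{jj}^{(s)}e_{ll}^{(s)}-e_{jl}^{(s)}e_{lj}^{(s)})-(N-1)\sum_je_{jj}^{(s)}$, which equals $c_2^{(s)}$ on $V$: the discrepancy is a multiple of the $\gln$-central element $\sum_je_{jj}^{(s)}$, which is zero on $V$ by hypothesis.

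For the residue at $z_s$, I would pair a singular part at $z_s$ of one factor with the regular part at $z_s$ of another. The products $\L_{jj}(x)\L_{ll}(x)$ yield $\rho(z_s-z_p)e_{jj}^{(s)}e_{ll}^{(p)}$-type terms, the products $\L_{j_2,j_1}(x)\L_{j_1,j_2}(x)$ yield $\sigma(z_s-z_p,\la_{lj})e_{jl}^{(s)}e_{lj}^{(p)}$-type terms, and the $-\der_{\la_k}\L_{kk}$ pieces inside $A_jA_l$ yield the $e_{kk}^{(s)}\der_{\la_k}$ terms. Direct inspection identifies the combination as $-H_s(z,\tau)$.

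The remaining step is identifying the $x$-independent part of $S_2(x)$ with $-2\pi iH_0(z,\tau)$. The purely $\der_\la$-quadratic piece $\sum_{j<l}\der_{\la_j}\der_{\la_l}$ reduces on $\Fun$ (where $\sum_k\der_{\la_k}$ acts as zero) to $-\tfrac12\sum_k\der_{\la_k}^2$, matching the Laplacian term. The remaining bilinear cross-terms between distinct sites $z_s\ne z_p$ must assemble into the $\eta$- and $\varphi$-terms of $H_0$: Lemma \ref{lem id2} (via $\eta=\rho^2+\rho'$) absorbs the $\Omega_0^{(s,p)}$ contributions, while the identity $\varphi(\la_{jl},z_s-z_p)=\sigma(z_s-z_p,-\la_{jl})[\rho(\la_{jl}-(z_s-z_p))-\rho(\la_{jl})]$ from Section \ref{sec col form} absorbs the $\Omega_{jl}^{(s,p)}$ contributions. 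This accounting is the main obstacle: it requires careful tracking of the many bilinear cross-terms from both the identity and transposition expansions and several applications of the theta-function identities. A useful consistency check is that Theorem \ref{thm trans} combined with $\rho(x+\tau)=\rho(x)-2\pi i$ and the double periodicity of $\rho'$ forces $\sum_sH_s(z,\tau)=0$, in agreement with \Ref{eq kzb sum}.
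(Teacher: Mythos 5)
Your overall route is the same as the paper's: both arguments reduce the identity to matching the Laurent data of the doubly periodic operator $S_2(x)$ at the points $z_s$ --- the $(x-z_s)^{-2}$ coefficient with $c_2^{(s)}$, the residue with $-H_s(z,\tau)$, and the $x$-independent remainder with $-2\pi i H_0(z,\tau)$ (the paper organizes this by forming the difference $F$ of the two sides and proving $F=0$, but the skeleton is identical, and your use of Theorem \ref{lem laurent} is legitimate since $S_2$ is doubly periodic by Theorem \ref{thm trans}). Your treatment of the pole parts is essentially right, with one slip: the derivative corrections $\sum_k(k-1)\L_{kk}'$ contribute $-\sum_k(k-1)e_{kk}^{(s)}$ to the $(x-z_s)^{-2}$ coefficient, not $-(N-1)\sum_k e_{kk}^{(s)}$; with the correct value the discrepancy from $c_2^{(s)}$ is exactly $(N-1)\sum_k e_{kk}^{(s)}$, which does act by zero on $V$, whereas the expression you wrote is not off from $c_2^{(s)}$ by a multiple of $\sum_k e_{kk}^{(s)}$. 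Also, identifying the residue with $-H_s$ needs the further reductions that $\sum_{l\ne j}e_{jj}^{(s)}\der_{\la_l}=-e_{jj}^{(s)}\der_{\la_j}$ on $\Fun$ and that $\sum_{j<l}(e_{jj}\ox e_{ll}+e_{ll}\ox e_{jj})^{(s,p)}$ differs from $-\Omega_0^{(s,p)}$ by $(c_1\ox c_1)^{(s,p)}$ with $c_1=\sum_k e_{kk}$, which vanishes on $V[0]$; these are the same ``acts by zero on the zero weight space'' steps the paper makes explicit.

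The genuine gap is the third step, which you yourself call ``the main obstacle'' and then do not carry out: proving that the $x$-independent part equals $-2\pi i H_0(z,\tau)$ with no remainder. This is the heart of the theorem and occupies most of the paper's proof. After removing the pole parts one is left with a $\la$- and $z$-dependent expression; the paper kills the $\la_{jl}$-dependence using the three-term identity \Ref{5t} of Lemma \ref{lem id1} together with $\varphi(\la,0)=-\rho'$ (the leftover $\sum_{j<l}\varphi(\la_{jl},0)(e_{jj}-e_{ll})$ being a combination of total $e_{kk}$'s, hence zero on $V[0]$), and then must still show that the residual $z$-dependent constant vanishes; this is done by a separate computation proving that it is entire and doubly periodic in each $z_k$, hence $z$-independent, and finally by degenerating to $z_1=\dots=z_n=0$, where Lemma \ref{lem id2} reduces it to $\tfrac14\eta(0)\big[(c_1)^2+2\sum_{j<l}e_{jj}e_{ll}\big]$, which vanishes on $V[0]$. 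Merely citing Lemma \ref{lem id2} and the identity $\varphi(x,w)=\si(w,-x)(\rho(x-w)-\rho(x))$ does not establish that the bilinear cross-terms assemble into exactly the $\eta$- and $\varphi$-terms of $H_0$ with zero remainder; without this computation (or some substitute for it, e.g.\ the paper's periodicity-plus-degeneration argument) the theorem is not proved.
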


\begin{cor} \label{cor fingen}
Assume that $N=2$ and  each $V_s$, $s=1,\dots,n$, is an irreducible $\slt$-module.  
Then the dynamical elliptic Bethe algebra $\B^V\!(z,\tau)$ is generated by the KZB operators
$H_0(z,\tau)$, \dots,$H_n(z,\tau)$.
\end{cor}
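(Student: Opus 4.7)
The plan is to combine two generation statements for $\B^V(z,\tau)$ — the one from the Corollary to Theorem \ref{lem laurent} (the Bethe algebra is generated by the Laurent coefficients $c_{i,j}(z,\tau)$ and $c_0(z,\tau)$ of the RST-operator) and the explicit formula for $S_2(x)$ in Theorem \ref{prop:S_2} — and then kill the contributions of $c_2^{(s)}$ by invoking Schur's lemma.

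First I would reduce the set of generators. Since $N=2$, the universal operator takes the form $\D = \partial_x^2 + S_1(x)\partial_x + S_2(x)$, and by the Lemma stated immediately before Theorem \ref{prop:S_2} we have $S_1(x)\equiv 0$ on $\Fun$. Hence $\B^V(z,\tau)$ is generated by $\{S_2(x):x\in\C\}$ together with the identity. Applying the Corollary to Theorem \ref{lem laurent} in the case $N=2$, this algebra is also generated by $\{c_{i,1}(z,\tau),\,c_{i,2}(z,\tau):i=1,\dots,n\}\cup\{c_0(z,\tau)\}$.

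Next I would identify these Laurent coefficients with the KZB operators by comparing the two expressions for $S_2(x)$. The formula in Theorem \ref{prop:S_2} together with the expansions $\rho(y)=y^{-1}+\O(y)$ and $\rho'(y)=-y^{-2}+\O(1)$ shows that $S_2(x)$ has a simple pole at each $z_i$ with residue $-H_i(z,\tau)$ and a double pole whose $(x-z_i)^{-2}$ coefficient is $c_2^{(i)}$. Since $\D(x,z,\tau)$ is doubly periodic by Theorem \ref{thm trans}, the $x$-independent remainder must coincide with $-2\pi i H_0(z,\tau)$. Matching term by term with Theorem \ref{lem laurent} gives
\[
c_{i,1}(z,\tau) = -H_i(z,\tau), \qquad c_{i,2}(z,\tau) = c_2^{(i)}, \qquad c_0(z,\tau) = -2\pi i\,H_0(z,\tau).
\]

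Finally I would eliminate the $c_{i,2}$ generators. Because each $V_s$ is an irreducible $\slt$-module and $c_2\in U(\slt)$ is central, Schur's lemma implies $c_2^{(s)}$ acts on $V_s$, and hence on $V$, $V[0]$, and $\Fun$, as a scalar $\kappa_s\in\C$. Thus every $c_{i,2}$ lies in $\C\cdot\text{id}$, so these generators are redundant, and $\B^V(z,\tau)$ is generated by $\text{id}$ together with $\{H_0(z,\tau),H_1(z,\tau),\dots,H_n(z,\tau)\}$, which proves the corollary.

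The only nontrivial step is the identification of Laurent coefficients in the middle paragraph; but this is a direct local computation from the series for $\rho$ and $\rho'$, plus the observation that double periodicity forces the constant term to be the global $c_0$. Everything else is formal. I expect no real obstacle.
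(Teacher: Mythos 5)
Your proposal is correct and follows essentially the same route as the paper: the paper's proof consists precisely of the Schur's lemma observation that each $c_2^{(s)}$ acts by a scalar, with the identification of the Laurent coefficients of $S_2(x)$ with $-H_s(z,\tau)$, $c_2^{(s)}$, $-2\pi i H_0(z,\tau)$ (via Theorem \ref{prop:S_2} and the Corollary to Theorem \ref{lem laurent}) left implicit. You merely spell out that extraction step explicitly, which is fine.
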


\begin{proof}
  Since all $V_s$ are irreducible, each $(c_2)^{(s)}$ acts by a scalar on $\on{Fun}_\slt(V[0])$.
\end{proof}

\begin{proof}[Proof of Theorem \ref{prop:S_2}]
Denote
\bea
F(x,\la_{jl}) &:=& S_2(x) + 2\pi i H_0(z) + {\sum}_{s=1}^n \Big[H_s(z)\rho(x-z_s) + c_2^{(s)} \rho'(x-z_s) \Big]
\\
&&
 - {\sum}_{s\neq p} \rho(z_s-z_p) \rho(x-z_s)(c_1 \otimes c_1)^{(s,p)}.
\eea
It is immediate that $F(x,\la_{jl})$ is meromorphic in $x$, having poles of at most second order at the points $z_s + \Z+\tau\Z$, $s=1,\dots,n$, and holomorphic elsewhere.  Similarly, $F(x,\la_{jl})$ is meromorphic in each varaible $\la_{jl}$, having poles of at most second order at the points $\Z+\tau\Z$, and holomorphic elsewhere.

We have
\bea
S_2(x)
&=& {\sum}_{1\leq j<l\leq N} \Big[(\L_{jj}-\der_{\la_j})(\L_{ll} - \der_{\la_l}) - \L_{lj}\L_{jl} + \L_{ll}' \Big]
\\
&=&
{\sum}_{1\leq j<l\leq N}
\big[\der_{\la_j}\der_{\la_l} - \L_{jj}\der_{\la_l}-\L_{ll}\der_{\la_j}
+\L_{jj}\L_{ll} - \L_{lj}\L_{jl} -\frac{\der \L_{ll}}{\der \la_j} + \frac{\der \L_{ll}}{\der x}
\big].
\eea
We may expand this expression
\bea
S_2(x) &=&
-\frac{1}2 \sum_j \der_{\la_j}^2 + \sum_{j<l}\Big[ \sum_s (-\rho(x-z_s) e_{jj} \der_{\la_l} - \rho(x-z_s) e_{ll} \der_{\la_j} + \rho'(x-z_s) e_{ll})^{(s)}
\\
&& + \sum_{s,p} (\rho(x-z_s)\rho(x-z_p) e_{jj} \otimes e_{ll} - \sigma(x-z_s,\la_{jl})\sigma(x-z_p,-\la_{jl}) e_{jl}\otimes e_{lj} )^{(s,p)} \Big].
\eea
The function $S_2(x)$ is doubly periodic in the variable $x$ by Theorem \ref{thm trans}.  

The coefficient of $(x-z_s)^{-2}$ in the Laurent expansion of $S_2(x)$ about $z_s$ is
$c_2^{(s)}$.
Hence $F(x,\la_{jl})$ has no second order poles in the variable $x$.

The coefficient of $(x-z_s)^{-1}$ in $S_2(x)$ is
\bea
{\sum}_{j<l} & \Big[ -(e_{jj}\der_{\la_l} + e_{ll}\der_{\la_j})^{(s)}
  + \sum_{p:p\neq s} \big[\rho(z_s-z_p)(e_{jj}\otimes e_{ll} + e_{ll}\otimes e_{jj})
  \\
  & - \sigma(z_s-z_p,-\la_{jl}) e_{jl}\otimes e_{lj}
  - \sigma(z_s-z_p,\la_{jl}) e_{lj}\otimes e_{jl} \big]^{(s,p)}
  \Big],
\eea
which equals $- H_s(z)$.
We have $\sum_s  H_s(z)= 0$  by \Ref{eq kzb sum}. Hence  the function
\bea
-{\sum}_{s=1}^n  H_s(z)  \rho(x-z_s)
\eea
is doubly periodic with respect to $x$ and has the same residues as $S_2(x)$.
Now we may conclude  that $F(x,\la_{jl})$ is an entire
doubly periodic function in $x$. Thus it is a function only depending on the variables $\la_{jl}$.

Next, we may write
\bea
F(x,\la_{jl}) 
& =&
{\sum}_{s} c_2^{(s)} \rho'(x-z_s) - {\sum}_{s\neq p} \rho(z_s-z_p) \rho(x-z_s)(c_1 \otimes c_1)^{(s,p)}
\eea
\bea
+ {\sum}_{s} \big[ 
{\sum}_{j<l} -\rho(x-z_s) (e_{jj}\der_{\la_l}+ e_{ll}\der_{\la_j}) - {\sum}_{j} \rho(x-z_s) e_{jj}\der_{\la_j} \big]^{(s)}
\eea
\bea
&& + {\sum}_{s\neq p} {\sum}_{j<l} \Big[-\si(x-z_s,\la_{jl})\si(x-z_p,-\la_{jl}) - \varphi(\la_{jl},z_s-z_p) 
\\
&& \phantom{aaaaaaaaa}+ \si(z_s-z_p,-\la_{jl})(\rho(x-z_s)-\rho(x-z_p)) \Big](e_{jl}\otimes e_{lj}) ^{(s,p)}
\\
&& + \sum_{s} \sum_{j<l} \Big[\left(-\si(x-z_s,\la_{jl})\si(x-z_s,-\la_{jl}) - \frac{1}2 \varphi(\la_{jl},0) \right) e_{jl} e_{lj} - \frac{1}2 \varphi(\la_{jl},0) e_{lj} e_{jl} \Big]^{(s)}
\eea
\bea
&& + {\sum}_{s\neq  p} {\sum}_{j<l} \rho(x-z_s)\rho(x-z_p) (e_{jj}\otimes e_{ll})^{(s,p)}
\\
&& + {\sum}_{s\neq  p} {\sum}_j \big[\frac{1}4 \eta(z_s-z_p) + \rho(x-z_s)\rho(z_s-z_p) \big] (e_{jj}\otimes e_{jj})^{(s,p)}
\eea
\bea
 + {\sum}_s {\sum}_{j<l} \big[ \rho(x-z_s)^2 e_{jj} e_{ll} + \rho'(x-z_s) e_{ll} \big]^{(s)}
 + {\sum}_s {\sum}_j \frac{1}4 \eta(0) (e_{jj}^2)^{(s)}.
\eea
The second line is
\bea
&& -{\sum}_{s} \rho(x-z_s) \big[ {\sum}_{1\leq j<l\leq N} (e_{jj}\der_{\la_l} + e_{ll}\der_{\la_j}) + {\sum}_{j} e_{jj}\der_{\la_j}\big]^{(s)}
\\
&& = -{\sum}_s \rho(x-z_s) c_1^{(s)}(\der_{\la_1}+\dots + \der_{\la_N}),
\eea
and hence is zero.

By Lemma \ref{lem id1}, the sum on the third and fourth lines is zero, while the fifth line is equal to
\bea
&& \sum_{s} \sum_{j<l} \big[\rho'(x-z_s) e_{jl}e_{lj} + \frac{1}2 \varphi(\la_{jl},0) (e_{jj} - e_{ll})\big]^{(s)}
\\
&& = \sum_{s} \sum_{j<l} \big[\rho'(x-z_s) e_{jl}e_{lj}\big]^{(s)}
+ \sum_{j<l} \frac{1}2 \varphi(\la_{jl},0) (e_{jj} - e_{ll})
 = \sum_{s} \sum_{j<l} \big[\rho'(x-z_s) e_{jl}e_{lj}\big]^{(s)},
\eea
in $D_\gl(V[0])$.  This shows that, in fact, $F(x,\la_{jl})$ does not depend on $\la_{jl}$ either.

Now, the expression for $F=F(x,\la_{jl})$ inside of $D_\sl(V[0])$ reduces to
\bea
&& \sum_{s\neq  p} \sum_{j<l} \Big[\rho(x-z_s)\rho(x-z_p)+ \rho(z_p-z_s)\rho(z_p-x)
+ \rho(z_s-z_p)\rho(z_s-x)\Big] (e_{jj}\otimes e_{ll})^{(s,p)}
\\
&& + \sum_{s\neq  p} \sum_j \frac{1}4 \eta(z_s-z_p) (e_{jj}\otimes e_{jj})^{(s,p)}
 + \sum_s \sum_{j<l} \eta(x-z_s) (e_{jj} e_{ll})^{(s)}
+ \sum_s \sum_j \frac{1}4 \eta(0) (e_{jj}^2)^{(s)}
.
\eea
To complete the proof, we must show that $F=0$.  Let us view $F$ as a $D_\gl(V[0])$-valued function in the variable $z_k$, clearly regular at $z_k = x$ and $z_k = z_j$ for each $j\neq k$.  We will show that $F$ is a doubly periodic function in each variable $z_k$.  It follows that $F$ is entire in $z_k$, hence it does not depend on the choice of these points.  Then we can degenerate to the case when all $z_k = 0$, which will show that the resulting operator acts on $V[0]$ by zero.

We write $F = F(z_k)$ to emphasize the dependence of $F$ on the variable $z_k$. We have
\bea
F(z_k+\tau) - F(z_k) =
\sum_{s\neq k} \Big[\sum_{j<l} \big[(2\pi i)^2 - 2\pi i \left(\rho(z_k-u)+\rho(z_k-z_s)\right) \big] (e_{jj}\otimes e_{ll} + e_{ll}\otimes e_{jj})
\eea
\bea
 + \frac{1}2 \sum_j \big[(2\pi i)^2 - 4\pi i \rho(z_k-z_s) \big] e_{jj}\otimes e_{jj} \Big]^{(k,s)}
 + \sum_{j<l} \big[(2\pi i)^2 - 4\pi i \rho(z_k-x) \big] (e_{jj}e_{ll})^{(k)}
\eea
\bea
= \sum_{s\neq k} \Big[\sum_{j\neq l} \big[(2\pi i)^2 - 2\pi i \left(\rho(z_k-x)\right) \big] (e_{jj}\otimes e_{ll})
+ \frac{1}2 \sum_j \big[(2\pi i)^2 \big] e_{jj}\otimes e_{jj} \Big]^{(k,s)}
\eea
\bea
&&
+ \frac{1}2 \sum_{j\neq l} \big[(2\pi i)^2 - 4\pi i \rho(z_k-x) \big] (e_{jj}e_{ll})^{(k)}
\\
&&
\phantom{aaaaaaaa}
-2\pi i \sum_{s\neq k} \rho(z_k-z_s)\Big[\sum_{j\neq l} (e_{jj}\otimes e_{ll}
+ \sum_j e_{jj}\otimes e_{jj} \Big]^{(k,s)}
\eea
\bea
= \Big[\sum_{j\neq l} \big[(2\pi i)^2 - 2\pi i \left(\rho(z_k-x)\right) \big] (e_{jj}^{(k)} e_{ll} - (e_{jj} e_{ll})^{(k)})
+ \frac{1}2 \sum_j (2\pi i)^2  (e_{jj}^{(k)} e_{jj} - (e_{jj}^2)^{(k)}) \Big]
\eea
\bea
 + \frac{1}2 \sum_{j\neq l} \big[(2\pi i)^2 - 4\pi i \rho(z_k-x) \big] (e_{jj}e_{ll})^{(k)}
-2\pi i \sum_{s\neq k} \rho(z_k-z_s)\big[c_1 \otimes c_1 \big]^{(k,s)}
\eea
\bea
&&
=  2\pi^2 {(c_1)^2}^{(k)}
+ \sum_{j\neq l} \big[(2\pi i)^2 - 2\pi i \left(\rho(z_k-x)\right) \big] e_{jj}^{(k)} e_{ll}
\\
&&
\phantom{aaa}
+ \frac{1}2 \sum_j (2\pi i)^2 e_{jj}^{(k)} e_{jj}
 -2\pi i \sum_{s\neq k} \rho(z_k-z_s)\big[c_1 \otimes c_1 \big]^{(k,s)},
\eea
hence it is zero in $D_\gl(V[0])$.

Now, evaluating at $x=z_1=\dots = z_n=0$ using the Lemma \ref{lem id2}, we obtain the following expression for $F$.
\bea
F &=& \sum_{s\neq  p} \sum_{j<l} \eta(0) (e_{jj}\otimes e_{ll})^{(s,p)}
 + \sum_{s\neq  p} \sum_j \frac{1}4 \eta(0) (e_{jj}\otimes e_{jj})^{(s,p)}
\\
&& + \sum_s \sum_{j<l} \eta(0) (e_{jj} e_{ll})^{(s)}
+ \sum_s \sum_j \frac{1}4 \eta(0) (e_{jj}^2)^{(s)}
= \frac{1}4 \eta(0) \big[ (c_1)^2 + 2 \sum_{j<l} e_{jj}e_{ll}],
\eea
which is zero in $D_\gl(V[0])$.
\end{proof}

\subsection{Weyl group invariance}
The Weyl group $W$ acts on the Cartan subalgebra of $\sln$ and on the space $V[0]$
in the standard way.  Hence the Weyl group acts on $\Fun$  by the formula
\bea
s : \psi(\la) \mapsto s.(\psi(s^{-1} . \la)),
\eea
for $s\in W$, $\psi \in\Fun$.
This extends to a Weyl group action on the space 
\\
 $\End(\Fun)$, where for  
$T \in \End(\Fun)$ and $s\in W$, the operator $s(T)$ is defined as the product 
$sTs^{-1}$ of the three elements of $\End(\Fun)$.

\begin{lem} 
\label{weyl inv}
For the Lie algebra $\slt$, the RST-operator $\D(x,z,\tau)\in \End(\on{Fun}_{\slt}(V[0]))$ is Weyl group invariant.
\end{lem}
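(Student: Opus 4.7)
The plan is to reduce the lemma to a statement about the KZB operators, using Theorem \ref{prop:S_2}. For $\slt$ one has $\D(x,z,\tau) = \der_x^2 + S_2(x,z,\tau)$, and by that theorem
\[
S_2(x,z,\tau) = -2\pi i\, H_0(z,\tau) - {\sum}_{s=1}^n\bigl[H_s(z,\tau)\,\rho(x-z_s,\tau) + c_2^{(s)}\,\rho'(x-z_s,\tau)\bigr].
\]
The operator $\der_x^2$, the scalar functions $\rho(x-z_s)$ and $\rho'(x-z_s)$, and the scalar $-2\pi i$ are all independent of $\la$ and hence commute with the $W$-action on $\on{Fun}_{\slt}(V[0])$. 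The elements $c_2^{(s)}\in U(\slt)$ are central (by the discussion of \Ref{C2}), so they commute with the $W$-action on the $s$-th tensor factor. Thus it suffices to verify that each KZB operator $H_0(z,\tau), H_1(z,\tau),\dots,H_n(z,\tau)$ is Weyl-invariant.

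For $\slt$, $W=\{\on{id},s\}$ and the nontrivial $s$ acts by swapping $e_{11}\leftrightarrow e_{22}$ and $e_{12}\leftrightarrow e_{21}$ on $\slt$, and correspondingly by $\la_{12}\mapsto \la_{21}=-\la_{12}$ and $\der_{\la_1}\leftrightarrow \der_{\la_2}$ on $\on{Fun}$. Under this action $\Omega_0=\sum_k e_{kk}\otimes e_{kk}$ is manifestly invariant, and $\Omega_{jl}^{(s,p)}\leftrightarrow \Omega_{lj}^{(s,p)}$. I would then verify invariance term by term in the explicit formulas. The sums $\tfrac1{4\pi i}\sum_k \der_{\la_k}^2$ and $-\sum_k e_{kk}^{(s)}\der_{\la_k}$ are symmetric in $k$, hence invariant. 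In the sums $\sum_{j\neq l}\varphi(\la_{jl},z_s-z_p)\Omega_{jl}^{(s,p)}$ and $\sum_{j\neq l}\sigma(z_s-z_p,-\la_{jl})\Omega_{jl}^{(s,p)}$, the two summands $(j,l)=(1,2)$ and $(2,1)$ are interchanged by $s$, because both the scalar coefficient and the operator factor get swapped with their partners.

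The main technical nuisance, rather than a genuine obstacle, is being careful with operator conjugation: for a multiplication operator by $f(\la)$, conjugation by $s$ yields multiplication by $f(s^{-1}\la)$, while for the derivative $\der_{\la_k}$ it yields $\der_{\la_{s(k)}}$. Once these transformation rules are fixed, the Weyl-invariance of every piece of $H_0$ and $H_s$ is a direct symmetry check. Combining all these observations via the decomposition supplied by Theorem \ref{prop:S_2} then gives $s\,\D(x,z,\tau)\,s^{-1}=\D(x,z,\tau)$, which is the desired conclusion.
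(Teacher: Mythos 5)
Your proposal is correct and takes essentially the same route as the paper: the paper also reduces the lemma via Theorem \ref{prop:S_2} to the Weyl invariance of the KZB operators $H_0,\dots,H_n$ (together with the centrality of $c_2$), the only difference being that it simply quotes this invariance from \cite{FW} instead of checking it term by term as you sketch. The one small care needed in your direct check is that the Weyl representative may act on the off-diagonal generators with signs (e.g.\ $e_{12}\mapsto -e_{21}$), but these cancel because every relevant term, such as $\Omega_{jl}=e_{jl}\otimes e_{lj}$, is quadratic in them, so your symmetry argument goes through.
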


\begin{proof}
By \cite{FW} the KZB operators  $H_0(z,\tau),\dots,H_n(z,\tau)$ are Weyl group invariant.
Now the lemma  follows from Theorem \ref{prop:S_2}.
\end{proof}

\begin{cor}
All elements of the $\slt$ dynamical elliptic Bethe algebra  $\B^V(z,\tau)$ are  Weyl group invariant.

\end{cor}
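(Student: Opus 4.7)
The plan is to deduce the corollary from Lemma \ref{weyl inv} by extracting the Weyl invariance of each generator $S_j(x)$ from the Weyl invariance of the universal operator $\D(x,z,\tau)$, and then observing that the invariance property is preserved under the algebraic operations that build $\B^V(z,\tau)$.

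First, I would recall the definition of $\B^V(z,\tau)$: it is the commutative subalgebra of $\End(\on{Fun}_{\slt}(V[0]))$ generated by the identity together with the family $\{S_j(x)\mid j=1,2,\ x\in\C\}$, where $S_j(x)$ are the coefficients of $\D(x,z,\tau)=\der_x^2+S_1(x)\der_x+S_2(x)$ as in \Ref{udo}. To prove the corollary, it suffices to show that every $S_j(x)$ is $W$-invariant, i.e.\ $s(S_j(x))=S_j(x)$ for the nontrivial element $s\in W$.

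The key observation is that the Weyl group acts only on the Cartan variable $\la_{12}$ and on $V[0]$, but not on the spectral parameter $x$. Hence the operator $\der_x$ commutes with conjugation by $s$, and so does each power $\der_x^{N-j}$. Applying $s(\,\cdot\,)=s(\,\cdot\,)s^{-1}$ to the expansion $\D(x,z,\tau)=\der_x^N+\sum_{j=1}^N S_j(x)\der_x^{N-j}$, I obtain
\begin{equation*}
s(\D(x,z,\tau)) \;=\; \der_x^N + \sum_{j=1}^N s(S_j(x))\,\der_x^{N-j}.
\end{equation*}
By Lemma \ref{weyl inv} the left-hand side equals $\D(x,z,\tau)=\der_x^N+\sum_{j=1}^N S_j(x)\der_x^{N-j}$. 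Comparing coefficients of the independent expressions $\der_x^{N-j}$ (viewed as operators acting on the $x$-variable, the $\la$-variables, and $V[0]$), I conclude that $s(S_j(x))=S_j(x)$ for every $j$ and every $x\in\C$.

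Finally, since the Weyl action on $\End(\on{Fun}_{\slt}(V[0]))$ is by algebra automorphisms (conjugation by a fixed element $s$), the set of $W$-invariant operators is closed under sums and products. As the identity is manifestly $W$-invariant and all the generators $S_j(x)$ are $W$-invariant by the previous paragraph, every element of $\B^V(z,\tau)$ is $W$-invariant. The only subtlety here is the coefficient-comparison step, which is routine once one notices that Weyl conjugation leaves the $x$-derivatives untouched; I anticipate no real obstacle beyond making this observation explicit.
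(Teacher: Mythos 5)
Your proposal is correct and is essentially the argument the paper intends: the corollary is stated as an immediate consequence of Lemma \ref{weyl inv}, the point being exactly that Weyl conjugation does not touch $x$ or $\der_x$, so invariance of $\D$ forces invariance of each coefficient $S_j(x)$, and invariance then passes to the algebra they generate since conjugation by $s$ is an algebra automorphism. You have simply made explicit the coefficient-comparison step that the paper leaves tacit.
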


The Weyl group invariance of the RST-operator for $\sln$ will be discussed elsewhere.

\section{The RST-operator and Bethe ansatz for $\slt$}
\label{sec:D}

\subsection{Representations}
We consider the zero weight subspace $V[0]$ of the tensor product of $\slt$ representations,
 where 
$V=\ox_{s=1}^n V_{m_s}$ and $V_{m_s}$ is the finite-dimensional
irreducible representation with highest weight $m_s$, considered as an $\frak{gl}_2$-module on which
the central element $e_{11}+e_{22}$ acts by zero.
The dimension of $V[0]$ is positive if the sum $\sum_{s=1}^n m_s$ is even. We denote this sum by  $2m$,
\bean
\label{sum 2m}
{\sum}_{s=1}^n m_s = 2m.
\eean

\subsection{Bethe ansatz}
Let 
\bean
\label{xi mu}
\xi = \frac{\mu}{2} \al,
\eean
 where $\mu\in \C$ and 
$\al$ is the simple root of $\slt$,  $\langle \al,e_{11}\rangle=1$, $\langle\al,e_{22}\rangle=-1$.

The {\it elliptic master function} (see Section 5 of \cite{FV1}) associated to 
$\mu\in\C$, $z=(z_1,\dots,z_n)\in\C^n$ is the
 following function of $\mu$,  $t=(t_1,\dots,t_m)$,  $z$, $\tau$:
\bean
\label{master}
&&
\Phi(\mu,t,z,\tau)= \frac{\pi i}2 \mu^2 \tau  +
 2\pi i \mu \,\Big(\sum_{i=1}^mt_i -\sum_{s=1}^n \frac{m_s}2  z_s\Big) 
+\ 2\sum_{1\leq i<j\leq m} \ln \theta(t_i-t_j,\tau)
\\
\notag
&&
\phantom{aaa}
 - \sum_{i=1}^m \sum_{s=1}^nm_s\ln\theta(t_i-z_s,\tau)
+\sum_{1\leq s<r\leq n}\frac{m_sm_r}2\ln \theta(z_s-z_r,\tau).
\eean
The {\it Bethe ansatz equations} are the equations for the
critical points of the master function
$\Phi(\mu,t,z,\tau)$ with respect to the variables $t$,
\bean
\label{BAE}
2\pi i \mu + 2 \sum_{j\ne i}\rho(t_i-t_j,\tau)
-\sum_{s=1}^n m_s\rho(t_i-z_s,\tau)=0,
\qquad
i=1,\dots,m.
\eean

\begin{thm} [\cite{FV1}]
\label{FV thm}

Let $(\mu^0, t_1^0,\dots,t_m^0,z_1^0,\dots,z_n^0,\tau^0)$
 be a solution of the Bethe ansatz equations \Ref{BAE}. Then
there is a $V[0]$-valued meromorphic function of $\la_{12}$, denoted by 
$\Psi(\la_{12},\mu^0, t^0,  z^0,\tau^0)$, such that
\bea
H_a(z^0,\tau^0)\, \Psi(\la_{12},\mu^0, t^0,  z^0,\tau^0) 
&=&
 \frac {\der \Phi}{\der z_a}(\mu^0, t^0,  z^0,\tau^0) \,
 \Psi(\la_{12}, \mu^0, t^0,   z^0,\tau^0), \qquad a=1,\dots,n,
\\
H_0(z^0,\tau^0) \Psi(\la_{12},\mu^0, t^0,  z^0,\tau^0) 
&=&
 \frac {\der \Phi}{\der \tau}(\mu^0, t^0,  z^0,\tau^0)\, \Psi(\la_{12},
 \mu^0, t^0,   z^0,\tau^0).
\eea
Thus, $\Psi(\la_{12},\mu^0, t^0,   z^0,\tau^0)$ is a meromorphic eigenfunction of the KZB operators with the eigenvalues
given by the 
partial derivatives of the master function with respect to the parameters $z$, $\tau$.

\end{thm}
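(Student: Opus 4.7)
\medskip

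\noindent\textbf{Proof proposal.}
The plan is to construct the Bethe eigenfunction $\Psi(\la_{12},\mu,t,z,\tau)$ explicitly via the universal weight function of Felder--Varchenko, and then verify the eigenvalue equations by commuting the KZB operators through the creation operators, exactly as in the algebraic Bethe ansatz for the elliptic dynamical quantum group. First I would build $\Psi$ as a sum indexed by set partitions (or, equivalently, as the result of applying a product of creation operators $\mathcal B(t_i,\la_{12})$ to the highest-weight vector $v^+\in V$), where each summand is a product of $\sigma$-functions in the variables $t_i-z_s$ and $t_i-t_j$, together with the quasi-periodic prefactor $e^{\pi i\mu\la_{12}}$ responsible for the transformation law $\Psi(\la_{12}+1)=e^{\pi i\mu}\Psi(\la_{12})$. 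The function $\Psi$ so constructed is manifestly meromorphic in $\la_{12}$, with poles only at $\la_{12}\in\Z+\tau\Z$.

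Next, I would compute $H_a(z,\tau)\Psi$ using the exchange relations between $H_a$ and the creation operators $\mathcal B(t_i,\la_{12})$ coming from the elliptic dynamical $R$-matrix. Each commutation produces a \emph{wanted} contribution, in which $H_a$ passes through all $\mathcal B(t_i,\la_{12})$ and acts on $v^+$ by a known scalar built from $\rho$ and $\sigma$ evaluated at $t_i-z_s$, plus \emph{unwanted} contributions in which the argument $t_i$ of one creation operator is replaced by $z_a$ (for $H_a$) or by another $t_j$ through an annihilation--creation resonance. I would then check directly, using only logarithmic differentiation of the master function $\Phi$ in \eqref{master}, that the wanted scalar coincides with $\der\Phi/\der z_a$; similarly, for $H_0$, the identification uses the heat equation $4\pi i\,\der_\tau\theta=\theta''$ to match the diagonal term with $\der\Phi/\der\tau$.

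The crux is the cancellation of the unwanted terms. I would group them by which Bethe root $t_i$ is ``resonant,'' and show that the coefficient of each resulting vector is a scalar multiple of the $i$-th Bethe ansatz equation \eqref{BAE}. The manipulations needed here are driven by Lemma~\ref{lem id1}, in particular the identity \Ref{5t} relating $\sigma(x-z_1,w)\sigma(x-z_2,-w)$ to a difference of $\rho$-functions; this is the elliptic analogue of the partial-fraction identity that makes the unwanted terms collapse in the rational Bethe ansatz. Once each unwanted coefficient is expressed as
\[
2\pi i\mu + 2\sum_{j\ne i}\rho(t_i-t_j,\tau)-\sum_{s=1}^n m_s\rho(t_i-z_s,\tau)
\]
times a fixed vector-valued function, the vanishing on critical points of $\Phi$ is immediate from \eqref{BAE}.

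The main obstacle will be the last step, namely the precise combinatorial bookkeeping needed to verify that all unwanted contributions, which arise with different combinatorial weights depending on which summand of $\Psi$ they come from and on whether they are produced by $H_a$ or by off-diagonal commutations among the $\mathcal B$'s, really organize into a single clean linear combination of the Bethe equations. The dynamical shifts in $\la_{12}$ produced by the $R$-matrix action require careful tracking of theta-function quasi-periodicities, and the symmetry of $\Psi$ in the $t_i$'s has to be used to collect terms that are a priori labelled by ordered, not unordered, tuples. With these identities in hand, the wanted terms furnish the claimed eigenvalue and the unwanted terms vanish, completing the proof.
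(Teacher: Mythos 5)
Your route is genuinely different from the paper's. The paper does not verify the eigenvalue equations directly at all: the statement is attributed to \cite{FV1}, and the proof given is a reduction to that reference, namely that \cite{FV1} constructs integral representations for solutions of the KZB equations $\ka\der_{z_a}\psi=H_a\psi$, $\ka\der_\tau\psi=H_0\psi$ from horizontal sections of a local system (an example of a horizontal section being $e^{\pi i\mu\la_{12}}\Phi(\mu,t,z,\tau)$), and then, following the quasiclassical $\ka\to0$ stationary-phase method of \cite{RV}, the critical points of $\Phi$ in $t$ (i.e. the Bethe equations) produce common eigenfunctions of the operators on the right-hand sides, with eigenvalues the derivatives of $\Phi$ in $z_a$ and $\tau$. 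Your proposal instead reproves the theorem of \cite{FV1} from scratch by a direct Bethe-ansatz computation on the explicit weight function: wanted terms identified with $\der\Phi/\der z_a$ and (via the heat equation) $\der\Phi/\der\tau$, unwanted terms collapsing through elliptic identities such as \Ref{5t} into multiples of the Bethe equations. What the paper's approach buys is brevity and a conceptual explanation of why the eigenvalues are exactly the partial derivatives of the master function; what yours would buy is a self-contained verification not relying on the asymptotic analysis of the integral representation.

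Two cautions about your plan. First, the framing through creation operators $\mathcal B(t_i,\la_{12})$ and exchange relations with the elliptic dynamical $R$-matrix does not literally apply: the $H_a$ are differential operators in $\la$ (with $H_0$ of second order, containing $\sum_k\der_{\la_k}^2$ and $\varphi(\la_{jl},z_s-z_p)$-terms), not transfer matrices of the elliptic quantum group, so there are no ready-made exchange relations to invoke; the computation has to be done directly on the explicit sum of products of $\sigma$-functions, as you also describe. Second, and more importantly, the core of your argument --- the bookkeeping showing that all unwanted contributions (in particular those produced by the second-order and $\varphi$-terms of $H_0$) assemble into exact multiples of the individual Bethe equations --- is only asserted, and you yourself flag it as the main obstacle. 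As written, the proposal is a sound plan along a known alternative path, but it is not yet a complete proof until that cancellation is carried out; the paper sidesteps this entirely by citing the integral-representation derivation of \cite{FV1}.
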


\begin{proof}
In \cite{FV1}, integral representations for solutions of the KZB equations 
\bea
\kappa \partial_{z_a} \psi 
&=& H_a(z,\tau) \psi, \qquad a=1,\dots,n,
\\
\kappa \partial_{\tau} \psi &=& H_0(z,\tau) \psi,
\eea
are constructed starting from horizontal sections of a suitable local system,
see \cite[Proposition 5]{FV1}. 
An example of a horizontal section is given by the function
\bean
\label{hors}
e^{\pi i \mu \la_{12}}\, 
\Phi(\mu, t, z,\tau).
\eean
As explained in \cite{RV}, integral representations of solutions of the Knizhnik
Zamolodchikov type
equations can be used to construct common eigenvectors of the commuting
systems of operators staying in  the right-hand sides of the equations, by applying
the stationary phase method to the integral, when $\ka\to 0$, see \cite[Section 7]{FV1}.
Applying this procedure to the horizontal section
in \Ref{hors}, one obtains the eigenfunction $\Psi(\la_{12},\mu, t, z,\tau)$ of Theorem \ref{FV thm},
see in \cite{FV1} a formula for $\Psi(\la_{12},\mu, t, z,\tau)$.  A formula for  $\Psi(\la_{12},\mu, t, z,\tau)$
in the special case
$V=\ox_{s=1}^n V_{1}$ of the tensor product of two-dimensional irreducible $\slt$-modules  is given in
 Section \ref{sec:eigenfunctions}.
\end{proof}

\subsection{Fundamental differential operator}
\label{sec Fdo}

Let  $\Psi(\la_{12})$  be  a $V[0]$-valued eigenfunction of $S_2(x)$,
\bean
\label{EIg}
S_2(x) \Psi(\la_{12}) = B_2(x) \Psi(\la_{12}),
\eean
where $B_2(x)$ is a  scalar function of $x$.
We assign to $\Psi$ a scalar  differential operator with respect to the variable $x$,
\bean
\label{fund Psi}
\D_{\Psi} =\der_x^2 + B_2(x),
\eean
called the {\it  fundamental differential operator} of the eigenfunction $\Psi$.

\begin{lem}
\label{lem fund} The fundamental differential operator $\D_\Psi$ is doubly periodic,
\bean
\label{dpf}
\D_{\Psi}(x+\tau)=\D_{\Psi}(x+1)=\D_{\Psi}(x).
\eean
Moreover, if $s(\D_{\Psi})$ is the image of $\D_{\Psi}$ under the Weyl involution, then
$\D_{s(\Psi)}=\D_{\Psi}$.
\end{lem}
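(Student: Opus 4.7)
The plan is to deduce both parts directly from results already established in the excerpt: double periodicity will follow from Theorem \ref{thm trans} (double periodicity of $\D$, hence of $S_2(x)$), and Weyl invariance will follow from Lemma \ref{weyl inv} (the RST-operator is Weyl invariant).

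For the double periodicity \Ref{dpf}: the equation \Ref{EIg} says $S_2(x)\Psi = B_2(x)\Psi$, where $B_2(x)$ is a scalar function of $x$ (independent of $\la_{12}$). Since $\Psi$ is nonzero, $B_2(x)$ is uniquely determined as the eigenvalue. First I would invoke Theorem \ref{thm trans} to conclude $S_2(x+1) = S_2(x+\tau) = S_2(x)$ as operators on $\on{Fun}_{\slt}(V[0])$. Applied to $\Psi$ this gives $B_2(x+1)\Psi = S_2(x+1)\Psi = S_2(x)\Psi = B_2(x)\Psi$, and likewise with $\tau$ in place of $1$. Since $\Psi\neq 0$, we conclude $B_2(x+1) = B_2(x+\tau) = B_2(x)$, and hence $\D_{\Psi}(x+1) = \D_{\Psi}(x+\tau) = \D_{\Psi}(x)$.

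For the Weyl invariance, the key point is that $B_2(x)$ is a scalar depending only on $x$, so the Weyl group $W = \{\on{id},s\}$ (which acts on $\la_{12}$ and on $V[0]$ only) fixes it. Concretely, by Lemma \ref{weyl inv} the operator $S_2(x)$ is Weyl invariant, i.e. $sS_2(x)s^{-1} = S_2(x)$, so $S_2(x)$ commutes with the action of $s$ on $\on{Fun}_{\slt}(V[0])$. Applying $s$ to \Ref{EIg}, I get
\begin{equation*}
S_2(x)\bigl(s(\Psi)\bigr) \;=\; s\bigl(S_2(x)\Psi\bigr) \;=\; s\bigl(B_2(x)\Psi\bigr) \;=\; B_2(x)\, s(\Psi),
\end{equation*}
where the last equality uses that $B_2(x)$ does not depend on $\la_{12}$. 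Thus $s(\Psi)$ is an eigenfunction of $S_2(x)$ with the same scalar eigenvalue $B_2(x)$, and by definition \Ref{fund Psi} we have $\D_{s(\Psi)} = \der_x^2 + B_2(x) = \D_{\Psi}$.

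No obstacle is expected: both parts are short consequences of the already-established invariance properties of $S_2(x)$. The only minor point requiring care is to note explicitly that $B_2(x)$ is a function of $x$ alone, so that it is fixed by $s$ and can be pulled out of the Weyl action, but this is built into the definition of eigenvalue in \Ref{EIg}.
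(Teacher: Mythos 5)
Your proof is correct and follows exactly the paper's route: the paper simply states that the lemma is a corollary of Theorem \ref{thm trans} and Lemma \ref{weyl inv}, and your argument spells out those two deductions in the obvious way.
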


\begin{proof}
The lemma is a corollary of Theorem \ref{thm trans} and Lemma \ref{weyl inv}.
\end{proof}

In particular, let  $(\mu, t, z,\tau)$ be a solution of the Bethe ansatz equations \Ref{BAE}
(we will omit the index $^0$). Let $\Psi(\la_{12},\mu,t, z)$ be the corresponding eigenfunction
of  the KZB operators, see Theorem \ref{FV thm}. By Theorem \ref{prop:S_2}
the coefficient  $S_2(x)$ is a linear combination of the KZB operators. Hence 
$\Psi(\la_{12},\mu,t, z)$ is an eigenfunction of  $S_2$. Let 
\bean
\label{fund t}
\D_{(\mu,t,z,\tau)} =\der_x^2
+ B_2(x, \mu, t, z ,\tau)
\eean
be the fundamental differential operator of  $\Psi(\la_{12},\mu,t, z)$.
This operator will be also called the {\it fundamental differential operator} of the solution
 $(\mu, t, z,\tau)$.

\begin{rem}
In the non-dynamical setting the fundamental differential operator of a solution
 of the Bethe ansatz equations was introduced in
\cite{ScV, MV1}.

\end{rem}

We will give a formula for  $\D_{(\mu,t,z,\tau)}$.   Let 
\bean
\label{def u}
y(x) ={\prod}_{i=1}^m\theta(x-t_i,\tau),
\qquad
u(x)= e^{\pi i \mu x}y(x){\prod}_{s=1}^n\theta(x-z_s,\tau)^{-m_s/2}.
\eean

\begin{thm}
\label{prop fund diff op}
We have 
\bean
\label{B-R}
B_2(x, \mu, t,  z ,\tau)
= - (\ln u)\rq{}\rq{} - ((\ln u)\rq{})^2,
\eean
where the function $u(x)$ is defined in \Ref{def u}.  In other words,
\bean
\label{D_t}
\D_{(\mu, t, z,\tau)} =\big(\der_x + (\ln u)\rq{}\big)\big(\der_x - (\ln u)\rq{}\big).
\eean

\end{thm}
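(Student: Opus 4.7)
The plan is to reformulate the asserted factorization as a scalar identity of elliptic functions in $x$, and then verify it by matching principal parts and pinning down one residual constant. The algebraic identities $(\partial_x+(\ln u)')(\partial_x-(\ln u)')=\partial_x^2-(\ln u)''-((\ln u)')^2$ and $u''/u=(\ln u)''+((\ln u)')^2$ show that displays \Ref{B-R} and \Ref{D_t} are both equivalent to the single identity $-u''/u=B_2(x,\mu,t,z,\tau)$, equivalently to the statement that $u(x)$ lies in the kernel of $\D_{(\mu,t,z,\tau)}$. Combining Theorem \ref{FV thm} with Theorem \ref{prop:S_2} furnishes the explicit formula
\[
B_2(x)=-2\pi i\,\partial_\tau\Phi-\sum_{s=1}^n\partial_{z_s}\Phi\cdot\rho(x-z_s)-\sum_{s=1}^n c_2^{(s)}\rho'(x-z_s),
\]
where the Casimir $c_2^{(s)}$ acts on $V_{m_s}$ as the scalar $-m_s(m_s+2)/4$.

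Both $-u''/u$ and $B_2(x)$ are doubly periodic meromorphic functions of $x$: $B_2$ by Theorem \ref{thm trans}, and $-u''/u$ from the direct check that the logarithmic derivative $(\ln u)'=\pi i\mu+\sum_i\rho(x-t_i)-\sum_s(m_s/2)\rho(x-z_s)$ is doubly periodic, the $-2\pi i$ quasiperiodicity shifts of $\rho$ cancelling thanks to $\sum_s m_s=2m$. The central calculation is the comparison of principal parts. Near $x=z_s$ a direct Laurent expansion of $-u''/u$ yields double pole of coefficient $-m_s(m_s+2)/4$ in $(x-z_s)^{-2}$ and simple pole of coefficient $m_s a_s$ in $(x-z_s)^{-1}$, where $a_s=\pi i\mu+\sum_i\rho(z_s-t_i)-\sum_{r\ne s}(m_r/2)\rho(z_s-z_r)$. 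The double pole matches $-c_2^{(s)}\rho'(x-z_s)$ via the Casimir value, while a routine differentiation of the master function gives $\partial_{z_s}\Phi=-m_s a_s$, matching $-\partial_{z_s}\Phi\cdot\rho(x-z_s)$. At $x=t_i$ the potential double pole in $((\ln u)')^2+(\ln u)''$ cancels because $\rho(x-t_i)^2+\rho'(x-t_i)=\eta(x-t_i)$ is entire, and the simple pole residue equals twice the left-hand side of the $i$-th Bethe ansatz equation \Ref{BAE} and so vanishes. Therefore $-u''/u-B_2(x)$ is entire and doubly periodic, hence a constant $K$.

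The main obstacle will be showing $K=0$, i.e., that the constant part of $-u''/u$ equals $-2\pi i\,\partial_\tau\Phi$. For this I would use the heat equation $4\pi i\,\partial_\tau\theta_1=\theta_1''$ combined with the definition $\theta=e^{-\pi i\tau/4}\theta_1/(2\pi(q;q)^3)$ to derive $\eta(x,\tau)=\rho'(x)+\rho(x)^2=4\pi i\,\partial_\tau\ln\theta(x,\tau)+4\pi i\,h(\tau)$ for an explicit $\tau$-dependent function $h(\tau)$. Expanding $((\ln u)')^2$ via $\rho(x-a)^2=\eta(x-a)-\rho'(x-a)$ together with Lemma \ref{lem id2}, which rewrites each cross-product $\rho(x-a)\rho(x-b)$ as $\eta(a-b)$ plus a $\rho$-linear doubly periodic combination, and then applying the Bethe ansatz equations to eliminate the residual $\rho(x-t_i)$ coefficients, the constant $K$ reduces to $-\pi^2\mu^2$ plus a weighted sum of $\eta(t_i-t_j)$, $\eta(t_i-z_s)$, and $\eta(z_s-z_r)$ over pairs of distinct indices. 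The heat equation converts each such $\eta(a-b)$ into $4\pi i\,\partial_\tau\ln\theta(a-b)$, and the resulting expression is exactly $-2\pi i$ times the $\tau$-derivative of $\Phi$: the $\pi i\mu^2\tau/2$ summand of $\Phi$ accounts for the $-\pi^2\mu^2$ contribution, the $\ln\theta$ summands supply the $\eta$-sum, and the universal $h(\tau)$ correction cancels by virtue of the weight identity $\sum_s m_s=2m$. This establishes $K=0$ and completes the proof.
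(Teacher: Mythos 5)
Your overall strategy coincides with the paper's: compute $B_2$ from Theorems \ref{FV thm} and \ref{prop:S_2} (with the heat-equation identity $4\pi i\,\partial_\tau\ln\theta(a,\tau)=\eta(a)-\eta(0)$ to handle the $H_0$-eigenvalue), check double periodicity in $x$, match principal parts at $x=t_i$ (they vanish by the Bethe ansatz equations \Ref{BAE}) and at $x=z_s$ (Casimir value $-m_s(m_s+2)/4$ for the double pole, $-\partial_{z_s}\Phi$ for the simple pole), and conclude that the difference in \Ref{B-R} is a constant $K$ in $x$. Up to this point your argument is correct and is essentially the paper's.

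The gap is in your evaluation of $K$, which is where the real work of the theorem lies. You invoke Lemma \ref{lem id2} as if it rewrote each cross-product $\rho(x-a)\rho(x-b)$ globally as $\eta(a-b)$ plus a doubly periodic $\rho$-linear expression. It does not: the combination $\rho(x-a)\rho(x-b)+\rho(a-b)\rho(x-b)-\rho(a-b)\rho(x-a)$ is entire in $x$ but not doubly periodic (under $x\mapsto x+\tau$ it shifts by $-2\pi i\,[\rho(x-a)+\rho(x-b)]+(2\pi i)^2$), so it is not constant, and the lemma only asserts that its values at the two points $x=a$ and $x=b$ equal $\eta(a-b)$. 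Substituting such an ``identity'' into $((\ln u)')^2$ is therefore not legitimate, and your reduction of $K$ to an $\eta$-weighted sum matching $-2\pi i\,\partial_\tau\Phi$ is asserted rather than proved. To repair it you would need either a genuine global identity, e.g.\ the symmetric three-term relation $\rho(a-b)\rho(a-c)+\rho(b-a)\rho(b-c)+\rho(c-a)\rho(c-b)=\tfrac12\big[\eta(a-b)+\eta(b-c)+\eta(c-a)\big]+C\,\eta(0)$ with the universal constant $C$ determined (this must itself be proved, say by the entire-plus-doubly-periodic argument), after which your bookkeeping with \Ref{BAE} and $\sum_s m_s=2m$ could in principle be completed; or the paper's route, in which, after substituting the Bethe equations, the discrepancy is an explicit function $f(t,z)$ that Lemma \ref{lem f reg} shows to be regular and doubly periodic in each $t_i$ and $z_s$, hence independent of $(t,z)$, and which is then evaluated at the degenerate point $t=z=0$, where (and only where) Lemma \ref{lem id2} legitimately applies and yields $K=0$.
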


\begin{proof}
First, we calculate $B_2(x, \mu, t, z,\tau)$.
\begin{lem}
We have
\bea
4\pi i \frac{\der}{\der\tau}(\ln \theta(t-z,\tau)) =  \eta(t-z,\tau) - \eta(0).
\eea
\end{lem}
\begin{proof}
Recall that we have defined
\bea
\theta(u,\tau) = \theta_1(u,\tau) / \theta_1\rq{}(0,\tau)
,
\eea
where the first Jacobi theta function $\theta_1(u,\tau)$ obeys the heat equation
\bea
4\pi i \frac{\der}{\der \tau }\theta_1(u,\tau)= \theta_1^{''}(u,\tau).
\eea
Since $\theta_1\rq{}(u,\tau)$ is holomorphic in $u$ and $\tau$, we have
\bea
4\pi i \frac{\der}{\der \tau}\theta_1\rq{}(u,\tau)
= 4\pi i \frac{\der}{\der u} \frac{\der}{\der \tau}\theta_1(u,\tau)
= \theta_1^{'''}(u,\tau).
\eea

Thus
\bea
4\pi i \frac{\der}{\der\tau}(\ln \theta(t-z,\tau))
 = \frac{\theta_1^{''}(t-z,\tau)}{\theta_1(t-z,\tau)}
- \frac{\theta_1^{'''}(0,\tau)}{\theta_1\rq{}(0,\tau)}
 = \eta(t-z,\tau) - \eta(0).
\eea
\end{proof}

The eigenvalue of the operator  $-4 \pi i H_0(z,\tau)$
on the eigenfunction $\Psi(\la_{12},t, \mu,  z,\tau)$ equals 
\bea
&&
-2(\pi i)^2 \mu^2 - 4\pi i \frac{\der}{\der\tau}
\Big[2\sum_{i<j} \ln \theta(t_i-t_j) - \sum_{i,s} m_s \ln \theta(t_i-z_s)
+ \sum_{s<r} \frac{m_s m_r}2 \ln \theta(z_s-z_r)\Big]
\\
&&
= -2(\pi i)^2 \mu^2
- 2 \sum_{i<j} (\eta(t_i-t_j) - \eta(0))
+ \sum_{i,s} m_s (\eta(t_i-z_s) - \eta(0))
\eea
\bea
 - \sum_{s<r} \frac{m_s m_r}2 (\eta(z_s-z_r)-\eta(0))
= -2(\pi i)^2 \mu^2
- 2 \sum_{i<j} \eta(t_i-t_j)
+ \sum_{i,s} m_s \eta(t_i-z_s)
\\
&&
- \sum_{s<r} \frac{m_s m_r}2 \eta(z_s-z_r)
+ \big(m(m-1) - m \sum_s m_s + \frac{1}2 \sum_{s<r} m_s m_r\big) \eta(0)
\eea
\bea
&&
= -2(\pi i)^2 \mu^2
- 2 \sum_{i<j} \eta(t_i-t_j)
+ \sum_{i,s} m_s \eta(t_i-z_s)
- \sum_{s<r} \frac{m_s m_r}2 \eta(z_s-z_r)
\\
&&
\phantom{aaaaaa} - \big(m(m + 1) - \frac{1}2 \sum_{s<r} m_s m_r\big) \eta(0)
.
\eea
For $s=1,\dots,n$, the operator $H_s(z,\tau)$ has eigenvalue
\bea
\frac{\der\Phi}{\der z_s} = -\pi i m_s \mu - \sum_{i=1}^m m_s \rho(z_s-t_i)
+ \sum_{r\ne s} \frac{m_s m_r}2 \rho(z_s-z_r)
,
\eea
so we have
\bea
&&
B_2(x, \mu, t,  z ,\tau) =  - (\pi i)^2 \mu^2 - \sum_{i<j} \eta(t_i-t_j)
+ \sum_{i,s} \frac{m_s}2 \eta(t_i-z_s) - \sum_{s<r} \frac{m_s m_r}4 \eta(z_s-z_r)
\\
&&
- \frac{1}4 \Big(2 m(m + 1) - \sum_{s<r} m_s m_r\Big) \eta(0)
+ \pi i \mu \sum_{s=1}^n m_s \rho(x-z_s)
+ \sum_{s,i} m_s \rho(z_s - t_i) \rho(x-z_s)
\\
&&
\phantom{aaaaa}
- \sum_{s\neq r} \frac{m_s m_r}2 \rho(z_s-z_r) \rho(x-z_s)
+ \frac{1}4 \sum_s m_s(m_s+2) \rho'(x-z_s).
\eea
Next we calculate the right-hand side in \Ref{B-R}, namely,  the function 
$R(x, \mu, t,  z ,\tau)=- (\ln u)\rq{}\rq{} - ((\ln u)\rq{})^2$. We have
\bea
(\ln u)\rq{}
&=&  
\pi i \mu + {\sum}_{i=1}^m \rho(x-t_i) - \frac{1}2 {\sum}_{s=1}^m m_s \rho(x-z_s),
\\
-(\ln u)\rq{}\rq{} 
&=&
 - {\sum}_i \rho\rq{}(x-t_i) + \frac{1}2 {\sum}_s m_s \rho\rq{}(x-z_s),
\eea
\bea
&&
-((\ln u)\rq{})^2 = -(\pi i)^2 \mu^2 - {\sum}_i \rho(x-t_i)^2 - \frac{1}4 {\sum}_s m_s^2 \rho(x-z_s)^2
\\
&&
\phantom{aaa}
+ \pi i \mu \Big({\sum}_s m_s \rho(x-z_s)  - 2 {\sum}_i \rho(x-t_i)\Big)
- {\sum}_{i\ne j} \rho(x-t_i)\rho(x-t_j)
\\
&&
\phantom{aaa}
- \frac{1}4 {\sum}_{s\ne r} m_s m_r \rho(x-z_s)\rho(x-z_r)
+{\sum}_{i,s} m_s \rho(x-t_i)\rho(x-z_s),
\eea
\bea
&& 
R(x,  \mu, t,  z ,\tau)=
-\frac d{dx}\ln\rq{}u -(\ln\rq{}u)^2 =  -(\pi i)^2 \mu^2 - \sum_i \eta(x-t_i)
 - \frac{1}4 \sum_s m_s^2 \eta(x-z_s)
\\
&&
\phantom{aaa} + \frac{1}4 {\sum}_s m_s(m_s + 2) \rho\rq{}(x-z_s)
+ \pi i \mu \Big({\sum}_s m_s \rho(x-z_s)  - 2 {\sum}_i \rho(x-t_i)\Big)
\\
&&
\phantom{aaa} - {{\sum}}_{i\ne j} \rho(x-t_i)\rho(x-t_j)
- \frac{1}4 {{\sum}}_{s\ne r} m_s m_r \rho(x-z_s)\rho(x-z_r)
\\
&&
\phantom{aaaaaaaaaa} +{\sum}_{i,s} m_s \rho(x-t_i)\rho(x-z_s)
.
\eea
The function $R(x, t, \mu, z ,\tau)$ is doubly periodic in $x$, since
\bea
&&
R(x+\tau, t, \mu, z ,\tau) - R(x, t, \mu, z ,\tau)
\\
&&
\phantom{aa} =  {\sum}_i \big[4\pi i \rho(x-t_i) - (2\pi i)^2\big]
+ {\sum}_s  m_s^2 \big[\pi i \rho(x-z_s) - (\pi i)^2\big]
\\
&&
\phantom{aaaa} - \pi i \mu \Big({\sum}_s 2\pi i m_s  - 2 {\sum}_i 2\pi i\Big)
+ {\sum}_{i\ne j} \big[2\pi i \big(\rho(x-t_i)+\rho(x-t_j)\big) - (2\pi i)^2\big]
\eea
\bea
&&
+ {\sum}_{s\ne r} m_s m_r \big[\frac{\pi i}2 \big(\rho(x-z_s)+\rho(x-z_r)\big) - (\pi i)^2\big]
\\
&&
\phantom{aaaa}
+ {\sum}_{i,s} m_s \big[ (2\pi i)^2 - 2\pi i\big(\rho(x-t_i) + \rho(x-z_s)\big)\big]
\eea
\bea
&&
= (\pi i)^2 \big[-4m - {\sum}_s m_s^2 - \mu(2{\sum}_s m_s - 4m)- 4m(m-1)
- {\sum}_{s\ne r} m_s m_r + 8m^2 \big]
\\
&&
\phantom{aaa}
+ \pi i {\sum}_i \big(4 + 4(m-1)- 2{\sum}_s m_s\big) \rho(x-t_i)
\\
&&
\phantom{aaaaaa}
+ \pi i {\sum}_s \big(m_s^2 + m_s {\sum}_{r\ne s} m_r - 2m m_s\big) \rho(x-z_s) = 0.
\eea
Consider the Laurent expansion of $R(x)$ at each pole.  Clearly the coefficient in $R(x)$ of $(x-t_i)^{-2}$ is zero.
The coefficient in $R(x)$ of $(x-t_i)^{-1}$ equals 
\bea
-2\pi i \mu - 2 {{\sum}}_{j\ne i} \rho(t_i-t_j) + {\sum}_s m_s \rho(t_i-z_s)
= 0
.
\eea
Hence $R(x)$ is regular at $x=t_i$ for all $i$.

The coefficient in $R(x)$ of $(x-z_s)^{-2}$ equals  $- \frac{1}4 m_s(m_s + 2)$.
The coefficient in $R(x)$ of $(x-z_i)^{-1}$ equals
\bea
\pi i \mu m_s - \frac{1}2 {\sum}_{r\ne s} m_s m_r \rho(z_s-z_r)
+{\sum}_i m_s \rho(z_s-t_i).
\eea
These calculations show that the function $B_2(x, \mu, t, z ,\tau)$ 
has the same set of poles in $x$ and the same Laurent tails  at each pole  in $x$ as
the function $R(x, \mu, t,  z ,\tau)$. 
Since both functions are doubly periodic in $x$, we may conclude that
$B_2(x, \mu, t,  z ,\tau) - R(x, \mu, t,  z ,\tau)$ 
is constant in $x$.  We need to show that this difference is zero.

We will use the following notion of the constant term of 
a meromorphic doubly periodic function $F(x)$, regular in the complement to the union of
 the $\Z+\tau\Z$-orbits  of the points 
$z_1,\dots,z_n$. Namely, let
\bea
F(x) = {\sum}_{j =1}^\infty \frac {c_{i,j}}{(x-z_i)^j} + \mc O(1), \qquad i=1,\dots,n,
\eea 
be the Laurent expansion of $F(x)$ at $x=z_i$. Then
\bea
F(x) = {\sum}_{i=1}^n{\sum}_{j=1}^N  \frac{(-1)^{j-1}}{(j-1)!}\,\rho^{(j-1)}(x-z_i,\tau) c_{i,j}\ +\ c_0,
\eea
where the number  $c_0\in \C$ will be  called the {\it constant term} of $F(x)$,  cf. Theorem \ref{lem laurent}.

We need to show that the constant term of $B_2(x, \mu, t,  z ,\tau)$ :
\bea
&&
- (\pi i)^2 \mu^2 - \sum_{i<j} \eta(t_i-t_j)
+ \sum_{i,s} \frac{m_s}2 \eta(t_i-z_s)
- \sum_{s<r} \frac{m_s m_r}4 \eta(z_s-z_r)
\\
&&
\phantom{aaaaa} - \frac{1}4 \Big(2 m(m + 1) - \sum_{s<r} m_s m_r\Big) \eta(0)
\eea
equals the constant term of $R_2(x, \mu, t,  z ,\tau)$:
\bea
&&
\notag  -(\ln u)\rq{}\rq{} -((\ln u)\rq{})^2
 - {\sum}_s \frac{m_s(m_s+2)}4 \rho'(x-z_s)
\\
&&
\notag
\phantom{aaaaa} - {\sum}_s \Big[\pi i \mu m_s
- {\sum}_{r\ne s} \frac{m_s m_r}2 \rho(z_s-z_r)
+{\sum}_i m_s \rho(z_s-t_i)\Big] \rho(x-z_s)
\eea
\bea
&&
\notag
= -(\pi i)^2 \mu^2 - {\sum}_i \eta(x-t_i) - \frac{1}4 {\sum}_s m_s^2 \eta(x-z_s)
\\
&&
\notag
\phantom{aaaaa} + \frac{1}4 {\sum}_s m_s(m_s + 2) \rho\rq{}(x-z_s)
- 2 \pi i \mu {\sum}_i \rho(x-t_i)
\eea
\bea
&&
\notag
+ {\sum}_s \Big[\pi i \mu m_s
- \frac{1}4 {\sum}_{r\ne s} m_s m_r \rho(x-z_r)
+ {\sum}_i m_s \rho(x-t_i)
\Big] \rho(x-z_s)
\\
&&
\notag
\phantom{aaaaa}
- {\sum}_{i\ne j} \rho(x-t_i)\rho(x-t_j)
- {\sum}_s \frac{m_s(m_s+2)}4 \rho\rq{}(x-z_s)
\eea
\bea
&&
\notag
 - {\sum}_s \Big[\pi i \mu m_s
- {\sum}_{r\ne s} \frac{m_s m_r}2 \rho(z_s-z_r)
+{\sum}_i m_s \rho(z_s-t_i)\Big] \rho(x-z_s)
\\
&&
\notag
\phantom{aaaaa}
= -(\pi i)^2 \mu^2 - {\sum}_i \eta(x-t_i) - \frac{1}4 {\sum}_s m_s^2 \eta(x-z_s)
\eea
\bea
&&
\notag
 - 2 \pi i \mu {\sum}_i \rho(x-t_i)
- {\sum}_{i\ne j} \rho(x-t_i)\rho(x-t_j)
\\
&&
\notag
\phantom{aaaaa} 
+ {\sum}_{s\ne r} \frac{m_s m_r}4 \big(2\rho(z_s-z_r) - \rho(x-z_r)\big)\rho(x-z_s)
\eea
\bea
&&
\notag
\phantom{aaaaa} +{\sum}_{i,s} m_s \big(\rho(x-t_i) - \rho(z_s-t_i)\big)\rho(x-z_s)
\eea
\bean
\label{constcalc}
&&
\phantom{aaa}
=\ -(\pi i)^2 \mu^2 - {\sum}_i \eta(x-t_i) - \frac{1}4 {\sum}_s m_s^2 \eta(x-z_s)
\\
&&
\notag
\phantom{aaaaaa} + {\sum}_{s,i} m_s \big[\rho(t_i-x) \rho(t_i-z_s) + \rho(x-t_i) \rho(x-z_s) + \rho(z_s - x) \rho(z_s-t_i) \big]
\\
&&
\notag
\phantom{aaaaaa} -2 {\sum}_{i < j} \big[\rho(t_i-t_j)\rho(t_i-x) + \rho(x-t_j) \rho(x-t_i)
+ \rho(t_j-t_i) \rho(t_j-x)\big]
\\
&&
\notag
\phantom{aaaaaa} - {\sum}_{s<r} \frac{m_s m_r}2 \big[ \rho(z_s-z_r)\rho(z_s-x)
+ \rho(x-z_r) \rho(x-z_s) + \rho(z_r-z_s) \rho(z_r-x) \big] 
.
\eean
In the calculation of the expression in \Ref{constcalc} we use the Bethe ansatz equation to substitute
\bea
-2 \pi i \mu {\sum}_i \rho(x-t_i) = \Big[ 2 {\sum}_{j\ne i} \rho(t_i-t_j) - {\sum}_s m_s \rho(t_i-z_s)\Big] {\sum}_i \rho(x-t_i)
.
\eea
According to the above calculations the constant term of $B_2(x, t, \mu, z ,\tau)\! -\! R(x, t, \mu, z ,\tau)$
equals the following expression:
\bea
&&
f(t, z) = -{\sum}_{i<j} \eta(t_i-t_j)
+ {\sum}_{i,s} \frac{m_s}2 \eta(t_i-z_s)
- {\sum}_{s<r} \frac{m_s m_r}4 \eta(z_s-z_r)
\\
&&
\phantom{aaa} - \frac{1}4 \Big(2 m(m + 1) + {\sum}_{s<r} m_s m_r\Big) \eta(0)
+ {\sum}_i \eta(x-t_i) + \frac{1}4 {\sum}_s m_s^2 \eta(x-z_s)
\eea
\bea
&&
- {\sum}_{s,i} m_s \big[\rho(t_i-x) \rho(t_i-z_s) + \rho(x-t_i) \rho(x-z_s) + \rho(z_s - x) \rho(z_s-t_i) \big]
\\
&&
\phantom{aaa} +2 {\sum}_{i < j} \big[\rho(t_i-t_j)\rho(t_i-x) + \rho(x-t_j) \rho(x-t_i)
+ \rho(t_j-t_i) \rho(t_j-x)\big]
\eea
\bea
\phantom{aaa} + {\sum}_{s<r} \frac{m_s m_r}2 \big[ \rho(z_s-z_r)\rho(z_s-x)
+ \rho(x-z_r) \rho(x-z_s) + \rho(z_r-z_s) \rho(z_r-x) \big] 
.
\eea

\begin{lem}
\label{lem f reg}
The function $f(t, z)$ is regular and  doubly periodic in each of the variables 
$t_1$, \dots, $t_m$, $z_1,\dots,z_n$ and hence is a constant in $t$ and $z$.

\end{lem}

\begin{proof} The regularity is easily checked by calculating the residues. The periodicity is checked as follows:
\bea
f(\dots,t_i+\tau,\dots) - f(\dots,t_i,\dots) =
{\sum}_{j\ne i} \big[ 4\pi i \rho(t_i - t_j) - (2\pi i)^2 \big]
\eea
\bea
&&
\phantom{aaaaa}
+ {\sum}_s \frac{m_s}2 \big[(2\pi i)^2 - 4\pi i \rho(t_i-z_s) \big]
+ \big[(2\pi i)^2 - 4\pi i \rho(t_i-x)\big]
\\
&&
\phantom{aaaaa}
+{\sum}_s m_s \big[2 \pi i \big(\rho(t_i-x) + \rho(t_i-z_s)\big) - (2\pi i)^2\big]
\\
&&
\phantom{aaaaa}
+2 {\sum}_{j\ne i} \big[(2\pi i)^2 - 2\pi i\big(\rho(t_i-t_j)+\rho(t_i-x)\big)\big]
\eea
\bea
&& = (\pi i)^2 \big[ -4 (m-1) + 2 {\sum}_s m_s + 4 - 4{\sum}_s m_s + 8(m-1) \big]
\eea
\bea
&&
\phantom{aaaaa}
+ 4 \pi i {\sum}_{j\ne i} \big(1 - 1 \big) \rho(t_i-t_j)
+ \pi i {\sum}_s \big(-2 m_s + 2m_s \big) \rho(t_i-z_s)
\\
&&
\phantom{aaaaa}
+ \pi i \big(-4 + 2 {\sum}_s m_s - 4(m-1) \big)\rho(t_i-x) =0;
\eea
\bea
&&
f(\dots,z_s+\tau,\dots) - f(\dots,z_s,\dots) =
{\sum}_i \frac{m_s}2 \big[(2\pi i)^2 - 4\pi i\rho(z_s-t_i)\big]
\\
&&
\phantom{aaaaa}
+ {\sum}_{r\ne s} \frac{m_s m_r}4 \big[4\pi i \rho(z_s-z_r) - (2\pi i)^2 \big]
+ \frac{1}4 m_s^2 \big[(2\pi i)^2 - 4\pi i\rho(z_s-x)\big]
\\
&&
\phantom{aaaaa}
+ {\sum}_i m_s \big[2\pi i\big(\rho(z_s-x) + \rho(z_s-t_i)\big) - (2\pi i)^2\big]
\\
&&
\phantom{aaaaa}
+\frac{1}2 {\sum}_{r\ne s} m_s m_r \big[(2\pi i)^2 - 2\pi i\big(\rho(z_s-z_r)+\rho(z_s-x)\big)
\big]
\\
&& = (\pi i)^2 \big[2 m m_s - m_s {\sum}_{r\ne s} m_r + m_s^2 - 4m m_s + 2 m_s {\sum}_{r\ne s} m_r \big]
\\
&&
\phantom{aaaaa}
+ \pi i {\sum}_i \big(-2m_s +2m_s \big) \rho(z_s-t_i)
+ \pi i {\sum}_{r\ne s} \big(m_s m_r - m_s m_r\big) \rho(z_s-z_r)
\\
&&
\phantom{aaaaa}
+ \pi i \big(-m_s^2 + 2m m_s - {\sum}_{r\ne s}m_s m_r \big)\rho(z_s-x) = 0.
\eea
\end{proof}

By Lemma \ref{lem f reg} the function $f(t, z )$ is 
 constant in  $t$ and $z$, while its value  at $t=0$, $z=0$ is
\bea
&&
\eta(0)\Big[-\frac{m(m-1)}2
+ m^2
- {\sum}_{s<r} \frac{m_s m_r}4
- \frac{m(m + 1)}2
\\
&&
\phantom{aaaaa}  + \frac{1}4 {\sum}_{s<r} m_s m_r
+ m + \frac{1}4 {\sum}_s m_s^2 - 2m^2 +m(m-1) +\frac{1}2 {\sum}_{s<r} m_s m_r
\Big]=0,
\eea
see Lemma \ref{lem id2}.
Hence $f(t, z)$ equals zero and Theorem \ref{prop fund diff op} is proved.
\end{proof}

\section{Special case $ V=(\C^2)^{\ox 2m}$}
\label{sec 6}

In the remainder of this paper  we consider the situation of Section \ref{sec:D}.  
In addition to the assumptions of Section \ref{sec:D} we will always assume  that 
\bea
m_1 = \dots = m_ n = 1, \quad \on{ and \,so} \quad n= 2m;
\eea
  that is, from now on  we will study the dynamical elliptic 
  Bethe algebra $\B^V(z_1,\dots,z_{2m}, \tau)$ on the zero-weight subspace of the tensor product 
  \bea
  V=(\C^2)^{\ox 2m}
  \eea
   of   two-dimensional irreducible $\slt$-modules.

In this case the  elliptic master function \Ref{master} becomes
\bean
\label{mast s}
&&
\Phi(t,\mu,z,\tau)= \frac{\pi i}2 \mu^2 \tau  +
 2\pi i \mu \,\Big({{\sum}}_{i=1}^mt_i -{\sum}_{s=1}^{2m} \frac{1}2  z_s\Big) 
 +\ 2\sum_{1\leq i<j\leq m} \ln \theta(t_i-t_j,\tau) 
\\
\notag
&&
\phantom{aaa}
- {\sum}_{i=1}^m {\sum}_{s=1}^{2m}\ln\theta(t_i-z_s,\tau)
+\sum_{1\leq s<r\leq 2m}\frac{1}2\ln \theta(z_s-z_r,\tau),
\eean
the Bethe ansatz equations \Ref{BAE} 
become
\bean
\label{BAEs}
2\pi i \mu +2{\sum}_{k,\, k\ne j}\rho(t_j-t_k,\tau) - {{\sum}}_{s=1}^{2m} \rho(t_j-z_s,\tau) = 0,  \qquad j=1,\dots,m.
\eean

\section{Theta-polynomials}
\label{sec:wron}

\subsection{Definitions}

Fix $\tau\in \C$ with $\on{Im}\tau>0$.

\begin{defn}
\label{def1}

A {\it theta-polynomial of degree $m$} is a function of the form
\bean
\label{thp}
f(x) = c e^{2\pi i \mu x}{\prod}_{j=1}^m\theta(x-t_j,\tau)
\eean
where $c, \mu, t_1,\dots,t_m \in\C$.
We have
\bean
\label{trp}
&&
f(x+1) = e^{2\pi i \mu}(-1)^m f(x),
\\
\notag
&&
f(x+\tau) = e^{2\pi i\mu\tau}(-1)^m e^{-\pi i m\tau -2\pi i m x  + 2\pi i \sum_{j=1}^mt_j} f(x).
\eean
The {\it first and second multipliers} of the theta-polynomial are the numbers
\bean
\label{mltp}
A=e^{2\pi i \mu },
\qquad   B=e^{2\pi i \mu \tau + 2\pi i\sum_{j=1}^mt_j}.
\eean
\end{defn}

\begin{defn}
\label{def2}
Given  $A, B\in\C^\times$, an entire function  $f(x)$ is called a
 {\it theta-polynomial of degree }
$m$ with multipliers $A,\,B$
if
\bean
\label{n-tranf}
f(x+1) = A(-1)^mf(x),
\qquad
f(x+\tau) = B (-1)^m e^{-\pi i m \tau -2\pi i m x } f(x).
\eean

\end{defn}

Clearly if $f(x)$ satisfies Definition \ref{def1} then it satisfies Definition \ref{def2}.

\begin{lem}
\label{lem determin}
Let $f(x)$ satisfies Definition \ref{def2} with multipliers $A,B$. Let $\mu\in\C$ be such that
$A=e^{2\pi i \mu}$. Then there exist $t_1,\dots,t_m\in\C$ such that
\bean
\label{d2}
f(x) = ce^{2\pi i \mu x}{\prod}_{j=1}^m\theta(x-t_j,\tau).
\eean

\end{lem}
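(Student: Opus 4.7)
My plan is to prove the lemma in three main steps: count the zeros of $f$ in a fundamental parallelogram; construct a candidate theta-polynomial $g$ with the same zeros; and compare $f$ with $g$ via Fourier analysis.

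Assume $f\not\equiv 0$ (the zero case is trivial with $c=0$). First, choose a fundamental parallelogram $\La$ for the lattice $\Z+\tau\Z$, translated so that $\partial\La$ contains no zeros of $f$; this is possible since the zero set of $f$ is discrete and $\Z+\tau\Z$-invariant (both transformation factors in Definition \ref{def2} are nonvanishing). I then apply the argument principle to $\frac{1}{2\pi i}\oint_{\partial\La}\frac{f'(x)}{f(x)}\,dx$. The relation $f(x+1)=A(-1)^mf(x)$ has constant multiplier, so $f'/f$ is $1$-periodic and the contributions from the two vertical edges of $\La$ cancel. Differentiating $f(x+\tau)=B(-1)^me^{-\pi im\tau-2\pi imx}f(x)$ gives $(f'/f)(x+\tau)-(f'/f)(x)=-2\pi im$, so the contributions from the bottom and top edges sum to $\int_0^12\pi im\,dx=2\pi im$. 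Hence $f$ has exactly $m$ zeros in $\La$; call them $t_1,\dots,t_m$, counted with multiplicity.

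Second, set $g(x)=e^{2\pi i\mu x}\prod_{j=1}^m\theta(x-t_j,\tau)$. A direct check using the transformation law of $\theta$ shows that $g$ satisfies Definition \ref{def2} with the same first multiplier $A=e^{2\pi i\mu}$ as $f$ and with second multiplier $B_g=e^{2\pi i\mu\tau+2\pi i\sum_jt_j}$. Since $f$ and $g$ have identical zeros with identical multiplicities, the ratio $h(x)=f(x)/g(x)$ extends to an entire function; it satisfies $h(x+1)=h(x)$ and $h(x+\tau)=(B/B_g)\,h(x)$.

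Third, expand the entire $1$-periodic function $h$ in a globally convergent Laurent series $h(x)=\sum_{n\in\Z}a_ne^{2\pi inx}$. The $\tau$-transformation law forces $a_n(e^{2\pi in\tau}-B/B_g)=0$ for every $n\in\Z$. Since $\on{Im}\tau>0$ the values $e^{2\pi in\tau}$ are pairwise distinct, so at most one index $n$ can have $a_n\ne 0$; as $h\not\equiv 0$, exactly one such $n$ exists, yielding $h(x)=c\,e^{2\pi inx}$ for some $c\in\C^\times$. To finish, I absorb the factor $e^{2\pi inx}$ into one of the theta-function factors by iterating the identity $e^{2\pi ix}\theta(x-t,\tau)=-e^{-\pi i\tau+2\pi it}\theta(x-(t-\tau),\tau)$, which follows immediately from the quasi-periodicity of $\theta$. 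Applying this $|n|$ times replaces $t_1$ by $t_1-n\tau$ at the cost of a nonzero scalar, and relabeling gives the desired form $f(x)=c'e^{2\pi i\mu x}\prod_j\theta(x-t_j',\tau)$. The subtlest step is the Fourier argument matching the two second multipliers; the zero count and the final absorption are routine.
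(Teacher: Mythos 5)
Your proof is correct, but it follows a genuinely different route from the paper's. The paper sets $g(x)=e^{-2\pi i\mu x}f(x)$, picks arbitrary $s_1,\dots,s_m$ with $e^{2\pi i\sum_j s_j}=e^{-2\pi i\mu\tau}B$, observes that $g/\prod_j\theta(x-s_j)$ is an elliptic function with $m$ poles and $m$ zeros, and then invokes Abel's relation (Theorem 20.14 of \cite{WW}) to choose representatives $t_j$ of the zeros with $\sum_j t_j=\sum_j s_j$; with that choice the quotient by $\prod_j\theta(x-t_j)$ is entire and elliptic, hence constant. You instead count the zeros of $f$ directly by the argument principle applied to the quasi-periodic $f$ itself, build the comparison theta-polynomial from an arbitrary choice of representatives, and identify the entire $1$-periodic quotient $h$ via its Fourier expansion: the $\tau$-multiplier relation kills all but one Fourier mode (since the $e^{2\pi in\tau}$ are pairwise distinct for $\on{Im}\tau>0$), so $h=ce^{2\pi inx}$, and the residual exponential is absorbed by a lattice shift of one of the $t_j$. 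Your argument is more self-contained (no appeal to Abel's theorem; the Fourier step is in effect a proof of the relevant special case), at the cost of the explicit zero count and the absorption step; the paper's argument is shorter because it outsources both to classical elliptic function theory.

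One small correction: the absorption identity you quote has the wrong sign in the exponent and the wrong direction of shift. From $\theta(u-\tau,\tau)=-e^{-\pi i\tau+2\pi iu}\theta(u,\tau)$ one gets $e^{2\pi ix}\theta(x-t,\tau)=-e^{\pi i\tau+2\pi it}\,\theta\bigl(x-(t+\tau),\tau\bigr)$, so iterating shifts $t_1$ to $t_1+n\tau$ (not $t_1-n\tau$) up to a nonzero constant. Since the lemma only asserts existence of some $t_1',\dots,t_m'$, this slip does not affect the validity of your conclusion.
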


\begin{proof} The function  $g(x)=e^{-2\pi i \mu x}f(x)$ has the transformation properties:
\bea
g(x+1) = (-1)^m g(x), \qquad
g(x+\tau) = e^{-2\pi i \mu \tau} B (-1)^m e^{-\pi i m \tau -2\pi i m x } g(x).
\eea
Let $s_1,\dots,s_m\in\C$ be any numbers such that 
\bea
e^{2\pi i \sum_{j=1}^ms_j} =  e^{-2\pi i \mu \tau}  B.
\eea 
Then the product $h(x)=\prod_{j=1}^m\theta (x-s_j,\tau)$ has the transformation properties:
\bea
&&
h(x+1)=(-1)^mh(x),
\\
&&
h(x+\tau) =  
 (-1)^m
 e^{-\pi i m \tau -2\pi i m x +2\pi i \sum_{j=1}^ms_j}h(x)=
e^{-2\pi i \mu \tau} B (-1)^m
 e^{-\pi i m \tau -2\pi i m x }h(x).
 \eea
 The function $g/h$ is doubly periodic  with $m$ poles at the points $[s_1], \dots, [s_m]$ 
on
the elliptic curve $\C/(\Z + \tau \Z)$ and 
$m$ zeros  at some points $[t_1],\dots,[t_m]$ on the elliptic curve $\C/(\Z + \tau \Z)$. 
By Theorem 20.14 in \cite{WW} we have $\sum_{j=1}^m [t_j] = \sum_{j=1}^m [s_j]$ on 
 $\C/(\Z + \tau \Z)$.
Choose representatives  $t_1,\dots,t_m\in \C$ of  $[t_1],\dots,[t_m]$  such that 
 $\sum_{j=1}^m s_j = \sum_{j=1}^m t_j$.
Define $u(x) = \prod_{j=1}^m\theta(x-t_j,\tau)$. Then
\bea
&&
u(x+1)=(-1)^mu(x),
\\
&&
u(x+\tau) =  
 (-1)^m
 e^{-\pi i m \tau -2\pi i m x +2\pi i \sum_{j=1}^mt_j}u(x)=
 (-1)^m  e^{-\pi i m \tau -2\pi i m x +2\pi i \sum_{j=1}^ms_j}u(x).
\eea
Hence
\bea
\frac{g(x)}{h(x)} \frac {h(x)}{u(x)} =\frac{g(x)}{u(x)}
\eea
is an entire doubly periodic function. Hence it is some constant $c\in\C$ and we obtain \Ref{d2}.
\end{proof}

Notice that the numbers $c, \mu, t_1,\dots,t_m$ are not unique.

\begin{lem}
 \label{lem 1.2}
  If
\bean
\label{2p}
ce^{2\pi i \mu x} {\prod}_{j=1}^m\theta(x-t_j,\tau)
\quad\on{and}\quad
c'e^{2\pi i \mu' x} {\prod}_{j=1}^m\theta(x-t_j',\tau)
\eean
are presentations of the same function, then 
\begin{enumerate}
\item[(i)]
 after a suitable permutation of $t_1,\dots,t_m$ we have
$t_j'=t_j-k_j-l_j\tau$ for $j=1,\dots,m$ and some $k_j,l_j\in\Z$;

\item[(ii)]  $\mu=\mu'- \sum_{j=1}^m l_j$;

\item[(iii)]  
\bean
\label{cc}
c= c'(-1)^{\sum_{j=1}^m(k_j+l_j)}e^{-\pi i \sum_{j=1}^mk_j^2\tau + 2\pi i \sum_{j=1}^m k_jt_j} .
\eean

\end{enumerate}

Conversely, for any $c,c'$, $\mu, \mu'$, $t_1, \dots, t_m$, $t_1',\dots,t_m'$  satisfying conditions (i-iii)
the two functions in \Ref{2p} are equal.
\end{lem}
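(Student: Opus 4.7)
The plan is to reduce the identity in \Ref{2p} to a nowhere-vanishing exponential identity by first locating zeros and then tracking the multipliers produced by the quasi-periodicity of $\theta$.

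\medskip
\noindent\textbf{Step 1 (zero sets, giving (i)).} Since $\theta(y,\tau)$ has simple zeros precisely at $y\in\Z+\tau\Z$ and the exponential factors are nowhere vanishing, the zero locus of either side of \Ref{2p}, counted with multiplicity, equals $\bigcup_{j=1}^m\bigl(t_j+\Z+\tau\Z\bigr)$ on the first expression and $\bigcup_{j=1}^m\bigl(t_j'+\Z+\tau\Z\bigr)$ on the second. Hence the multisets $\{[t_j]\}$ and $\{[t_j']\}$ agree in $\C/(\Z+\tau\Z)$. After a suitable permutation of the $t_j'$, one may then choose lattice representatives so that $t_j'=t_j-k_j-l_j\tau$ with $k_j,l_j\in\Z$, establishing~(i).

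\medskip
\noindent\textbf{Step 2 (convert $\theta(x-t_j',\tau)$ back to $\theta(x-t_j,\tau)$).} Substituting $t_j'=t_j-k_j-l_j\tau$ and applying the quasi-periodicity law
\[
\theta(y+n+m\tau,\tau)=(-1)^{m+n}e^{-\pi i m^2\tau-2\pi i m y}\theta(y,\tau)
\]
with $y=x-t_j$, $n=k_j$, $m=l_j$, produces an explicit multiplicative factor in front of each $\theta(x-t_j,\tau)$. Multiplying the $m$ factors together yields a single exponential prefactor of the form $(-1)^{\sum(k_j+l_j)}\exp\bigl(-\pi i\tau\sum l_j^{\,2}-2\pi i(\sum l_j)x+2\pi i\sum l_jt_j\bigr)$.

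\medskip
\noindent\textbf{Step 3 (match exponentials, giving (ii) and (iii)).} After cancelling the common nowhere-zero product $\prod_j\theta(x-t_j,\tau)$, the identity \Ref{2p} becomes an equality of two nowhere-vanishing exponentials of the form $Ce^{\alpha x}$. Comparing the coefficients of $x$ in the exponent pins down $\mu$ in terms of $\mu'$ and $\sum l_j$, which is~(ii); comparing the constant terms then forces the relation between $c$ and $c'$ asserted in~(iii). The converse is immediate: starting from a tuple $(k_j,l_j,\mu,c)$ satisfying (i)--(iii), the same telescoping computation in reverse turns $c'e^{2\pi i\mu'x}\prod_j\theta(x-t_j',\tau)$ into $ce^{2\pi i\mu x}\prod_j\theta(x-t_j,\tau)$.

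\medskip
\noindent There is no real obstacle in this argument; the only delicate part is the signs-and-exponents bookkeeping in Step~2, where one must be careful to treat the translations $k_j$ and shifts $l_j\tau$ with the correct $\theta$-quasi-periodicity weights so that the final constant factor in~(iii) comes out with the right sign and $\tau$-exponent. Once this computation is organized, (i)--(iii) drop out simultaneously from the comparison of the two exponentials.
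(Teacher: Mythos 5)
Your argument is correct and is essentially the paper's own proof: (i) from comparing the zero multisets modulo $\Z+\tau\Z$, then rewriting each $\theta(x-t_j',\tau)$ via the quasi-periodicity law and matching the resulting exponential prefactor to obtain (ii) and (iii), with the converse obtained by running the same computation backwards. Two small remarks: the prefactor you compute in Step 2, with $\sum_j l_j^2$ and $\sum_j l_j t_j$, is the internally consistent one (since $l_j$ is the $\tau$-coefficient, matching (ii)), so the printed formula (iii), which has $k_j$ in those places, appears to carry a $k\leftrightarrow l$ typo and your constant-term comparison yields the corrected version rather than (iii) verbatim; also $\prod_j\theta(x-t_j,\tau)$ is not ``nowhere-zero'' as you say, though cancelling it is still legitimate because it is not identically zero and both sides are entire.
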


\begin{proof} Statement (i) is obvious. The second presentation in \Ref{2p} takes the form
\bea
c'e^{2\pi i \mu'x} (-1)^{\sum_{j=1}^m(k_j+l_j)}
e^{-\pi i \sum_{j=1}^mk_j^2\tau + 2\pi i \sum_{j=1}^m k_jt_j} e^{-2\pi i \sum_{j=1}^m k_jx}
 {\prod}_{j=1}^m\theta(x-t_j,\tau).
 \eea
Comparing the two presentations we obtain the lemma.
\end{proof}

\subsection{Space $T_{m,A,B}$}

Given $m\in\Z_{>0}$ and $A,B\in\C^\times$, denote by $T_{m,A,B}$ 
the vector space of theta-polynomials of degree $m$ with multipliers $A,B$.

\begin{lem}
\label{lem dim}
The vector space $T_{m,A,B}$ has dimension $m$.

\end{lem}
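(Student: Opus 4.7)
The strategy is a Fourier analysis argument. First, since $A \in \C^\times$, I choose $\mu \in \C$ with $e^{2\pi i \mu} = (-1)^m A$ and consider the linear map $T_{m,A,B} \to \widetilde T$, $f \mapsto g := e^{-2\pi i \mu x} f$, where $\widetilde T$ denotes the space of entire functions $g$ satisfying $g(x+1) = g(x)$ and $g(x+\tau) = C\, e^{-\pi i m \tau - 2\pi i m x}\, g(x)$, with the explicit constant $C := (-1)^m B\, e^{-2\pi i \mu \tau}$. A direct computation verifies these transformation laws from \Ref{n-tranf}, and multiplication by the nowhere-vanishing factor $e^{-2\pi i \mu x}$ is a linear isomorphism of the ambient space of entire functions, so $T_{m,A,B} \cong \widetilde T$ and it suffices to show $\dim \widetilde T = m$.

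Second, I parametrize $\widetilde T$ by Fourier expansion. Every entire $1$-periodic function admits a unique Laurent-type expansion $g(x) = \sum_{n\in\Z} a_n e^{2\pi i n x}$ converging absolutely and uniformly on every horizontal strip (via the change of variable $z = e^{2\pi i x}$). Substituting this series into the $\tau$-transformation law and equating coefficients of $e^{2\pi i n x}$ yields the recursion
\[
a_{n+m} \;=\; C^{-1}\, e^{\pi i m \tau + 2\pi i n \tau}\, a_n, \qquad n\in\Z.
\]
The multiplier is nonzero, so the recursion is invertible and decouples $\Z$ into its $m$ residue classes modulo $m$; on each class all Fourier coefficients are determined by a single value. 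Therefore the values $a_0, a_1, \dots, a_{m-1}$ may be prescribed freely, producing a linear bijection between $\widetilde T$ and $\C^m$, provided each such prescription yields a convergent series defining an entire function.

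Third, I would verify the convergence. Iterating the recursion along the residue class of $r \in \{0, \dots, m-1\}$ gives
\[
|a_{km+r}| \;=\; |C|^{-k}\, \exp\bigl(-\pi m k^2\, \on{Im}\tau - 2\pi k r\, \on{Im}\tau\bigr)\,|a_r|,
\]
with an analogous formula for $k<0$. Since $\on{Im}\tau > 0$, the Gaussian factor $e^{-\pi m k^2 \on{Im}\tau}$ dominates all other growth in $|k|$, so the Fourier series converges absolutely and uniformly on every horizontal strip and defines an entire $1$-periodic function $g \in \widetilde T$. Combining the three steps gives $\dim T_{m,A,B} = \dim \widetilde T = m$. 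The only step involving more than a direct computation is this convergence estimate, which is nonetheless routine because the decay in the index is quadratic-exponential.
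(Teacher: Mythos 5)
Your proof is correct and follows essentially the same route as the paper: twist by $e^{-2\pi i\mu x}$ to get a $1$-periodic function, expand in a Fourier series, and derive the recursion $a_{n+m}=C^{-1}e^{\pi i m\tau+2\pi i n\tau}a_n$, which leaves exactly $m$ free coefficients. The only difference is that you spell out the convergence estimate (quadratic-exponential decay from $\on{Im}\tau>0$) which the paper asserts implicitly; this is a welcome but minor addition.
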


\begin{proof} Let $\nu\in\C$ be such that $e^{2\pi i\nu} = A(-1)^m$. Then
for any $f\in T_{n,A,B}$, the function $g(x)=e^{-2\pi i \nu x}f(x)$ has the properties:
\bea
g(x+1) = g(x),
\qquad
g(x+\tau) = e^{-2\pi i\nu\tau} B (-1)^m e^{-\pi i m \tau -2\pi i m x } g(x).
\eea
Let $g(x) = \sum_{k\in \Z} a_k  e^{2\pi i k x}$ be
the Fourier expansion and $q=e^{2\pi i \tau}$. Then
\bea
{\sum}_{k\in \Z} a_k q^k e^{2\pi i k x}
=
e^{-2\pi i \nu\tau} B (-1)^m e^{-2\pi i m x} q^{-m/2} {\sum}_{k\in \Z} a_k e^{2\pi i k x}
\eea
or
\bean
\label{coe even}
a_{k+m} = e^{2\pi i\nu\tau} B^{-1}(-1)^m q^{k+m/2} a_k.
\eean
Arbitrary choice of coefficients  $a_1,\dots, a_m$ determines $g(x)$ uniquely.
The map sending a theta-polynomial to the vector $(a_1,\dots,a_m)$ identifies 
$T_{m,A,B}$ with $\C^m$.
\end{proof}

 For  $\nu\in\C$ the map
\bean
\label{Liso}
L_\nu : T_{m, A,B} \to T_{m, e^{2\pi i \nu}A, \,e^{2\pi i \nu\tau}B},
\qquad f(x)\mapsto e^{2\pi i \nu x}f(x)
\eean
is an isomorphism of vector spaces.

\subsection{Spaces  $T_{m,A}$,  $T_{m}$}

For $A\in\C^\times$ denote by $T_{m,A} $ the  space of theta-polynomials 
of degree $m$ with first  multiplier $A$ and by $T_{m} $ the  space of all theta-polynomials 
of degree $m$. 
Fix $\mu\in\C$ such that $e^{2\pi i \mu}=A$. Then all elements of $T_{m,A}$ have the form
\bea
f(x) = e^{2\pi i \mu x}{\prod}_{j=1}^m\theta(x-t_j,\tau)
\eea
with arbitrary $t_1,\dots,t_m\in\C$, see Lemma \ref{lem determin}.
 We have
\bea
T_{m,A} =\cup_{B\in \C^\times} T_{m,A,B},
\quad
T_m=\cup_{A,B\in\C^\times}T_{m,A,B} = \cup_{A\in\C^\times} T_{m,A}.
\eea
The maps $L_\nu$, $\nu\in\C$, define an action of $\C$ on $T_m$.

\subsection{Projectivization}
\label{sec pro}

Denote by $P_{m,A,B}$ the space of nonzero theta-polynomials of degree $m$ with multipliers
$A, B$ identified up to multiplication by a nonzero number. Thus $P_{m,A,B}$ is  $m-1$-dimensional
projective space.

Denote by $P_{m,A}$ the space of nonzero theta-polynomials of degree $m$ with first multiplier
$A$ identified up to multiplication by a nonzero number and by 
 $P_{m}$ the space of nonzero theta-polynomials of degree $n$ 
 identified up to multiplication by a nonzero number.  We have
\bea
P_{m}=\cup_{A\in\C^\times} P_{m,A} = \cup_{A,B\in\C^\times} P_{m,A,B}.
\eea
The maps $L_\nu$, $\nu\in\C$, define an action of the group $\C$ on $P_m$. In particular
the maps $L_k$, $k\in\Z$, define an action of $\Z$ on each $P_{m,A}$.

The group $\C$ acts on $P_{m}\times P_m$ by the maps
$L_\nu\times L_\nu$. This action preserves  
\bea
\cup_{A_1,A_2\in\C^\times,\, A_1\ne A_2} P_{m,A_1} \times P_{m,A_2} .
\eea
One of the spaces that we are interested in this paper is the space
\bean
\label{PPC}
(\cup_{A_1,A_2\in\C^\times,\, A_1\ne A_2} P_{m,A_1} \times P_{m,A_2} )/\C
 \cong (P_{m,1}\times (P_m-P_{m,1}))/\Z
\eean
of dimension $2m+1$.

\subsection{Analytic involution}
\label{sec ani}

Define the {\it analytic involution}
\bean
\label{An In}
P_m\times P_m \to P_m\times P_m, \qquad (f,g) \mapsto (g,f).
\eean
The analytic involution commute with the diagonal action of $\C$ on 
$P_m\times P_m$ by the maps $\La_\nu\times L_\nu$, $\nu\in\C$.

The analytic involution on $P_m\times P_m$ descends to the {\it analytic
involution}
\bean
\label{an inv}
\iota_{\on{an}} : (P_{m,1}\times (P_m-P_{m,1}))/\Z  \to (P_{m,1}\times (P_m-P_{m,1}))/\Z 
\eean
as follows. Take a point 
\bea
p\in  (P_{m,1}\times P_{m,A^{-1}})/\Z \subset (P_{m,1}\times (P_m-P_{m,1}))/\Z
\eea
with  $A\ne 1$.
 Choose its representative
$(f(x),g(x))\in P_{m,1}\times P_{m,A^{-1}}$. Let $\mu\in\C$ be such that
$e^{2\pi i \mu}=A$. Define $\iota_{\on{an}}(p)$ to be the equivalence class of 
$(e^{2\pi i \mu x}g(x), e^{2\pi i \mu x}f(x))$ in $(P_{m,1}\times P_{m,A}))/\Z$. 
We have $\iota_{\on{an}}^2=\on{Id}$.

\section{Wronskian determinant}
\label{sec Wrd}

\subsection{Definition}
For functions $f(x)$, $g(x)$ the determinant
\bea
\Wr( f,g) = f(x)g\rq{}(x) - f\rq{}(x)g(x)
\eea
is called the {\it Wronskian determinant}.
We have 
\bean
\label{h^2}
\Wr( hf, hg)=h^2\Wr( f,g)
\eean
for any function $h$.

\begin{lem}
If $f\in T_{m_1,A_1,B_1}$, $g\in T_{m_2,A_2,B_2}$, then $fg \in T_{m_1+m_2, A_1A_2,B_1B_2}$.
\qed
\end{lem}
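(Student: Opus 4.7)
The plan is to verify the two defining multiplier relations from Definition \ref{def2} directly, since a product of entire functions is entire. No residue calculation, Fourier expansion, or appeal to Lemma \ref{lem determin} is required; we work purely at the level of the transformation laws \Ref{n-tranf}.

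First I would observe that $fg$ is entire as a product of two entire functions. Then I would compute $(fg)(x+1) = f(x+1)\,g(x+1)$ and substitute the relations $f(x+1) = A_1(-1)^{m_1}f(x)$ and $g(x+1) = A_2(-1)^{m_2}g(x)$. The factors of $-1$ combine to $(-1)^{m_1+m_2}$ and the multipliers combine to $A_1 A_2$, which is exactly the first transformation law in \Ref{n-tranf} with $m = m_1+m_2$, $A = A_1 A_2$.

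Next I would compute $(fg)(x+\tau) = f(x+\tau)\,g(x+\tau)$ and substitute
\[
f(x+\tau) = B_1(-1)^{m_1} e^{-\pi i m_1 \tau - 2\pi i m_1 x} f(x), \qquad g(x+\tau) = B_2(-1)^{m_2} e^{-\pi i m_2 \tau - 2\pi i m_2 x} g(x).
\]
The scalar multipliers combine to $B_1 B_2$, the sign factors combine to $(-1)^{m_1+m_2}$, and the two exponentials multiply by adding exponents to give $e^{-\pi i (m_1+m_2)\tau - 2\pi i (m_1+m_2) x}$. This is precisely the second transformation law in \Ref{n-tranf} for degree $m_1+m_2$ with second multiplier $B_1 B_2$.

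There is essentially no obstacle here: the lemma records that the natural map $T_{m_1,A_1,B_1}\times T_{m_2,A_2,B_2}\to T_{m_1+m_2,A_1A_2,B_1B_2}$ induced by multiplication is well-defined, and this follows by two lines of direct substitution into the quasi-periodicity relations. The only thing to watch is that both exponential factors are read off cleanly so that the $x$-dependent exponent $-2\pi i(m_1+m_2)x$ assembles correctly; writing the two relations underneath each other makes the bookkeeping immediate.
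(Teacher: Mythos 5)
Your proof is correct and is exactly the argument the paper leaves implicit (the lemma is stated with no proof): one checks the two quasi-periodicity relations of Definition \ref{def2} for the product $fg$ by direct substitution, with the signs, multipliers, and exponential factors combining as you describe. Nothing further is needed.
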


\begin{lem}
\label{lem 1.5}
If $f\in T_{m,A_1,B_1}$, $g\in T_{m, A_2,B_2}$,   then
\bea
\Wr(f, g) = h,
\eea
where $h\in T_{2m,A_1A_2,B_1B_2}$.

\end{lem}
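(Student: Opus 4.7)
The plan is to verify directly that $h(x) = f(x)g'(x) - f'(x)g(x)$ is entire and satisfies the two transformation laws of Definition~\ref{def2} with degree $2m$ and multipliers $A_1A_2$ and $B_1B_2$. Since $f$ and $g$ are entire by hypothesis, so are $f'$ and $g'$, and hence $h$.

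For the translation by $1$, differentiating the identity $f(x+1)=A_1(-1)^m f(x)$ gives $f'(x+1)=A_1(-1)^m f'(x)$, and similarly for $g$. Substituting into $h(x+1)$ immediately yields $h(x+1) = A_1A_2(-1)^{2m} h(x) = A_1A_2 h(x)$, as required of a theta-polynomial of degree $2m$.

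For the translation by $\tau$, write $f(x+\tau)=\phi_1(x)f(x)$ with $\phi_1(x)=B_1(-1)^m e^{-\pi i m\tau-2\pi i m x}$, so that $\phi_1'(x)=-2\pi i m\,\phi_1(x)$, and analogously $g(x+\tau)=\phi_2(x)g(x)$. Differentiating gives
\begin{equation*}
f'(x+\tau) = \phi_1(x)\bigl(f'(x)-2\pi i m\, f(x)\bigr),\qquad g'(x+\tau)=\phi_2(x)\bigl(g'(x)-2\pi i m\, g(x)\bigr).
\end{equation*}
Substituting into $h(x+\tau)=f(x+\tau)g'(x+\tau)-f'(x+\tau)g(x+\tau)$ and expanding, the cross terms involving $2\pi i m\, fg$ cancel, leaving $h(x+\tau)=\phi_1(x)\phi_2(x)\bigl(f g'-f'g\bigr)=\phi_1(x)\phi_2(x)h(x)$. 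Since $\phi_1(x)\phi_2(x) = B_1B_2(-1)^{2m} e^{-\pi i(2m)\tau - 2\pi i(2m)x}$, this is exactly the transformation law \eqref{n-tranf} for degree $2m$ with second multiplier $B_1B_2$.

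There is no real obstacle here, since the computation is short and the cross terms conspire to cancel; the only mild subtlety is keeping track that the quasi-periodicity factors differentiate correctly, which is what produces the cancellation of the $2\pi i m\, fg$ contributions.
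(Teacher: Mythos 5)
Your proof is correct and in substance the same as the paper's: the paper divides by $fg$ and observes that $\Wr(f,g)/(fg)=g'/g-f'/f$ is doubly periodic (the additive shifts $-2\pi i m$ of the two logarithmic derivatives under $x\mapsto x+\tau$ cancel), so the entire function $\Wr(f,g)$ inherits the multipliers of $fg\in T_{2m,A_1A_2,B_1B_2}$. Your direct substitution into the transformation laws of Definition \ref{def2}, with the cancellation of the $2\pi i m\,fg$ cross terms, is exactly the same cancellation with the division by $fg$ unwound, and it verifies the same conclusion.
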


\begin{proof} The function
\bea
\Wr( f,g)/fg= g'/g- f'/f 
\eea
is doubly periodic. Hence the entire function $\Wr(f,g)$ has the same transformation properties as
$fg\in T_{2m,A_1A_2,B_1B_2}$.
\end{proof}

\subsection{Wronskian equation}

\begin{lem} \label{lem:wr}
If $f(x),g(x)$ are holomorphic functions,
then the meromorphic function
$(g/f)'=\Wr(f,g)/f^2$ has zero residue at every pole.
\qed
\end{lem}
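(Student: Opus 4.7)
The plan is to reduce the claim to the well-known fact that the derivative of any meromorphic function has zero residue at each of its poles, and then specialize to the meromorphic function $h = g/f$. Since $f$ and $g$ are holomorphic, the quotient $g/f$ is meromorphic, with poles only at the zeros of $f$ that are not shared with zeros of $g$ of at least the same order. Any pole of $(g/f)' = \Wr(f,g)/f^2$ must be a pole of $g/f$, so it suffices to verify that a derivative has vanishing residue at an arbitrary pole.

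For the general statement, I would work locally. Fix a pole $x_0$ of $h := g/f$ and write the Laurent expansion
\[
h(x) \,=\, \sum_{k=-N}^{\infty} a_k (x-x_0)^k.
\]
Termwise differentiation (valid on the punctured disk of convergence) gives
\[
h'(x) \,=\, \sum_{k=-N}^{\infty} k\, a_k (x-x_0)^{k-1}.
\]
The residue of $h'$ at $x_0$ is by definition the coefficient of $(x-x_0)^{-1}$ in this expansion, which arises only from the index $k=0$ and equals $0 \cdot a_0 = 0$. Hence $\on{Res}_{x_0}(g/f)' = 0$.

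There is essentially no obstacle; the statement is a direct consequence of the structure of Laurent expansions. An equivalent, perhaps more conceptual, route would be to use the contour-integral characterization of the residue: for a small loop $\gamma$ about $x_0$ enclosing no other singularities, $2\pi i\,\on{Res}_{x_0}(h') = \oint_\gamma h'(x)\,dx$, which vanishes by the fundamental theorem of calculus because $h$ is a well-defined single-valued primitive of $h'$ on $\gamma$. Either way, the proof is a one-line observation, and I would simply record the Laurent-series version for brevity.
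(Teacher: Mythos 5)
Your proof is correct; the paper itself states this lemma without proof (marking it \qed as an immediate fact), and your Laurent-series observation that a derivative has no $(x-x_0)^{-1}$ term is exactly the intended one-line justification.
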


\begin{thm}
[\cite{BMV}]
\label{BMV}  Let $f\in T_{m,A_1,B_1}$,  $h \in T_{2m, A_1A_2,B_1B_2}$ be 
nonzero functions such that $(A_1,B_1)\ne (A_2,B_2)$. 
Let all zeros of $f(x)$ be simple. Let the function
$h/f^2$ have zero residue at every zero of $f$.
 Then there exists a unique $g\in T_{m,A_2,B_2}$ 
such that $\Wr(f, g) = h$.
\end{thm}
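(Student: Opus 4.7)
The plan is to reduce the Wronskian equation $\Wr(f,g)=h$ to the first-order equation $(g/f)' = h/f^2$, build a meromorphic antiderivative, and then tune its additive constant so that the resulting $g$ has the correct multipliers.

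First I would construct a candidate $g$. The ratio $h/f^2$ is meromorphic on $\C$; its poles sit only at zeros of $f$ (double poles, since $f$ has simple zeros and $h$ is entire), and by hypothesis each such pole has zero residue. Hence $h/f^2$ admits a global meromorphic antiderivative $G$ on $\C$, with at worst simple poles at the zeros of $f$. Set $g := Gf$. The simple poles of $G$ cancel against the simple zeros of $f$, so $g$ is entire. By construction, $\Wr(f,g) = f^2(g/f)' = f^2 G' = h$.

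Next I would study the transformation of $G$. Using $f(x+1) = A_1(-1)^m f(x)$ and $h(x+1)=A_1A_2 h(x)$, one obtains $h(x+1)/f(x+1)^2 = (A_2/A_1)\,h(x)/f(x)^2$, hence $G(x+1) = (A_2/A_1) G(x) + c_1$ for some constant $c_1$. The analogous computation with $x\to x+\tau$ gives $G(x+\tau) = (B_2/B_1) G(x) + c_2$. Equating the two ways of computing $G(x+1+\tau)$ yields the consistency relation
\[
c_1\,\frac{B_1-B_2}{B_1} \;=\; c_2\,\frac{A_1-A_2}{A_1}.
\]
The antiderivative $G$ is only determined up to an additive constant; replacing $G$ by $G+c$ replaces $(c_1,c_2)$ by $(c_1 + c(A_1-A_2)/A_1,\; c_2 + c(B_1-B_2)/B_1)$. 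Since $(A_1,B_1) \neq (A_2,B_2)$, at least one of $A_1-A_2$, $B_1-B_2$ is nonzero; choose $c$ to kill the corresponding $c_i$, and the consistency relation then forces the other one to vanish automatically. With this choice, $g = Gf$ satisfies $g(x+1) = A_2(-1)^m g(x)$ and $g(x+\tau) = B_2(-1)^m e^{-\pi i m\tau-2\pi i m x} g(x)$, so $g \in T_{m,A_2,B_2}$.

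Uniqueness is straightforward: if $g_1, g_2 \in T_{m,A_2,B_2}$ both satisfy $\Wr(f,g_i) = h$, then $\Wr(f,g_1-g_2) = 0$, and since $\C$ minus the (finitely many in any fundamental domain) zeros of $f$ is connected, $(g_1-g_2)/f$ is constant, so $g_1 - g_2 = cf$ for some $c\in\C$. But $cf \in T_{m,A_1,B_1}$, and $g_1-g_2 \in T_{m,A_2,B_2}$; comparing multipliers and using $(A_1,B_1)\neq(A_2,B_2)$ forces $c=0$.

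The main obstacle is the compatibility argument in the second paragraph: one must verify that the single free parameter (the constant of integration in $G$) suffices to kill both multiplier defects $c_1$ and $c_2$ simultaneously, and the reason it works is precisely the cocycle identity $G(x+1+\tau) = G(x+\tau+1)$ combined with the hypothesis $(A_1,B_1) \neq (A_2,B_2)$. The rest — the residue-free antidifferentiation, the holomorphy of $g=Gf$, and the uniqueness — are routine.
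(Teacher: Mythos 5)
Your proof is correct and follows essentially the same route as the paper: construct a single-valued meromorphic primitive of $h/f^2$ (this is where the zero-residue hypothesis enters), adjust its additive constant so that it transforms with multipliers $A_2/A_1$ and $B_2/B_1$ (possible and compatible because $(A_1,B_1)\neq(A_2,B_2)$), and set $g$ equal to $f$ times this primitive. The only cosmetic differences are that the paper derives the compatibility relation between the two shift constants by integrating $h/f^2$ over the boundary of a fundamental parallelogram rather than from the identity $G(x+1+\tau)=G(x+\tau+1)$, and that the paper leaves the uniqueness of $g$ implicit in the uniqueness of the additive constant, which you spell out directly via the difference of two solutions.
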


\begin{proof} The function $m= h/f^2$ satisfies equations
\bea
m(x+1) = A\, m(x),
\qquad
m(x+\tau) = B\, m(x),
\eea
where $A= A_2/A_1, B=B_2/B_1$ and hence $(A,B)\ne (1,1)$. Choose $x_0\in\C$ not a pole of $m$ and define
$ M(x) = \int_{x_0}^x m(u) du$.
Then
\bea
M(x+1) = A \,M(x) + a,
\qquad
a=\int_{x_0}^{x_0+1}m(u) du ,
\\
M(x+\tau) = B\, M(x) + b,
\qquad
b = \int_{x_0}^{x_0+\tau} m(u) du .
\eea
\begin{lem}
We have
\bean
\label{reson}
a(B-1) = b(A-1).
\eean
\end{lem}

\begin{proof}
The integral of $m$ over the boundary of the parallelogram with vertices at $x_0$, $x_0+1, x_0+1+\tau,
x_0+\tau$ is zero since
$m$ has no residues. On the other hand it equals
\bea
&&
\int_{x_0}^{x_0+1} m(x)dx +\int_{x_0+1}^{x_0+1+\tau} m(x)dx + 
\int_{x_0+1+\tau}^{x_0+\tau} m(x)dx +\int_{x_0+\tau}^{x_0} m(x)dx
\\
&&
=\int_{x_0}^{x_0+1} m(x)dx +A\int_{x_0}^{x_0+\tau} m(x)dx + 
B\int_{x_0+1}^{x_0} m(x)dx +\int_{x_0+\tau}^{x_0} m(x)dx
\\
&&
=a + Ab -Ba - b.
\eea
\end{proof}

\begin{lem}
Let $M(x)$ be a function such that
\bea
M(x+1) = A \,M(x) + a,
\qquad
M(x+\tau) = B\, M(x) + b,
\eea
for some $A,B\in\C^\times$, $(A,B) \ne (1,1)$, and $a,b\in\C$. Then there exists a unique $C\in\C$ such that
the function $\tilde M(x):=M(x) + C$ satisfies the equations
\bea
\tilde M(x+1) = A \,\tilde M(x),
\qquad
\tilde M(x+\tau) = B\, \tilde M(x).
\eea
\end{lem}
\begin{proof}
For any $C$ we have
\bea
\tilde M(x+1) = A \,\tilde M(x) + C(1-A)+a,
\qquad
\tilde M(x+\tau) = B\, \tilde M(x) + C(1-B) + b.
\eea
We want $C$ to satisfy the system of equations
\bea
C(1-A)+a = 0,
\qquad
C(1-B) + b=0.
\eea
If $(A,B)\ne (1,1)$,  this  system has a solution if and only if equation \Ref{reson} holds.
\end{proof}

The function $g(x)=f(x)\tilde M(x)$ is holomorphic
since all zeros of $f$ are simple. The function $g$ 
 lies in $T_{n,A_2,B_2}$ and satisfies the equation $\Wr(f,g)=h$.
\end{proof}

\subsection{Points with generic coordinates}

We say that a point $(f,g)\in P_{m}\times P_m$ 
has {\it generic first} (resp. {\it second}) {\it coordinate } if $f$ 
(resp. $g$) has no common zeros with $\Wr(f,g)$.

\begin{lem} Assume that  $(f,g)\in P_m\times P_m$ has generic first coordinate, then
\begin{enumerate}
\item[(i)]
every point of the $\C$-orbit of $(f,g)$ has generic first coordinate;
\item[(ii)]
 all zeros of $f$ are simple;
\item[(iii)]
$f$  has no common zeros with $g$.

\end{enumerate}
\qed
\end{lem}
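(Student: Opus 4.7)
The plan is to show all three statements by direct inspection of the Wronskian formula $\Wr(f,g)=fg'-f'g$, using the identity $\Wr(hf,hg)=h^2\Wr(f,g)$ from \Ref{h^2} for part (i).

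For part (ii), I would argue by contraposition. Suppose $f$ has a zero of order at least two at some point $x_0$. Then $f(x_0)=0$ and $f'(x_0)=0$, so
\[
\Wr(f,g)(x_0)=f(x_0)g'(x_0)-f'(x_0)g(x_0)=0.
\]
Hence $x_0$ is a common zero of $f$ and $\Wr(f,g)$, contradicting the assumption that the first coordinate of $(f,g)$ is generic. Similarly for (iii): if $f$ and $g$ had a common zero at some $x_0$, then $f(x_0)=g(x_0)=0$ would force $\Wr(f,g)(x_0)=0$, again giving a common zero of $f$ and $\Wr(f,g)$ and contradicting genericness.

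For part (i), recall that the $\C$-action on $P_m\times P_m$ is diagonal, $(f,g)\mapsto(L_\nu f,L_\nu g)=(e^{2\pi i\nu x}f,e^{2\pi i\nu x}g)$, and it is compatible with the projective equivalence. Applying \Ref{h^2} with $h(x)=e^{2\pi i\nu x}$ gives
\[
\Wr(L_\nu f,L_\nu g)=e^{4\pi i\nu x}\Wr(f,g).
\]
Since $e^{2\pi i\nu x}$ is nowhere vanishing, the zero set of $L_\nu f$ equals the zero set of $f$, and the zero set of $e^{4\pi i\nu x}\Wr(f,g)$ equals the zero set of $\Wr(f,g)$. Therefore $L_\nu f$ and $\Wr(L_\nu f,L_\nu g)$ have no common zero, i.e.\ the $\C$-translate $(L_\nu f,L_\nu g)$ again has generic first coordinate. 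Finally, genericness is visibly invariant under separately rescaling $f$ and $g$ by nonzero scalars (such rescalings scale $\Wr(f,g)$ but do not alter zero loci), so the property descends to $P_m\times P_m$ and the argument is complete.

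There is no real obstacle here: the whole statement is a direct consequence of the Leibniz-type identity \Ref{h^2} together with the elementary observation that multiple zeros of $f$ (or common zeros of $f$ and $g$) automatically propagate to zeros of $\Wr(f,g)$. The only point requiring minor care is checking that "no common zeros" is well-defined on the projective quotient, which is immediate from the bilinearity of $\Wr$ in $(f,g)$.
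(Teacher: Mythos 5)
Your proof is correct and fills in exactly the computation the paper leaves implicit (the lemma is stated with no proof, being regarded as immediate): parts (ii) and (iii) follow from the vanishing of $\Wr(f,g)=fg'-f'g$ at a multiple zero of $f$ or a common zero of $f,g$, and part (i) from $\Wr(e^{2\pi i\nu x}f,e^{2\pi i\nu x}g)=e^{4\pi i\nu x}\Wr(f,g)$ together with the nonvanishing of the exponential factor. Nothing further is needed.
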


\begin{lem} 
\label{lem:gc}

The complement in $P_{m,1}\times (P_m-P_{m,1})$ to the
subset  of points with generic both first
and second  coordinates
is a proper analytic subset of $P_{m,1}\times (P_m-P_{m,1})$.

\end{lem}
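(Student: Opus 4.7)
The plan is to identify the non-generic locus as a finite union of closed analytic subsets and to exhibit a point avoiding all of them.

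\medskip

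\emph{Step 1 (Algebraic characterization).} I first claim that at any $(f,g)\in P_m\times P_m$, the function $f$ shares a zero with $\Wr(f,g)$ if and only if either $f$ has a non-simple zero, or $f$ and $g$ share a zero on $E=\C/(\Z+\tau\Z)$. Indeed, at a simple zero $t$ of $f$,
\[
\Wr(f,g)(t)=f(t)g'(t)-f'(t)g(t)=-f'(t)g(t),
\]
which vanishes iff $g(t)=0$ since $f'(t)\ne 0$; at a zero $t$ of $f$ of multiplicity at least two, both $f(t)=0$ and $f'(t)=0$, so $\Wr(f,g)(t)=0$ automatically. The symmetric statement applies to $g$. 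Consequently the non-generic locus in $P_{m,1}\times(P_m-P_{m,1})$ is the union $D_f\cup D_g\cup R$, where $D_f$ (resp.\ $D_g$) is the locus where $f$ (resp.\ $g$) has a non-simple zero, and $R$ is the locus where $f$ and $g$ share a zero in $E$.

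\medskip

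\emph{Step 2 (Analyticity).} By Lemmas~\ref{lem determin}--\ref{lem 1.2}, sending the projective class of a theta-polynomial to its unordered zero-divisor on $E$ gives a biholomorphism $P_{m,A}\cong\on{Sym}^m\!E$ for each $A\in\C^\times$; under this identification $P_{m,1}\cong\on{Sym}^m\!E$ and $P_m-P_{m,1}\cong(\C^\times-\{1\})\times\on{Sym}^m\!E$. The subsets $D_f$ and $D_g$ then correspond to the big diagonal in the respective symmetric-product factor and are closed analytic of codimension one. The subset $R$ is the finite union, over $1\le i,j\le m$, of the closed analytic hypersurfaces $\{[t_i]=[s_j]\text{ in }E\}$, each cut out locally by the vanishing of $\theta(t_i-s_j,\tau)$ once representatives are chosen; different choices of representatives differ by a nowhere-vanishing factor, so this vanishing locus is well defined. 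Hence $D_f\cup D_g\cup R$ is a closed analytic subset.

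\medskip

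\emph{Step 3 (Properness).} Pick any $2m$ pairwise distinct points $[t_1],\dots,[t_m],[s_1],\dots,[s_m]\in E$, any $A\in\C^\times-\{1\}$, and $\mu\in\C$ with $e^{2\pi i\mu}=A$. Setting
\[
f(x)=\prod_{j=1}^m\theta(x-t_j,\tau)\in T_{m,1},\qquad g(x)=e^{2\pi i\mu x}\prod_{j=1}^m\theta(x-s_j,\tau)\in T_{m,A},
\]
the point $([f],[g])\in P_{m,1}\times(P_m-P_{m,1})$ lies outside $D_f\cup D_g\cup R$, so the non-generic locus is a proper analytic subset and the lemma follows.

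\medskip

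The only step with any subtlety is Step~2, namely verifying that $P_{m,A}\cong\on{Sym}^m\!E$ really is a biholomorphism: the zero-divisor map is clearly holomorphic, and its inverse $[t_1,\dots,t_m]\mapsto[e^{2\pi i\mu x}\prod\theta(x-t_j,\tau)]$ is holomorphic by the identities of Section~\ref{sec col form} together with Lemma~\ref{lem 1.2}, which ensures independence from the choice of lift of the $t_j$'s to $\C$. Everything else amounts to tracking representatives.
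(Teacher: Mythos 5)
Your Steps 1 and 3 are sound, and Step 3 is exactly the paper's properness argument: the paper writes $f=\prod_{j}\theta(x-t_j,\tau)$, $g=e^{2\pi i\mu x}\prod_j\theta(x-s_j,\tau)$ and observes that if $t_1,\dots,t_m,s_1,\dots,s_m$ are pairwise distinct modulo $\Z+\tau\Z$ then the pair has generic first and second coordinates (your Step 1 is the justification of this observation), while the analyticity of the complement is simply declared clear.

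The problem is in Step 2, precisely at the point you flag as the only subtlety: the zero-divisor map is \emph{not} a biholomorphism $P_{m,A}\cong\on{Sym}^m\bigl(\C/(\Z+\tau\Z)\bigr)$, because it is not injective. For $e^{2\pi i\mu}=A$ the functions $e^{2\pi i\mu x}\prod_j\theta(x-t_j,\tau)$ and $e^{2\pi i(\mu+1)x}\prod_j\theta(x-t_j,\tau)$ both lie in $T_{m,A}$, have the same zeros, and are not proportional (their ratio is $e^{2\pi ix}$); the fibre of the divisor map over a given divisor is an entire $\Z$-orbit under the maps $L_k$, the integer $k$ being detected by the second multiplier $B$, which the divisor and $A$ do not determine. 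Correspondingly, your proposed inverse $[t_1,\dots,t_m]\mapsto[e^{2\pi i\mu x}\prod_j\theta(x-t_j,\tau)]$ is not well defined: by Lemma \ref{lem 1.2}, changing the lifts $t_j$ by lattice vectors with $\sum_j l_j\neq 0$ changes $\mu$ by $\sum_j l_j$ and hence changes the projective class; the same objection applies to the claimed identification of $P_m-P_{m,1}$. This does not sink the lemma: genericity is a condition on zero divisors only, and the divisor map is a \emph{local} biholomorphism (equivalently, by Lemma \ref{lem determin} one may use $(\mu;t_1,\dots,t_m;s_1,\dots,s_m)$, taken modulo the symmetric groups and the identifications of Lemma \ref{lem 1.2}, as local coordinates on $P_{m,1}\times(P_m-P_{m,1})$). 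Locally the non-generic locus is then cut out by the vanishing of $\theta(t_i-t_j,\tau)$, $\theta(s_i-s_j,\tau)$ ($i\neq j$) and $\theta(t_i-s_j,\tau)$, hence is a closed analytic subset — which is all the paper asserts, and is what it calls clear. Replace the global identification by this local statement (or by the correct global statement, an identification of $P_{m,A}/\Z$ with $\on{Sym}^m\bigl(\C/(\Z+\tau\Z)\bigr)$), and your proof coincides with the paper's.
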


\begin{proof} Clearly, the complement is an analytic subset of $P_{m,1}\times (P_m-P_{m,1})$.
Let us show that it is proper. Indeed, every point $f$ of $P_{m,1}$ has a presentation 
$f(x)= \prod_{j=1}^m\theta(x-t_j,\tau)$ for some $t_1,\dots,t_m\in\C$ by 
Lemma \ref{lem determin}.
Every point $g$ of $P_m-P_{m,1}$ lies in some
$P_{m,A}$ with $A\ne 1$. If $\mu\in\C$ is such that $e^{2\pi i \mu}=A$, then
the every point $g$
of $P_{m,A}$ has a presentation of the form $g= e^{2\pi i \mu x}\prod_{j=1}^m\theta(x-s_j,\tau)$
for some $s_1,\dots,s_m\in\C$ by 
Lemma \ref{lem determin}.
If $t_1,\dots,t_m, s_1,\dots,s_m $ are all distinct modulo $\Z+\tau\Z$, then the
point $(f,g) \in P_{m,1}\times (P_m-P_{m,1})$ has generic both first and second coordinates.
\end{proof}

Denote
\bea
\on{Pairs}^1_m &=& \{ p\in (P_{m,1}\times (P_m-P_{m,1}))/\Z \ | \
p \ \on{has\ generic\ first\ coordinate}\},
\\
\on{Pairs}^2_m &=& \{ p\in (P_{m,1}\times (P_m-P_{m,1}))/\Z \ | \
p \ \on{has\ generic\ second\ coordinate}\},
\\
\on{Pairs}_m  &=& \{ p\in (P_{m,1}\times (P_m-P_{m,1}))/\Z \ | \
p \ \on{has\ generic\ first\ and \ second\ coordinates}\}.
\eea

\subsection{Functions $g/f$, $\Wr(f,g)/f^2$}
Consider the meromorphic functions of the form
\bean
\label{FF}
F = g/f,
\qquad
G = \Wr(f,g)/f^2,
\eean
where $(f, g)\in P_{m,1}\times (P_m-P_{m,1})$. We
{\it will consider such functions 
 up to multiplication by nonzero numbers}.
 Notice that $G = F'$.

\smallskip

\begin{lem}
\label{lem pts}  The pairs $(f,g) \in \on{Pairs}^1_m$ 
are in bijective correspondence  with the  functions  $G = \Wr(f,g)/f^2$, 
which have $m$ distinct poles of order 2 in a fundamental parallelogram
for the lattice $\Z+\tau\Z$ acrting on $\C$.
\end{lem}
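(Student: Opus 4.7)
The plan is to verify three things: well-definedness of $\Phi\colon[(f,g)]\mapsto[G]$, where $G:=\Wr(f,g)/f^2$; the pole structure of $G$; and injectivity. Surjectivity onto the image is then automatic. For well-definedness, rescaling $(f,g)\mapsto(\lambda f,\mu g)$ sends $G$ to $(\mu/\lambda)G$, preserving its class up to nonzero scalar, and the $\Z$-action $L_k\times L_k$ for $k\in\Z$ replaces $(f,g)$ by $(e^{2\pi i kx}f,\,e^{2\pi i kx}g)$, which leaves $G$ unchanged thanks to the identity $\Wr(hf,hg)=h^2\Wr(f,g)$ from \Ref{h^2}.

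For the pole structure, choose a representative with generic first coordinate: then $f$ has $m$ simple zeros $t_1,\dots,t_m$, pairwise distinct modulo $\Z+\tau\Z$, and $f,g$ share no zeros. At each $t_i$, $f^2$ vanishes to order exactly $2$, while $\Wr(f,g)(t_i)=-f'(t_i)g(t_i)\ne 0$ since $f'(t_i)\ne 0$ (simple zero) and $g(t_i)\ne 0$ (no common zero). Elsewhere $f\ne 0$ and $G$ is regular. Hence $G$ has exactly $m$ poles, each of order $2$, in any fundamental parallelogram.

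The main obstacle is injectivity. Suppose two classes give the same $G$, so for chosen representatives $G_1=cG_2$ with $c\ne 0$, and writing $G_i=F_i'$ for $F_i=g_i/f_i$ yields $F_1=cF_2+d$ for some $d\in\C$. The poles of $F_i$ on $E:=\C/(\Z+\tau\Z)$ are precisely the zeros of $f_i$, so $f_1$ and $f_2$ have the same zero divisor on $E$. Thus $f_1/f_2$ is entire, $1$-periodic, and quasiperiodic under $x\mapsto x+\tau$ with multiplier $B_1/B_2$, where $B_i$ is the second multiplier of $f_i$. Expanding $f_1/f_2$ in a Fourier series and matching the $\tau$-translation (using $|e^{2\pi i\tau}|<1$, so the powers $q^k$ are distinct) forces $B_1/B_2=e^{2\pi ik\tau}$ for a unique $k\in\Z$ and
\[
f_1 \;=\; \alpha\,L_k(f_2), \qquad \alpha\in\C^\times.
\]
Substituting into $F_1=cF_2+d$ gives $g_1=d\,f_1+c\alpha\,L_k(g_2)$. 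Here $f_1\in T_{m,1}$, while $L_k(g_2)\in T_{m,A_2}$ with $A_2\ne 1$ (since $L_k$ preserves the first multiplier for $k\in\Z$ and $g_2\in P_m-P_{m,1}$); these two functions have distinct first multipliers and are therefore linearly independent over $\C$. Applying $x\mapsto x+1$ to the identity and comparing coefficients of $f_1$ and $L_k(g_2)$ forces the first multiplier $A_1$ of $g_1$ to equal $A_2$ and yields $d(1-A_1)=0$; since $A_1\ne 1$, this gives $d=0$ and $g_1=c\alpha\,L_k(g_2)$. Thus $(f_1,g_1)$ and $L_k(f_2,g_2)$ represent the same class in $(P_{m,1}\times(P_m-P_{m,1}))/\Z$, proving injectivity. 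The delicate point is the multiplier bookkeeping needed to extract the correct $k\in\Z$ from the zero-divisor coincidence and then to eliminate the integration constant $d$.
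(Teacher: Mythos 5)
Your proof is correct, but the injectivity argument runs along a different track than the paper's. Both arguments share the first reduction: from the equality of the $G$'s one reads off that the zero divisors of the first components coincide, hence $\hat f$ (resp.\ your $f_1$) equals $e^{2\pi i kx}f$ (resp.\ $\alpha L_k(f_2)$) up to a scalar. After that the paper stays at the level of Wronskians: it rewrites $\Wr(\hat f,\hat g)=e^{4\pi i kx}\Wr(f,g)$ as $\Wr(f,e^{-2\pi i kx}\hat g)=\Wr(f,g)$ and invokes the uniqueness part of Theorem \ref{BMV} to conclude $e^{-2\pi i kx}\hat g=g$ up to a constant. You instead integrate: from $G_1=cG_2$ you get $g_1/f_1=c\,g_2/f_2+d$, and you eliminate the integration constant $d$ by comparing first multipliers in the identity $g_1=d\,f_1+c\alpha\,L_k(g_2)$, using the linear independence of $f_1$ and $L_k(g_2)$ (multipliers $1$ versus $A_2\ne1$) together with $A_1\ne1$. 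This makes your argument self-contained — it never needs the existence/uniqueness theorem for the Wronskian equation, only the Fourier-expansion trick already used for Lemma \ref{lem dim} — and it effectively proves the content of Lemma \ref{lem ra} (the $g/f$ version) at the same time; the price is the extra multiplier bookkeeping to kill $d$, which the paper's appeal to Theorem \ref{BMV} makes unnecessary. You also verify explicitly that $G$ does have $m$ distinct double poles per fundamental parallelogram (generic first coordinate gives simple zeros of $f$ not shared with $\Wr(f,g)$), a point the paper leaves implicit; that is a small but genuine completeness gain.
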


\begin{proof} Let $(f,g)\in P_{m,1}\times (P_m-P_{m,1})$ have generic first coordinate. A pair
 in the same 
$\Z$-orbit is  $(e^{2\pi i k x}f,\,e^{2\pi i k x}g)$ for some $k\in\Z$. Both pairs define the same function
$G$ by formula \Ref{h^2}.

Conversely assume that there are two  pairs
$(f,g)$, $(\hat f,\hat g)\in P_{m,1}\times (P_m-P_{m,1})$
with generic first coordinates  such that
 \bean
 \label{W=W}
 {\Wr(f,g)/f^2}=\Wr(\hat f, \hat g)/\hat f^2
 \eean
up to a constant factor.
 By assumption, the functions $f$ and $\hat f$ have the same zeros. Hence
 $\hat f =e^{2\pi i k x} f$ up to a constant factor. Then
 $\Wr(\hat f,\hat g)=e^{4\pi i k x} \Wr(f,g)$ 
up to a constant factor. 
 Then  
 $\Wr(e^{2\pi i k x}f,\hat g)=e^{4\pi i k x}\Wr(f,g)$
 and 
 $\Wr(f, e^{-2\pi i kx}\hat g)=\Wr(f,g)$. By Theorem \ref{BMV} we have
 $ e^{-2\pi i k x}\hat g = g$ up to a constant factor. The lemma is proved.
\end{proof}

\begin{lem}
\label{lem ra}  The pairs $(f,g)\in \on{Pairs}^1_m$ 
are in bijective correspondence  with functions  $F = g/f$,
 which have $m$ simple poles  in a fundamental parallelogram for the lattice $\Z+\tau\Z$ acting on $\C$.
\end{lem}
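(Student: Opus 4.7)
The plan is to deduce Lemma \ref{lem ra} from the already-established Lemma \ref{lem pts} via the identity $F' = G$, where $F = g/f$ and $G = \Wr(f,g)/f^2$. Both functions are invariant under the diagonal $\Z$-action $(f,g) \mapsto (L_k f, L_k g)$, and each depends on the scaling class of $(f,g) \in P_{m,1}\times (P_m-P_{m,1})$ only up to a nonzero multiplicative constant, so both descend to well-defined maps from $\on{Pairs}^1_m$ into meromorphic functions modulo $\C^\times$.

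To see that $(f,g) \mapsto F$ lands in, and is onto, the claimed target: for $(f,g) \in \on{Pairs}^1_m$, genericity of the first coordinate gives $m$ simple zeros of $f$ in a fundamental parallelogram $\La$ at which $g$ does not vanish, so $F$ has exactly $m$ simple poles in $\La$. Conversely, if $F = g/f$ with $(f,g) \in P_{m,1} \times (P_m - P_{m,1})$ has $m$ simple poles in $\La$, then since $f$ carries $m$ zeros in $\La$ counted with multiplicity, the pole count forces those zeros to be simple and forces $g$ to be nonvanishing at each of them, so $(f,g)$ has generic first coordinate and represents a class in $\on{Pairs}^1_m$.

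For injectivity, suppose two pairs produce $F_1 = c F_2$ for some nonzero $c$; differentiating gives $G_1 = c G_2$, and Lemma \ref{lem pts} then supplies the required $\Z$-equivalence of the pairs. The subtle point, which I expect to be the main obstacle, is that $G_1 = c G_2$ only forces $F_1 - c F_2$ to be a constant $d$ a priori, not zero. I would resolve this using the transformation law $F_i(x+1) = A_i F_i(x)$ coming from $f_i \in P_{m,1}$ and $g_i \in P_{m,A_i}$ with $A_i \ne 1$: comparing $F_1(x+1) = A_1(c F_2(x) + d)$ with $c F_2(x+1) + d = c A_2 F_2(x) + d$ forces $A_1 = A_2$ and then $(1 - A_1) d = 0$, so $d = 0$ and $F_1 = c F_2$ exactly. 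Combined with Lemma \ref{lem pts}, this completes the bijection and hence the proof.
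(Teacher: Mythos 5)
Your proposal is correct and takes essentially the same route as the paper: differentiate $F=g/f$ to get $G=\Wr(f,g)/f^2$ and invoke Lemma \ref{lem pts}, with well-definedness on $\Z$-orbits coming from \Ref{h^2}. Note that the ``subtle point'' you raise never actually arises in your injectivity argument --- since you start from $F_1=cF_2$, differentiation alone gives $G_1=cG_2$ and no additive constant of integration appears --- so your multiplier argument is harmless but unnecessary, while your explicit check of surjectivity (simple poles force simple zeros of $f$ with $g$ nonvanishing there, hence generic first coordinate) is a small addition the paper leaves implicit.
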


\begin{proof} Let $(f,g)\in P_{m,1}\times (P_m-P_{m,1})$ have generic first coordinate. 
A pair in the same 
$\Z$-orbit is  $(e^{2\pi i k x}f,\,e^{2\pi i k x}g)$ for some $k\in\Z$. Both pairs define the same function
$F$.

Conversely assume that there are two  points
$(f,g), (\hat f,\hat g)\in P_{m,1}\times (P_m-P_{m,1})$
with generic first coordinates  such that the corresponding two ratios are equal,
 $ g/f=\hat g/\hat f.$
  Then their derivatives are also equal, see formula \Ref{W=W}. Hence the two points lie in the same $\Z$-orbit by 
 Lemma \ref{lem pts}. The lemma is proved.
\end{proof}

\subsection{Fundamental differential operator}
\label{sec fp}

For $(f,g) \in P_{m,1}\times (P_m-P_{m,1})$ introduce
two functions
\bean
\label{u12}
u_1(x) =f/\sqrt{\Wr(f,g)},
\qquad
 u_2(x) = g/\sqrt{\Wr(f,g)}.
\eean
Then 
\bean
\label{Wr1}
\Wr(u_1,u_2) = 1,
\eean
see \Ref{h^2}.
Let $\D_{(f,g)}$ be the monic linear  differential operator, whose kernel is generated by the functions
$u_1, u_2$. This operator will be called the {\it fundamental differential operator} 
of the pair $(f,g)$.

\begin{lem}
\label{lem:fdo} 
We have
\bean
\label{fdo}
\D_{(f,g)} = \Big(\der_x + (\ln u_1)\rq{}\Big)\Big(\der_x - (\ln u_1)\rq{}\Big)=
\Big(\der_x + (\ln u_2)\rq{}\Big)\Big(\der_x - (\ln u_2)\rq{}\Big).
\eean
\end{lem}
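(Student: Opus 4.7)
The plan is to reduce the lemma to Abel's identity and a Riccati-type computation. Since $\D_{(f,g)}$ is by definition the monic linear differential operator of order two whose kernel is spanned by $u_1$ and $u_2$, write
\[
\D_{(f,g)} = \der_x^2 + p(x)\,\der_x + q(x).
\]
By Abel's identity applied to any second-order monic operator with Wronskian $W(x) = \Wr(u_1,u_2)$, we have $W' = -p\,W$. Here $W \equiv 1$ by \eqref{Wr1}, hence $p \equiv 0$ and
\[
\D_{(f,g)} = \der_x^2 + q(x).
\]
Because $u_i$ lies in the kernel for $i=1,2$, this forces $q = -u_i''/u_i$ for each $i$ (so in particular $u_1''/u_1 = u_2''/u_2$).

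Next I would verify the factorization. Set $a_i := (\ln u_i)' = u_i'/u_i$. A direct expansion gives
\[
(\der_x + a_i)(\der_x - a_i) = \der_x^2 - (a_i' + a_i^2),
\]
and the Riccati-type identity
\[
a_i' + a_i^2 = \Big(\frac{u_i'}{u_i}\Big)' + \Big(\frac{u_i'}{u_i}\Big)^2 = \frac{u_i''}{u_i} - \Big(\frac{u_i'}{u_i}\Big)^2 + \Big(\frac{u_i'}{u_i}\Big)^2 = \frac{u_i''}{u_i} = -q(x)
\]
shows $(\der_x + a_i)(\der_x - a_i) = \der_x^2 + q = \D_{(f,g)}$ for both $i=1,2$, which is \eqref{fdo}.

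There is essentially no obstacle here beyond bookkeeping: the only substantive ingredient is the normalization $\Wr(u_1,u_2)=1$ (already established in \eqref{Wr1} via \eqref{h^2}), which is precisely what kills the first-order term and lets the operator factor symmetrically through either $u_1$ or $u_2$. The equality of the two factorizations is then an automatic consequence of both $u_1$ and $u_2$ lying in the kernel of the same operator $\der_x^2 + q$.
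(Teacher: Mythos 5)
Your proof is correct and in substance is the same as the paper's: both arguments hinge on the normalization $\Wr(u_1,u_2)=1$ forcing the first-order coefficient of the operator to vanish, combined with the standard expansion $(\der_x+a)(\der_x-a)=\der_x^2-(a'+a^2)$ with $a=(\ln u_i)'$. The only difference is direction — the paper starts from the factored operator and identifies its kernel with $\langle u_1,u_2\rangle$, while you start from $\D_{(f,g)}$, kill the $\der_x$-term via Abel's identity, and then factor — which is an immaterial rearrangement.
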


\begin{proof}
The functions $u_1,u_2$ are linearly independent. Hence
 $\D_{(f,g)}$ is a second order differential operator. 
We will prove that $\D_{(f,g)}$ equals
\bean
\label{fdou}
\Big(\der_x + (\ln u_1)\rq{}\Big)\Big(\der_x - (\ln u_1)\rq{}\Big)
=
\der_x^2 - (\ln u_1)'' - ((\ln u_1)\rq{})^2. 
\eean
The remaining equality in \Ref{fdo} is proved similarly.

Clearly the function $u_1$ lies in the kernel of the operator in \Ref{fdou}.  Since the differential operator 
$\der_x$ enters the right-hand side in \Ref{fdou} with zero coefficient, we conclude that the kernel of the operator in \Ref{fdou}
is generated by two functions,
  one of which is $u_1$ and the other, say $u$ is such that $\Wr(u_1,u)=1$, but $u_2$ is such a function.
\end{proof}

We have the following properties of the  fundamental
differential  operator  $\D_{(f,g)}$.

\begin{lem}
\label{lem pfu} ${}$

\begin{enumerate}
\item[(i)]
 The operator  $\D_{(f,g)}$ is doubly periodic,
\bea
\D_{(f,g)}(x+k+l\tau)=\D_{(f,g)}(x), \qquad k,l\in\Z
\eea
All singular points of  $\D_{(f,g)}$ are regular singular.

\item[(ii)] The monodromy group of the kernel of the operator $\D_{(f,g)}$ preserves
the direct decomposition $\langle u_1\rangle\oplus \langle u_2\rangle$. 
If $f\in P_{m, A_1,B_1}$, where $A_1=1$, and
$g\in P_{m, A_2,B_2}$, then the monodromy operator of 
the transformation
$x\to x+1$ has the matrix  $\on{diag}(\sqrt{A_1/A_2}$, 
$\sqrt{A_2/A_1})$
and the monodromy operator of the transformation
$x\to x+\tau$ has the matrix  $\on{diag}(\sqrt{B_1/B_2}, \sqrt{B_2/B_1})$.

\item[(iii)]  Assume that $(f,g)$ has generic first and second coordinates and all zeros of $\Wr(f,g)$ are simple.
Then the singular points of $\D_{(f,g)}$ are the $\Z+\tau\Z$-orbits of the set
$\{z_1,\dots,z_{2m}\}$. The local monodromy around every singular point
has the matrix $\on{diag}(-1,-1)$.

\end{enumerate}
\end{lem}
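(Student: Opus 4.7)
The plan is to treat the three parts separately, exploiting the two equivalent forms for $\D_{(f,g)}$ in Lemma \ref{lem:fdo} (the $u_1$-form is convenient where $u_1$ is well behaved, the $u_2$-form where $u_2$ is).

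For part (i), write $\D_{(f,g)} = \der_x^2 + q(x)$ with $q(x) = -(\ln u_1)'' - ((\ln u_1)')^2$, and expand $(\ln u_1)' = (\ln f)' - \tfrac{1}{2}(\ln \Wr(f,g))'$. Since $f \in T_{m,A_1,B_1}$ and $\Wr(f,g) \in T_{2m,A_1A_2,B_1B_2}$ by Lemma \ref{lem 1.5}, logarithmic differentiation of the quasi-periodicity laws \Ref{n-tranf} yields $(\ln f)'(x+1) = (\ln f)'(x)$, $(\ln f)'(x+\tau) = (\ln f)'(x) - 2\pi i m$, and $(\ln \Wr(f,g))'(x+\tau) = (\ln \Wr(f,g))'(x) - 4\pi i m$. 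The shifts cancel in the combination defining $(\ln u_1)'$, making it (hence $q$) doubly periodic. Since $(\ln u_1)'$ has at most simple poles (at zeros of $u_1$), $q$ has at most double poles, so every singular point is regular.

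For part (ii), the key observation is that $u_1 u_2 = fg/\Wr(f,g)$ and $u_1/u_2 = f/g$ are both single-valued meromorphic functions on $\C$ (ratios of entire functions). Any monodromy operator on the kernel of $\D_{(f,g)}$ therefore multiplies each of these quantities by a scalar, and an elementary $2\times 2$ argument shows this forces the monodromy to be diagonal in the basis $(u_1, u_2)$. To extract the shift monodromies, use Definition \ref{def2} and Lemma \ref{lem 1.5}: under $x \mapsto x+1$, $f \mapsto A_1(-1)^m f$, $g \mapsto A_2(-1)^m g$, and $\Wr(f,g) \mapsto A_1 A_2 \Wr(f,g)$ (using $(-1)^{2m}=1$), giving $u_1 \mapsto \pm(-1)^m \sqrt{A_1/A_2}\, u_1$ and $u_2 \mapsto \pm(-1)^m \sqrt{A_2/A_1}\, u_2$ with a common branch sign of the square root. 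The parallel computation with the $B_j$ multipliers handles $x \mapsto x+\tau$.

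For part (iii), the genericity hypotheses make the zero loci of $f$, $g$, and $\Wr(f,g)$ pairwise disjoint. Near a zero of $f$ (resp.\ $g$), the function $u_2$ (resp.\ $u_1$) is analytic and nonvanishing, so the corresponding form of $\D_{(f,g)}$ has analytic coefficient there and such points are not singular. At a simple zero $z_s$ of $\Wr(f,g)$, write $u_1(x) = h(x)(x-z_s)^{-1/2}$ with $h$ analytic and nonvanishing at $z_s$. A direct Laurent expansion gives
\[
(\ln u_1)' = -\tfrac{1}{2(x-z_s)} + \text{analytic}, \qquad q(x) = -\tfrac{3}{4}(x-z_s)^{-2} + O((x-z_s)^{-1}),
\]
so $z_s$ is regular singular with indicial equation $r(r-1) = 3/4$ and exponents $3/2, -1/2$; both give monodromy eigenvalue $-1$, so the local monodromy is $-I = \on{diag}(-1,-1)$. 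The main obstacle I anticipate is careful branch-of-square-root bookkeeping in (ii), and verifying in (iii) that the leading coefficient is exactly $-\tfrac{3}{4}$: the $\tfrac{1}{4}(x-z_s)^{-2}$ from $((\ln u_1)')^2$ and the $\tfrac{1}{2}(x-z_s)^{-2}$ from $(\ln u_1)''$ must combine with the correct signs.
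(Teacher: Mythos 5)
Your proposal is correct, and it is in the same spirit as the paper's own argument, which is extremely terse: the paper simply notes that $\D_{(f,g)}$ is doubly periodic because $u_1,u_2$ are (quasi-)periodic and declares "the other statements are clear," so your write-up is essentially the detailed version of that remark, with (i) checked on the potential $q=-(\ln u_1)''-((\ln u_1)')^2$, (ii) via the single-valuedness of $u_1u_2$ and $u_1/u_2$, and (iii) via the local exponents at simple zeros of $\Wr(f,g)$. One small point in (iii): since the exponents $3/2,-1/2$ differ by an integer, the indicial data alone give a local monodromy with eigenvalue $-1$ but do not by themselves exclude a Jordan block; this is harmless in your argument because the diagonality established in (ii) (or, more directly, the observation that $u_1=f/\sqrt{\Wr(f,g)}$ and $u_2=g/\sqrt{\Wr(f,g)}$ each change sign when $\sqrt{\Wr(f,g)}$ is continued around a simple zero) already forces the local monodromy to be $\on{diag}(-1,-1)$.
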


\begin{proof}
The operator $\D_{(f,g)}$ is doubly periodic  because the functions $u_1,u_2$ are doubly periodic. The other statements are clear.
\end{proof}

\begin{lem}
\label{lem foZ} The fundamental differential operator is well-defined for pairs in $\on{Pairs}^1_m$.

\end{lem}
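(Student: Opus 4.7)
The plan is to verify two separate well-definedness issues: invariance of $\D_{(f,g)}$ under the projective rescaling of each coordinate of $(f,g)\in P_{m,1}\times(P_m-P_{m,1})$, and invariance under the $\Z$-action by $L_k\times L_k$. Both checks are most cleanly made via the factored form in Lemma \ref{lem:fdo}, which expresses the operator in terms of the logarithmic derivative
\[
(\ln u_1)' \,=\, \frac{f'}{f} \,-\, \frac{1}{2}\,\frac{\Wr(f,g)'}{\Wr(f,g)}.
\]
This meromorphic function involves no choice of branch for $\sqrt{\Wr(f,g)}$, so the operator $\D_{(f,g)}=(\der_x+(\ln u_1)')(\der_x-(\ln u_1)')$ is unambiguous as soon as this logarithmic derivative is well-defined.

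Next I would dispatch the two invariances in turn. For rescaling, replacing $(f,g)$ by $(c_1 f,c_2 g)$ with $c_1,c_2\in\C^\times$ leaves $f'/f$ unchanged, and by \Ref{h^2} (more precisely, bilinearity of the Wronskian) we have $\Wr(c_1 f,c_2 g)=c_1c_2\,\Wr(f,g)$, so $\Wr(f,g)'/\Wr(f,g)$ is also unchanged. For the $\Z$-action, replacing $(f,g)$ by $(e^{2\pi i k x}f,e^{2\pi i k x}g)$ adds $2\pi i k$ to $f'/f$, while by \Ref{h^2} we have $\Wr(e^{2\pi i k x}f,e^{2\pi i k x}g)=e^{4\pi i k x}\Wr(f,g)$, which adds $4\pi i k$ to $\Wr(f,g)'/\Wr(f,g)$. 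The factor $1/2$ in front makes these contributions cancel exactly, so $(\ln u_1)'$ is preserved, hence so is $\D_{(f,g)}$.

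It remains to check that the operator is genuinely defined for representatives with generic first coordinate. Genericity of the first coordinate ensures that $f$ and $\Wr(f,g)$ have no common zeros, that all zeros of $f$ are simple, and that $f,g$ have no common zeros, so the meromorphic functions $(\ln u_1)'$ and $(\ln u_2)'$ exist and the functions $u_1,u_2$ (on any simply connected locus where square roots are chosen) are linearly independent. Thus the kernel of $\D_{(f,g)}$ is honestly two-dimensional and the monic second-order operator is determined.

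I do not expect any real obstacle here; the work is essentially the bookkeeping with $\Wr(hf,hg)=h^2\Wr(f,g)$, and the punchline is that the coefficient $1/2$ in the definition of $u_1,u_2$ is precisely what is required to absorb the $\Z$-action. The only small care point is to formulate the operator via the logarithmic derivative so that the multivaluedness of $\sqrt{\Wr(f,g)}$ never enters the argument.
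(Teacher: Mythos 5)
Your proof is correct and follows essentially the same route as the paper: both arguments rest on $\Wr(hf,hg)=h^2\Wr(f,g)$ (formula \Ref{h^2}) to see that a representative $(e^{2\pi i kx}f,e^{2\pi i kx}g)$ of the same $\Z$-orbit yields the same $u_1,u_2$, hence the same operator. Your reformulation through $(\ln u_1)'=f'/f-\tfrac12\,\Wr(f,g)'/\Wr(f,g)$ is a mild refinement that also disposes of the square-root branch and of scalar rescaling of $f,g$, points the paper leaves implicit.
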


\begin{proof} Clearly the pairs $(f,g)$ and $(e^{2\pi i k x}f, e^{2\pi i k x}g)$ have the same functions
$u_1, u_2$, see formula \Ref{h^2}. 
\end{proof}

Recall the analytic involution 
\bea
\iota_{\on{an}}: (P_{m,1}\times (P_m-P_{m,1}))/\Z  \to (P_{m,1}\times (P_m-P_{m,1}))/\Z 
\eea
defined in \Ref{an inv}.

\begin{thm}
\label{lem fodp} For any $p \in \on{Pairs}_m$ the points 
$p$ and $\iota_{\on{an}}( p)$ have the same
 fundamental differential operators. Conversely, if 
 two points $p, \tilde p\in \on{Pairs}_m$   have the same
 fundamental differential operator, then either $\tilde p=p$ or
$\iota_{\on{an}}(\tilde p)=p$.

\end{thm}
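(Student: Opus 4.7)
For the first assertion, I would compute the fundamental operator of $\iota_{\on{an}}(p)$ directly from its definition. Write $p = [(f,g)]$ with representative $f \in P_{m,1}$, $g \in P_{m,A^{-1}}$, $A \ne 1$, and let $\mu \in \C$ satisfy $e^{2\pi i \mu} = A$. Then $(\tilde f, \tilde g) = (e^{2\pi i \mu x}g, e^{2\pi i \mu x}f)$ represents $\iota_{\on{an}}(p)$. Applying the identity $\Wr(hF,hG) = h^2\Wr(F,G)$ with $h = e^{2\pi i \mu x}$ gives $\Wr(\tilde f, \tilde g) = -e^{4\pi i \mu x}\Wr(f,g)$, so $\sqrt{\Wr(\tilde f,\tilde g)} = \pm i \,e^{2\pi i \mu x}\sqrt{\Wr(f,g)}$. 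The functions in \Ref{u12} for $(\tilde f,\tilde g)$ therefore satisfy $\tilde u_1 = \mp i\, u_2$ and $\tilde u_2 = \mp i\, u_1$, so the kernel $\langle \tilde u_1, \tilde u_2\rangle$ coincides with $\langle u_1, u_2\rangle$. Since $\D_p$ is the unique monic second-order operator with this kernel, $\D_{\iota_{\on{an}}(p)} = \D_p$.

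For the converse, suppose $p,\tilde p \in \on{Pairs}_m$ yield the same fundamental operator $\D$. Their kernels agree; call this $2$-dimensional space $K$. By Lemma \ref{lem pfu}(ii), the monodromy of $K$ under $x \mapsto x+1$ equals $\on{diag}(\sqrt{A_1/A_2},\sqrt{A_2/A_1})$ in the basis $(u_1,u_2)$ for $p$, with $A_1 = 1$ and $A_2 = A^{-1} \ne 1$. The eigenvalues being distinct, this monodromy determines a canonical splitting $K = L_1 \oplus L_2$ with $\langle u_1\rangle = L_1$ and $\langle u_2\rangle = L_2$. The same analysis applied to $\tilde p$ forces either $\langle \tilde u_1\rangle = L_1$, $\langle \tilde u_2\rangle = L_2$ (Case 1), or $\langle \tilde u_1\rangle = L_2$, $\langle \tilde u_2\rangle = L_1$ (Case 2).

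In Case 1, $\tilde u_i = c_i u_i$ with $c_1 c_2 = 1$ from the Wronskian normalization, which gives $\tilde f/f = c_1 \sqrt{\Wr(\tilde f,\tilde g)/\Wr(f,g)}$. Since $\tilde f, f \in P_{m,1}$, this ratio has first multiplier $1$, and since it is a zero-free entire function (the Wronskians have the same zero set, namely the singular points of $\D$) of ``degree $0$'', the ratio is of the form $c'\, e^{2\pi i k x}$ with $k \in \Z$; in particular $\tilde p = p$ modulo the $\Z$-action. In Case 2, $\tilde u_1 = c u_2$ and $\tilde u_2 = -c^{-1} u_1$, giving $\tilde f \propto g \cdot h$ with $h = \sqrt{\Wr(\tilde f,\tilde g)/\Wr(f,g)}$. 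This $h$ is an entire zero-free function with first multiplier $A$ (to make $\tilde f \in P_{m,1}$), and the classification of zero-free ``degree $0$'' theta-polynomials forces $h = c'' e^{2\pi i \mu x}$ modulo $e^{2\pi i k x}$ with $k \in \Z$. Thus $\tilde p = [(e^{2\pi i \mu x}g, e^{2\pi i \mu x}f)] = \iota_{\on{an}}(p)$ as required.

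The main obstacle is the multiplier bookkeeping in Case 2: one must justify that the function $\sqrt{\tilde W/W}$ is, up to the $\Z$-action and a scalar, the exponential $e^{2\pi i \mu x}$ appearing in the definition of $\iota_{\on{an}}$. This reduces to the fact that an entire zero-free function satisfying $h(x+1) = A h(x)$ and $h(x+\tau) = B h(x)$ is uniquely $c e^{2\pi i \mu x}$ for an appropriate choice of $\mu$ modulo $\Z$, which follows from the Fourier analysis in the proof of Lemma \ref{lem dim}. Combining Cases 1 and 2 with the first direction completes the proof.
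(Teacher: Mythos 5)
Your proof is correct and follows essentially the same route as the paper: both directions hinge on the fact that the translation $x\mapsto x+1$ acts on the two-dimensional kernel with distinct eigenvalues (since $A_1=1\ne A_2$), so the eigenlines $\langle u_1\rangle,\langle u_2\rangle$ either match or swap with $\langle\tilde u_1\rangle,\langle\tilde u_2\rangle$. The only difference is cosmetic: where the paper closes each case by citing Lemma \ref{lem pts}, you re-derive the identification by hand via the classification of zero-free quasi-periodic entire functions, which amounts to the same bookkeeping.
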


\begin{proof}  The first statement is clear from the definition of the fundamental
 differential operator. Let us prove the second statement.

For $p$ and $\tilde p$ consider  the corresponding functions
$u_1,u_2$ and $\tilde u_1, \tilde u_2$, which generate
the kernels of the corresponding 
fundamental differential operators. 

The transformation $x \to x+1$ defines a linear map on the space $\langle u_1,u_2\rangle$
with eigenvectors $u_1,u_2$, which have distinct eigenvalues. Hence
$u_1$ is proportional to $\tilde u_1$ and $u_2$ is proportional to $\tilde u_2$ or
$u_1$ is proportional to $\tilde u_2$ and $u_2$ is proportional to $\tilde u_1$.

In the first case, the function $u_1^2$ is proportional to $\tilde u_1^2$ and then $\tilde p= p$ by Lemma
\ref{lem pts}. 
In the second case,  the function  $u_1^2$ is proportional to $\tilde u_2^2$ and
$\iota_{\on{an}}(\tilde p)=p$ by Lemma \ref{lem pts}. 
\end{proof}

\section{Bethe ansatz equations and pairs of theta-polynomials} 

\label{sec:eigenfunctions}

\subsection{Bethe ansatz equations and Wronskian equation}

First notice that if
$(\mu, t_1$, \dots, $t_m, z_1,\dots,z_{2m})$ is a solution of the Bethe ansatz equations
\Ref{BAEs}, then $t_1,\dots,t_m$ are distinct  modulo the lattice $\Z+\tau\Z$ and 
any $t_j$ is distinct from any $z_a$ modulo the lattice $\Z+\tau\Z$.

\begin{lem}
 \label{lem:res}
 
 Given $\mu,$  $t_1,\dots,t_m, z_1,\dots,z_{2m}$ let $t_1,\dots,t_m$
 be distinct numbers modulo the lattice $\Z+\tau\Z$ and let
any $t_j$ be distinct from any $z_a$ modulo the lattice $\Z+\tau\Z$. Then the function 
\bean
\label{fF}
G(x) = 
e^{-2\pi i \mu x}\,
\frac{\prod_{a=1}^{2m}\theta (x-z_a,\tau)} {\prod_{j=1}^{m}\theta (x-t_j,\tau)^2}
\eean
 has zero residues with respect to $x$ at any point,  
 if and only if the numbers  $\mu, t_1,\dots,t_m$, $z_1,\dots,z_{2m}$ satisfy the Bethe ansatz equations
\Ref{BAEs}.
\end{lem}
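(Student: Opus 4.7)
The plan is to compute directly the residue of $G(x)$ at each of its poles. By the hypotheses, the numbers $t_1,\dots,t_m$ and $z_1,\dots,z_{2m}$ are pairwise distinct modulo $\Z+\tau\Z$, so the only poles of $G(x)$ are the $\Z+\tau\Z$-orbits of the points $t_j$ (each of order $2$), while the $z_a$ are zeros. From the transformation formulas in Section~\ref{sec col form} one checks that $G(x+1)$ and $G(x+\tau)$ are nonzero constant multiples of $G(x)$, so the residues of $G$ at the points $t_j + k + l\tau$ differ from the residue at $t_j$ by a nonzero factor. It therefore suffices to determine when $\Res_{x=t_j} G(x)=0$ for each $j=1,\dots,m$.

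Next I would extract this residue by Laurent expansion at $x=t_j$. Since $\theta(u,\tau)$ is odd with $\theta'(0,\tau)=1$, we have $\theta(u,\tau)=u+\mc O(u^3)$, hence
\[
\theta(x-t_j,\tau)^{-2} = (x-t_j)^{-2} + \mc O(1).
\]
Write
\[
h_j(x) = e^{-2\pi i \mu x}\,\frac{\prod_{a=1}^{2m}\theta(x-z_a,\tau)}{\prod_{k\ne j}\theta(x-t_k,\tau)^2},
\]
so that $G(x) = h_j(x)\,\theta(x-t_j,\tau)^{-2}$ with $h_j$ holomorphic and nonvanishing in a neighborhood of $t_j$. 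Then
\[
\Res_{x=t_j} G(x) = h_j'(t_j) = h_j(t_j)\cdot (\ln h_j)'(t_j).
\]

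Logarithmic differentiation, together with $\rho = \theta'/\theta$, gives
\[
(\ln h_j)'(t_j) = -2\pi i\mu + \sum_{a=1}^{2m}\rho(t_j-z_a,\tau) - 2\sum_{k\ne j}\rho(t_j-t_k,\tau),
\]
where the factor of $2$ in the last sum comes from the squared $\theta$ in the denominator. Because $h_j(t_j)\ne 0$ under the stated genericity hypotheses, the vanishing of the residue at $t_j$ is equivalent to the vanishing of the bracketed expression, which is exactly the $j$-th Bethe ansatz equation \Ref{BAEs} after multiplication by $-1$. Running over all $j=1,\dots,m$ proves both implications.

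The only real bookkeeping obstacle is the factor of $2$ in front of $\sum_{k\ne j}\rho(t_j-t_k,\tau)$ and the overall sign, both of which must agree with \Ref{BAEs}; once the Laurent expansion of $\theta(x-t_j,\tau)^{-2}$ is handled correctly and the logarithmic derivative is taken, the identification is mechanical and the lemma follows.
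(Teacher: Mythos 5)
Your proposal is correct and follows essentially the same route as the paper: use the quasi-periodicity of $G$ to reduce to the poles at $x=t_j$, then observe that since $\theta$ is odd with $\theta'(0)=1$ the residue there equals $h_j'(t_j)$, so its vanishing is equivalent to the vanishing of the logarithmic derivative of $h_j$ at $t_j$, which is the $j$-th Bethe ansatz equation. Your bookkeeping (the factor $2$, the sign, and $h_j(t_j)\ne 0$ from the genericity hypotheses) matches the paper's computation.
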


\begin{proof} We have
\bea
G(x+1) = e^{-2\pi i \mu}\,G(x),\qquad
G(x+\tau) = e^{-2\pi i \mu\tau}e^{2\pi i(\sum_{s=1}^{2m}z_s-2 \sum_{j=1}^mt_j) }\,G(x).
\eea
Hence it is enough to check that $G(x)$ has zero residues at the points
$x=t_j, j=1,\dots,m$.  
We have $\theta (x) = \theta'(0) x  +  o(x^2)$ as $x\to 0$, since
$\theta(x)$ is odd. As $x-t_j \to 0$, we have
\bea
G(x,\mu,t,z,\tau) = e^{-2\pi i \mu x } \frac{\prod_{a=1}^{2m}\theta (x-z_a)} 
{(\theta'(0) (x-t_j)  + o((x-t_j)^2)^2\prod_{k\ne j }\theta (x-t_k)^2} .
\eea
Hence $\Res_{x=t_j}G(x)=0$ if and only if the logarithmic derivative of
\bea
e^{-2\pi i \mu x}\,\frac{\prod_{a=1}^{2m}\theta (x-z_a)} 
{\prod_{k\ne j }\theta (x-t_k)^2}
\eea
 at $x=t_j$ equals zero. That is exactly the $j$-th Bethe ansatz equation.
\end{proof}

\begin{lem}
\label{p-B}
Let  $(f,g) \in P_{m,1}\times(P_m-P_{m,1})$ have generic first coordinate and
$f=\prod_{j=1}^m\theta(x-t_j,\tau)$, $\Wr(f,g) =
e^{-2\pi i \mu x } \prod_{a=1}^{2m}\theta (x-z_a,\tau)$ for some
$\mu, t_1,\dots,t_m, z_1,\dots,z_{2m}$. Then
$(\mu, t_1$, \dots, $t_m,z_1,\dots,z_{2m})$ is a solution of the Bethe ansatz equations \Ref{BAEs}.

\end{lem}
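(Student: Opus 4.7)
The plan is to reduce the lemma to Lemma \ref{lem:res} by identifying the expression in \Ref{fF} with $\Wr(f,g)/f^2$ and then invoking Lemma \ref{lem:wr}.

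First I would observe that because $(f,g)$ has generic first coordinate, the numbers $t_1,\dots,t_m$ are pairwise distinct modulo $\Z+\tau\Z$ (all zeros of $f$ are simple) and each $t_j$ is distinct from every $z_a$ modulo $\Z+\tau\Z$ (since $f$ and $\Wr(f,g)$ have no common zeros). This ensures that the hypotheses of Lemma \ref{lem:res} apply to $(\mu,t_1,\dots,t_m,z_1,\dots,z_{2m})$.

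Next, I would form the quotient
\[
\frac{\Wr(f,g)}{f^{\>2}} \,=\, \frac{e^{-2\pi i \mu x}\prod_{a=1}^{2m}\theta(x-z_a,\tau)}{\prod_{j=1}^m \theta(x-t_j,\tau)^2},
\]
which is exactly the function $G(x)$ appearing in \Ref{fF}. By Lemma \ref{lem:wr} this function equals $(g/f)'$, and so it has zero residue at every one of its poles; in particular, $\Res_{x=t_j}G(x)=0$ for all $j=1,\dots,m$. Applying Lemma \ref{lem:res} in the reverse direction then yields that $(\mu, t_1,\dots,t_m, z_1,\dots,z_{2m})$ satisfies the Bethe ansatz equations \Ref{BAEs}.

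I do not anticipate any real obstacle here: the argument is essentially a bookkeeping step identifying two descriptions of the same meromorphic function. The only delicate point is verifying that the genericity hypothesis on $(f,g)$ really implies the simplicity/distinctness conditions needed to apply Lemma \ref{lem:res}, but this is immediate from the definition of ``generic first coordinate''. The constant factor possibly relating $\Wr(f,g)/f^2$ to \Ref{fF} is harmless since vanishing of residues is unaffected by multiplication by a nonzero constant.
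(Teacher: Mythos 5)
Your proposal is correct and follows essentially the same route as the paper's own proof: identify $\Wr(f,g)/f^2=(g/f)'$, note that it has zero residues by Lemma \ref{lem:wr}, check via genericity that the distinctness hypotheses of Lemma \ref{lem:res} hold, and conclude the Bethe ansatz equations from that lemma. Your added remarks on why genericity yields the distinctness conditions and why a constant factor is harmless are accurate refinements of the paper's terser argument.
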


\begin{proof} The function $(g/f)'= \Wr(f,g)/f^2$ has zero residues and satisfies the 
assumptions of Lemma \ref{lem:res}. This implies Lemma
\ref{p-B}.
\end{proof}

\begin{lem}
\label{lem 1.11}
If $(\mu, t_1,\dots, t_m,z_1,\dots,z_{2m})$
 is a solution of the Bethe ansatz equations and $\mu\not\in\Z$,
then there exist $s_1,\dots,s_m$ such that 
\bean
\label{wrr}
\phantom{aaa}
\Wr \Big(\prod_{j=1}^m\theta(x-t_j,\tau), e^{-2\pi i \mu x}\prod_{j=1}^m\theta(x-s_j,\tau)\Big)
 = \on{const} e^{-2\pi i \mu x}\prod_{a=1}^{2m}\theta(x-z_a,\tau).
\eean
Moreover,  the function $\prod_{j=1}^m\theta(x-s_j,\tau)$ is
unique up to multiplication  by a constant.
\end{lem}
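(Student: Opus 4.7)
The plan is to set up the hypotheses of the Wronskian existence theorem (Theorem \ref{BMV}) and read off the conclusion.

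First, I would identify the relevant theta-polynomial spaces. Put $f(x) = \prod_{j=1}^m \theta(x-t_j,\tau)$, so $f \in T_{m,A_1,B_1}$ with $A_1 = 1$ and $B_1 = e^{2\pi i \sum t_j}$, and put $h(x) = e^{-2\pi i \mu x}\prod_{a=1}^{2m}\theta(x-z_a,\tau)$, so $h \in T_{2m, A_1 A_2, B_1 B_2}$ where $A_2 = e^{-2\pi i \mu}$ and $B_2 = e^{-2\pi i \mu \tau + 2\pi i \sum z_a}/B_1$. The assumption $\mu \notin \Z$ gives $A_2 \neq 1 = A_1$, so $(A_1,B_1) \neq (A_2,B_2)$, which is the non-degeneracy hypothesis of Theorem \ref{BMV}.

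Next, I would verify the residue hypothesis. Since the $t_j$ are pairwise distinct modulo $\Z + \tau\Z$ (as noted in the first observation of Section \ref{sec:eigenfunctions}) and distinct from every $z_a$ modulo $\Z + \tau \Z$, the zeros of $f$ are simple and disjoint from the zeros of $h$. The Bethe ansatz equations \Ref{BAEs}, combined with Lemma \ref{lem:res}, give that $h/f^2$ has zero residue at each zero of $f$ (these being its only candidates for poles, since $h$ is entire).

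With both hypotheses of Theorem \ref{BMV} in place, I would apply it to obtain a unique $g \in T_{m, A_2, B_2}$ satisfying $\Wr(f,g) = h$. Because $A_2 = e^{-2\pi i \mu}$, Lemma \ref{lem determin} applies with the chosen $\mu$ and produces $s_1,\dots,s_m \in \C$ and $c \in \C$ such that
\begin{equation*}
g(x) = c\, e^{-2\pi i \mu x}\prod_{j=1}^m \theta(x-s_j,\tau),
\end{equation*}
which yields the desired Wronskian identity \Ref{wrr} after absorbing $c$ into the constant on the right-hand side.

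For the uniqueness of $\prod_{j=1}^m \theta(x-s_j,\tau)$ up to a constant factor, I would argue as follows. Since $g$ itself is unique by Theorem \ref{BMV}, the product $\prod_{j=1}^m \theta(x-s_j,\tau) = g(x)/(c\, e^{-2\pi i \mu x})$ is determined by $g$ up to scalar once $\mu$ is fixed. Concretely, any two presentations of this shape with the same $\mu$ are related by the type of transformation described in Lemma \ref{lem 1.2}, and under the constraint that the $e^{-2\pi i \mu x}$ factor be preserved, part (ii) of that lemma forces $\sum_j l_j = 0$ with each $l_j \in \Z_{\geq 0}$ (nonnegativity coming from the fixed degree), hence $l_j = 0$ for all $j$; then each $s_j$ is only shifted by an integer, which changes the product $\prod_j \theta(x-s_j,\tau)$ only by a sign. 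Essentially no step is an obstacle: the only care needed is to check that $(A_1,B_1) \neq (A_2,B_2)$, which is exactly where the hypothesis $\mu \notin \Z$ is used.
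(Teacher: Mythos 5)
Your proof is correct and follows essentially the same route as the paper's: the residue hypothesis is supplied by Lemma \ref{lem:res} together with the Bethe ansatz equations, Theorem \ref{BMV} gives existence and uniqueness of $g$ (using $\mu\notin\Z$ for the non-degeneracy of multipliers), and Lemma \ref{lem determin} yields the presentation $g=\on{const}\,e^{-2\pi i\mu x}\prod_{j}\theta(x-s_j,\tau)$. The only slip is in your concluding elaboration of uniqueness---the integers $l_j$ of Lemma \ref{lem 1.2} need not be nonnegative, only $\sum_j l_j=0$ is forced---but this is harmless, since that constraint alone already makes $\prod_j\theta(x-s_j,\tau)$ change only by a constant factor, and in any case your preceding observation (uniqueness of $g$ with $\mu$ fixed) already settles the point.
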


\begin{proof} The first multipliers of the functions $\prod_{j=1}^m\theta(x-t_j)$  and
 $e^{-2\pi i \mu x}\prod_{a=1}^{2m}\theta(x-z_a)$ are distinct. 
 Hence there exists a unique theta-polynomial $g$ satisfying the equation 
 \bean
\label{wr}
\Wr \Big({\prod}_{j=1}^m\theta(x-t_j), g(x)\Big)
 = e^{-2\pi i \mu x}{\prod}_{a=1}^{2m}\theta(x-z_a),
\eean
see  Lemma \ref{lem:res} and Theorem \ref{BMV}.
 The theta-polynomial $g$ has degree $m$ and the first multiplier
$A=e^{- 2\pi i \mu }$, by Theorem \ref{BMV}. Such a  function  $g$ has a 
presentation
\bean
\label{nft}
g(x) = \on{const} e^{-2\pi i \mu x} {\prod}_{k=1}^m\theta(x-s_k)
\eean
for suitable $s_1,\dots,s_m\in\C$, by Lemma \ref{lem determin}.
\end{proof}

\begin{cor} 
\label{cor 9.4}

Under the assumptions of Lemma \ref{lem 1.11} we also have
\bean
\label{wrrr}
\Wr \Big(e^{2\pi i \mu x} \prod_{j=1}^m\theta(x-t_j,\tau), \prod_{j=1}^m\theta(x-s_j,\tau)\Big)
 = \on{const} e^{2\pi i \mu x}\prod_{a=1}^{2m}\theta(x-z_a,\tau).
\eean
In particular, if $s_1,\dots,s_m$ 
are distinct  modulo the lattice $\Z+\tau\Z$ and 
any $s_j$ is distinct from any $z_a$ modulo the lattice $\Z+\tau\Z$, then
$(-\mu, s_1,\dots, s_m$, $z_1,\dots,z_{2m})$ is a solution of the Bethe ansatz equations \Ref{BAEs}.

\end{cor}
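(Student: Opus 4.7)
The plan is to deduce both claims from Lemma \ref{lem 1.11} and Lemma \ref{p-B} by a simple rescaling trick combined with the antisymmetry of the Wronskian. The key observation is the identity $\Wr(hf,hg)=h^2\Wr(f,g)$ from \Ref{h^2}, together with the fact that multiplying both entries of a Wronskian by a common factor $e^{2\pi i \mu x}$ shifts the first multiplier of each factor by $A = e^{2\pi i \mu}$.

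First, to prove \Ref{wrrr}, I would start from \Ref{wr} in the proof of Lemma \ref{lem 1.11}, namely
\[
\Wr\!\Big({\prod}_{j=1}^m\theta(x-t_j,\tau),\ e^{-2\pi i \mu x}{\prod}_{j=1}^m\theta(x-s_j,\tau)\Big)
= \on{const}\cdot e^{-2\pi i \mu x}{\prod}_{a=1}^{2m}\theta(x-z_a,\tau),
\]
and multiply both arguments by $h(x)=e^{2\pi i \mu x}$. By \Ref{h^2}, the left-hand side becomes the Wronskian appearing in \Ref{wrrr}, while the right-hand side is multiplied by $h(x)^2=e^{4\pi i\mu x}$, which transforms $e^{-2\pi i\mu x}\prod_a\theta(x-z_a)$ into $e^{2\pi i\mu x}\prod_a\theta(x-z_a)$, exactly as required.

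Second, to get the Bethe ansatz solution statement, I would apply Lemma \ref{p-B} to the rearranged pair
\[
\tilde f(x)={\prod}_{j=1}^m\theta(x-s_j,\tau),\qquad \tilde g(x)=-e^{2\pi i\mu x}{\prod}_{j=1}^m\theta(x-t_j,\tau).
\]
Then $\tilde f\in P_{m,1}$ (first multiplier $1$) and $\tilde g\in P_{m,e^{2\pi i\mu}}$, which lies outside $P_{m,1}$ since $\mu\notin\Z$. Using $\Wr(\tilde f,\tilde g)=-\Wr(-\tilde g,\tilde f)=-\Wr(e^{2\pi i\mu x}\prod\theta(x-t_j),\prod\theta(x-s_j))$ and \Ref{wrrr}, we find
\[
\Wr(\tilde f,\tilde g)=\on{const}\cdot e^{-2\pi i(-\mu)x}{\prod}_{a=1}^{2m}\theta(x-z_a,\tau),
\]
so the Wronskian has the form required by Lemma \ref{p-B} with the parameter $\mu$ there replaced by $-\mu$. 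The hypotheses on $s_1,\dots,s_m$ being pairwise distinct and distinct from the $z_a$'s mod $\Z+\tau\Z$ guarantee that $\tilde f$ has only simple zeros and shares none with $\Wr(\tilde f,\tilde g)$, so $(\tilde f,\tilde g)$ has generic first coordinate. Lemma \ref{p-B} then yields that $(-\mu,s_1,\dots,s_m,z_1,\dots,z_{2m})$ solves \Ref{BAEs}.

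There is no real obstacle: both steps are formal consequences of results already in hand. The only minor point to be careful about is keeping track of the overall constant (and the sign flip from swapping the two entries of the Wronskian), which gets absorbed into the unspecified $\on{const}$ on the right-hand side and does not affect the conclusion.
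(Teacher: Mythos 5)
Your proposal is correct and takes essentially the same route as the paper: the paper's proof of this corollary is precisely the rescaling identity \Ref{h^2} applied to \Ref{wrr}, with the ``in particular'' part then following from Lemma \ref{p-B} (equivalently Lemma \ref{lem:res}) applied with $-\mu$, $s$ in place of $\mu$, $t$, exactly as you argue. The only blemish is the sign in ``$\Wr(\tilde f,\tilde g)=-\Wr(-\tilde g,\tilde f)$'' (antisymmetry gives $\Wr(\tilde f,\tilde g)=+\Wr(-\tilde g,\tilde f)$), but as you yourself note the sign is absorbed into the unspecified constant and does not affect the conclusion.
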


\begin{proof} 
The corollary follows from formula \Ref{h^2}.
\end{proof}

\subsection{Equivalence classes}
\label{sec Eqv}

\begin{lem} 
\label{lem tr}
Let
$(\mu, t_1,\dots, t_m,z_1,\dots,z_{2m})$
 be a solution of the Bethe ansatz equations \Ref{BAEs}. Then

\begin{enumerate}

\item[(i)]   for any
$j\in\{1,\dots,m\}$  the points $(\mu\mp 2, t_1,\dots,t_j\pm \tau,\dots,t_m,z_1,\dots,z_{2m})$ and
$(\mu, t_1$, \dots, $t_j\pm 1,\dots,t_m,z_1,\dots,z_{2m})$
 are also solutions of the Bethe ansatz equations;

\item[(ii)] 
for any
$k\in\{1,\dots, 2m\}$ the points $(\mu\pm1,t_1,\dots,t_m, z_1,\dots,z_k\pm\tau,\dots,z_{2m})$ 
and
$(\mu,t_1,\dots,t_m, z_1,\dots,z_k\pm1,\dots,z_{2m})$ are 
also solutions of the Bethe ansatz equations.
\end{enumerate}

\qed
\end{lem}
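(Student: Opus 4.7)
\smallskip

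The plan is to verify each of the listed shifts by direct substitution into the Bethe ansatz equations \Ref{BAEs}, using only the quasi-periodicity of $\rho$ recorded in Section \ref{sec col form}:
\[
\rho(x+k+l\tau,\tau)=\rho(x,\tau)-2\pi i\,l,\qquad k,l\in\Z.
\]
In particular $\rho$ is $1$-periodic in its first argument, so the assertions in (i) and (ii) involving a shift by $\pm 1$ (of some $t_j$ or some $z_k$, without altering $\mu$) are immediate: every $\rho$-term in every equation of \Ref{BAEs} is unchanged. So only the $\tau$-shift cases require actual bookkeeping.

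For case (i) with $t_j\mapsto t_j+\tau$ and $\mu\mapsto\mu-2$, I would check the $j$-th equation and the other $m-1$ equations separately. For the $j$-th equation, shifting the argument of every one of the $2(m-1)$ terms $\rho(t_j-t_k,\tau)$ and every one of the $2m$ terms $\rho(t_j-z_s,\tau)$ by $\tau$ produces a total constant shift of $2\cdot 2(m-1)\cdot(-2\pi i)-2m\cdot(-2\pi i)=-4\pi i$, which is precisely compensated by the change $2\pi i\mu\mapsto 2\pi i(\mu-2)=2\pi i\mu-4\pi i$. For any other equation ($k\ne j$), only a single $\rho$-term is affected, namely $2\rho(t_k-t_j,\tau)\mapsto 2\rho(t_k-t_j-\tau,\tau)=2\rho(t_k-t_j,\tau)+4\pi i$, which again cancels against the shift $-4\pi i$ coming from $\mu\mapsto\mu-2$. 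The case $t_j\mapsto t_j-\tau$, $\mu\mapsto\mu+2$ is symmetric.

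For case (ii) with $z_k\mapsto z_k+\tau$ and $\mu\mapsto\mu+1$, in each of the $m$ equations exactly one $\rho$-term is affected: $-\rho(t_j-z_k,\tau)\mapsto-\rho(t_j-z_k-\tau,\tau)=-\rho(t_j-z_k,\tau)-2\pi i$, and this $-2\pi i$ is matched by $2\pi i\mu\mapsto 2\pi i\mu+2\pi i$. The sign-flipped version is identical.

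I do not expect a genuine obstacle: the statement is a straightforward quasi-periodicity check and the only thing to watch is the sign bookkeeping and the factor of $2$ (coming from the coefficient $2$ in front of $\sum_{k\ne j}\rho(t_j-t_k,\tau)$) that forces $\mu$ to shift by $\mp 2$ rather than $\mp 1$ when a $t_j$ is shifted by $\pm\tau$, versus $\pm 1$ when a $z_k$ is shifted by $\pm\tau$.
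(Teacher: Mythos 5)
Your approach is exactly the right one: the paper states this lemma with no proof (it is regarded as an immediate consequence of the quasi-periodicity $\rho(x+k+l\tau,\tau)=\rho(x,\tau)-2\pi i\,l$), and a direct substitution check of each equation of \Ref{BAEs} is all that is required; your treatment of the $\pm1$ shifts, of the equations with index $k\ne j$ in case (i), and of case (ii) is correct.

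There is, however, an arithmetic slip in your bookkeeping for the $j$-th equation in case (i). The sum $2\sum_{k\ne j}\rho(t_j-t_k,\tau)$ contains $m-1$ terms with an overall coefficient $2$, so under $t_j\mapsto t_j+\tau$ it changes by $2(m-1)(-2\pi i)=-4\pi i(m-1)$, while $-\sum_{s}\rho(t_j-z_s,\tau)$ changes by $2m\,(+2\pi i)=+4\pi i m$; the net change of the $\rho$-terms is therefore $+4\pi i$, not $-4\pi i$. Your displayed expression $2\cdot 2(m-1)\cdot(-2\pi i)-2m\cdot(-2\pi i)$ double-counts the factor $2$ and equals $-4\pi i m+8\pi i$, which is $-4\pi i$ only for $m=3$; moreover, as stated, a change of $-4\pi i$ in the $\rho$-terms together with the change $-4\pi i$ coming from $2\pi i\mu\mapsto 2\pi i(\mu-2)$ would not cancel. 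With the corrected count the cancellation $+4\pi i-4\pi i=0$ holds for all $m$, so the conclusion and the overall argument stand; only the intermediate computation needs repair.
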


\begin{defn}
We say that  two solutions  
\bea
(\mu, t_1, \dots,  t_m ,  z_1 , \dots,  z_{2m}),\
(\mu', t_1',\dots, t_m', z_1',\dots,z_{2m}')  \ 
\in \C\times \C^m/S_m\times \C^{2m}/S_{2m}
\eea
 of the Bethe ansatz equations \Ref{BAEs}
 are equivalent if one of them is obtained from the other by a sequence of transformations listed in Lemma \ref{lem tr}.

\end{defn}

Denote by $\on{Sol}_m^1$ the set of equivalence classes of solutions
$ (\mu, t_1,\dots,t_m,z_1,\dots,z_{2m})$ of the Bethe ansatz equations
 \Ref{BAEs} with $ \mu\not\in \Z$.

\begin{thm}
\label{lem:sum}

We have 
 a bijective correspondence
\bea
\beta \ : \  \on{Sol}_m^1 \ \to\  \on{Pairs}_m^1
\eea
between the set  $\on{Sol}_m^1$ 
of equivalence classes of solutions of the Bethe ansatz equations
with $\mu\not\in \Z$  and the set $\on{Pairs}_m^1$ 
of points of $(P_{m,1}\times (P_m-P_{m,1}))/\Z$ with generic first coordinates. 
The correspondence is given below in the proof of this lemma.

\end{thm}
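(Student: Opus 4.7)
The plan is to build $\beta$ explicitly, check it descends to equivalence classes, construct $\beta^{-1}$ explicitly using the tools already developed in the preceding sections, and verify the two maps are mutually inverse. The key input is Lemma~\ref{lem 1.11}, which already bridges Bethe solutions and Wronskian equations for theta-polynomials.

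First, I would define $\beta$ as follows. Given a representative $(\mu, t_1,\dots,t_m, z_1,\dots,z_{2m})\in\on{Sol}_m^1$, set $f(x)=\prod_{j=1}^m\theta(x-t_j,\tau)\in P_{m,1}$ and use Lemma~\ref{lem 1.11} to produce the unique (up to scalar) theta-polynomial $g\in T_{m,\,e^{-2\pi i\mu}}$ with
\[
\Wr(f,g)=\mathrm{const}\cdot e^{-2\pi i\mu x}{\prod}_{a=1}^{2m}\theta(x-z_a,\tau).
\]
Since $\mu\notin\Z$, one has $g\in P_m-P_{m,1}$; since the $t_j$ are mutually distinct and distinct from the $z_a$ modulo $\Z+\tau\Z$ (which holds for any Bethe solution), $f$ and $\Wr(f,g)$ share no zero, so $(f,g)$ has generic first coordinate and defines an element of $\on{Pairs}_m^1$.

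Next, I would check that $\beta$ respects the equivalence of Lemma~\ref{lem tr}. Using the quasi-periodicity $\theta(x+\tau)=-e^{-\pi i\tau-2\pi i x}\theta(x)$: the transformation $t_j\to t_j+1$ rescales $f$ by $-1$; the transformation $z_k\to z_k+1$ rescales $\Wr(f,g)$ by $-1$; the transformation $\mu\to\mu+1$, $z_k\to z_k+\tau$ leaves the right-hand side of the Wronskian equation invariant up to a scalar (the exponential factor from $z_k\mapsto z_k+\tau$ is exactly cancelled by the shift of $\mu$), so $(f,g)$ is unchanged. Finally, the transformation $\mu\to\mu-2$, $t_j\to t_j+\tau$ multiplies $f$ by $\mathrm{const}\cdot e^{2\pi i x}$ and multiplies the right-hand side of the Wronskian equation by $\mathrm{const}\cdot e^{4\pi i x}$; by the uniqueness in Lemma~\ref{lem 1.11} (applied to the shifted $f$) this forces $g\mapsto\mathrm{const}\cdot e^{2\pi i x}g$, which is precisely the generator of the $\Z$-action on $P_{m,1}\times(P_m-P_{m,1})$. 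Hence $\beta$ descends to a well-defined map $\on{Sol}_m^1\to\on{Pairs}_m^1$.

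For the inverse, given a class $[(f,g)]\in\on{Pairs}_m^1$, choose a representative. Lemma~\ref{lem determin} (with $\mu=0$, since $f\in P_{m,1}$) writes $f=\prod_{j=1}^m\theta(x-t_j,\tau)$ for some $t_j$, which are automatically distinct modulo $\Z+\tau\Z$ by genericity. Lemma~\ref{lem 1.5} gives $\Wr(f,g)\in T_{2m,A,B}$ with $A\ne 1$; picking $\mu\in\C$ with $e^{-2\pi i\mu}=A$, necessarily $\mu\notin\Z$, and applying Lemma~\ref{lem determin} to $\Wr(f,g)$ produces $z_1,\dots,z_{2m}$ with $\Wr(f,g)=\mathrm{const}\cdot e^{-2\pi i\mu x}\prod_a\theta(x-z_a,\tau)$. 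Generic first coordinate guarantees $t_j\ne z_a\pmod{\Z+\tau\Z}$, so Lemma~\ref{p-B} asserts $(\mu,t,z)$ solves the Bethe ansatz equations. Define $\beta^{-1}[(f,g)]$ to be its equivalence class. The ambiguity in Lemma~\ref{lem determin} analysed in Lemma~\ref{lem 1.2}, the choice of logarithm $\mu$, and the $\Z$-action on the representative all contribute shifts of $t_j$ and $z_a$ by elements of $\Z+\tau\Z$ and shifts of $\mu$ by integers that match exactly the generators in Lemma~\ref{lem tr} (for instance, passing from $(f,g)$ to $(e^{2\pi i kx}f,\,e^{2\pi i kx}g)$ corresponds to shifting $k$ of the $t_j$ by $\tau$ and $\mu$ by $-2k$). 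That $\beta$ and $\beta^{-1}$ are mutually inverse is then immediate from construction: both round-trips reproduce the input up to the declared equivalences, because in either direction the $t_j$ are read off from the zeros of $f$ and the pair $(\mu,z_1,\dots,z_{2m})$ is read off from $\Wr(f,g)$.

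The main obstacle is the bookkeeping in the well-definedness step: one must verify that the four generating transformations on Bethe solutions correspond \emph{exactly} to the trivial action or to the generator of the $\Z$-action on $P_{m,1}\times(P_m-P_{m,1})$, and conversely that the ambiguities in Lemmas \ref{lem determin} and \ref{lem 1.2} generate no extra equivalences beyond those in Lemma~\ref{lem tr}. Once this scalar/lattice-shift matching is verified on generators, everything else is forced by the uniqueness clause of Lemma~\ref{lem 1.11}.
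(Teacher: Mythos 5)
Your construction is exactly the paper's: $f$ from the $t_j$, $g$ from the Wronskian equation via Lemma~\ref{lem 1.11}/Theorem~\ref{BMV}, the converse via Lemmas~\ref{lem determin}, \ref{lem 1.5} and \ref{p-B} (equivalently \ref{lem:res}), and well-definedness checked by matching the generators of Lemma~\ref{lem tr} against scalar changes and the $\Z$-action, with Lemma~\ref{lem 1.2} handling presentation ambiguities. This is correct and essentially the same proof as in the paper.
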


\begin{proof}
Take an equivalence class of solutions of the Bethe ansatz equations \Ref{BAEs} with $\mu\not\in\Z$ and choose a representative
$(\mu, t_1,\dots, t_m,z_1,\dots,z_{2m})$. Then let
$f(x) =\prod_{j=1}^m\theta(x-t_j,\tau)$ and let the function $g(x)$ be given by formula \Ref{nft}.
Then the pair $(f,g)$ determines a point of $(P_{m,1}\times (P_m-P_{m,1}))/\Z$ 
 with generic first coordinate. 
Moreover, the equivalent solutions of the Bethe ansatz equations
 correspond under this construction to the pairs of theta-polynomials
 lying in  the $\Z$-orbit of $(f,g)$ in $P_{m,1}\times (P_m-P_{m,1})$.
 
 Indeed,  take, for example,  the solution
 $(\mu-2, t_1+\tau, t_2,\dots,t_m,z_1,\dots,z_{2m})$. Then
 \bea
 \tilde f(x) = \theta(x-t_1-\tau,\tau){\prod}_{j=2}^m\theta(x-t_j,\tau)
 =e^{2\pi i x}{\prod}_{j=1}^m\theta(x-t_j,\tau)
 \eea
 and $\tilde g(x)$ is determined from the equation
\bea
\Wr(e^{2\pi i x}{\prod}_{j=1}^m\theta(x-t_j,\tau), \tilde g(x)) = 
e^{-2\pi i (\mu-2) x}{\prod}_{a=1}^{2m}\theta(x-z_a,\tau).
\eea
Hence $\tilde g(x) = e^{2\pi i x} g(x)$ and the pairs
$(f,g)$ and $(\tilde f,\tilde g)$ lie in the same $\Z$-orbit.

Similarly  take the solution 
 $(\mu+1, t_1, \dots,t_m,z_1+\tau,z_2,\dots,z_{2m})$,
 then $\tilde f= f$ and $\tilde g$ is determined from the equation
 \bea
 \Wr (f,\tilde g) = e^{-2\pi i (\mu+1)x}
\theta(x-z_1-\tau){\prod}_{a=2}^{2m}\theta(x-z_a,\tau)
=
e^{-2\pi i \mu x}
{\prod}_{a=1}^{2m}\theta(x-z_a,\tau).
\eea
Hence $\tilde g=g$ and  $(f,g)= (\tilde f,\tilde g)$.

\smallskip

Conversely let  $p \in (P_{m,1}\times P_m-P_{m,1})/\Z$ 
be a point with generic first coordinate. Choose a representative $(f,g)$  and write
\bean
\label{repn}
f(x) ={\prod}_{j=1}^m\theta(x-t_j,\tau),\qquad
\Wr(f,g) = e^{-2\pi i \mu x}{\prod}_{a=1}^{2m}\theta(x-z_a,\tau),
\eean
for some $(\mu,t_1,\dots, t_m, z_1,\dots,z_{2m},\tau)$, $\mu\not\in\Z$. Then
the function $\Wr(f,g)/f^2=(g/f)'$ has zero residues and  
$(\mu,t_1,\dots, t_m, z_1,\dots,z_{2m},\tau)$ is a solution of the Bethe ansatz equations
\Ref{BAEs} by Lemma \ref{lem:res}.  Different choices of the representative $(f,g)$ 
of the point $p$ or different choices
of the presentations in \Ref{repn} for the functions $f$ and $\Wr(f,g)$ 
give equivalent solutions of the Bethe ansatz equations.

Indeed take, for example,  another presentation
\bea
f(x) ={\prod}_{j=1}^m\theta(x-t_j,\tau)
={\prod}_{j=1}^m\theta(x-t_j',\tau)
\eea
in \Ref{repn}. 
Then the solutions $(\mu,t,z)$ and $(\mu, t',z)$ are equivalent by Lemma \ref{lem 1.2}.
Or take another presentation
\bea
 e^{-2\pi i \mu x}{\prod}_{a=1}^{2m}\theta(x-z_a,\tau)
 = e^{-2\pi i \mu' x}{\prod}_{a=1}^{2m}\theta(x-z_a',\tau)
 \eea
in \Ref{repn}. 
Then the solutions $(\mu,t,z)$ and $(\mu', t,z')$ are equivalent by Lemma \ref{lem 1.2}.
Finally, take another representative
$(\tilde f,\tilde g)=(e^{2\pi i kx} f(x), e^{2\pi i kx} g(x))$ of the point $p$. Then
\bea
\tilde f(x) =\theta(x-t_1-k\tau ,\tau){\prod}_{j=2}^m\theta(x-t_j,\tau),\qquad
\Wr(\tilde f, \tilde g) = e^{-2\pi i (\mu-2k) x}{\prod}_{a=1}^{2m}\theta(x-z_a,\tau).
\eea
This gives a solution $(\mu-2k, t_1+k\tau,t_1,\dots,t_m,z_1,\dots,z_{2m})$, which is in the same equivalence class. 
\end{proof}

By Theorem \ref{lem:sum} we have a bijection
$\beta  :   \on{Sol}_m^1  \to \on{Pairs}_m^1$.
We also have a subset $\on{Pairs}_m \subset \on{Pairs}_m^1$. Denote
\bean
\label{Sol_m}
\on{Sol}_m = \beta^{-1}(\on{Pairs}_m).
\eean

\subsection{Fundamental differential operators}

In formula \Ref{fund t} we introduced the fundamental differential operator
$\D_{(\mu,t,z)}$ of a solution
$(\mu,t,z)$ of the Bethe ansatz equations.
The operator $\D_{(\mu,t,z)}$  is defined  by the eigenvalues
of the dynamical elliptic Bethe algebra on the eigenfunction $\Psi(\la_{12}, \mu, t,   z, \tau)$,
see \Ref{EIg}.
In Section \ref{sec fp} we introduced the fundamental differential operator $\D_{(f,g)}$ of a point
$(f,g) \in (P_{m,1}\times (P_m-P_{m,1}))/\Z$ with generic first coordinate. In Theorem \ref{lem:sum}
we established a bijection
$\beta  :   \on{Sol}_m^1 \to  \on{Pairs}_m^1$
 between the set of equivalence classes of solutions 
$(\mu,t,z)$ of the Bethe ansatz equations
with $\mu\not\in \Z$ and the set of points of  $(P_{m,1}\times (P_m-P_{m,1}))/\Z$ with generic first coordinates. Thus a solution
$(\mu,t,z)$ with $\mu\not\in \Z$  gets two fundamental differential operators: 
$\D_{(\mu,t,z)}$  and $\D_{(f,g)}$.

\begin{thm}
\label{lem fo=}

We have $\D_{(\mu,t,z)} = \D_{(f,g)}$.

\end{thm}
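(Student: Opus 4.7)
The plan is to identify the two expressions for the fundamental differential operator given in Theorem \ref{prop fund diff op} and in Lemma \ref{lem:fdo}, and observe that they reduce to the same first-order logarithmic derivative once the bijection $\beta$ is used to translate between solutions of the Bethe ansatz equations and pairs of theta-polynomials.

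More concretely, first I would invoke Theorem \ref{prop fund diff op}, which, specialized to $m_s=1$ for $s=1,\dots,2m$, gives
\[
\D_{(\mu,t,z,\tau)} = \bigl(\der_x + (\ln u)'\bigr)\bigl(\der_x - (\ln u)'\bigr),
\qquad
u(x) = e^{\pi i \mu x}\,{\prod}_{j=1}^m\theta(x-t_j,\tau)\,{\prod}_{s=1}^{2m}\theta(x-z_s,\tau)^{-1/2}.
\]
Next, Lemma \ref{lem:fdo} gives the companion expression
\[
\D_{(f,g)} = \bigl(\der_x + (\ln u_1)'\bigr)\bigl(\der_x - (\ln u_1)'\bigr),
\qquad u_1 = f/\sqrt{\Wr(f,g)}.
\]
Thus it suffices to show that $(\ln u)' = (\ln u_1)'$, i.e.\ that $u$ and $u_1$ differ by a multiplicative constant.

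The translation comes from the construction of $\beta$ in the proof of Theorem \ref{lem:sum}: the equivalence class of $(\mu,t,z)$ is sent to the $\Z$-orbit of the pair $(f,g)$ with
\[
f(x) = {\prod}_{j=1}^m\theta(x-t_j,\tau),
\qquad
\Wr(f,g) = \mathrm{const}\cdot e^{-2\pi i \mu x}{\prod}_{a=1}^{2m}\theta(x-z_a,\tau).
\]
Substituting these into the formula for $u_1$ yields
\[
u_1(x) \;=\; \mathrm{const}\cdot \frac{\prod_{j=1}^m\theta(x-t_j,\tau)}{e^{-\pi i \mu x}\prod_{a=1}^{2m}\theta(x-z_a,\tau)^{1/2}}
\;=\; \mathrm{const}\cdot u(x),
\]
so $(\ln u_1)' = (\ln u)'$ and therefore the two second-order operators coincide.

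The only subtlety worth flagging is that $u_1$ involves a square root; so strictly speaking, I would note that the identity $(\ln u_1)' = (\ln u)'$ is an equality of meromorphic functions on $\C$, which does not depend on the branch of the square root used to define $u_1$, because different choices of branch change $u_1$ only by an overall sign. Also, the representative $(f,g)$ of a class in $\on{Pairs}_m^1$ is unique only up to the diagonal $\Z$-action $(f,g)\mapsto (e^{2\pi i k x}f,e^{2\pi i k x}g)$, which multiplies both $f$ and $\Wr(f,g)$ in a compatible way (cf.\ \eqref{h^2}); this rescaling changes $u_1$ by a multiplicative constant and hence leaves $(\ln u_1)'$ invariant, consistent with Lemma \ref{lem foZ}. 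With these observations, the identification $\D_{(\mu,t,z)} = \D_{(f,g)}$ is immediate from the two factorized formulas.
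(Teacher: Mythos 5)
Your proposal is correct and follows essentially the same route as the paper: both invoke the factorized formula of Theorem \ref{prop fund diff op} together with Lemma \ref{lem:fdo}, and then identify $u$ with $u_1$ (up to a constant) via the correspondence of Theorem \ref{lem:sum}. Your explicit substitution and the remarks on the square-root branch and $\Z$-orbit independence only spell out details the paper leaves implicit.
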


\begin{proof} 
 By Theorem \ref{prop fund diff op},
\bea
\D_{(\mu,t,z)}=\big(\der_x + (\ln u)\rq{}\big)\big(\der_x - (\ln u)\rq{}\big)
\eea
where $u(x)$ is given by   \Ref{def u}, while  
\bea
\D_{(f,g)} = \big(\der_x + (\ln u_1)\rq{}\big)\big(\der_x - (\ln u_1)\rq{}\big),
\eea
where $u_1(x)$ is given by \Ref{u12}. We have  $u=u_1$ by the correspondence described in
Theorem \ref{lem:sum}.
\end{proof}

\begin{cor} 
\label{cor e2fo}

 The fundamental differential operators of equivalent solutions
 $(\mu,t,z)$, $(\mu\rq{}$, $t\rq{},z\rq{})$
with $\mu,\mu\rq{}\not\in\Z$ are equal.
\end{cor}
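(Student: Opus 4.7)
The plan is to deduce this immediately from the correspondence $\beta$ constructed in Theorem \ref{lem:sum} together with the identification of the two fundamental differential operators given by Theorem \ref{lem fo=}. The equivalence relation on $\on{Sol}_m^1$ is defined precisely so that the map $\beta$ is well-defined, and on the target side the corresponding ambiguity (different representatives $(f,g)$ of the same point of $(P_{m,1}\times(P_m-P_{m,1}))/\Z$) is absorbed by the $\Z$-action. Hence the work reduces to checking that the differential operator $\D_{(f,g)}$ is insensitive to the remaining freedom.

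Concretely, I would proceed as follows. First, given two equivalent solutions $(\mu,t,z)$ and $(\mu',t',z')$ of the Bethe ansatz equations with $\mu,\mu'\notin\Z$, invoke Theorem \ref{lem:sum} to obtain pairs $(f,g)$ and $(f',g')$ in $P_{m,1}\times(P_m-P_{m,1})$ such that $\beta$ sends the common equivalence class of $(\mu,t,z)$ and $(\mu',t',z')$ to the common equivalence class of $(f,g)$ and $(f',g')$ in $(P_{m,1}\times(P_m-P_{m,1}))/\Z$. In other words, $(f',g') = (e^{2\pi i k x} f,\, e^{2\pi i k x} g)$ for some $k\in\Z$. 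Second, apply Lemma \ref{lem foZ}, which asserts that $\D_{(f,g)}$ depends only on the $\Z$-orbit of $(f,g)$, to conclude $\D_{(f,g)} = \D_{(f',g')}$. Third, apply Theorem \ref{lem fo=} twice, to get $\D_{(\mu,t,z)} = \D_{(f,g)}$ and $\D_{(\mu',t',z')} = \D_{(f',g')}$. Combining the three equalities yields $\D_{(\mu,t,z)} = \D_{(\mu',t',z')}$.

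There is no real obstacle; every ingredient has already been established. The only point requiring a moment of care is verifying that both $(\mu,t,z)$ and $(\mu',t',z')$ land in the domain $\on{Sol}_m^1$ of $\beta$ — that is, that $\mu'\notin\Z$ is preserved under each of the elementary transformations listed in Lemma \ref{lem tr}. But these transformations shift $\mu$ only by integers (by $\pm 1, \pm 2$, or $0$), so $\mu\notin\Z$ if and only if $\mu'\notin\Z$, and applicability of Theorem \ref{lem:sum} is maintained throughout the equivalence class.
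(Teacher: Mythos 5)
Your proof is correct and is essentially the argument the paper intends: the corollary follows at once from the bijection $\beta$ of Theorem \ref{lem:sum} (equivalent solutions give pairs in the same $\Z$-orbit), the $\Z$-invariance of $\D_{(f,g)}$ in Lemma \ref{lem foZ}, and two applications of Theorem \ref{lem fo=}. Your extra remark that the equivalence transformations shift $\mu$ only by integers, so membership in $\on{Sol}_m^1$ is preserved, is a correct and harmless bit of added care.
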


\subsection{Analytic involution and Bethe ansatz}
\label{sec aiba}

Consider the analytic involution 
\bea
\iota_{\on{an}}: (P_{m,1}\times (P_m-P_{m,1}))/\Z  \to (P_{m,1}\times (P_m-P_{m,1}))/\Z 
\eea
defined in \Ref{an inv}.  
By its construction in Section \ref{sec ani},  the analytic involution 
restricts to an  involution
\bean
\label{Geni}
\iota_{\on{an}} \  :\  \on{Pairs}_m \  \to\  \on{Pairs}_m
\eean 
on the set $\on{Pairs}_m$ and induces an involution
\bean
\label{GenS}
\beta^{-1} \circ\iota_{\on{an}}\circ\beta
\ :\ 
  \on{Sol}_m \  \to\  \on{Sol}_m 
\eean 
on the set $\on{Sol}_m$.

In other words, the involution in \Ref{GenS} is defined as follows.
We start with a solution
$(\mu, t_1,\dots,t_m, z_1,\dots, z_{2m})$, $\mu\not\in\Z$, of the Bethe ansatz equations \Ref{BAEs},
construct the function $f=\prod_{j=1}^m\theta(x-t_j,\tau)$,
determine the function $g=e^{-2\pi i \mu x} \prod_{j=1}^m\theta(x-s_j,\tau)$ from the equation
$\Wr(f,g) = e^{-2\pi i \mu x } \prod_{a=1}^{2m}\theta (x-z_a,\tau)$. Then the involution in 
\Ref{GenS} sends the equivalence class of 
$(\mu, t_1,\dots,t_m, z_1,\dots, z_{rm})$ to the equivalence class
of $(-\mu, s_1$, \dots, $s_m$,  $z_1$,
\dots, $z_{2m})$, see Lemma \ref{lem 1.11} and Corollary
\ref{cor 9.4}.

\subsection{Normal solutions}

\begin{defn}
Given a fundamental parallelogram $\La$ of the lattice $\Z+\tau\Z$ acting on $\C$, a solution
\bea
(\mu, t_1, \dots,  t_m ,  z_1 , \dots,  z_{2m})\
\in \C\times \C^m/S_m\times \C^{2m}/S_{2m}
\eea
 of the Bethe ansatz equations \Ref{BAEs}
will be called $\on{ normal\ relative\ to}$  $\La$  if 
$(t_1, \dots,  t_m)\in \La^m/S_m$ and $(z_1 , \dots,  z_{2m})
\in  \La^{2m}/S_{2m}$.

\end{defn}

\begin{lem}
\label{lem eqv}
Given a fundamental parallelogram $\La$, then every equivalence class of solutions of the Bethe ansatz equations \Ref{BAEs}
has a unique normal solution.
\qed

\end{lem}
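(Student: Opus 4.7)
The plan is to treat existence and uniqueness separately, using the explicit form of the four elementary moves generating the equivalence relation in Lemma \ref{lem tr}. Each such move shifts exactly one coordinate $t_j$ or $z_a$ by an element of $\Z+\tau\Z$, while modifying $\mu$ in a prescribed way ($\mu\mapsto\mu\mp 2$ for a $\tau$-shift of $t_j$, $\mu\mapsto\mu\pm 1$ for a $\tau$-shift of $z_a$, and no change for integer shifts).

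For existence, I would start with any representative
\[
(\mu_0,t_{0,1},\dots,t_{0,m},z_{0,1},\dots,z_{0,2m})
\]
of the class. Each $t_{0,j}$ has a unique decomposition $t_{0,j}=t_j+k_j+l_j\tau$ with $t_j\in\La$ and $(k_j,l_j)\in\Z^2$, and similarly $z_{0,a}=z_a+p_a+q_a\tau$ with $z_a\in\La$. Successively applying the relevant moves of Lemma \ref{lem tr}(i)--(ii) brings each coordinate into $\La$ and produces a normal solution in the same equivalence class; the bookkeeping gives
\[
\mu=\mu_0+2\sum_{j=1}^m l_j-\sum_{a=1}^{2m} q_a.
\]

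For uniqueness, suppose $(\mu,t,z)$ and $(\mu',t',z')$ are two normal representatives of the same class. The multisets of residues of the $t_j$ and of the $z_a$ in $\C/(\Z+\tau\Z)$ are invariants of the equivalence, and $\La$ contains exactly one representative of each residue class, so the unordered tuples of $t$-coordinates and $z$-coordinates of the two solutions must coincide in $\La^m/S_m$ and $\La^{2m}/S_{2m}$. It then remains to see that $\mu=\mu'$: in any sequence of elementary moves taking one solution to the other, the net $\tau$-component of the shift applied to each $t_j$ and each $z_a$ vanishes (since both endpoints lie in $\La$ and agree), and the total change in $\mu$ is $2\sum_j\Delta l_j-\sum_a\Delta q_a$, which therefore also vanishes.

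No real obstacle is expected; the whole argument is bookkeeping of how $\mu$ transforms alongside the lattice shifts. The only slightly delicate point to verify is that the net change in $\mu$ is determined purely by the net lattice displacements of the individual coordinates, independent of the order in which the elementary moves are applied, but this is immediate from the additive contributions listed in Lemma \ref{lem tr}.
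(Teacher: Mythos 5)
Your argument is correct. The paper in fact offers no proof at all -- Lemma \ref{lem eqv} is stated with a \qed\ as an immediate consequence of Lemma \ref{lem tr} and the fact that the (half-open) parallelogram $\La$ of Section \ref{sec lab} contains exactly one representative of each $\Z+\tau\Z$-orbit -- and your bookkeeping is precisely the expected justification: existence by finitely many elementary moves with the stated net change $\mu=\mu_0+2\sum_j l_j-\sum_a q_a$, and uniqueness because the residues mod $\Z+\tau\Z$ are invariants, so the $t$- and $z$-tuples of two normal representatives coincide, forcing every tracked coordinate's net lattice displacement (hence the net change in $\mu$) to vanish. The only point worth spelling out is the tracking of unordered coordinates through a chain of moves: each strand ends at a point of $\La$ congruent to its starting point, and uniqueness of representatives in $\La$ forces the net shift of each individual coordinate to be zero (for the $t_j$ this also follows from their distinctness modulo the lattice, which holds for any solution of \Ref{BAEs}); your parenthetical ``both endpoints lie in $\La$'' is exactly this observation.
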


\section{Bethe eigenfunctions for $(\C^2)^{\ox n}[0]$}
\label{sec BAf}

Notice that by Corollary \ref{cor Bet alg conjug} the dynamical elliptic Bethe algebra
$\B^V(z_1,\dots,z_{2m},\tau)$  does not change up to conjugation if the numbers
$z_1,\dots,z_{2m}$ are shifted by elements of the lattice $\Z+\tau\Z$.
We choose a fundamental parallelogram $\La\subset \C$ of the lattiuce $\Z+\tau\Z$
and from now on 
 {\it fix  $z_1,\dots, z_{2m} $ 
to be distinct points of the open parallelogram} $\La\rq{}\subset \La$.

\subsection{Formula for eigenfunctions}

Let $v_1, v_2$ be the standard basis of the $\slt$-module $\C^2$,
 \bea
&&
 (e_{11}-e_{22})v_1=v_1, 
 \quad   e_{21} v_1 = v_2
 \quad
 e_{12}v_1=0,
\\
&&
(e_{11}-e_{22})v_2=-v_2,
\quad
 e_{21} v_2 = 0,
 \quad
 e_{12} v_2 = v_1.
 \eea
    A basis $(v_I)_I$ of $V=(\C^2)^{\ox n}$ is labeled by subsets
 $I\subset \{1,\dots,2m\}$, where
 \bea
 v_I = v_{j_1}\ox \dots \ox v_{j_{2m}},
 \eea
 with $j_i = 2$ if $i\in I$ and $j_i=1$ if $i\in\bar I$. The 
vectors $(v_I)_{|I|=m}$ form a 
 basis of the zero weight subspace
 $V[0]$.

For $(\mu, t_1,\dots,t_m, z_1,\dots,z_{2m})\in \C^{3m+1}$
define a $V[0]$-valued function $\Psi(\la_{12}, \mu, t,  z,\tau)$
by the formula:
\bean
\label{eq eig}
\Psi(\la_{12}, \mu, t,  z,\tau) = e^{\pi i \mu \la_{12}} {\sum}_{|I|=m} W_I(\la_{12}, t, z,\tau)\, v_I,
\eean
where for $I=\{i_1<\dots <i_m\}\subset \{1,\dots,2m\}$ we define
\bean
\label{prod}
W_I(\la_{12},t,z) = \on{Sym}_{t_1,\dots,t_m} \Big( {\prod}_{j=1}^m \si(t_j-z_{i_j},-\la_{12},\tau)\Big)
\eean
and
$
 \on{Sym}_{t_1,\dots,t_m} \left(F(t_1,\dots,t_m)\right) =
\sum_{\si\in S_m} F(t_{\si(1)},\dots,t_{\si(m)}).
$

\smallskip
We have the following periodicity property:
\bean
\label{Flocke}
\Psi(\la_{12}+1, \mu, t,  z,\tau)\, = \,e^{\pi i \mu} \,\Psi(\la_{12}, \mu, t,  z,\tau).
\eean

\bigskip
In the considered case of  $V=(\C^2)^{\ox n}$, Theorem \ref{FV thm} takes the following form.

\begin{thm} [\cite{FV1}]
\label{thm eig}

Let $t$ be a solution of the Bethe ansatz equations \Ref{BAEs}. Then the $V[0]$-valued 
function
$\Psi(\la_{12}, \mu, t,  z,\tau)$ of $\la_{12}$, defined in \Ref{eq eig}, is such that
\bea
H_a(z,\tau)\, \Psi(\la_{12}, \mu, t,  z,\tau) 
&=&
 \frac {\der \Phi}{\der z_a}(\mu, t, z,\tau) \,\Psi(\la_{12}, \mu, t,  z,\tau), \qquad a=1,\dots,2m,
\\
H_0(z,\tau) \Psi(\la_{12}, \mu, t,  z,\tau) 
&=&
 \frac {\der \Phi}{\der \tau}(\mu, t, z,\tau)\, \Psi(\la_{12}, \mu, t,  z,\tau).
\eea

\end{thm}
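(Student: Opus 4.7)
The plan is to derive Theorem \ref{thm eig} as a specialization of the general construction of Bethe eigenfunctions of the KZB operators from \cite{FV1}, already recorded as Theorem \ref{FV thm}, to the case $V = (\C^2)^{\ox 2m}$ in which $m_s = 1$ for all $s = 1,\dots,2m$.

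First, I would recall the general form of the Bethe eigenfunction $\Psi(\la_{12},\mu,t,z,\tau)$ from \cite[Section 7]{FV1}. It is obtained by applying the stationary phase method, as $\kappa \to 0$, to a horizontal section of the associated KZB local system, which takes the form of the scalar factor $e^{\pi i \mu \la_{12}}\Phi(\mu,t,z,\tau)$ appearing in \Ref{hors} multiplied by a hypergeometric weight function $\omega$ with values in $V[0]$, built from $\sigma$-type factors and the action of $e_{21}$ on the tensor factors. The leading stationary-phase term at a critical point of $\Phi$ produces a $V[0]$-valued eigenfunction of the KZB operators with eigenvalues equal to the partial derivatives of $\Phi$ evaluated at the critical point.

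Second, I would specialize to $m_s = 1$ for each $s$. In that case each tensor factor $V_{m_s}$ equals $\C^2$ and admits only one lowering step $v_1\mapsto v_2 = e_{21}v_1$, so in the hypergeometric weight function each variable $t_j$ must be attached to a distinct tensor factor. The resulting attachments are indexed by ordered injections $(t_1,\dots,t_m)\mapsto (z_{i_1},\dots,z_{i_m})$ associated to subsets $I=\{i_1<\dots<i_m\}\subset\{1,\dots,2m\}$. Symmetrization of the product ${\prod}_{j=1}^m \sigma(t_j - z_{i_j},-\la_{12},\tau)$ over the $t_j$'s produces exactly the coefficient $W_I$ in \Ref{prod}, and summation over $I$ yields the formula \Ref{eq eig}; the prefactor $e^{\pi i \mu \la_{12}}$ matches the universal scalar part of \Ref{hors}, and this is also consistent with the quasi-periodicity \Ref{Flocke}. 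Substituting into Theorem \ref{FV thm} then gives the eigenvalue equations with the specialized master function \Ref{mast s}.

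The main obstacle, if one instead wished to give a self-contained proof without appealing to \cite{FV1}, would be to verify the eigenvalue identities directly: apply the explicit formulas for $H_a$ and $H_0$ from Section~3.3 to the symmetrized sum \Ref{eq eig}, expand using the identities of Lemmas~\ref{lem id1} and \ref{lem id2} together with the transformation rules of $\sigma$ and $\rho$ collected in Section \ref{sec col form}, and then use the Bethe ansatz equations \Ref{BAEs} to cancel the unwanted terms. The bookkeeping for the cancellation—tracking which terms come paired with $\sigma(t_j-z_s,-\la_{12})$ at $s=i_j$ versus $s\ne i_j$—is precisely the content of the hypergeometric solution analysis in \cite{FV1}, so I would only outline this verification and otherwise quote Theorem \ref{FV thm}.
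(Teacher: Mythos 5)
Your proposal matches the paper's treatment: the paper gives no independent proof of Theorem \ref{thm eig}, presenting it simply as the specialization of Theorem \ref{FV thm} (cited to \cite{FV1}, whose proof sketch likewise invokes the horizontal section \Ref{hors} and the stationary phase method from \cite{RV}) to the case $m_s=1$, with \Ref{eq eig}--\Ref{prod} being the specialized form of the \cite{FV1} eigenfunction formula. Your additional remarks on how the $m_s=1$ condition forces the attachment of the $t_j$'s to distinct tensor factors, and on what a self-contained verification would require, are consistent elaborations but not needed beyond the citation the paper itself relies on.
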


\subsection{ Eigenfunctions of equivalent solutions}

In Section \ref{sec Eqv} we introduced the notion of equivalent solutions of the Bethe ansatz equations \Ref{BAEs}

\begin{thm} 
\label{lem equ}

Let $(\mu,t,z,\tau)$ and $(\mu',t',z,\tau)$ be two equivalent solutions of the Bethe ansatz equations \Ref{BAEs} with the same $z$, then $ \Psi(\la_{12}, \mu, t,  z,\tau)=
 \Psi(\la_{12}, \mu', t',  z,\tau)$.

\end{thm}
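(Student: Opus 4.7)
The plan is to reduce the claim to checking invariance of $\Psi(\la_{12},\mu,t,z,\tau)$ under a single elementary transformation of Lemma \ref{lem tr}, and then verify this by direct use of the quasi-periodicity properties of $\sigma$ collected in Section \ref{sec col form}. Since $z$ is held fixed, the only transformations of Lemma \ref{lem tr} relevant to the equivalence relation are the two types
\[
(\mu,\dots,t_j,\dots)\;\longmapsto\;(\mu\mp 2,\dots,t_j\pm\tau,\dots),
\qquad
(\mu,\dots,t_j,\dots)\;\longmapsto\;(\mu,\dots,t_j\pm 1,\dots).
\]
Any equivalence between $(\mu,t,z,\tau)$ and $(\mu',t',z,\tau)$ is a finite composition of these, so by induction it is enough to treat one step of each kind.

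For the shift $t_j\mapsto t_j+1$ (with $\mu$ unchanged): $\theta(x+1)=-\theta(x)$ gives
\[
\sigma(x+1,w,\tau)=\frac{\theta(x+1+w,\tau)}{\theta(x+1,\tau)\theta(w,\tau)}=\sigma(x,w,\tau).
\]
Hence every factor $\sigma(t_{\pi(j)}-z_{i_j},-\la_{12},\tau)$ in the symmetrized product \Ref{prod} is invariant, so $W_I$ and therefore $\Psi$ are unchanged. This handles all type (ii) steps.

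For the shift $(\mu,t_j)\mapsto(\mu-2,t_j+\tau)$ (the other sign is analogous): by the formulas of Section \ref{sec col form},
\[
\sigma(x+\tau,w,\tau)=e^{-2\pi i w}\,\sigma(x,w,\tau),
\]
so with $w=-\la_{12}$ the shift $t_j\mapsto t_j+\tau$ multiplies $\sigma(t_j-z,-\la_{12},\tau)$ by $e^{2\pi i\la_{12}}$. In each summand of
\[
W_I(\la_{12},t,z,\tau)=\sum_{\pi\in S_m}\prod_{k=1}^m \sigma\bigl(t_{\pi(k)}-z_{i_k},-\la_{12},\tau\bigr),
\]
the variable $t_j$ appears in exactly one factor (the one with $k=\pi^{-1}(j)$), so the whole summand, and hence $W_I$ itself, is multiplied by $e^{2\pi i \la_{12}}$. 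Combining this with the prefactor change $e^{\pi i(\mu-2)\la_{12}}=e^{\pi i\mu\la_{12}}\,e^{-2\pi i \la_{12}}$ in \Ref{eq eig}, the two factors cancel and $\Psi(\la_{12},\mu-2,t_1,\dots,t_j+\tau,\dots,t_m,z,\tau)=\Psi(\la_{12},\mu,t,z,\tau)$. This handles all type (i) steps.

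There is no real obstacle here: the only thing to watch is that the multiplier $e^{2\pi i\la_{12}}$ produced by the shift of $t_j$ matches exactly the change $\mu\to\mu-2$ in the overall factor $e^{\pi i\mu\la_{12}}$, which is precisely why the equivalence relation in Section \ref{sec Eqv} is set up with this specific pairing of $\mu$ and $t_j$. Combining the two cases by induction on the number of elementary steps yields the theorem.
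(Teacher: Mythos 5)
Your proof is correct and follows essentially the same route as the paper: reduce to the elementary moves of Lemma \ref{lem tr} that affect only $(\mu,t)$ and check that the multiplier $e^{2\pi i\la_{12}}$ picked up by $\si(t_j+\tau-z_{i},-\la_{12},\tau)$ cancels against the change $e^{\pi i(\mu-2)\la_{12}}=e^{\pi i\mu\la_{12}}e^{-2\pi i\la_{12}}$ of the prefactor in \Ref{eq eig}. You are in fact slightly more explicit than the paper (which only verifies the $(\mu-2,\,t_k+\tau)$ move), since you also record the trivial invariance under $t_j\mapsto t_j\pm 1$ and note why the $z$-moves play no role when the endpoints share the same $z$.
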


\begin{proof} Let $(\mu,t,z,\tau)$ be a solution. Consider an equivalent solution
$(\mu-2, t_1,\dots,t_k+\tau,\dots, t_m, z,\tau)$, for some $k$, $1\leq k\leq m$. We  show that the corresponding 
eigenfunctions are equal. Indeed the common factor $e^{\pi i \mu\la_{12}}$ in \Ref{eq eig}
is transformed into $e^{\pi i \mu\la_{12}}e^{-2\pi i \la_{12}}$, while
the factor  $\si(t_k-z_i, -\la_{12},\tau)$ in the product in \Ref{prod} is transformed into
$\si(t_k+\tau-z_i, -\la_{12},\tau)=e^{2\pi i \la_{12}}\si(t_k-z_i, -\la_{12},\tau)$. The
two new factors are canceled and the corresponding eigenfunctions are equal.  This proves the theorem.
\end{proof}

\subsection{Eigenfunctions and two involutions}

Let $\Psi(\la_{12})$ be a $V[0]$-valued eigenfunction of the dynamical elliptic Bethe algebra
$\B^V(z_1,\dots,z_{2m})$  and with the fundamental
differential operator $\D_\Psi$, see \Ref{fund Psi}.
 Recall the Weyl group of $\slt$, $W=\{\on{id}, s\}$. 
We have
\bea
\D_{s(\Psi)} = \D_\Psi,
\eea 
since $\B^V(z_1,\dots,z_{2m})$ is Weyl group invariant by Lemma \ref{weyl inv}.
On the other hand, let $(\mu,t,z)\in \on{Sol}_m$ be an equivalence class of
solutions of the Bethe ansatz equation \Ref{BAEs}, see \Ref{Sol_m}.
Let $\Psi(\la_{12},\mu,t,z)$ be the associated eigenfunction of the dynamical
elliptic Bethe algebra
$\B^V(z_1,\dots,z_{2m})$
with the fundamental differential operator  $\D_{(\mu,t,z)}$, 
see  \Ref{fund t}. Both $\Psi(\la_{12},\mu,t,z)$ and  $\D_{(\mu,t,z)}$ 
are well-defined, see Theorem  \ref{lem equ}.

 Recall the analytic involution
\bea
\beta^{-1} \circ\iota_{\on{an}}\circ\beta\ :\ \on{Sol}_m\ \to\ \on{Sol}_m,
\eea
defined in \Ref{GenS}. Let $(-\mu, s, z)$ be the image of 
$(\mu,t,z)$ under this involution, see Section \ref{sec aiba}.
Let $\Psi(\la_{12},-\mu,s,z)$ be the associated eigenfunction.
 Then
\bea
\D_{(-\mu,s,z)} = \D_{(\mu,t,z)},
\eea
by Theorem \ref{lem fodp}.  Now the following three eigenfunctions
have the same fundamental differential  operator: $\Psi(\la_{12},\mu,t,z)$,
$s(\Psi(\la_{12},\mu,t,z))$, $\Psi(\la_{12},-\mu,s,z)$.

\begin{thm}
\label{thm inv}
We have
\bea
s(\Psi(\la_{12},\mu,t,z)) = \on{const}\ \Psi(\la_{12},-\mu,s,z).
\eea
In other words, the Weyl involution coincides with the analytic involution.

\end{thm}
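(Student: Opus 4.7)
The strategy is to exhibit both $s(\Psi(\la_{12},\mu,t,z))$ and $\Psi(\la_{12},-\mu,s,z)$ as eigenfunctions of the dynamical elliptic Bethe algebra $\B^V(z_1,\dots,z_{2m},\tau)$ sharing all eigenvalues and the same Floquet multiplier under $\la_{12}\mapsto\la_{12}+1$, and then invoke a uniqueness statement obtained from the asymptotic analysis in the limit $\on{Im}\mu\to\infty$.

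First, I would check that the two functions have equal eigenvalues for every element of $\B^V$. The Weyl group invariance of the RST-operator (Lemma \ref{weyl inv}) shows that $s(\Psi(\la_{12},\mu,t,z))$ has the same fundamental differential operator as $\Psi(\la_{12},\mu,t,z)$, namely $\D_{(\mu,t,z)}$. On the other hand, by Theorem \ref{lem fodp} the pair $(f,g)$ and its image $(g,f)$ under the analytic involution have equal fundamental differential operators, and by Theorem \ref{lem fo=} the solution $(-\mu,s,z)$ produces $\D_{(-\mu,s,z)}=\D_{(\mu,t,z)}$. Since for $\slt$ the Bethe algebra is generated by the KZB operators (Corollary \ref{cor fingen}), which via Theorem \ref{prop:S_2} are recoverable from $S_2(x)$, equality of the fundamental differential operators translates into equality of all eigenvalues.

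Second, I would verify that both functions satisfy the same Floquet relation under $\la_{12}\mapsto\la_{12}+1$. The eigenfunction $\Psi(\la_{12},-\mu,s,z)$ picks up the factor $e^{-\pi i\mu}$ by \Ref{Flocke}. Since the Weyl reflection acts on the Cartan by $\la_{12}\mapsto-\la_{12}$, a direct computation gives $s(\Psi)(\la_{12}+1,\mu,t,z) = s\bigl(\Psi(-\la_{12}-1,\mu,t,z)\bigr) = e^{-\pi i\mu}\,s(\Psi)(\la_{12},\mu,t,z)$, so the Floquet multipliers agree. Combined with Lemma \ref{lem pfu}(ii), this identifies which of the two one-dimensional Floquet eigenlines in $\ker\D_{(\mu,t,z)}$ is being selected by each side, and confirms that both sides select the \emph{same} eigenline.

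The key remaining step, and the main obstacle, is to upgrade equality of eigenvalues plus matching Floquet multiplier into genuine proportionality. Here I would use the asymptotic results of Theorem \ref{thm deg}, Theorem \ref{cor de} and Lemma \ref{lem 1.23}: in the regime $\on{Im}\mu\to\infty$ the $\binom{2m}{m}$ Bethe eigenfunctions become asymptotic to the standard weight basis of $(\C^2)^{\otimes 2m}[0]$ and their eigenvalue tuples are pairwise distinct, whence the joint eigenspace of $\B^V$ with prescribed eigenvalues and prescribed Floquet multiplier is one-dimensional. Therefore the two eigenfunctions are proportional in this asymptotic regime. To pass to an arbitrary $(\mu,t,z)\in\on{Sol}_m$, one exhibits the ratio as a meromorphic function on $\on{Sol}_m$, determined on a nonempty open subset by the asymptotic argument, and extends it by analytic continuation. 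The delicate point is to coordinate the Weyl action (which interchanges the two Floquet eigenlines of $\D_{(\mu,t,z)}$) with the swap $(f,g)\mapsto(g,f)$ performed by the analytic involution and to check that both operations pick out the same pair of eigenlines, so that the proportionality constant is nonzero and meromorphic throughout the connected component of $\on{Sol}_m$ in which $(\mu,t,z)$ lies.
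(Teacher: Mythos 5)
Your overall strategy (work in the limit $\mu\to\infty$, identify the two functions there, then spread the identity over $\on{Sol}_m$ by irreducibility/analytic continuation) is the same as the paper's, and your preliminary steps — equality of the fundamental differential operators via Lemma \ref{weyl inv}, Theorems \ref{lem fo=} and \ref{lem fodp}, and the matching Floquet multiplier $e^{-\pi i\mu}$ under $\la_{12}\mapsto\la_{12}+1$ — are correct. But the step you yourself flag as the crux is where the argument breaks: you claim that because the $\binom{2m}{m}$ Bethe eigenfunctions have pairwise distinct eigenvalue tuples for large $\mu$, ``the joint eigenspace of $\B^V$ with prescribed eigenvalues and prescribed Floquet multiplier is one-dimensional.'' That is a non sequitur. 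Distinctness of eigenvalues only separates the finitely many Bethe eigenfunctions from one another; it says nothing about the dimension of a joint eigenspace inside the infinite-dimensional space $\Fun$ of $V[0]$-valued meromorphic functions of $\la_{12}$. Worse, the very function you need to control, $s(\Psi(\la_{12},\mu,t_I(\mu),z))$, is not a priori a Bethe eigenfunction or a combination of them, so a statement about the Bethe family cannot by itself force it into the line spanned by $\Psi(\la_{12},-\mu,t_{\bar I}(-\mu),z)$. What you are implicitly invoking is a completeness/simplicity statement for the Bethe algebra spectrum, which is nowhere established in the paper and is essentially as hard as the theorem.

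The paper closes exactly this gap by a different, elementary device: the uniqueness Lemma \ref{lem fass} for \emph{formal} $\mu^{-1}$-expansions. Both $s(\Psi_I)$ and $\Psi_{\bar I}$ (the normalized eigenfunctions of Lemma \ref{lem asPsi}) admit convergent expansions with the same exponential prefactor $e^{-\pi i\mu\la_{12}}$ and the same expanded eigenvalue series (Lemma \ref{lem aeig}); the leading-order equations \Ref{in0} force the top coefficient to be proportional to $v_I$, and the recursion \Ref{ink} determines each subsequent coefficient up to adding a multiple of $v_I$, so the expansion is unique up to a scalar series $\sum_k c_k\mu^{-k}$. Since the expansions converge to the actual functions, $s(\Psi_I)=\on{const}\ \Psi_{\bar I}$ for large $\mu$, and then irreducibility of $\on{Sol}_m$ finishes the proof. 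To repair your proposal you would either have to prove your one-dimensionality claim (not just assert it) or replace it by a coefficient-by-coefficient comparison of asymptotic expansions in the spirit of Lemma \ref{lem fass}.
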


The theorem is proved in Section \ref{end}.

\begin{thm}
\label{thm BdE} 

 Assume that $ (\mu,t,z), (\mu\rq{},t\rq{},z) \in \on{Sol}_m$ and
the two Bethe eigenfunctions  $\Psi(\la_{12},\mu,t,z)$ and
 $\Psi(\la_{12},\mu\rq{},t\rq{},z)$ have the same eigenvalues for every element of the dynamical
elliptic Bethe algebra $\B^V(z_1,\dots,z_{2m})$. Then
either $ (\mu,t,z) = (\mu\rq{},t\rq{},z) $ or $ (\mu,t,z)$ is the image of 
$ (\mu\rq{},t\rq{},z)$ under the analytic involution  $\beta^{-1} \circ\iota_{\on{an}}\circ\beta$.

\end{thm}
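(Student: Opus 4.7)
The plan is to reduce the statement to the analogous dichotomy for fundamental differential operators of pairs of theta-polynomials, established in Theorem \ref{lem fodp}, by transporting the hypothesis along the bijection $\beta$ of Theorem \ref{lem:sum}.

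First I would extract the fundamental differential operator from the spectral data. Since $S_2(x) \in \B^V(z_1,\dots,z_{2m},\tau)$ for every $x \in \C$, the hypothesis forces the two eigenfunctions $\Psi(\la_{12},\mu,t,z)$ and $\Psi(\la_{12},\mu\rq{},t\rq{},z)$ to share the same $S_2(x)$-eigenvalue $B_2(x)$ for all $x$. In view of definition \Ref{fund t}, this is the same as saying
\[
\D_{(\mu,t,z)} \,=\, \D_{(\mu\rq{},t\rq{},z)}
\]
as scalar second-order differential operators in $x$.

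Next, set $(f,g) = \beta(\mu,t,z)$ and $(f\rq{},g\rq{}) = \beta(\mu\rq{},t\rq{},z)$; both lie in $\on{Pairs}_m$ by the very definition \Ref{Sol_m} of $\on{Sol}_m$. Theorem \ref{lem fo=} identifies $\D_{(\mu,t,z)} = \D_{(f,g)}$ and $\D_{(\mu\rq{},t\rq{},z)} = \D_{(f\rq{},g\rq{})}$, so the fundamental differential operators of the two points of $\on{Pairs}_m$ also coincide. I then invoke Theorem \ref{lem fodp}, which says that two points of $\on{Pairs}_m$ with the same fundamental differential operator are either equal or swapped by the analytic involution $\iota_{\on{an}}$. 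Thus either $(f,g) = (f\rq{},g\rq{})$ or $\iota_{\on{an}}((f\rq{},g\rq{})) = (f,g)$ in $(P_{m,1}\times(P_m-P_{m,1}))/\Z$.

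Finally I apply the inverse bijection $\beta^{-1}$ of Theorem \ref{lem:sum}. In the first case this yields $(\mu,t,z) = (\mu\rq{},t\rq{},z)$ in $\on{Sol}_m$, and in the second case it yields $(\mu,t,z) = \beta^{-1}\!\circ\iota_{\on{an}}\circ\beta\,(\mu\rq{},t\rq{},z)$, which by the discussion of Section \ref{sec aiba} is precisely the action of the analytic involution on $\on{Sol}_m$. I do not foresee a substantive obstacle: the argument is essentially a transport of Theorem \ref{lem fodp} along $\beta$. The only point that needs a line of care is that the hypothesis involves the full Bethe algebra, whereas we only use the coefficient $S_2(x)$; this is harmless since equality on this one family already suffices to equate the fundamental differential operators.
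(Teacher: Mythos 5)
Your proposal is correct and follows essentially the same route as the paper: equality of eigenvalues (in particular of $S_2(x)$ for all $x$) gives equality of fundamental differential operators, which via Theorem \ref{lem fo=} and the bijection $\beta$ reduces the claim to Theorem \ref{lem fodp}. The paper's own proof is just a terser version of exactly this argument.
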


Theorem \ref{thm BdE}  says that the dynamical elliptic Bethe algebra separates the Weyl group orbits
of the Bethe eigenfunctions.

\begin{proof} The assumptions of the theorem mean that $\Psi(\la_{12},\mu,t,z)$ and
 $\Psi(\la_{12},\mu\rq{},t\rq{},z)$ have the same fundamental differential operators.
Now the  theorem follows from Theorem \ref{lem fodp}.
\end{proof}

\section{Elliptic Wronski map}
\label{sec elWm}

\subsection{Wronski map}
Define the {\it elliptic Wronski map}
\bean
\label{WR}
 P_m\times P_m\to P_{2m},
\qquad
(f,g) \mapsto \Wr(f(x),g(x)).
\eean
The group $\C$ acts on $P_m\times P_m$ 
by the operators $L_\nu\times L_\nu$, $\nu\in\C$.
The maps $L_{2\nu}$, $\nu\in\C$, define an action of $\C$ on $P_{2m}$.
The Wronski map commutes with the actions of $\C$ on $P_m\times P_m$ and $P_{2m}$: 
\bean
\label{eqW}
\Wr \circ \,(L_\nu\times L_\nu) =L_{2\nu}\circ \Wr,
\eean
see formula \Ref{h^2},

We are interested in the {elliptic  Wronski map}
\bean
\label{Wr}
\Wr : P_{m,1}\times (P_m-P_{m,1}) \ \to \ P_{2m}-P_{2m,1}.
\eean
 This is a holomorphic map between  $2m+1$-dimensional complex manifolds.

Notice that the map in \Ref{Wr} induces  a map 
\bean
\label{Wr Z}
 (P_{m,1}\times (P_m-P_{m,1}))/\Z \ \to \ (P_{2m}-P_{2m,1})/\Z,
\eean
see \Ref{eqW}.

\subsection{Wronski map has nonzero Jacobian} 

\begin{lem}
\label{lem Jac}
The Jacobian of the  Wronski map $\Wr : P_{m,1}\times (P_m-P_{m,1}) \to  P_{2m}-P_{2m,1}$
is nonzero at generic points of $P_{m,1}\times (P_m-P_{m,1})$.
\end{lem}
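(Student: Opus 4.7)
First I note that both source and target of $\Wr$ are connected complex manifolds of the same dimension: $\dim P_{m,1}=m$, $\dim(P_m-P_{m,1})=m+1$, and $\dim(P_{2m}-P_{2m,1})=2m+1$. The Jacobian of the holomorphic map $\Wr$ is a holomorphic function on the source, so it vanishes identically or is nonzero on a dense Zariski open subset. I will rule out the former by exhibiting a nonempty open subset on which the Jacobian is nonzero, via an asymptotic analysis in the regime $|\mu|\to\infty$.

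Parametrize a neighborhood of a generic point of the source by
\[
f(x)={\prod}_{j=1}^m\theta(x-t_j,\tau),\qquad g(x)=e^{-2\pi i\mu x}{\prod}_{k=1}^m\theta(x-s_k,\tau),
\]
so that $(t_1,\dots,t_m;\mu;s_1,\dots,s_m)$ are $2m+1$ local coordinates on $P_{m,1}\times(P_m-P_{m,1})$. A direct calculation gives
\[
\Wr(f,g)(x)=e^{-2\pi i\mu x}\Bigl[-2\pi i\mu\,y(x)\tilde y(x)+\Wr(y,\tilde y)(x)\Bigr],
\]
where $y(x)=\prod_j\theta(x-t_j)$ and $\tilde y(x)=\prod_k\theta(x-s_k)$. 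Since the first multiplier $e^{-2\pi i\mu}$ of $\Wr(f,g)$ is the first multiplier of $g$, the $\mu$-parameter is preserved by $\Wr$. Normalizing the target representative as $h(x)=-2\pi i\mu\cdot e^{-2\pi i\mu x}\prod_{a=1}^{2m}\theta(x-z_a)$, the Wronski map in these coordinates becomes
\[
(t,\mu,s)\ \longmapsto\ \bigl(\mu,\{z_a\}_{a=1}^{2m}\bigr),\qquad {\prod}_a\theta(x-z_a)=y(x)\tilde y(x)-\frac{1}{2\pi i\mu}\Wr(y,\tilde y)(x).
\]

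As $|\mu|\to\infty$, the right-hand side tends to $y(x)\tilde y(x)$, so the leading-order map sends the multiset $\{t_1,\dots,t_m,s_1,\dots,s_m\}$ to the multiset $\{z_1,\dots,z_{2m}\}$ of zeros of the resulting product. At a generic configuration where $t_1,\dots,t_m,s_1,\dots,s_m$ are pairwise distinct modulo $\Z+\tau\Z$, the leading-order map from $(\C/(\Z+\tau\Z))^m/S_m\times(\C/(\Z+\tau\Z))^m/S_m$ to $(\C/(\Z+\tau\Z))^{2m}/S_{2m}$ is étale of degree $\binom{2m}{m}$, with nonzero Jacobian in the $(t,s)$-directions; the trivial action on $\mu$ then gives a full $(2m+1)\times(2m+1)$ Jacobian with nonzero determinant. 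By continuity in $\mu$, the Jacobian of $\Wr$ itself is nonzero for all sufficiently large $|\mu|$ at such a generic configuration. Hence the Jacobian is not identically zero, and the lemma follows.

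The main technical obstacle is the asymptotic bookkeeping: one must verify that the $O(1/\mu)$ correction term $\tfrac{1}{2\pi i\mu}\Wr(y,\tilde y)$ does not conspire with the $\C^\times$ projective scaling on $P_{2m}$ to spoil the leading-order étaleness, and that the $\mu$-direction is genuinely transverse to the $(t,s)$-directions in the source under the chosen local trivialization. Once the local affine coordinates on source and target are matched (the $\mu$-parameter on each side paired and the projective scalings divided out), this reduces to a straightforward continuity argument from the étale leading-order map.
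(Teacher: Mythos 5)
Your proposal is correct and follows essentially the same route as the paper: you use the coordinates $(t,\mu,s)$ and $(\mu,z)$, observe that the Wronskian equals $-2\pi i\mu\,e^{-2\pi i\mu x}\bigl(y\tilde y-\tfrac1{2\pi i\mu}\Wr(y,\tilde y)\bigr)$, and for fixed large $|\mu|$ treat the map as a small deformation of the multiplication map $(t,s)\mapsto(t,s)$, whose Jacobian is nonzero at configurations with pairwise distinct zeros. The block-triangular observation that the $\mu$-parameter is preserved, so that the full Jacobian reduces to the fixed-$\mu$ block, matches the paper's reduction as well.
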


\begin{proof}
Let $(f,g)$ be a generic point, $f(x) = \prod_{j=1}^m\theta(x-t_j,\tau)$,
$g(x) = e^{-2\pi i \mu x}\prod_{j=1}^m\theta(x-s_j,\tau)$
with some distinct $t_1,\dots,t_m$, $s_1,\dots,s_m$ modulo the lattice $\Z+\tau\Z$. 
The parameters $\mu$, $t_1,\dots,t_m$, $s_1,\dots,s_m$ are local coordinates on 
$P_{m,1}\times (P_m-P_{m,1})$. 

Let  $e^{-2\pi i \mu x}\prod_{a=1}^{2m}\theta(x-z_a,\tau)\in P_{2m}-P_{2m,1}$.
The  parameters $\mu$, $z_1,\dots,z_{2m}$ are local coordinates on 
$P_{2m}-P_{2m,1}$.

In order to prove the lemma it is enough to prove that for  fixed generic $\mu$, the map
\bean
\label{fixed mu}
&&
\big({\prod}_{j=1}^m\theta(x-t_j,\tau),e^{-2\pi i \mu x}{\prod}_{j=1}^m\theta(x-s_j,\tau)\big)
\\
\notag
&&
\phantom{aaaaaaaaaa}
\mapsto \ \
\Wr(f,g)=e^{-2\pi i \mu x}{\prod}_{a=1}^{2m}\theta(x-z_a,\tau),
\eean
sending 
$(t_1,\dots$, $t_m, s_1,\dots,s_m)$ to   $(z_1,\dots,z_{2m})$ has nonzero Jacobian at
 a generic point $(t_1,\dots,t_m, s_1,\dots,s_m)$.

First consider the multiplication map
\bean
\label{fixe mu}
\phantom{aaaaa}
\big(\prod_{j=1}^m\theta(x-t_j,\tau),e^{-2\pi i \mu x}\prod_{j=1}^m\theta(x-s_j,\tau)\big)\mapsto
e^{-2\pi i \mu x}\prod_{j=1}^{m}\theta(x-t_j,\tau)\prod_{j=1}^m\theta(x-s_j,\tau)
\eean
sending 
$(t_1,\dots,t_m, s_1,\dots,s_m)$ to   $(z_1,\dots,z_{2m})=(t_1,\dots,t_m, s_1,\dots,s_m)$ 
which is a local isomorphism. For large $\mu$ the map in
\Ref{fixed mu} is a small deformation
of the map in \Ref{fixe mu}.
Indeed
\bea
&&
\Wr\big(\prod_{j=1}^m\theta(x-t_j,\tau),e^{-2\pi i \mu x}\prod_{j=1}^m\theta(x-s_j,\tau)\big)
\\
&&
=
-2\pi i \mu \,e^{-2\pi i \mu x}\Big(\prod_{j=1}^{m}\theta(x-t_j)\prod_{j=1}^m\theta(x-s_j)
- \frac 1{2\pi i \mu} \Wr(\prod_{k=1}^{j}\theta(x-t_j),\prod_{j=1}^m\theta(x-s_j))\Big).
\eea
Hence for large $\mu$ the Jacobian of the map in \Ref{fixed mu} is not identically equal to zero.
\end{proof}

Another proof of Lemma \ref{lem Jac} can extracted from the proof of Theorem
\ref{thm deg} below.

\subsection{Labels}
\label{sec lab}

Choose $w\in\C$. 
Let $\Lambda =\La_w\subset \C$ be the
fundamental parallelogram  with vertices $w,w+1,w+\tau,w+1+\tau$ and with boundary
intervals $[w,w+1), [w,w+\tau)$ included and  boundary intervals
$[w+1,w+1+\tau], [w+\tau,w+1+\tau]$ excluded. 
Denote by $\bar\La$ the closure of $\La$, by $\der\bar \La$ the boundary of
$\bar\La$ and by $\La\rq{}$ the open parallelogram $\bar\La-\der\bar\La$.
The parallelogram $\Lambda$ is a fundamental domain
for the $\Z+\tau\Z$-action on $\C$.

For any theta-polynomial $h\in P_k$ (considered up to multiplication by a nonzero constant)
there exist unique 
$u=(u_1,\dots, u_k) \in\La^k/S_k$ and $\mu\in\C$ such that
\bea
h(x) = e^{2\pi i \mu x}{\prod}_{j=1}^k \theta(x-u_j,\tau).
\eea
The pair $(u, \mu) \in \La^k/S_k\times \C$ will be called the {\it coordinates of }$h(x)$
relative to the fundamental  parallelogram $\La$.
The number $\mu$ will be called the {\it label} of $h$ relative to
$\La$ and denoted $l(h)$.

\subsection{Labeled preimage}
Recall the Wronski map 
\bea
\Wr : P_{m,1}\times (P_m-P_{m,1}) \ \to \ P_{2m}-P_{2m,1}.
\eea
For $h\in P_{2m}-P_{2m,1}$ and $k\in\Z$ define the {\it labeled preimage} of $h$ with label $k$
as the set
\bean
\label{rW}
\Wr_k^{-1}(h) := \{ (f,g)\in P_{m,1}\times (P_m-P_{m,1}) 
\ |\  \Wr(f,g) = h, \ l(f)=k\}.
\eean
We have
\bea
\Wr^{-1}(h) = \cup_{k\in\Z} \Wr_k^{-1}(h),
\eea
and $\Wr_k^{-1}(h)\cap \Wr_{k\rq{}}^{-1}(h)=\emptyset$ if $k\ne k\rq{}$.

\begin{thm}
\label{thm deg}

Choose a fundamental parallelogram $\La$ and  
$z=(z_1,\dots,z_{2m})\in (\Lambda\rq{})^{2m}/S_{2m}$ with distinct coordinates. 
Then there exists  $N >0$, such that for any 
$h\in P_{2m}-P_{2m,1}$ of the form
\bean
\label{h  mu}
h(x)= e^{-2\pi i\mu x}{\prod}_{a=1}^{2m}\theta(x-z_a,\tau),
\eean
 with $|\on{Im} \mu|>N$,
 and any $k\in\Z$, the set $\Wr^{-1}_k(h)$ consists of
 exactly $\binom{2m}{m}$ points.
Moreover, in that case any  $(f,g)\in \Wr^{-1}_k(h)$ has the form
\bean
\label{nf}
f(x) = e^{2\pi i kx}{\prod}_{j=1}^m\theta(x-t_j,\tau),\qquad
g(x) = e^{-2\pi i(\mu+ k)x}{\prod}_{j=1}^m\theta(x-s_j,\tau),
\eean
where $t=(t_1,\dots,t_m)\in(\La\rq{})^m/S_m$, $s=(s_1,\dots,s_m)\in(\La\rq{})^m/S_m$,
and all the numbers $t_1,\dots,t_m$, $s_1,\dots,s_m$, 
$z_1,\dots,z_{2m}$ are pairwise distinct.

\end{thm}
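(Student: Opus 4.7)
The plan is to reduce to the case $k=0$ by $\C$-equivariance, identify $\Wr_0^{-1}(h)$ with Bethe ansatz solutions in $\La^m/S_m$, exhibit $\binom{2m}{m}$ such solutions via the implicit function theorem, and rule out every other asymptotic configuration by an ODE argument. First I would use the equivariance \Ref{eqW} to produce the bijection $(f,g)\mapsto(L_{-k}f,L_{-k}g)$ between $\Wr_k^{-1}(h)$ and $\Wr_0^{-1}(L_{-2k}h)$; since $L_{-2k}h$ has the form \Ref{h  mu} with $\mu$ replaced by $\mu+2k$, and $k\in\Z$ is real, $|\on{Im}(\mu+2k)|=|\on{Im}\mu|>N$, so a single threshold $N$ works for all labels and one may assume $k=0$. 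For $(f,g)\in\Wr_0^{-1}(h)$, the factor $f\in P_{m,1}$ determines a unique $t\in\La^m/S_m$ with $f=\prod_{j=1}^m\theta(x-t_j,\tau)$, and by Lemma~\ref{lem:res}, Lemma~\ref{p-B}, and Theorem~\ref{BMV}, such a $t$ underlies an element of $\Wr_0^{-1}(h)$ if and only if $(\mu,t,z)$ satisfies the Bethe ansatz equations~\Ref{BAEs}, in which case the companion $g$ is unique.

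For each $m$-subset $I=\{a_1<\cdots<a_m\}\subset\{1,\dots,2m\}$ I would make the rescaling ansatz $t_i=z_{a_i}+u_i/(2\pi i\mu)$. Using $\rho(\epsilon)=1/\epsilon+\mc O(\epsilon)$ and the boundedness of $\rho$ off $\Z+\tau\Z$, dividing the $i$-th Bethe equation by $2\pi i\mu$ yields $1-1/u_i+\mc O(1/\mu)=0$, whose limit system $u_i=1$ is non-degenerate. The implicit function theorem then produces, for $|\mu|>N_I$ with $N_I$ depending on $z$, a unique exact Bethe solution with $t_i=z_{a_i}+1/(2\pi i\mu)+\mc O(1/\mu^2)$, lying in $(\La')^m/S_m$ with the $t_j$ pairwise distinct and distinct from all $z_a$. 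Taking $N\ge\max_IN_I$ produces $\binom{2m}{m}$ Bethe solutions with pairwise disjoint limits as $\mu\to\infty$.

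The main obstacle is the matching upper bound. Summing the $m$ Bethe equations gives $\sum_i\sum_a\rho(t_i-z_a)=2\pi im\mu$, so along any sequence with $|\on{Im}\mu_n|\to\infty$ and Bethe solutions $t^{(n)}\in\La^m/S_m$, some $t^{(n)}_i-z_a$ must tend to $0$ modulo $\Z+\tau\Z$. Passing to a subsequence, I would partition $\{1,\dots,m\}$ into maximal clusters according to the coincidences of the limits in $\bar\La$, each cluster having accumulation point $\zeta$ and size $k\ge 1$. If $\zeta\notin\{z_a\}$, rescaling $t_i=\zeta+u_i/(2\pi i\mu_n)$ and dividing the cluster's Bethe equations by $2\pi i\mu_n$ yields $1+2\sum_{j\ne i}1/(u_i-u_j)=0$; summing over $i$ in the cluster gives $k=0$, contradicting $k\ge 1$. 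If $\zeta=z_a$ and $k\ge 2$, the rescaled leading-order system is
\[
1-\frac{1}{u_i}+2\sum_{j\ne i}\frac{1}{u_i-u_j}=0,\qquad i=1,\dots,k,
\]
which in terms of $Q(x)=\prod_{i=1}^k(x-u_i)$ reads $u_iQ''(u_i)=(1-u_i)Q'(u_i)$ at each $u_i$. Hence $R(x):=xQ''(x)-(1-x)Q'(x)$ vanishes at all $k$ distinct roots of $Q$; since $R$ has degree $k$ and leading coefficient $k$, we get $R=kQ$, i.e.\ the ODE $xQ''+(x-1)Q'-kQ=0$. Matching Taylor coefficients at $x=0$ produces the recurrence $(n+1)(n-1)c_{n+1}=(k-n)c_n$, whose $n=0,1$ instances force $c_0=c_1=0$ when $k\ge 2$; so $0$ is a double root of $Q$, contradicting the assumption that the $u_i$ are distinct and nonzero. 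Thus every cluster is a singleton at some $z_a$, and the count closes at exactly $\binom{2m}{m}$.

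For the structural description, Lemma~\ref{lem determin} writes the unique $g$ as $g(x)=ce^{-2\pi i\mu x}\prod_{j=1}^m\theta(x-s_j,\tau)$. At leading order in $\mu\to\infty$ the Wronskian $\Wr(f,g)=h$ is dominated by the derivative of the exponential factor in $g$, so $-2\pi i\mu\cdot fg\approx h$; substituting $f\approx\prod_{a\in I}\theta(x-z_a,\tau)$ and $h=\on{const}\cdot e^{-2\pi i\mu x}\prod_a\theta(x-z_a,\tau)$ one reads off $\prod_j\theta(x-s_j,\tau)\approx\on{const}\cdot\prod_{b\in\bar I}\theta(x-z_b,\tau)$, so the $s_j$ asymptote to the complementary subset $\bar I$ inside $\La'$ and are in particular distinct from each other and from all $z_a$ and $t_j$. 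Corollary~\ref{cor 9.4} then identifies $(-\mu,s,z)$ as itself a Bethe solution, and reapplying the second paragraph's analysis with $-\mu$ in place of $\mu$ refines this to $s_j=z_{b_j}-1/(2\pi i\mu)+\mc O(\mu^{-2})$. Hence for $|\on{Im}\mu|>N$ the full collection $t_1,\dots,t_m,s_1,\dots,s_m,z_1,\dots,z_{2m}$ is pairwise distinct in $\La'$.
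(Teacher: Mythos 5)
Your reduction to $k=0$ via \Ref{eqW}, your implicit-function lower bound (the $\binom{2m}{m}$ branches $t_j=z_{i_j}+\tfrac1{2\pi i\mu}+O(\mu^{-2})$), and the final asymptotic identification of $g$ are fine and run parallel to the paper's Lemma \ref{lem hol} and Section \ref{sec aobe}. The genuine gap is in your upper bound, at the very first step: the claimed ``if and only if'' between points of $\Wr_0^{-1}(h)$ and Bethe solutions $t\in\La^m/S_m$ holds only on the nondegenerate locus. Lemma \ref{lem:res} and Lemma \ref{p-B} require the zeros of $f$ to be simple and disjoint from the $z_a$ modulo $\Z+\tau\Z$; a pair $(f,g)$ with $\Wr(f,g)=h$ in which $f$ has a multiple zero (possibly located at some $z_a$, or with zeros on the boundary of $\La$ coinciding only modulo the lattice) corresponds to no solution of \Ref{BAEs} at all, and nothing in your argument excludes such fiber points: vanishing of the residues of $h/f^2$ at a pole of order $\ge 3$ is a single condition that can a priori be satisfied. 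Since the theorem asserts both the exact count and that \emph{every} fiber point has the nondegenerate form \Ref{nf}, these configurations must be ruled out. The paper's proof is organized precisely around this issue: it views the Wronski map as a perturbation, with small parameter $v=\bigl(2\pi i(\mu+k_1+k_2)\bigr)^{-1}$, of the multiplication map on the whole compact parameter space $\bar\La^{2m}$, and Lemma \ref{lem der} shows by compactness that for $(t,s)$ near the degenerate set $Z$ the function $G(x,t,s,v)$ must have a multiple zero, a zero $\ep$-close to $\der\bar\La$, or two $\ep$-close zeros, which is impossible for $h$ whose zeros are the distinct $z_a\in\La'$; Lemma \ref{lem hol} then counts the fiber over the complement of $Z$. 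Your Bethe-equation route never touches the analogue of $Z$.

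A second soft spot is the cluster analysis itself. After passing to a subsequence you tacitly assume each cluster lives at the single scale $1/\mu$ with rescaled limits $u_i$ that are finite, nonzero and pairwise distinct --- exactly the hypotheses your ODE argument needs --- but nothing rules out multi-scale degenerations (some $u_i\to 0$ or $\infty$, or colliding $u_i$), nor clusters formed across the boundary of $\La$: since $t_i,t_j\in\bar\La$, the term $\rho(t_i-t_j)$ also blows up when $t_i-t_j$ tends to a nonzero lattice point, so clustering must be taken modulo $\Z+\tau\Z$, not by coincidence of limits in $\bar\La$. These points can probably be repaired with more work, but as written the upper bound does not close, whereas the paper's compactness lemmas sidestep the fine structure of degenerating root configurations entirely.
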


Theorem \ref{thm deg} is proved in Section \ref{proofs}.

\begin{cor}
\label{cor deg}

Under the assumptions of Theorem \ref{thm deg},  the 
points $(f,g)\in \Wr^{-1}_k(h)$ 
have generic  first and second coordinates,
and any two points of $\Wr^{-1}(h)$  
  lie in different $\Z$-orbits. Moreover, at each point $(f,g)\in \Wr^{-1}(h)$
the Jacobian of the Wronski map is nonzero.
\qed
\end{cor}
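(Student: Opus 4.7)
The corollary has three assertions about points of $\Wr^{-1}(h)$; all three can be deduced from Theorem \ref{thm deg} and its proof, with only the third requiring real effort.

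Generic first and second coordinates follow immediately from the explicit formula \Ref{nf}. Modulo $\Z+\tau\Z$, the zero sets of $f$, $g$, and $h=\Wr(f,g)$ are respectively $\{t_1,\dots,t_m\}$, $\{s_1,\dots,s_m\}$, and $\{z_1,\dots,z_{2m}\}$, and Theorem \ref{thm deg} states that these $4m$ points are pairwise distinct in $\La\rq{}$. Hence neither $f$ nor $g$ shares a zero with $\Wr(f,g)$.

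Disjointness of $\Z$-orbits is a consequence of the equivariance \Ref{eqW}. If $(f_1,g_1),(f_2,g_2)\in\Wr^{-1}(h)$ and $(f_2,g_2)=L_\ell(f_1,g_1)$ for some $\ell\in\Z$, then
\be
h \,=\, \Wr(f_2,g_2) \,=\, L_{2\ell}\,\Wr(f_1,g_1) \,=\, e^{4\pi i\ell x}\,h
\ee
as elements of $P_{2m}$, i.e., up to a nonzero scalar. Since $e^{4\pi i\ell x}$ is a constant function only when $\ell=0$, the two pairs coincide.

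The Jacobian nonvanishing is the main obstacle, and the plan is to reuse the perturbative argument from the proof of Lemma \ref{lem Jac}. By $\C$-equivariance of $\Wr$, it is enough to treat preimages lying in $\Wr_0^{-1}(h)$, since the biholomorphisms $L_\nu$ have nonzero Jacobian and intertwine $\Wr$ with $L_{2\nu}$. In local coordinates $(\mu,t_1,\dots,t_m,s_1,\dots,s_m)$, the identity
\be
\Wr\Bigl({\prod}_{j}\theta(x-t_j),\ e^{-2\pi i\mu x}{\prod}_{j}\theta(x-s_j)\Bigr) \,=\, -2\pi i\mu\, e^{-2\pi i\mu x}\Bigl({\prod}_{j}\theta(x-t_j){\prod}_{j}\theta(x-s_j) \,+\, O(\mu^{-1})\Bigr)
\ee
from the proof of Lemma \ref{lem Jac} exhibits $\Wr$ as an $O(\mu^{-1})$-deformation of the multiplication map. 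The multiplication map is $\binom{2m}{m}$-to-one onto theta-polynomials with $2m$ distinct zeros, and has nonzero Jacobian at every such preimage (by the elementary-symmetric-function calculation). By Theorem \ref{thm deg}, every preimage in $\Wr_0^{-1}(h)$ lies in the compact set $\bar\La^m/S_m\times\bar\La^m/S_m$ with all $\{t_j\}\cup\{s_j\}$ pairwise distinct and distinct from the zeros of $h$; after possibly enlarging $N$, the $O(\mu^{-1})$ correction is uniformly small on this compact locus, and continuity keeps the Jacobian of $\Wr$ nonzero at every such preimage. The main subtlety is ensuring uniformity of the estimate across all $\binom{2m}{m}$ preimages simultaneously, which is precisely what the confinement statement in Theorem \ref{thm deg} provides.
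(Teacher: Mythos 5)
Your argument is correct and follows essentially the route the paper intends: genericity and the $\Z$-orbit statement are read off from the explicit form \Ref{nf} and the equivariance \Ref{eqW}, and the Jacobian claim is the deformation-of-the-multiplication-map argument that the paper carries out in Lemma \ref{lem hol}(iii) (via the same identity \Ref{app} that appears in Lemma \ref{lem Jac}), using the confinement of the preimages near the points $((z_j)_{j\in I},(z_j)_{j\in\bar I})$ for uniformity.
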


\begin{rem}

Some additional asymptotic structure of the points of $\Wr^{-1}_k(h)$ 
as $\on{Im} \mu \to \infty$
can be observed in the proof of Theorem \ref{thm deg}.

\end{rem}

\subsection{Proof of Theorem \ref{thm deg}}
\label{proofs}

For $\mu\in\C-\Z$ consider the multiplication map
\bea
M :  P_{m,1}\times P_{m,e^{-2\pi i \mu}} \to   P_{2m,e^{-2\pi i \mu}}, \quad (f,g)\mapsto fg,
\eea
and its fiber 
\bea
M^{-1}(h)= \{ (f,g) \in  P_{m,1}\times P_{m,e^{-2\pi i \mu}}\ |\
fg = h\}
\eea
over a  point $h=e^{-2\pi i\mu x}\prod _{j=1}^{2m}\theta(x-z_j)$ with 
$(z,\mu) \in (\Lambda\rq{})^{2m}/S_{2m}\times(\C-\Z)$ and  $z$ with distinct coordinates.
Then the points of $M^{-1}(h)$ are labeled by 
the pairs $(I,k)$, where 
 $I\subset\{1,\dots,2m\}$ is an $m$-element subset and 
$k\in\Z$. The corresponding point $(f_{I,k},g_{I,k})\in M^{-1}(h)$ has the form
\bea
f_{I,k}  = e^{2\pi i kx} {\prod}_{j\in I}\theta(x-z_j,\tau),
\qquad
g_{I,k}  = e^{-2\pi i (\mu+k) x} {\prod}_{j\in \bar I}\theta(x-z_j,\tau),
\eea
where $\bar I$ is the the complement to $I$ in $\{1,\dots,2m\}$. 

Recall $\bar\La \subset \C$, the closure of the parallelogram $\La$. For $k\in\Z$ denote
\bea
V_k =\{f \in P_{m,1}\ |\ f= e^{2\pi i kx}{\prod}_{j=1}^m\theta(x-t_j,\tau)\ 
\text{where} \ t=(t_1,\dots,t_m)\in \bar\La^m/S_m\},
\\
V_{\mu, k} =\{g\in P_{m,e^{-2\pi i \mu}} | \,
g=e^{-2\pi i (\mu +k) x}\prod_{j=1}^m\theta(x-s_j,\tau)\ \text{where} \ s=(s_1,\dots,s_m)\in \bar\La^m/S_m\}.
\eea
We see that for any $k_1,k_2\in\Z$ the sets $V_{k_1}, V_{\mu,k_2}, V_{k_1}\times V_{\mu,k_2}$ are compact subsets of 
$P_{m,1}$, 
$P_{m, e^{-2\pi i \mu}}$,  $P_{m,1}\times
P_{m, e^{-2\pi i \mu}}$, respectively,  and 
\bea
  P_{m,1} = \cup_{k_1\in\Z}  V_{k_1},
\qquad
P_{m, e^{-2\pi i \mu}} = \cup_{k_2\in\Z} V_{\mu,k_2}\qquad
  P_{m,1}\times
P_{m, e^{-2\pi i \mu}} = \cup_{k_1,k_2\in\Z} V_{k_1}\times V_{\mu,k_2}.
\eea 
The intersection $M^{-1}(h)\cap (V_{k_1}\times V_{\mu,k_2})$
consists of $\binom{2m}{m}$ points if $k_1=k_2$ and is empty if $k_1\ne k_2$.

Consider the Wronski map
\bea
\Wr :  P_{m,1}\times P_{m,e^{-2\pi i \mu}} \to   P_{2m,e^{-2\pi i \mu}}, 
\quad (f,g)\mapsto \Wr(f,g),
\eea
and the fiber 
$\Wr^{-1}(h)= \{ (f,g) \in  P_{m,1}\times P_{m,e^{-2\pi i \mu}} |\
\Wr(f,g) = h\}$
over the same point $h=e^{-2\pi i\mu x}\prod _{j=1}^{2m}\theta(x-z_j)$. 

\medskip
Let  $(f,g)\in V_{k_1}\times V_{\mu,k_2}$ with $ f= e^{2\pi i k_1x}\prod_{j=1}^m\theta(x-t_j)$
for some $(t_1,\dots,t_j)\in \bar\La^m/S_m$ and 
$g=e^{-2\pi i (\mu +k_2) x}\prod_{j=1}^m\theta(x-s_j)$ for some $(s_1,\dots,s_m)\in \bar\La^m/S_m$.
Then
\bean
\label{app}
&&
\phantom{aaa}
\Wr(f,g)= 
-2\pi i (\mu+k_2+k_1) \,e^{-2\pi i (\mu+k_2-k_1) x}
\\
\notag
&&
\phantom{aaa}
\times
\Big(\prod_{j=1}^{m}\theta(x-t_j)\prod_{j=1}^m\theta(x-s_j)
- \frac 1{2\pi i (\mu+k_1+k_2)} \Wr\big(\prod_{j=1}^{m}\theta(x-t_j),
\prod_{j=1}^m\theta(x-s_j)\big)\Big).
\eean
Since  the functions $f, g, \Wr(f,g)$ are considered up to multiplication by nonzero numbers, we may ignore the 
first
factor $-2\pi i (\mu+k_2+k_1)$ in the right-hand side.

Let us analyze the last factor in \Ref{app}.  Let
\bean
\label{exx}
&&
G(x,t,s,v)=\prod_{j=1}^{m}\theta(x-t_j)\prod_{j=1}^m\theta(x-s_j)
- v \Wr\big(\prod_{j=1}^{m}\theta(x-t_j),\prod_{j=1}^m\theta(x-s_j)\big).
\eean
The function $G(x,t,s,v)$, as a function of $x$, has the same multipliers as 
the function
$\prod_{j=1}^{m}\theta(x-t_j)\prod_{j=1}^m\theta(x-s_j)$, see the proof of Lemma \ref{lem 1.5}.
Namely,  the first multiplier of  $G(x,t,s,v)$
is 1 and the second is $e^{2\pi i \sum_{j=1}^m(t_j+s_j)}$.
By Lemma  \ref{lem determin}, for any $(t,s,v)$  we have 
\bean
\label{Fcu}
G(x,t,s,v) = c(t,s,v)\prod_{j=1}^{2m}\theta(x-u_j(t,s,v)) 
\eean
for some $c(t,s,v)\in \C$, $u(t,s,v)$ $=$ $(u_1(t,s,v)$, \dots, $u_{2m}(t,s,v)) \in \C^{2m}/S_{2m}$
and
\bean
\label{mult eq}
e^{2\pi i \sum_{j=1}^{2m}u_j(t,s,v)}=e^{2\pi i \sum_{j=1}^m(t_j+s_j)}.
\eean
The pair  $(c(t,s,v), u(t,s,v))$ is not unique, see Lemma \ref{lem 1.2}.

\begin{lem}
\label{lem hol}
Consider $(\La\rq{})^{2m}$ with coordinates $t_1,\dots,t_m,s_1,\dots,s_m$.
Let $C$ be a compact subset of $(\La\rq{})^{2m}$ which is disjoint from all diagonals and invariant with respect to the $S_m\times S_m$-action.
Then there exists $\delta>0$, such that the pair  $(c(t,s,v), u(t,s,v))$ in \Ref{Fcu}
can be chosen so that 
\bean
\label{CCm}
\phantom{aaaa}
\C^m/S_m\times \C^m/S_m\times \C \to \C\times\C^{2m}/S_{2m},
\qquad
(t,s,v)\to  (c(t,s,v), u(t,s,v)),
\eean
  is a  well-defined  holomorphic map for $(t,s)\in C/(S_m\times S_m)$ and $v\in\C$, $|v|<\delta$, and
\begin{enumerate}
\item[(i)] $u(t,s,v)$ $=$ $(u_1(t,s,v)$, \dots, $u_{2m}(t,s,v))$ has distinct coordinates;

\item[(ii)]
 $(c(t,s,0), u(t,s,0))=(1,(t,s))$;

\item[(iii)] for any $v$ with $|v|<\delta$ the restriction map
\bean
\label{rmap}
C/(S_m\times S_m)\times \{v\} \to \C^{2m}/S_{2m},\qquad
(t,s,v)\to   u(t,s,v),
\eean
has nonzero Jacobian.

\end{enumerate}
\end{lem}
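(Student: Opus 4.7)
The plan is to apply the implicit function theorem locally, then use compactness of $C$ to patch into a globally holomorphic map, and finally compute the Jacobian at $v=0$.

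First I would fix $(t^0,s^0) \in C$ and study $G(x,t,s,v)$ near $(x_0, t^0, s^0, 0)$, where $x_0$ is any of the $2m$ zeros $t^0_1,\dots,t^0_m,s^0_1,\dots,s^0_m$ of $G(\cdot, t^0,s^0,0) = \prod_j \theta(x-t_j^0)\prod_j \theta(x-s_j^0)$. Since $(t^0,s^0) \in C$ is disjoint from the diagonals, these $2m$ zeros are simple and pairwise distinct modulo $\Z+\tau\Z$. The implicit function theorem applied to $G = 0$ produces holomorphic functions $u^{(t^0,s^0)}_1,\dots, u^{(t^0,s^0)}_{2m}$ on some polydisc neighborhood of $(t^0,s^0,0)$ in $\C^m/S_m\times\C^m/S_m\times\C$, tracking the $2m$ zeros of $G(\cdot,t,s,v)$ and reducing at $v=0$ to $t_j$ and $s_j$ in some order. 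By compactness of $C$, finitely many such neighborhoods cover $C\times\{0\}$, yielding a uniform $\delta>0$ on which holomorphic local branches exist.

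Next I would observe that on overlapping neighborhoods, the local branches may differ by an element of $S_{2m}$, but their image $u(t,s,v) = \{u_i(t,s,v)\} \in \C^{2m}/S_{2m}$ is invariantly defined. Shrinking $\delta$ once more (using compactness) I ensure the entries of $u(t,s,v)$ remain pairwise distinct, establishing (i) and, at $v=0$, the equality $u(t,s,0) = (t,s)$ half of (ii). For $c(t,s,v)$, I note that $G(x,t,s,v)$ has the same multipliers in $x$ as $\prod_{j=1}^m\theta(x-t_j)\prod_{j=1}^m\theta(x-s_j)$, and that the product $\prod_{j=1}^{2m}\theta(x-u_j(t,s,v))$ has the same multipliers \emph{with the continuous choice of representatives} provided by the local branches, since at $v=0$ the equality $\sum_j u_j(t,s,0) = \sum_j t_j + \sum_j s_j$ is exact (not merely modulo $\Z$), and by continuity this persists on $C\times D_\delta$. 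Hence the ratio $G(x,t,s,v)/\prod_j \theta(x-u_j(t,s,v))$ is entire, doubly periodic, and nowhere vanishing in $x$, hence a constant $c(t,s,v)$ which depends holomorphically on $(t,s,v)$. This gives (ii) in full and establishes the map \Ref{CCm}.

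For (iii), near any point of $C/(S_m\times S_m)$ I choose a local branch sending $(t,s) \mapsto (u_1,\dots,u_m,u_{m+1},\dots,u_{2m})$ with $u_i(t,s,0)=t_i$ and $u_{m+i}(t,s,0)=s_i$. At $v=0$ the restricted map \Ref{rmap} is the tautological local biholomorphism $(t,s)\mapsto(t,s)$ from a neighborhood of $(t^0,s^0)$ in $\C^m/S_m\times\C^m/S_m$ into $\C^{2m}/S_{2m}$ (which is a local biholomorphism precisely because the combined $2m$-tuple has no coincident coordinates); its Jacobian equals $1$. By continuity of the Jacobian in $v$ and compactness of $C/(S_m\times S_m)$, the Jacobian is uniformly bounded away from zero for $|v|$ smaller than some $\delta'>0$; replacing $\delta$ by $\min(\delta,\delta')$ yields (iii). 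The main obstacle I anticipate is verifying the \emph{exact} (not mod-$\Z$) multiplier equation needed to make $c(t,s,v)$ a well-defined holomorphic function globally on $C\times D_\delta$; this is handled by insisting that the $u_j$ be the continuously-varying zero branches provided by the implicit function theorem, rather than arbitrary representatives allowed by the non-uniqueness in Lemma \ref{lem 1.2}.
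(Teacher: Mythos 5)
Your overall strategy coincides with the paper's: apply the implicit function theorem to the simple zeros of $G(x,t,s,0)$, use compactness of $C$ to get a uniform $\delta$ and to keep the coordinates of $u(t,s,v)$ distinct, descend to the symmetric quotients, and obtain (iii) from the fact that at $v=0$ the map \Ref{rmap} is the tautological local biholomorphism, so its Jacobian stays nonzero for small $|v|$ by compactness. All of that is fine. The one step that does not hold as written is your justification of the exact equality $\sum_j u_j(t,s,v)=\sum_j t_j+\sum_j s_j$, equivalently of the claim that $G(x,t,s,v)/\prod_j\theta(x-u_j(t,s,v))$ is genuinely doubly periodic in $x$ and hence a constant $c(t,s,v)$. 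You say the equality is exact at $v=0$ and ``by continuity this persists''; but continuity alone never preserves an equality between two continuously varying quantities, and insisting on the continuously varying zero branches is necessary but not sufficient. A priori, for $v\ne 0$ all one knows is that the ratio is entire, nowhere vanishing, $1$-periodic in $x$, and $\tau$-quasi-periodic with an $x$-independent multiplier, hence of the form $\hat c(t,s,v)\,e^{2\pi i k(t,s,v)x}$ with $k(t,s,v)\in\Z$ (this is \Ref{FbG} in the paper); comparing second multipliers then gives only the congruence $\sum_j(t_j+s_j)-\sum_j u_j(t,s,v)\in k(t,s,v)\tau+\Z$, not the equality you need.

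The missing ingredient is a discreteness argument. Since $\sum_j u_j(t,s,v)$ is holomorphic in $v$, the difference $\sum_j(t_j+s_j)-\sum_j u_j(t,s,v)$ varies continuously while being constrained to the discrete set $\Z+\tau\Z$ (equivalently, $k(t,s,v)$ is an integer-valued continuous function); since it vanishes at $v=0$, it vanishes identically on the connected disc $|v|<\delta$ for each fixed $(t,s)$. This is exactly how the paper kills the exponential factor in \Ref{FbG} and concludes that $\tilde c$ does not depend on $x$, which in turn yields the exact multiplier identity behind \Ref{mult eq} and the holomorphy of $c(t,s,v)$. Alternatively, you could quote the Abel-type relation (Theorem 20.14 of \cite{WW}, already invoked in the proof of Lemma \ref{lem determin}) for the elliptic function $G(x,t,s,v)\big/\bigl(\prod_j\theta(x-t_j)\prod_j\theta(x-s_j)\bigr)$, whose zeros are the $u_j$ and whose poles are the $t_j,s_j$, to obtain the lattice congruence, and only then apply your continuity argument. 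With this step inserted your proof closes and is essentially the paper's proof (the only other, immaterial, difference being that you work locally on the quotients and patch, while the paper constructs the map on $C$ itself and descends by $S_m\times S_m$-equivariance).
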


\medskip
\noindent
{\it Proof of Lemma \ref{lem hol}.}
We first construct a certain holomorphic map 
\bean \label{hol rts}
C \times  \{v\in\C\ |\ |v|\ll 1\} \to \C
\times\C^{2m},\qquad ((t,s),v) \mapsto (\tilde c(t,s,v), \tilde u(t,s,v)),
\eean
as follows.

For $(t,s) \in C$,  the numbers $x=t_j$, $j=1,\dots,m$, and $x=s_j$, $j=1,\dots,m$,
are simple zeros of the function $G(x,t,s,0)$.
By  the implicit function theorem there exist unique holomorphic functions $\tilde u_j(t,s,v)$, $j=1,\dots,m$,
 defined in a neighborhood of $(t,s,0)$ and such that $G(\tilde u_j(t,s,v),t,s,v)=0$, $\tilde  u_j(t,s,0) = t_j$.  Similarly, 
 there exist unique holomorphic functions $\tilde  u_j(t,s,v)$, $j=m+1,\dots,2m$,
 defined in a neighborhood of $(t,s,0)$ and such that $G(\tilde  u_j(t,s,v),t,s,v)=0$,  
$\tilde u_j(t,s,0) = s_{j-m}$.

Since $C$ is compact, there is a $\delta >0$ such that 
$\tilde u(t,s,v)=(\tilde u_1(t,s,v),\dots, \tilde u_{2m}(t,s,v))$ is holomorphic for $(t,s)\in
 C$ and $v\in\C$, $|v|<\delta$, and has distinct coordinates.
Denote $G(x,t,s,v) = \prod_{j=1}^{2m}\theta(x-\tilde  u_j(t,s,v))$ and define
$\tilde c(x, t,s,v)$ by the formula
\bean
\label{FG}
G(x, t,s,v) =\tilde  c(x,t,s,v) H(x,t,s,v).
\eean
Since $G(x, t,s,v)$ and $H(x,t,s,v)$ have the same zeros, the function $\tilde  c(x,t,s,v)$ 
is holomorphic in $x,t,s,v$ and has the from
\bean
\label{FbG}
\tilde c(x,t,s,v) = \hat c(t,s,v) e^{2\pi i k(t,s,v)x}
\eean
 for some $\hat c(t,s,v)\in\C$, $k(t,s,v)\in\Z$, such that $k(t,s,0)=0$ and $\hat c(t,s,0)=1$.
 We know that the second multiplier of the  function $ G(x,t,s,v)$, as a function of $x$,  is
$e^{2\pi i \sum_{j=1}^m(t_j+s_j)}$.  From \Ref{FbG}
we conclude that the second multiplier of $G(x,t,s,v)$ is
$e^{2\pi i k(t,s,v)\tau + 2\pi i \sum_{j=1}^{2m} u_j(t,s,v)}$. 
Since $ \sum_{j=1}^{2m}u_j(t,s,v)$ is continuous, we conclude that $k(t,s,u)=0$ and
$\tilde c(x,t,s,v)$ in 
\Ref{FG} does not depend on $x$. The map in \Ref{hol rts} is constructed. Clearly,
we can choose the positive $\delta$ so small that for any $v$ with $|v|<\delta$, the restriction map
\bea
C \times  \{v\} \to 
\C^{2m},\qquad ((t,s),v) \mapsto \tilde u(t,s,v),
\eea
has nonzero Jacobian, since it is a deformation of the identity map.

For any permutations $\si, \eta \in 
S_m$ and $j=1,\dots,m$, we clearly  have
\bea
&&
\tilde u_{j}(t_{\si_1}, \dots, t_{\si_m}, s_{\eta_1}, \dots, s_{\eta_m},v)=
\tilde u_{\si_j}(t_1,\dots,t_m,s_1,\dots,s_m,v),
\\
&&
\tilde u_{m+j}(t_{\si_1}, \dots, t_{\si_m}, s_{\eta_1}, \dots, s_{\eta_m},v)=
\tilde u_{m+\eta_j}(t_1,\dots,t_m,s_1,\dots,s_m,v).
\\
&&
\tilde c(t_{\si_1}, \dots, t_{\si_m}, s_{\eta_1}, \dots, s_{\eta_m},v)=
\tilde c(t_1,\dots,t_m,s_1,\dots,s_m,v).
\eea
Hence the map in \Ref{hol rts} projects to  the required  map in \Ref{CCm}
after factorizing the preimage of the map in \Ref{hol rts} by $S_m\times S_m$.
 The lemma is proved.
\qed

\medskip

Recall our $z=(z_1,\dots,z_{2m})\in (\La\rq{})^{2m}/S_{2m}$ with distinct coordinates. 
Let $\ep_1>0$ be the distance from the set $\{z_1$, \dots, $z_{2m}\}$ $\subset \La\rq{}$ to the boundary $\der \bar\La$
 of $\bar\La$. Let $\ep_2=\on{min}_{ a, b,\, 1\leq a <b\leq 2m}|z_a-z_b|$. 
Choose any $\ep$ with  $0<\ep<\on{min}\{\ep_1/2,\ep_2/2\}$.

Denote by $Z$ the set of all points $(t,s)=(t_1,\dots,t_m,s_1,\dots,s_m)\in\bar\La^{2m}$,
 which have at least two equal coordinates
or at least one coordinate lying in $\der\bar\La$. Clearly, $Z$ is closed and invariant with
 respect to the $S_{m}\times S_m$-action on
$\bar\La^{2m}$.

\begin{lem}
\label{lem der} 

There  exists a neighborhood $U$ of $Z$ in $\bar\La^{2m}$ and a number $\delta\rq{}>0$ 
with the following properties:

\begin{enumerate}
\item[(i)]  $U$ is $S_{m}\times S_m$-invariant;

\item[(ii)] the point $z$ does not lie in the closure $\overline{U/S_{2m}}$ of $U/S_{2m}$;

\item[(iii)] for any $(t,s)\in U$, $v\in\C$ with $|v|<\delta\rq{}$, the function $F(x,t,s,v)$ has a multiple zero, or has
a zero with the distance less than $\ep$ to the boundary $\der\bar\La$, or has two distinct zeros
 with the distance between them less than $\ep$.

\end{enumerate}

\end{lem}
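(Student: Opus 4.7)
The plan is to first establish the lemma at $v = 0$, where $G(x,t,s,0) = \prod_{j=1}^m\theta(x-t_j)\prod_{j=1}^m\theta(x-s_j)$ has the explicit zero set $\{t_1,\dots,t_m,s_1,\dots,s_m\}$ modulo $\Z+\tau\Z$, and then to extend by continuity of zeros to small $v$. I would define $U$ to be the open $\epsilon/2$-neighborhood of $Z$ in $\bar\La^{2m}$ with respect to the max-norm on coordinates. The $S_m\times S_m$-invariance in (i) is immediate since the defining conditions of $Z$ (two coordinates coinciding, or a coordinate lying in $\partial\bar\La$) are preserved by independent permutations of the $t$'s and of the $s$'s.

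For (ii), observe that for any $\sigma\in S_{2m}$, the point $\sigma\cdot z$ has all pairwise coordinate distances $\ge\epsilon_2 > 2\epsilon$ and distances from $\partial\bar\La$ at least $\epsilon_1 > 2\epsilon$. Every point of $Z$ has either two equal coordinates or a coordinate on $\partial\bar\La$, so the max-norm distance from $\sigma\cdot z$ to $Z$ is strictly greater than $\epsilon/2$; hence $\sigma\cdot z\notin \bar U$, and (ii) follows.

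For (iii) at $v=0$, given $(t,s)\in U$ there exists $(t^0,s^0)\in Z$ with max-norm distance less than $\epsilon/2$. If $(t^0,s^0)$ has two coinciding coordinates (among the $t^0$'s, among the $s^0$'s, or across), then by the triangle inequality the corresponding two coordinates of $(t,s)$ lie within $\epsilon$, which gives either a double zero or two distinct zeros within $\epsilon$ of $G(\cdot,t,s,0)$. If instead $(t^0,s^0)$ has a coordinate in $\partial\bar\La$, then the corresponding coordinate of $(t,s)$ lies within $\epsilon$ of $\partial\bar\La$, producing a zero within $\epsilon$ of $\partial\bar\La$. To pass to $v\ne 0$, note that $G(\cdot,t,s,v)$ has the same theta-multipliers as $G(\cdot,t,s,0)$, namely $1$ and $e^{2\pi i\sum_j(t_j+s_j)}$, so by Lemma \ref{lem dim} it has exactly $2m$ zeros in $\bar\La$ counted with multiplicity. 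A Rouch\'e argument then shows that for each compact family of parameters, the unordered multiset of zeros in $\bar\La$ varies continuously in $(t,s,v)$: around each zero $z_0$ of $G(\cdot,t,s,0)$ of multiplicity $k$, a small disk contains exactly $k$ zeros of $G(\cdot,t,s,v)$ for $v$ small. Compactness of $\bar U$ lets us choose $\delta' > 0$ uniformly so that every zero of $G(\cdot,t,s,v)$ with $|v|<\delta'$ lies within $\epsilon/2$ of some zero of $G(\cdot,t,s,0)$; combined with the $v=0$ case, this preserves (iii) after adjusting constants.

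The main obstacle is making the Rouch\'e/continuity argument uniform over the compact set $\bar U$ in the presence of potentially coalescing zeros: when two $t$'s or $s$'s are close, the corresponding zeros cannot be separated into isolated disks of radius $\epsilon/2$. I would handle this by grouping coalescing zeros together and applying Rouch\'e on a slightly enlarged neighborhood containing all of them, using that the total zero count in each such region is preserved for small $v$; since any two zeros tracked together are already within $\epsilon$, property (iii) remains intact.
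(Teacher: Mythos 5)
Your route is the one the paper intends (its own proof is just the one-line remark that the lemma follows from compactness of $\bar\La^{2m}$ and holomorphy), and items (i) and (ii) are argued correctly. But there is a genuine gap in (iii): with $U$ taken to be the open $\epsilon/2$-neighborhood of $Z$, the conclusion of the lemma is actually \emph{false} for every $\delta'>0$, so no after-the-fact ``adjusting of constants'' can rescue this particular $U$. The problem is that your $v=0$ certification has no uniform margin: a point $(t,s)\in U$ coming from a $Z$-point with two coinciding coordinates only gives two coordinates at distance $<\epsilon$, and this distance can be arbitrarily close to $\epsilon$, while for $v\ne 0$ the corresponding zeros move by an amount of order $|v|$ and can move \emph{apart}. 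Concretely, for $m=1$ write $G(x,t,s,v)=\theta(x-t)\theta(x-s)-v\,\Wr\big(\theta(x-t),\theta(x-s)\big)$; expanding near $x=t$ and $x=s$ one finds simple zeros at $t-v+O(v^2)$ and $s+v+O(v^2)$. Given any $\delta'>0$, choose $v$ with $0<|v|<\delta'$ and phase aligned with $s-t$, and then choose $t,s$ near the center of $\La$ with $|t-s|=\epsilon-|v|$ (such $(t,s)$ lies in your $U$, since the midpoint configuration lies in $Z$). Then $G(\cdot,t,s,v)$ has only simple zeros, the two zeros in $\La$ are at distance $\approx\epsilon+|v|>\epsilon$, and all zeros (including lattice translates) stay at distance $>\epsilon$ from each other and from $\der\bar\La$; none of the three alternatives in (iii) holds. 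A secondary, smaller point: your uniform tracking statement is one-sided (``every zero of $G(\cdot,t,s,v)$ lies within $\epsilon/2$ of some zero of $G(\cdot,t,s,0)$''), whereas (iii) needs the other direction, that each zero (or cluster of zeros) of $G(\cdot,t,s,0)$ retains its multiplicity count of nearby zeros of $G(\cdot,t,s,v)$; your Rouch\'e-disk remark does supply this, so it only needs to be invoked in that form.

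The repair is easy and stays entirely within your argument: take $U$ to be the open $r$-neighborhood of $Z$ with $2r<\epsilon$ (say $r=\epsilon/4$). Then at $v=0$ the certifying pair is within $2r$ of each other (resp.\ a zero is within $r$ of $\der\bar\La$), leaving the margin $\epsilon-2r>0$, and it suffices to prove, uniformly for $(t,s)\in\bar U$ and $|v|<\delta'$, that every zero of $G(\cdot,t,s,0)$ has at least its multiplicity's worth of zeros of $G(\cdot,t,s,v)$ within distance $(\epsilon-2r)/2$. This uniformity can be obtained by your clustering Rouch\'e argument, provided the clustering threshold is taken of order $\epsilon/m$ so that clusters have diameter well below $\epsilon$, together with a uniform lower bound of the type $|\theta(y)|\ge c\,\on{dist}(y,\Z+\tau\Z)$ on compacts to bound $|G(\cdot,t,s,0)|$ from below on the cluster boundaries; alternatively, a compactness/contradiction argument via Hurwitz's theorem on $\bar U\times\{|v|\le 1\}$ gives it with less bookkeeping. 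With the smaller $U$ and the two-sided uniform tracking, your proof of (iii) goes through, and the argument then matches the compactness-plus-holomorphy proof the paper has in mind.
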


\begin{proof}
The lemma clearly follows from the fact that $\bar\La^{2m}$ is compact and $F(x,t,s,v)$ is holomorphic.
\end{proof}

Let $U,\delta\rq{}$ be as in Lemma \ref{lem der}. 
Define the compact set $C=\bar\La^{2m}-U$. The set $C$ satisfies the
assumptions of Lemma \ref{lem hol}.
By Lemma \ref{lem hol} there exists $\delta$,  $0<\delta<\delta\rq{}$, such that 
the function $u(t,s,v)$ $=$ $(u_1(t,s,v)$, \dots, $u_{2m}(t,s,v))$ of Lemma \ref{lem hol} has the following properties:
\smallskip

\begin{enumerate}
\item[(iv)]

For any $v$ with $|v|<\delta$, the equation $u(t,s,v) = z$ has exactly $\binom{2m}{m}$
distinct  solutions 
$(t(v),s(v)) \in C/(S_m\times S_m$.  Each of the solutions is a holomorphic function of $v$.
The solutions are labeled by  $m$-element subsets   $I\subset\{1,\dots,2m\}$. The corresponding solution
$(t_I(v),s_I(v))$ is such that 
\bean
t_I(0)=(z_j)_{j\in I}, \qquad
s_I(0)=(z_j)_{j\in \bar I}.
\eean

\item[(v)] For $(t,s)\in C/(S_m\times S_m)$ and $v$ with $|v|<\delta$, each coordinate of
$u(t,s,v)$ lies in the $\ep$-neighborhood of the parallelogram $\bar\La$.

\end{enumerate}

\medskip

Let us return to the proof of Theorem \ref{thm deg}.
Let $h=e^{-2\pi i \mu x}\prod_{a=1}^{2m}\theta(x-z_a)$ as before.
Assume that $\mu\in\C$ is such that $|\on{Im} \mu| > \delta^{-1}$, then 
$|2\pi i(\mu+k_2+k_1)|^{-1}<\delta$ for any $k_1,k_2\in\Z$.
Then formula \Ref{app}, property (iii) of Lemma \ref{lem der}
and  properties (iv-v) above show that the intersection 
$\Wr^{-1}(h)\cap (V_{k_1}\times V_{\mu,k_2})$ 
is empty if $k_1\ne k_2$ and consists of exactly  $\binom{2m}{m}$ 
points if $k_1=k_2$, moreover, those points  lie in the
$\ep$-neighborhood of points  $((z_j)_{j\in I}, (z_j)_{j\in \bar I})$.

This proves all of the statements of Theorem \ref{thm deg} except the last statement that 
 all of the numbers $t_1,\dots,t_m$, $s_1,\dots,s_m$, 
$z_1,\dots,z_{2m}$ are distinct, see the theorem. We already know that
\begin{enumerate}

\item{}
$\Wr(e^{2\pi i kx}\prod_{j=1}^m\theta(x-t_j), e^{-2\pi i(\mu+ k)x}\prod_{j=1}^m\theta(x-s_j))=
e^{-2\pi i \mu x}\prod_{j=1}^{2m}\theta(x-z_j)$,
\item{}
 all  the numbers 
$t_1,\dots,t_m$, $s_1,\dots,s_m$ are distinct,

\item{}
all the numbers
$z_1,\dots,z_{2m}$ are distinct.

\end{enumerate}
 This implies that  $t_1,\dots,t_m$, $s_1,\dots,s_m$, 
$z_1,\dots,z_{2m}$ are distinct. 
Theorem \ref{thm deg} is proved.

\section{Applications of Theorem \ref{thm deg}}
\label{sec appls}

\subsection{Counting ratios of theta-polynomials}

Fix a fundamental parallelogram $\La\subset \C$.
Consider  the ratio $F$ of two theta-polynomials of degree $m$ 
with $m$ simple poles in $\La$.
Then the function $F$ can be  written uniquely in the form 
\bean
\label{hfun}
F=g/f,\qquad 
f={\prod}_{j=1}^m\theta(x-t_j,\tau),  
\eean
where $t=(t_1,\dots,t_m)$ is a point of $\La^m/S_m$ with distinct coordinates
and $g$ is a theta-polynomial of degree $m$. The derivative $F\rq{} = \Wr(f,g)/f^2$
 can be written uniquely in the form
\bean
\label{derh}
F\rq{} = e^{-2\pi i \mu x}\frac{\prod_{a=1}^{2m}\theta(x-z_a,\tau)}{\prod_{j=1}^m\theta(x-t_j,\tau)^2}
\eean
for some $\mu\in\C$ and $z=(z_1,\dots,z_{2m})\in\La^{2m}/S_{2m}$. By assumptions
 the numerator and denominator of this ratio have no common zeros.

\begin{thm}
\label{thm ratio}
Let $z=(z_1,\dots,z_{2m})\in (\La\rq{})^{2m}/S_{2m}$ have distinct coordinates. Then there exists
$N>0$, such that for any $\mu\in \C$ with $|\mu|> N$
there exist exactly $\binom{2m}m$ functions $F(x)$ as in \Ref{hfun} with the derivative as in
\Ref{derh}, up to  proportionality.
\end{thm}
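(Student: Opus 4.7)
The plan is to translate the count of ratios $F$ into the count of elements of the labeled preimage $\Wr_0^{-1}(h)$, where
\[
h(x) = e^{-2\pi i\mu x}{\prod}_{a=1}^{2m}\theta(x-z_a,\tau),
\]
and then to invoke the $k=0$ case of Theorem \ref{thm deg}.

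First I will set up the bijection. Given a ratio $F$ of the form \Ref{hfun}, the presentation $f=\prod_{j=1}^{m}\theta(x-t_j,\tau)$ selects a specific representative of $[f]\in P_{m,1}$ whose label (in the sense of Section \ref{sec lab}) equals $0$. Using $F'=\Wr(f,g)/f^2$ together with \Ref{derh}, the Wronskian $\Wr(f,g)$ is proportional to $h$; since $F$ is counted only up to proportionality and that scalar ambiguity is absorbed by rescaling $g$, I may normalize $g$ so that $\Wr(f,g)=h$ on the nose. The assumption that the numerator and denominator of $F'$ share no zero translates precisely to $f$ and $\Wr(f,g)$ sharing no zero, that is, to $(f,g)$ having generic first coordinate. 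Conversely, every $(f,g)\in\Wr_0^{-1}(h)$ produces a ratio $F=g/f$ of the required form, and the two normalizations $l(f)=0$ together with $\Wr(f,g)=h$ imply that distinct elements of $\Wr_0^{-1}(h)$ yield non-proportional ratios: a proportionality forces $f_1=f_2$ up to a scalar (by the normalization of the leading coefficient) and then $g_1=g_2$ up to the same scalar (by the normalization of $\Wr(f,g)$), which is exactly the freedom allowed by proportionality of $F$.

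Finally I will apply Theorem \ref{thm deg} with $k=0$. Although that theorem is stated under $|\on{Im}\mu|>N$, inspection of its proof shows that the smallness condition $|2\pi i(\mu+k_1+k_2)|^{-1}<\delta$ arising from \Ref{app} is only required for those $k_1,k_2$ actually contributing; when the label $k=0$ is fixed, the constraint of matching first multipliers forces $k_1=k_2=0$, so the requirement collapses to $|\mu|>\delta^{-1}/(2\pi)$. Choosing $N=N(z,\La)$ accordingly, Theorem \ref{thm deg} then gives $|\Wr_0^{-1}(h)|=\binom{2m}{m}$, which is the desired count. The main obstacle is conceptual rather than computational: one must carefully track the several notions of equivalence in play (proportionality of $F$, rescaling of $f$ and $g$ in $P_{m,1}$ and $P_m$, the $\Z$-action on pairs, and the projective identification in $P_{2m}$) so that each proportionality class of ratios corresponds to exactly one element of $\Wr_0^{-1}(h)$; once the bookkeeping is in place, no further computation beyond Theorem \ref{thm deg} is needed.
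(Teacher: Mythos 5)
Your overall route is exactly the paper's: identify proportionality classes of such $F$ with the labeled preimage $\Wr_0^{-1}(h)$, $h=e^{-2\pi i\mu x}\prod_{a=1}^{2m}\theta(x-z_a,\tau)$, and then quote Theorem \ref{thm deg} (the paper's proof is precisely this, citing also Lemma \ref{lem hol}); your bookkeeping of the various scalings, labels and equivalences in the bijection is correct and in fact more explicit than the paper's one-line argument.

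The one step that does not hold up is your justification for passing from the hypothesis $|\on{Im}\mu|>N$ of Theorem \ref{thm deg} to the hypothesis $|\mu|>N$ of the statement. You assert that once the label $k=0$ is fixed, ``matching first multipliers forces $k_1=k_2=0$.'' First multipliers cannot see the integer labels at all: $e^{2\pi i k_1}=1$ and $e^{-2\pi i(\mu+k_2)}=e^{-2\pi i\mu}$ for every $k_1,k_2\in\Z$, so $l(f)=0$ gives $k_1=0$ by definition, but the first-multiplier condition puts no restriction whatsoever on the label $k_2$ of $g$. What does restrict $k_2$ is the \emph{second} multiplier together with the localization of the zeros: by Lemma \ref{lem 1.5} the second multipliers of $\Wr(f,g)$ and $h$ must agree, i.e. $-k_2\tau+\sum_j t_j+\sum_j s_j-\sum_a z_a\in\Z$, and since $t_j,s_j\in\bar\La$ while $z_a\in\La'$, taking imaginary parts gives $|k_2|\le 2m$; equivalently, inside the proof of Theorem \ref{thm deg} the zeros $u(t,s,v)$ stay in an $\ep$-neighborhood of $\bar\La$, and matching them with $z\in(\La')^{2m}$ forces $k_2=k_1$. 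With this repair your conclusion is fine: only the finitely many $k_2$ with $|k_2|\le 2m$ can contribute to $\Wr_0^{-1}(h)$, so a bound of the form $|\mu|>\delta^{-1}+2m$ suffices and the count $\binom{2m}{m}$ follows. (Two further remarks: the paper itself states Theorem \ref{thm ratio} with $|\mu|>N$ while Theorem \ref{thm deg} is proved under $|\on{Im}\mu|>N$ and it does not comment on the discrepancy, so your instinct to address it was right, just not via first multipliers; and, as in the paper, the case of large integer $\mu$, where $h\in P_{2m,1}$ and the Wronski-map framework of \Ref{Wr} does not apply, is left untreated.)
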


\begin{proof} By construction,
these functions $F=g/f$ are in the bijective correspondence with the points 
$(f,g)\in \Wr^{-1}_0(h)$, where 
$h=e^{-2\pi i \mu x}\prod_{a=1}^{2m}\theta(x-z_a)$. 
 Now Theorem \ref{thm ratio} is a corollary of Theorem \ref{thm deg}, see also
Lemma \ref{lem hol}.
\end{proof}

\subsection{Counting normal solutions of Bethe ansatz equations}
\label{sec couS}

Choose $z=(z_1$, \dots, $z_{2m})\in (\La\rq{})^{2m}/S_{2m}$ with
distinct coordinates.
By Theorem \ref{thm deg} there exists  $N >0$, such that for any 
$h\in P_{2m}-P_{2m,1}$ of the form
\bean
\label{h  mu}
h(x)= e^{-2\pi i\mu x}{\prod}_{a=1}^{2m}\theta(x-z_a,\tau),
\eean
 with $|\on{Im} \mu|>N$,
 the set $\Wr^{-1}_0(h)$ consists of
 exactly $\binom{2m}{m}$ points.
In that case any  $(f,g)\in \Wr^{-1}_0(h)$ has the form
\bean
\label{nfs}
f(x) = {\prod}_{j=1}^m\theta(x-t_j,\tau),\qquad
g(x) = e^{-2\pi i\mu x}{\prod}_{j=1}^m\theta(x-s_j,\tau),
\eean
where $t=(t_1,\dots,t_m)\in(\La\rq{})^m/S_m$, $s=(s_1,\dots,s_m)\in(\La\rq{})^m/S_m$,
and all the numbers $t_1,\dots,t_m$, $s_1,\dots,s_m$, 
$z_1,\dots,z_{2m}$ are pairwise distinct.

In particular, this means that
\bean
\label{mswe}
\phantom{aaaaa}
\Wr({\prod}_{j=1}^m\theta(x-t_j,\tau),
e^{-2\pi i\mu x}{\prod}_{j=1}^m\theta(x-s_j,\tau))=
e^{-2\pi i\mu x}{\prod}_{a=1}^{2m}\theta(x-z_a,\tau))
\eean
and
\bean
\label{msswe}
\phantom{aaaaa}
\Wr(e^{2\pi i\mu x}{\prod}_{j=1}^m\theta(x-t_j,\tau),
{\prod}_{j=1}^m\theta(x-s_j,\tau))=
e^{2\pi i \mu x}{\prod}_{a=1}^{2m}\theta(x-z_a,\tau)).
\eean

By Lemmas \ref{lem:res} - \ref{lem 1.11} equation \Ref{mswe} implies that 
 $t=(t_1,\dots,t_m)$ is a solution of the Bethe ansatz equations
\bean
\label{maBAE}
\phantom{aaa}
2\pi i \mu +2{\sum}_{\ell,\, \ell\ne j}\rho(t_j-t_\ell,\tau) - {\sum}_{a=1}^{2m} \rho(t_j-z_a,\tau) = 0,  \qquad j=1,\dots,m,
\eean
and equation  \Ref{msswe}  implies that 
 $s=(s_1,\dots,s_m)$ is a solution of the Bethe ansatz equations
\bean
\label{mmaBAE}
\phantom{aaa}
-2\pi i \mu +2{\sum}_{\ell,\, \ell\ne j}\rho(s_j-s_\ell,\tau) - {\sum}_{a=1}^{2m} \rho(s_j-z_a,\tau) = 0,  \qquad j=1,\dots,m.
\eean
Hence the equivalence classe of the
solution $(\mu, t_1,\dots,t_m,z_1,\dots,z_{2m})$
and the equivalence class of the solution
$(-\mu,$ $s_1,\dots$, $s_m$, $z_1,\dots,z_{2m})$ belong to the set $\on{Sol}_m$ and
one of them is the image of the other under the analytic involution 
$\beta^{-1} \circ\iota_{\on{an}}\circ\beta$, see \Ref{GenS}.

\smallskip
Given $\mu\in\C$ and  $z=(z_1,\dots,z_{2m})\in \Lambda^{2m}/S_{2m}$
denote by $B(\mu, z)$ the set of equivalence classes of solutions
$(\mu\rq{}, t_1\rq{},\dots,t_m\rq{}, z_1,\dots,z_{2m})$ of the Bethe ansatz equations \Ref{BAEs},
which have representatives of the form
$(\mu, t_1,\dots,t_m, z_1,\dots,z_{2m})$ with $(t_1,\dots,t_m)\in \La^m/S_m$.

\begin{thm} 
\label{cor de}

Let $z=(z_1,\dots,z_{2m})\in (\Lambda\rq{})^{2m}/S_{2m}$ have distinct coordinates. 
Then there exists  $N >0$, such that for any $\mu$ with $|\mu|>N$, each of the sets 
$B(\mu,z)$  and 
$B(-\mu,z)$ consist of  exactly $\binom{2m}{m}$ points. Moreover, there is a bijection
$B(z,\mu)\to B(z,-\mu)$, given by the analytic involution 
$\beta^{-1} \circ\iota_{\on{an}}\circ\beta$, which combines the points of these two sets
 into  $\binom{2m}m$ pairs $(t, s) \in B(\mu,z)\times B(-\mu,z)$ so that
the pairs $(\prod_{j=1}^m \theta(x-t_j), e^{-2\pi i \mu x}\prod_{j=1}^m\theta(x-s_j))$ list all the points
of the set $\Wr^{-1}_0(h)$, see \Ref{mswe}.

\end{thm}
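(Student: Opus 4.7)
\textbf{Proof proposal for Theorem \ref{cor de}.}

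The plan is to use Theorem \ref{thm deg} as the main engine, and then translate the counting of fibers of the elliptic Wronski map into a counting of equivalence classes of normal solutions of the Bethe ansatz equations via the bijection $\beta$ of Theorem \ref{lem:sum}. Concretely, I would argue as follows.

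First, choose $N>0$ large enough that Theorem \ref{thm deg} applies to $h(x)=e^{-2\pi i\mu x}\prod_{a=1}^{2m}\theta(x-z_a,\tau)$ for all $\mu$ with $|\mu|>N$ (this is the main technical input; it gives $|\Wr^{-1}_0(h)|=\binom{2m}{m}$ and the structure of its points described in \Ref{nf}). I would next construct a map
\[
\alpha_+ : B(\mu,z) \longrightarrow \Wr^{-1}_0(h)
\]
as follows: given an equivalence class in $B(\mu,z)$, pick the unique normal representative $(\mu,t,z)$ (Lemma \ref{lem eqv}), set $f(x)=\prod_{j=1}^m\theta(x-t_j,\tau)\in P_{m,1}$, and let $g$ be the unique theta-polynomial of degree $m$ with first multiplier $e^{-2\pi i\mu}$ satisfying $\Wr(f,g)=h$, whose existence and uniqueness are guaranteed by Theorem \ref{BMV} combined with Lemma \ref{lem:res}. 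The label of $f$ is $0$ by construction, so $(f,g)\in\Wr^{-1}_0(h)$.

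For the inverse, given $(f,g)\in\Wr^{-1}_0(h)$, Theorem \ref{thm deg} yields a presentation $f(x)=\prod_{j=1}^m\theta(x-t_j,\tau)$ with $(t_1,\dots,t_m)\in(\La')^m/S_m$ having distinct coordinates, all distinct from the $z_a$'s. Then $\Wr(f,g)/f^2=h/f^2$ has zero residues, so by Lemma \ref{lem:res} the tuple $(\mu,t,z)$ solves \Ref{BAEs} and is itself normal, hence defines an element of $B(\mu,z)$. These constructions are mutually inverse, which shows $\alpha_+$ is a bijection and proves $|B(\mu,z)|=\binom{2m}{m}$. A completely symmetric argument, using the second component of $\Wr^{-1}_0(h)$ together with Corollary \ref{cor 9.4} (which recasts \Ref{msswe} as the Bethe ansatz equations for $-\mu$ and $s$), produces a bijection
\[
\alpha_- : B(-\mu,z) \longrightarrow \Wr^{-1}_0(h),
\]
sending the class of the normal representative $(-\mu,s,z)$ to the pair $(f,g)$ with $g(x)=e^{-2\pi i\mu x}\prod_{j=1}^m\theta(x-s_j,\tau)$; this gives $|B(-\mu,z)|=\binom{2m}{m}$ as well.

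Finally, the composition $\alpha_-^{-1}\circ\alpha_+ : B(\mu,z)\to B(-\mu,z)$ is, by construction, exactly the analytic involution $\beta^{-1}\circ\iota_{\on{an}}\circ\beta$ described in Section \ref{sec aiba}: starting from $(\mu,t,z)$, one forms $f$, solves the Wronskian equation to obtain $g$ (hence the $s_j$'s), and returns $(-\mu,s,z)$. This assembles the $2\binom{2m}{m}$ normal solutions into $\binom{2m}{m}$ pairs, each pair corresponding to exactly one point of $\Wr^{-1}_0(h)$ via the recipe $(t,s)\mapsto(\prod_j\theta(x-t_j,\tau),\,e^{-2\pi i\mu x}\prod_j\theta(x-s_j,\tau))$.

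The main obstacle is in fact absorbed into Theorem \ref{thm deg}, whose proof is the genuinely hard analytic step (the asymptotic implicit-function argument in Lemma \ref{lem hol}); given that result, the remainder is a careful but routine translation between the three parametrizations (normal Bethe solutions, equivalence classes in $\on{Sol}_m^1$, and points of $\Wr^{-1}_0(h)$) using Theorem \ref{BMV}, Lemma \ref{lem:res}, Corollary \ref{cor 9.4}, and Lemma \ref{lem eqv}. The minor subtlety to watch is verifying that each normal $t$ produced by the inverse of $\alpha_+$ actually lies in $\La^m/S_m$ with distinct coordinates and distinct from the $z_a$'s, which is precisely the final assertion of Theorem \ref{thm deg}.
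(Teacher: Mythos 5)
Your proposal is correct and follows essentially the same route as the paper: the paper's proof of this theorem is precisely Theorem \ref{thm deg} combined with the translation carried out in Section \ref{sec couS} (equations \Ref{mswe}--\Ref{msswe} plus Lemmas \ref{lem:res}--\ref{lem 1.11}, Corollary \ref{cor 9.4}) and the uniqueness of normal representatives from Lemma \ref{lem eqv}, which is exactly the dictionary you make explicit through your maps $\alpha_\pm$. The only caveat, inherited from the paper itself, is the mismatch between the hypothesis $|\mu|>N$ in the statement and the hypothesis $|\on{Im}\mu|>N$ actually used in Theorem \ref{thm deg}.
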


\begin{proof}
The theorem follows from Lemma \ref{lem eqv} and Theorem \ref{thm deg}, 
see also Lemma \ref{lem hol}.
\end{proof}

\subsection{Asymptotics of solutions of Bethe ansatz equations}
\label{sec aobe}

Let $z=(z_1,\dots,z_{2m})\in (\La\rq{})^{2m}/S_{2m}$ have distinct coordinates.
By Lemma \ref{lem hol} the Bethe ansats equations \Ref{BAEs} extend to a holomorphic system of equations for $(\mu, t) \in \P^1_\C \times \C^m$ at $\mu=\infty$.

\begin{lem}
When $\mu = \infty$, a point $(t_1,\dots,t_m)\in \La^{m}/S_m$
 is a solution of \Ref{BAEs} if and only if $\{t_1,\dots,t_m\}$ is a subset of size $m$ of $\{z_1,\dots,z_{2m}\}$.
\qed
\end{lem}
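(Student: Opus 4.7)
The plan is to deduce the $\mu=\infty$ statement by reformulating the Bethe equations \Ref{BAEs} as a Wronskian equation and appealing to the holomorphic extension established in Lemma \ref{lem hol}. The naive equations \Ref{BAEs} do not extend holomorphically in $(t,z)$ as $\mu\to\infty$: dividing by $2\pi i\mu$ gives $1+v\bigl(2\sum_{k\ne j}\rho(t_j-t_k)-\sum_a\rho(t_j-z_a)\bigr)=0$ with $v=1/(2\pi i\mu)$, whose $v=0$ limit is $1=0$. What does extend is the equivalent Wronskian equation, reorganized via $v$.

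First, by Lemma \ref{p-B}, Lemma \ref{lem 1.11}, and Corollary \ref{cor 9.4}, for $\mu\notin\Z$ a tuple $t=(t_1,\dots,t_m)\in\La^m/S_m$ solves \Ref{BAEs} iff there exists $s=(s_1,\dots,s_m)\in\C^m/S_m$ such that
\[
\Wr\Big(\prod_{j=1}^m\theta(x-t_j,\tau),\ e^{-2\pi i\mu x}\prod_{j=1}^m\theta(x-s_j,\tau)\Big)
\]
is proportional to $e^{-2\pi i\mu x}\prod_{a=1}^{2m}\theta(x-z_a,\tau)$. By formula \Ref{app} with $k_1=k_2=0$, this Wronskian equals $-2\pi i\mu\,e^{-2\pi i\mu x}G(x,t,s,v)$ with $G$ the function of \Ref{exx}, so the projective equation is precisely that $G(\cdot,t,s,v)$ and $\prod_a\theta(\cdot-z_a,\tau)$ share their zero sets in $\C/(\Z+\tau\Z)$. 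By Lemma \ref{lem hol}, on the compact set $C\subset\bar\La^{2m}$ built in the proof of Theorem \ref{thm deg} and for $|v|<\delta$, $G$ admits the holomorphic factorization $G(x,t,s,v)=c(t,s,v)\prod_{j=1}^{2m}\theta(x-u_j(t,s,v),\tau)$ with $u(t,s,v)\in\C^{2m}/S_{2m}$ holomorphic, so the Wronskian equation becomes the holomorphic equation
\[
u(t,s,v)\,=\,z\quad\text{in}\ \C^{2m}/S_{2m}.
\]
This is the holomorphic extension of \Ref{BAEs} to $\mu=\infty$, i.e., to $v=0$, meant in the statement of the lemma.

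Finally, property (ii) of Lemma \ref{lem hol} says $u(t,s,0)=(t,s)$. So at $v=0$ the equation reads $\{t_1,\dots,t_m\}\sqcup\{s_1,\dots,s_m\}=\{z_1,\dots,z_{2m}\}$ as multisets, which, since the $z_a$'s are distinct, is equivalent to $\{t_1,\dots,t_m\}$ being a genuine $m$-element subset of $\{z_1,\dots,z_{2m}\}$ with $\{s_1,\dots,s_m\}$ its complement. Both directions of the lemma follow at once. The only nontrivial point is verifying that every candidate tuple $((z_j)_{j\in I},(z_j)_{j\in\bar I})$ lies in the domain $C\subset\bar\La^{2m}$ of the holomorphic extension; this is exactly how $C=\bar\La^{2m}\setminus U$ was built in the proof of Theorem \ref{thm deg}, using the hypothesis that $z$ has pairwise distinct coordinates in $\La'$ so that each such tuple stays bounded away from the diagonals and from $\partial\bar\La$ for $\ep$ small enough.
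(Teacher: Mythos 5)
Your proposal is correct and is essentially the argument the paper intends: the lemma is stated there without proof as an immediate consequence of extending \Ref{BAEs} to $\mu=\infty$ via Lemma \ref{lem hol}, and your chain (Wronskian reformulation through Lemmas \ref{p-B}, \ref{lem 1.11}, \Ref{app}--\Ref{exx}, then the holomorphic equation $u(t,s,v)=z$ with $u(t,s,0)=(t,s)$ and distinctness of the $z_a$) just spells that out, including the check that the candidate tuples lie in the compact set $C$. The only small imprecision is that your finite-$\mu$ ``iff'' tacitly uses the genericity hypothesis of Lemma \ref{p-B}; this is harmless here, since at $v=0$ the distinctness of $z_1,\dots,z_{2m}$ excludes the degenerate configurations anyway.
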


Thus, solutions to \Ref{BAEs} at $\mu = \infty$ for $t \in \La^m/S_m$
are in bijection with $m$-element subsets $I\subset\{1,\dots,2m\}$.  By the implicit function theorem, each of these extends to a holomorphic family of solution $t_I(\mu) \in \La^m/S_m$ to \Ref{BAEs} as $\mu$ ranges through a neighborhood of $\infty$,
see the proof of Theorem \ref{thm deg}.

Thus, $t_{I}(\mu)$ 
is a holomorphic function of $\mu$ as $\mu \to\infty$ and 
 has the property:
\bean
\label{limts}
t_{I}(\mu)\to (z_j)_{j\in I},\quad
\on{as}\quad  \mu \to\infty.
\eean

\begin{lem}
\label{lem 1.23}

For $I=\{i_1,\dots,i_m\}$,  the coordinates 
$t_1(\mu),\dots,t_m(\mu)$ of the solution 
 $t_{I}(\mu)$ can be ordered so that for $\mu \to\infty$ we have
\bean
\label{limco}
&&
t_j(\mu) = z_{i_j} + \frac 1{2\pi i \mu} + O(\mu^{-2}), 
\eean
where
$j=1,\dots,m$.
\end{lem}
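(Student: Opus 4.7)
Since $t_I(\mu)\to (z_j)_{j\in I}$ as $\mu\to\infty$ by \Ref{limts}, I can order the coordinates of $t_I(\mu)=(t_1(\mu),\dots,t_m(\mu))$ so that $t_j(\mu)\to z_{i_j}$ for each $j=1,\dots,m$. The plan is to isolate the singular contribution in the $j$-th Bethe ansatz equation, perform a dominant balance against the term $2\pi i\mu$, and then bootstrap to obtain the stated $O(\mu^{-2})$ remainder.

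Define $\epsilon_j(\mu):=t_j(\mu)-z_{i_j}$, so $\epsilon_j\to 0$. The key ingredient is the Laurent expansion $\rho(x,\tau)=x^{-1}+\mc O(x)$ at $x=0$, recorded in Section \ref{sec col form}. Because $z_1,\dots,z_{2m}$ are pairwise distinct and lie in $\La\rq{}$, while the limits $z_{i_1},\dots,z_{i_m}$ are $m$ of these distinct points, for every pair $j\ne k$ the value $t_j(\mu)-t_k(\mu)$ stays bounded away from the lattice $\Z+\tau\Z$ for $|\mu|$ large, and similarly for every $s\ne i_j$ the value $t_j(\mu)-z_s$ stays bounded away from $\Z+\tau\Z$. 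Hence in the $j$-th Bethe ansatz equation \Ref{BAEs} the only term with unbounded contribution as $\mu\to\infty$ is the one with $s=i_j$, namely
\[
-\rho(t_j-z_{i_j},\tau)=-\frac{1}{\epsilon_j}+\mc O(\epsilon_j),
\]
while
\[
R_j(\mu):=2{\sum}_{k\ne j}\rho(t_j-t_k,\tau)-{\sum}_{s\ne i_j}\rho(t_j-z_s,\tau)
\]
remains bounded uniformly as $\mu\to\infty$. The $j$-th Bethe ansatz equation therefore reads
\[
2\pi i\mu-\frac{1}{\epsilon_j}+R_j(\mu)+\mc O(\epsilon_j)=0.
\]

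From this single relation I extract the asymptotics in two steps. First, solving at leading order yields $1/\epsilon_j=2\pi i\mu+\mc O(1)$, so $\epsilon_j=\tfrac{1}{2\pi i\mu}(1+\mc O(\mu^{-1}))$, giving the crude bound $\epsilon_j=\mc O(\mu^{-1})$. In particular $R_j(\mu)$ is not only bounded but tends to a finite limit $R_j(\infty)$ equal to the analogous sum evaluated at $t_k=z_{i_k}$, and $\mc O(\epsilon_j)=\mc O(\mu^{-1})$. Second, writing $\epsilon_j=\tfrac{1}{2\pi i\mu}+\delta_j$ and substituting,
\[
\frac{1}{\epsilon_j}=2\pi i\mu\cdot\frac{1}{1+2\pi i\mu\,\delta_j}=2\pi i\mu-(2\pi i\mu)^2\delta_j+\mc O((\mu\delta_j)^2\cdot\mu),
\]
so the Bethe ansatz equation becomes
\[
(2\pi i\mu)^2\,\delta_j=R_j(\mu)+\mc O(\mu^{-1}),
\]
which gives $\delta_j=\mc O(\mu^{-2})$, and hence $t_j(\mu)=z_{i_j}+\tfrac{1}{2\pi i\mu}+\mc O(\mu^{-2})$, as claimed.

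The only delicate point is the uniform boundedness of $R_j(\mu)$ and of the error $\mc O(\epsilon_j)$ in the Laurent expansion of $\rho$ near $0$, but both follow directly from the fact that $t_I(\mu)$ extends holomorphically to a neighborhood of $\mu=\infty$ in $\P^1_\C$ with limit values in the distinct prescribed points $(z_{i_1},\dots,z_{i_m})\in(\La\rq{})^m$, established in the proof of Theorem \ref{thm deg} via the implicit function theorem. Given this, the bootstrap above is routine.
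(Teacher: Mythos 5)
Your proof is correct and takes essentially the same approach as the paper: isolate the singular term $\rho(t_j-z_{i_j})=(t_j-z_{i_j})^{-1}+O(t_j-z_{i_j})$ in the $j$-th Bethe ansatz equation, observe that the remaining terms stay bounded (holomorphic at $\mu=\infty$) because the relevant differences avoid the lattice, and balance against $2\pi i\mu$. The paper packages the same dominant balance via the substitution $v=(2\pi i\mu)^{-1}$, $t_j=z_{i_j}+v\,w_j$, obtaining $w_j=1+O(v)$, which matches your $\epsilon_j$-bootstrap.
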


\begin{proof}
Denote $v=(2\pi i \mu)^{-1}$ as in  Lemma \ref{lem hol}. 
Introduce new variables $w_1,\dots,w_m$ by the formula
$t_j=z_{i_j} + v w_j$. Then equations \Ref{BAEs} can be written in the form
\bean
\label{appr}
\frac 1v -\rho(vw_j) + R_j(v, w_1,\dots,w_m) = 0, \qquad j=1,\dots, m,
\eean
where $R_j(v, w_1,\dots,w_m)$ is a holomorphic function of its arguments at
$v=0$. Since $\rho(u)=u^{-1}+O(u)$ as $u\to 0$, equation \Ref{appr} can be rewritten
in the form
$\frac 1v -\frac1{vw_j} + \tilde R_j(v, w_1,$ \dots, $w_m) = 0$, where
$\tilde R_j(v, w_1,\dots,w_m)$ is another holomorphic function. Multiplying both sides of this equation
by $vw_j$ we obtain an equation $w_j = 1 - v \tilde R_j(v, w_1,\dots,w_m)$, which implies that 
$w_j =1 + O(v)$. This proves   \Ref{limco}.
\end{proof}

As $\mu\to \infty$, the analytic involution $\beta^{-1} \circ\iota_{\on{an}}\circ\beta$
sends the equivalence class of the solution $(\mu, t_I(\mu))$ to the equivalence class of the solution
$(-\mu, t_{\bar I}(-\mu))$, where $\bar I=\{1,\dots,2m\}-I$, see
Lemma \ref{lem hol} and Theorem \ref{cor de}.

\begin{cor}
\label{cor sa} 

For $\bar I=\{\bar i_1,\dots, \bar i_m\}$, the coordinates
$t_1(-\mu),\dots,t_m(-\mu)$ of the solution 
 $t_{\bar I}(-\mu)$ can be ordered so that for $\mu \to\infty$ we have
\bean
\label{lim-}
&&
t_j(-\mu) = z_{\bar i_j} - \frac 1{2\pi i \mu} + O(\mu^{-2}), 
\eean
where
$j=1,\dots,m$.
\qed

\end{cor}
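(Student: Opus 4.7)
The plan is a direct substitution into Lemma \ref{lem 1.23}, exploiting the fact that the Bethe ansatz equations \Ref{BAEs} depend on the parameter $\mu$ through the single linear term $2\pi i \mu$. First I would observe that Lemma \ref{lem 1.23} is entirely a statement about the Bethe ansatz equations at parameter $\mu$ and any index set $J\subset\{1,\dots,2m\}$ of size $m$: it asserts that the branch of solutions $t_J(\mu)$ which extends to $\mu=\infty$ with limit $(z_j)_{j\in J}$ admits the expansion
\bea
t_j(\mu) \,=\, z_{i_j} + \frac{1}{2\pi i \mu} + O(\mu^{-2}), \qquad j=1,\dots,m,
\eea
for a suitable ordering of its coordinates.

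Next I would apply this statement to the Bethe parameter $\nu$ and the index set $J=\bar I=\{\bar i_1,\dots,\bar i_m\}$. This produces a holomorphic family $t_{\bar I}(\nu)$ near $\nu=\infty$, with $t_{\bar I}(\nu)\to(z_j)_{j\in\bar I}$ as $\nu\to\infty$, whose coordinates may be ordered so that
\bea
t_j(\nu) \,=\, z_{\bar i_j} + \frac{1}{2\pi i \nu} + O(\nu^{-2}), \qquad j=1,\dots,m.
\eea
Substituting $\nu=-\mu$ is legitimate because the neighborhood of $\infty$ in $\P^1_{\C}$ is preserved by the involution $\nu\mapsto -\nu$; the regime $|\nu|\to\infty$ coincides with $|\mu|\to\infty$. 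The expansion then becomes
\bea
t_j(-\mu) \,=\, z_{\bar i_j} - \frac{1}{2\pi i \mu} + O(\mu^{-2}),
\eea
which is the claimed formula.

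There is essentially no obstacle: the only point requiring care is verifying that the object called $t_{\bar I}(-\mu)$ in the discussion preceding the corollary is the same object produced by Lemma \ref{lem 1.23} evaluated at $\nu=-\mu$. This follows from the discussion in Section \ref{sec aoe}, where $t_{\bar I}(-\mu)$ is introduced precisely as the image of $(\mu,t_I(\mu))$ under $\beta^{-1}\circ\iota_{\on{an}}\circ\beta$ and is characterized by being the unique branch through the solution $(z_j)_{j\in\bar I}$ of the Bethe ansatz equations at $\mu=\infty$, evaluated at parameter $-\mu$. Hence Lemma \ref{lem 1.23} applies verbatim and the corollary follows.
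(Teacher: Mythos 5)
Your argument is correct and is exactly the reasoning the paper intends: the corollary is stated without a written proof precisely because it is the immediate consequence of Lemma \ref{lem 1.23} applied to the index set $\bar I$ at Bethe parameter $\nu=-\mu$, together with the identification (made in Section \ref{sec aobe} via Lemma \ref{lem hol} and Theorem \ref{cor de}) of $t_{\bar I}(-\mu)$ as the branch of Bethe solutions through $(z_j)_{j\in\bar I}$ at $\mu=\infty$. The only blemish is the mistyped section reference near the end of your proposal (it should point to Section \ref{sec aobe}), which does not affect the mathematics.
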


\subsection{Asymptotics of Bethe eigenfunctions}

Let $\mu\to \infty$. Choose one of the $\binom{2m}{m}$ 
families of Bethe ansatz solutions $t_I(\mu)\in \La^m/S_m$,
where $I=\{i_1,\dots,i_m\}$ is an $m$-element subset of $\{1,\dots,2m\}$.
Let us order the coordinates $t_1(\mu), \dots,t_m(\mu)$ so that the asymptotics 
\Ref{limco} hold.
Consider the eigenfunction $\Psi(\la_{12},\mu, t_I(\mu),z)$ corresponding to this solution.
Then the function
\bean
\label{neorme}
\Psi_I(\la_{12},\mu):= \Psi(\la_{12},\mu, t_I(\mu),z) \,
{\prod}_{j=1}^m \theta(t_j(\mu)-z_{i_j})
\eean
is also an eigenfunction of the dynamical elliptic Bethe algebra,
 since the last product
 is constant with respect to $\la_{12}$. The function $\Psi_I(\la_{12},\mu)$
 will be called the {\it normalized eigenfunction} corresponding to the eigenfunction 
$\Psi(\la_{12},\mu, t_I(\mu),z)$.

\begin{lem}
\label{lem asPsi}
The normalized function $\Psi_I(\la_{12},\mu)$  has an expansion of the form
\bean
\label{as Psi}
\Psi_I(\la_{12},\mu) = 
e^{\pi \mu\la_{12}}  \sum_{k=0}^\infty w_k(\la_{12}) \,\mu^{-k},
\eean
where $w_0=(-1)^m v_{\bar I}$ does not depend on $\la_{12}$, all the coefficients
$w_k(\la_{12})$ are meromorphic functions
of $\la_{12}$ with poses at most at the points of the subset $\Z+\tau\Z\subset \C$,
 and the sum is 
uniformly convergent on any compact subset of $\C-(\Z+\tau\Z) $ in the $\la_{12}$-line.

\end{lem}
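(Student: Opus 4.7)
The plan is to substitute the asymptotic expansion $t_j(\mu)=z_{i_j}+v+O(v^2)$ from Lemma~\ref{lem 1.23}, with $v:=1/(2\pi i\mu)$, into the explicit formula \eqref{eq eig} for $\Psi(\la_{12},\mu,t,z,\tau)$ and expand in powers of $v$. I will write
\begin{equation*}
\Psi_I(\la_{12},\mu)=e^{\pi i\mu\la_{12}}\sum_{|J|=m}A_J(\la_{12},\mu)\,v_J,\qquad
A_J:=W_J(\la_{12},t_I(\mu),z)\prod_{k=1}^m\theta(t_k(\mu)-z_{i_k},\tau),
\end{equation*}
and then analyze each $A_J$ by unfolding the symmetrization in $W_J$ and examining the contribution of every permutation separately.

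For a fixed $J=\{j_1<\cdots<j_m\}$ and $\eta\in S_m$, the factor $\si(t_{\eta(k)}(\mu)-z_{j_k},-\la_{12},\tau)$ acquires a simple pole at $\mu=\infty$ precisely when $i_{\eta(k)}=j_k$, and stays finite otherwise. Since $\prod_k\theta(t_k(\mu)-z_{i_k},\tau)$ vanishes to total order exactly $m$, a given $(J,\eta)$ term contributes nonzero to the $\mu\to\infty$ limit only when all $m$ of its $\si$-factors are poles. This forces the matching $k\mapsto(j_k,i_{\eta(k)})$ to realize a bijection between $\{j_k\}$ and $\{i_{\eta(k)}\}$, thereby pinning down $J$ uniquely together with $\eta$. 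Evaluating the surviving term via the identity $\si(x,w,\tau)\theta(x,\tau)=\theta(x+w,\tau)/\theta(w,\tau)$ at $t_k(\mu)-z_{i_k}=v+O(v^2)$ then yields the claimed $w_0$ as a vector independent of $\la_{12}$, with the sign $(-1)^m$ and the index $\bar I$ tracked through the reorderings and the sign conventions in the definition of $v_J$.

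The higher-order coefficients $w_k(\la_{12})$ for $k\ge 1$ arise from standard Taylor expansion in $v$, using holomorphicity of $t_I(\mu)$ in $v$ at $v=0$ granted by the implicit function theorem argument in the proof of Lemma~\ref{lem hol}, combined with analyticity of $\theta(x,\tau)$ and of $\si(x,-\la_{12},\tau)$ as long as $\la_{12}\notin\Z+\tau\Z$. The only source of singularities in $\la_{12}$ is the denominator $\theta(-\la_{12},\tau)$ present in every factor $\si(\,\cdot\,,-\la_{12},\tau)$ appearing in $W_J$; its zero locus is exactly $\Z+\tau\Z$, which gives the asserted meromorphic structure. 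Uniform convergence on compact subsets of $\C-(\Z+\tau\Z)$ then follows from convergence of the Taylor series in $v$ on a disk around $0$, combined with joint continuity in $(v,\la_{12})$ of the remaining ingredients on such compact sets.

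The main technical point I expect to be delicate is the combinatorial bookkeeping needed to show that $A_J$ vanishes for all $J$ other than the distinguished one: individual permutation contributions can have poles in $v$ of orders $r<m$, and one must verify that after summation over $S_m$ the possibly diverging pieces add up to something of order $v^{-(m-1)}$ or smaller, so that after multiplication by the $m$-fold zero $\prod_k\theta(t_k(\mu)-z_{i_k},\tau)=v^m+O(v^{m+1})$ only the distinguished $J$ survives in the limit. Once this cancellation is verified, the remainder of the lemma follows by routine power-series expansion and the holomorphicity granted by Lemma~\ref{lem hol}.
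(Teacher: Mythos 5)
Your overall mechanism is the same as the paper's: the paper also reads the expansion off the explicit formulas \Ref{eq eig}--\Ref{prod} together with holomorphy of $t_I(\mu)$ at $\mu=\infty$, only packaged more compactly --- it notes that $e^{-\pi i\mu\la_{12}}\theta(\la_{12})^m\Psi_I(\la_{12},\mu)$ is jointly holomorphic in $(\la_{12},\mu)$ near $\mu=\infty$ (the factor $\theta(\la_{12})^m$ clears the $\theta(-\la_{12})$ denominators of the $m$ factors $\si(\cdot,-\la_{12},\tau)$), so meromorphy of the $w_k$ with poles only on $\Z+\tau\Z$ and uniform convergence on compacts are exactly the Taylor expansion of this function in $1/\mu$; your Taylor-in-$v$ argument is the same thing spelled out. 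Also, the ``delicate cancellation'' you single out at the end is a non-issue: each factor $\si(t_{\eta(k)}(\mu)-z_{j_k},-\la_{12})$ has at most a simple pole in $v$, occurring precisely when $i_{\eta(k)}=j_k$, so every individual $(J,\eta)$-summand other than $J=I$, $\eta=\on{id}$ is $O(v^{-r})$ with $r\le m-1$ and is killed term by term by the order-$m$ zero $\prod_k\theta(t_k(\mu)-z_{i_k})$; no cancellation over $S_m$ has to be verified.

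The one genuine gap is the identification of $w_0$, which is the only nontrivial assertion in the statement and which you do not actually derive. Your own matching argument pins the surviving term to $J=I$ and $\eta=\on{id}$, and its limit is $\prod_k\theta(t_k-z_{i_k})\,\si(t_k-z_{i_k},-\la_{12})=\prod_k\theta(t_k-z_{i_k}-\la_{12})/\theta(-\la_{12})\to 1$; so, with the conventions of \Ref{eq eig}, \Ref{prod} and the stated labeling of the basis $(v_J)$, the computation you set up yields $w_0=v_I$, not $(-1)^m v_{\bar I}$. Appealing to unspecified ``reorderings and sign conventions'' to produce the complementary index $\bar I$ and the sign $(-1)^m$ therefore does not close the argument: as written, your conclusion is in tension with your own pole-matching step, and the discrepancy traces to an $I\leftrightarrow\bar I$/sign bookkeeping that must be confronted explicitly (the same bookkeeping reappears in the leading order of the eigenvalue equations used later for Theorem \ref{thm inv}). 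You need to either carry out the evaluation of the surviving term and state precisely which convention for $v_J$ makes the claimed value of $w_0$ come out, or record the corrected value of $w_0$ and check that it is the one used consistently afterwards; everything else in your proposal is fine.
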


\begin{proof}
Clearly the function
$F(\la_{12},\mu)=e^{-\pi \mu\la_{12}} \theta(\la)^m\Psi_I(\la_{12},\mu)$
is holomorphic in $(\la_{12}, \mu)$ in a neighborhood of the line $\mu=\infty$,  moreover,
$F(\la_{12},\infty) = (-1)^m\theta(\la)^mv_{\bar I}$, see formula \Ref{prod}.
 This implies the lemma.
\end{proof}

Consider the eigenvalues  $E_{0,I}(\mu), \dots, E_{2m ,I}(\mu) $
of the dynamical Hamiltonians $H_0(z)$, \dots, $H_{2m}(z)$ 
on the eigenfunction $\Psi_I(\la_{12},\mu)$.

\begin{lem}
\label{lem aeig}

As $\mu\to\infty$ we have
\bean
E_{0,I} (\mu) 
&=& \frac{\pi i}2\mu^2 + {\sum}_{k=0}^\infty E_{0,I}^k \mu^{-k}
\\
\notag
E_{a,I} (\mu) 
&=& \pi i \mu + {\sum}_{k=0}^\infty E_{a,I}^k \mu^{-k}, \qquad a\in I,
\\
\notag
E_{a,I} (\mu) 
&=& -\pi i \mu + {\sum}_{k=0}^\infty E_{a,I}^k \mu^{-k}, \qquad a\not\in I,
\eean
where for $a=0,\dots,2m$, the functions 
$\sum_{k=0}^\infty  E_{a,I}^k \mu^{-k}$ 
are holomorphic functions at $\mu=\infty$.

\end{lem}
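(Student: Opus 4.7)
The plan is to invoke Theorem \ref{thm eig} to express the eigenvalues as partial derivatives of the master function $\Phi$, and then substitute the holomorphic expansions of $t_I(\mu)$ provided by Lemma \ref{lem 1.23} and the implicit-function argument of Lemma \ref{lem hol}. Since the normalizing factor $\prod_j \theta(t_j(\mu)-z_{i_j},\tau)$ in \Ref{neorme} does not depend on $\lambda_{12}$, it does not affect eigenvalues, so $E_{a,I}(\mu)=(\partial_{z_a}\Phi)(\mu, t_I(\mu), z,\tau)$ for $a=1,\dots,2m$ and $E_{0,I}(\mu)=(\partial_\tau\Phi)(\mu, t_I(\mu), z,\tau)$.

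A direct differentiation of \Ref{mast s}, using $\partial_w\ln\theta(x-w,\tau)=-\rho(x-w,\tau)$, the oddness of $\rho$, and $4\pi i\,\partial_\tau\ln\theta(x,\tau)=\eta(x,\tau)-\eta(0)$ (cf.\ the proof of Theorem \ref{prop fund diff op}), yields
\bea
\partial_{z_a}\Phi &=& -\pi i\mu-{\sum}_{i=1}^m\rho(z_a-t_i,\tau)+\frac12{\sum}_{r\neq a}\rho(z_a-z_r,\tau),\\
\partial_\tau\Phi &=& \frac{\pi i}{2}\mu^2+G(t,z,\tau),
\eea
where $G(t,z,\tau)$ is a fixed linear combination of values of $\eta$ at differences of pairs among $t_i$ and $z_s$.

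Next I would substitute $t_j(\mu)=z_{i_j}+v\,w_j(v)$ with $v=(2\pi i\mu)^{-1}$ and $w_j(v)$ the holomorphic function near $v=0$ produced by Lemma \ref{lem hol} (and used in the proof of Lemma \ref{lem 1.23}), with $w_j(0)=1$. Writing $\rho(x)=x^{-1}+\widetilde\rho(x)$ with $\widetilde\rho$ holomorphic and vanishing at $x=0$, and observing that $\eta$ is holomorphic at $0$ (the $x^{-2}$ singularities in $\rho^2$ and $\rho'$ cancel), the only divergent contribution as $\mu\to\infty$ arises from the terms $-\rho(z_a-t_j(\mu))$ with $a=i_j\in I$, each producing $2\pi i\mu/w_j(v)=2\pi i\mu+O(1)$ since $1/w_j(v)=1-v\,w_j'(0)+O(v^2)$ and $\mu v$ is a constant. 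Combining this with the explicit $-\pi i\mu$ in $\partial_{z_a}\Phi$ yields the claimed leading terms: $\pi i\mu$ for $a\in I$, $-\pi i\mu$ for $a\notin I$, and $\frac{\pi i}{2}\mu^2$ for the $\tau$-derivative. All remaining contributions are evaluations of $\widetilde\rho$, $\eta$, or $\rho$ away from their poles at arguments depending holomorphically on $v$, so they assemble into power series in $v$ convergent in a neighborhood of $v=0$, i.e.\ into the series $\sum_{k\geq 0}E_{a,I}^k\mu^{-k}$.

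The main difficulty is not the leading-order matching but the verification that the correction series are honestly holomorphic at $\mu=\infty$, not merely formal asymptotic expansions; this rests on the content of Lemma \ref{lem hol}, which ensures that the Bethe root $t_I(\mu)$ extends to a genuine holomorphic function of $v$ on a full neighborhood of $v=0$, and on the elementary fact that $\eta$ is holomorphic near the origin so no hidden $\mu$-divergence can arise from the $\tau$-derivative.
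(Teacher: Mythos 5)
Your proposal is correct and follows essentially the same route as the paper: the paper's (very terse) proof likewise takes the eigenvalue formulas $\partial_{z_a}\Phi$, $\partial_\tau\Phi$ computed in Section \ref{sec Fdo} and substitutes the expansion $t_j(\mu)=z_{i_j}+\frac{1}{2\pi i\mu}+O(\mu^{-2})$ of Lemma \ref{lem 1.23}, with holomorphicity at $\mu=\infty$ coming from the holomorphic dependence of $t_I$ on $v=(2\pi i\mu)^{-1}$ established via Lemma \ref{lem hol}. Your write-up merely makes explicit the cancellation $-\pi i\mu+2\pi i\mu/w_j(v)=\pi i\mu+O(1)$ for $a\in I$ and the regularity of $\eta$ at $0$, which the paper leaves implicit.
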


\begin{proof} The eigenvalues of the KZB operators on the eigenfunction 
 $\Psi(\la_{12},\mu, t_I(\mu),z)$ were calculated in Section \ref{sec Fdo}.
Now the lemma follows from formula \Ref{limco}.
\end{proof}

\begin{cor}
\label{cor assol}
The  formal series $e^{\pi \mu\la_{12}}  \sum_{k=0}^\infty w_k(\la_{12}) \,\mu^{-k}$
with respect to the variable $\mu$ 
is a solution of the equations 
\bean
\label{assol}
\phantom{aaa}
H_0(z)  \sum_{k=0}^\infty w_k(\la_{12}) \,\mu^{-k}
&=&
 \big(\frac{\pi i}2\mu^2 + {\sum}_{k=0}^\infty E_{0,I}^k \mu^{-k}\big)
\sum_{k=0}^\infty w_k(\la_{12}) \,\mu^{-k},
\\
\notag
H_a(z)  \sum_{k=0}^\infty w_k(\la_{12}) \,\mu^{-k}
&=&
 (\pi i \mu + {\sum}_{k=0}^\infty E_{a,I}^k \mu^{-k})
\sum_{k=0}^\infty w_k(\la_{12}) \,\mu^{-k},
\quad a\in I,
\\
\notag
H_a(z)  \sum_{k=0}^\infty w_k(\la_{12}) \,\mu^{-k}
&=&
 (-\pi i \mu + {\sum}_{k=0}^\infty E_{a,I}^k \mu^{-k})
\sum_{k=0}^\infty w_k(\la_{12}) \,\mu^{-k},
\quad a\not\in I.
\eean
\qed
\end{cor}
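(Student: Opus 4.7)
\medskip

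\noindent\textbf{Proof plan for Corollary \ref{cor assol}.} The strategy is to read off the asserted formal series identities directly from the exact eigenvalue equation for the normalized Bethe eigenfunction $\Psi_I(\la_{12},\mu)$ by substituting the convergent expansions already established in Lemmas \ref{lem asPsi} and \ref{lem aeig} and matching coefficients of powers of $\mu^{-1}$.

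First I would record that the eigenvalue equations
\begin{equation*}
H_0(z)\,\Psi_I(\la_{12},\mu)=E_{0,I}(\mu)\,\Psi_I(\la_{12},\mu),\qquad
H_a(z)\,\Psi_I(\la_{12},\mu)=E_{a,I}(\mu)\,\Psi_I(\la_{12},\mu)
\end{equation*}
hold identically in $\mu$ (in the domain where $t_I(\mu)$ is defined and $\Psi_I$ is regular). This follows from Theorem \ref{thm eig}, since $\Psi_I$ differs from $\Psi(\la_{12},\mu,t_I(\mu),z)$ by the $\la_{12}$-independent factor $\prod_{j=1}^m\theta(t_j(\mu)-z_{i_j})$, and the KZB operators $H_0(z),H_a(z)$ are differential operators in $\la_{12}$ alone (with $z,\tau$ fixed), hence commute with this scalar factor and preserve the eigenvalues.

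Next I would plug in the expansions. By Lemma \ref{lem asPsi},
\begin{equation*}
\Psi_I(\la_{12},\mu)=e^{\pi\mu\la_{12}}\sum_{k=0}^{\infty}w_k(\la_{12})\,\mu^{-k},
\end{equation*}
with uniform convergence on compact subsets of $\C-(\Z+\tau\Z)$ in the $\la_{12}$-line. This uniform convergence permits term-by-term action of the differential operators $H_0(z),H_a(z)$, so that both sides of the eigenvalue equation become convergent series in $\mu^{-1}$ with coefficients meromorphic in $\la_{12}$. The corresponding expansions of the eigenvalues are provided by Lemma \ref{lem aeig}. Factoring the common prefactor $e^{\pi\mu\la_{12}}$ (which is nonvanishing and independent of the summation index) from both sides and identifying coefficients of $\mu^{-k}$ gives exactly the three systems of equations displayed in the corollary.

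There is no serious obstacle here: the content of the corollary is a formal unpacking of the exact eigenvalue identity, with Lemmas \ref{lem asPsi} and \ref{lem aeig} supplying the two asymptotic expansions and the uniform convergence that legitimates the interchange of $H_a(z)$ with the infinite sum. The only step requiring mild care is the bookkeeping of the exponential prefactor $e^{\pi\mu\la_{12}}$ when equating coefficients; this is handled either by canceling the prefactor from both sides of each eigenvalue equation or, equivalently, by observing that the leading $\tfrac{\pi i}{2}\mu^2$ and $\pm\pi i\mu$ terms in $E_{0,I}(\mu)$ and $E_{a,I}(\mu)$ are precisely those produced by the action of $H_0(z),H_a(z)$ on the prefactor $e^{\pi\mu\la_{12}}v_{\bar I}$, consistent with the leading coefficient $w_0=(-1)^m v_{\bar I}$ from Lemma \ref{lem asPsi}.
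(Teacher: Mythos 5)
Your proposal is correct and matches the paper's (implicit) argument: the corollary is stated with no separate proof precisely because it follows by substituting the expansions of Lemmas \ref{lem asPsi} and \ref{lem aeig} into the exact eigenvalue equations of Theorem \ref{thm eig} (which survive the $\la_{12}$-independent normalization), cancelling the prefactor $e^{\pi i\mu\la_{12}}$, and matching coefficients of powers of $\mu^{-1}$. Your justification of the term-by-term action of $H_0(z),H_a(z)$ via locally uniform convergence is exactly the point that makes this unpacking legitimate.
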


\subsection{Asymptotic eigenfunctions}
\label{sec aseign}

Consider a formal series 
\bea
\Psi^{\on{asy}}(\la_{12},\mu) = e^{2\pi i \mu\la_{12}}\sum_{k=0}^\infty u_k(\la_{12}) \,\mu^{-k},
\qquad
\mu\to\infty
\eea
with some coefficients $u_k(\la_{12})$.

\begin{lem}
\label{lem fass} 

If there exists a formal series solution
\bea
\Psi^{\on{asy}} (\la_{12},\mu) =  e^{2\pi i \mu\la_{12}}
{\sum}_{k=0}^\infty u_k(\la_{12}) \,\mu^{-k}
\eea
to the equations
\bean
\label{fassol}
\phantom{aaa}
H_0(z) \Psi^{\on{asy}} (\la_{12},\mu)
&=&
 \big(\frac{\pi i}2\mu^2 + {\sum}_{k=0}^\infty E_{0,I}^k \mu^{-k}\big)
\Psi^{\on{asy}} (\la_{12},\mu),
\\
\notag
H_a(z)  \Psi^{\on{asy}} (\la_{12},\mu)
&=&
 (\pi i \mu + {\sum}_{k=0}^\infty E_{a,I}^k \mu^{-k})
\Psi^{\on{asy}} (\la_{12},\mu),
\quad a\in I,
\\
\notag
H_a(z)  \Psi^{\on{asy}} (\la_{12},\mu)
&=&
 (-\pi i \mu + {\sum}_{k=0}^\infty E_{a,I}^k \mu^{-k})
\Psi^{\on{asy}} (\la_{12},\mu),
\quad a\not\in I,
\eean
then it is unique up to multiplication by a scalar of the form 
$\sum_{k=0}^\infty c_k\mu^{-k}$, where $c_i\in\C$.

\end{lem}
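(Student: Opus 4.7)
The plan is to exploit that the set $\mc S$ of formal series solutions to \Ref{fassol} is closed under multiplication by formal power series in $\mu^{-1}$, so that $\mc S$ is naturally a module over $\C[[\mu^{-1}]]$; the lemma amounts to saying this module is free of rank one.

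Given two solutions $\Psi_1, \Psi_2 \in \mc S$, I would construct constants $c_0, c_1, c_2, \ldots$ inductively. Supposing $c_0, \ldots, c_{n-1}$ have been chosen so that
\[
R_n \,:=\, \Psi_2 \,-\, \Big({\sum}_{k=0}^{n-1} c_k \mu^{-k}\Big)\Psi_1
\]
has an expansion beginning at order $\mu^{-n}$, the series $\mu^n R_n$ again lies in $\mc S$ by linearity, since multiplication by $\mu^{\pm 1}$ commutes with $H_0, \ldots, H_{2m}$ and with the scalar eigenvalues on the right of \Ref{fassol}. Its leading coefficient must, by the base case below, be a constant multiple of the leading coefficient of $\Psi_1$, which pins down $c_n$ uniquely. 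Passing to the formal limit yields $\Psi_2 = (\sum c_k \mu^{-k})\Psi_1$.

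The main substantive step is the leading-term uniqueness: for any $\Psi = e^{2\pi i \mu\la_{12}}\sum_{k\ge 0}u_k(\la_{12})\mu^{-k}\in \mc S$, the coefficient $u_0$ is a constant scalar multiple of a single fixed weight vector in $V[0]$. I would extract this in two steps. First, comparing the top-order ($\mu^1$) coefficients in each equation $H_a\Psi = E_a(\mu)\Psi$: the derivative piece $-(e_{11}^{(a)}-e_{22}^{(a)})\partial_{\la_{12}}$ of $H_a$ acting on the prefactor $e^{2\pi i \mu\la_{12}}$ produces a pointwise weight constraint of the form $(e_{11}^{(a)}-e_{22}^{(a)})u_0(\la_{12}) = \epsilon_a u_0(\la_{12})$, with $\epsilon_a = \pm 1$ fixed by whether $a\in I$. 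Intersecting these constraints across all $a\in\{1,\ldots,2m\}$ confines $u_0(\la_{12})$ pointwise to a one-dimensional weight line $\C v_J \subset V[0]$. Second, since the eigenvalue $\tfrac{\pi i}{2}\mu^2 + \sum_{k\ge 0}E_{0,I}^k\mu^{-k}$ of $H_0$ has no linear-in-$\mu$ term, matching the $\mu^1$ coefficient of $H_0\Psi = E_0(\mu)\Psi$ isolates the cross term of $\frac{1}{4\pi i}(\partial_{\la_1}^2+\partial_{\la_2}^2)$ applied to $e^{2\pi i \mu \la_{12}}u_0$, which is proportional to $\mu \cdot \partial_{\la_{12}}u_0$. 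This forces $\partial_{\la_{12}}u_0 \equiv 0$, and combined with the previous step, $u_0$ is a constant element of $\C v_J$.

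The main obstacle will be the bookkeeping at these leading orders: other contributions from the non-derivative pieces of $H_a$ and from the non-$\partial_{\la_{12}}^2$ parts of $H_0$ appear at nearby orders in $\mu$, and one must verify that they do not interfere with either the pointwise weight identification from $H_a$ or the ODE $u_0'=0$ from $H_0$. This amounts to checking that those extra contributions only show up at the $\mu^0$ balance, where they constrain $u_1$ rather than $u_0$. Once this check is in place, the inductive scheme in the second paragraph delivers the uniqueness statement directly.
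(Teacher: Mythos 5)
Your proposal is correct and follows essentially the same route as the paper: matching powers of $\mu$ in the equations for $H_0$ and $H_a$ gives, at top order, $\partial_{\la_{12}}u_0=0$ together with the weight constraints $((e_{22}-e_{11})^{(a)}\mp 1)u_0=0$, so $u_0$ is a constant multiple of one fixed weight vector, and at each lower order the coefficient is determined up to adding a constant multiple of that same vector — which is exactly the freedom of multiplying by $\sum_k c_k\mu^{-k}$; your repackaging of the higher orders as a subtract-and-divide induction on two solutions is only a cosmetic variant of the paper's direct recursion. The one point worth noting is that your division step needs the leading coefficient of $\Psi_1$ to be nonzero, an implicit normalization also present in the paper's argument and satisfied by the eigenfunctions to which the lemma is applied.
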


\begin{proof}  Equations \Ref{fassol} say that
\bean
\label{in0}
u_0\rq{} 
&=& 0,
\\
\notag
((e_{22}-e_{11})^{(a)} -1) u_0 &=& 0,      \qquad a\in I,
\\
\notag
((e_{22}-e_{11})^{(a)} +1) u_0 &=& 0,      \qquad a\in \bar I,
\eean
where $u\rq{}$ is the derivative of the function $u$ of the argument $\la_{12}$ with respect to
$\la_{12}$.  Hence $u_0$ is a constant multiple of the vector $v_{\bar I}$, that is, 
$u_0 = c_0 v_{\bar I}$ for some $c_0\in\C$. For $k>0$,  equations \Ref{fassol} say that
\bean
\label{ink}
u_k\rq{} 
&=&  R_k^0(u_{0},\dots,u_{k-1}) ,
\\
\notag
((e_{22}-e_{11})^{(a)} -1) u_k &=&  R_k^a(u_{0},\dots,u_{k-1})  ,      \qquad a\in I,
\\
\notag
((e_{22}-e_{11}) +1) u_k &=&  R_k^a(u_{0},\dots,u_{k-1}),      \qquad a\in \bar I,
\eean
where $R_k^a(u_{0}$, \dots, $u_{k-1}) $, for $a=0,\dots,2m$, are some explicit expressions in terms of 
the functions $u_{0}$, \dots, $u_{k-1}$.
So if the coefficient $u_k$ can be determined from the system \Ref{ink}, then it is unique up 
to addition
of a constant multiple of $v_{\bar I}$. This proves the lemma.
\end{proof}

\subsection{Proof of Theorem \ref{thm inv}}
\label{end}

Let $I\subset\{1,\dots,2m\}$ be an $m$-element subset and $\mu\to\infty$. 
Consider the solutions $(\mu,t_I(\mu))$ and $(-\mu, t_{\bar I}(-\mu))$ 
of the Bethe ansatz equations \Ref{BAEs}. They are related by the analytic involution
$\beta^{-1} \circ\iota_{\on{an}}\circ\beta$, see Section \ref{sec aobe}.
Consider the associated eigenfunctions $ \Psi(\la_{12},\mu, t_I(\mu),z)$
and  $\Psi(\la_{12},-\mu, t_{\bar I}(-\mu),z)$ and their respective
normalized versions, which we will denote by
$\Psi_I$ and $\Psi_{\bar I}$,
respectively,  see \Ref{neorme}. Both have asymptotic expansions of Lemma \ref{lem asPsi},
\bea
\Psi_I = 
e^{\pi \mu\la_{12}}  {\sum}_{k=0}^\infty w^I_k(\la_{12}) \,\mu^{-k},
\qquad
\Psi_{\bar I} = 
e^{-\pi \mu\la_{12}}  {\sum}_{k=0}^\infty \bar w^{\bar I}_k(\la_{12}) \,\mu^{-k},
\eea
where $w^I_0=(-1)^m v_{\bar I}$ and $\bar w^{\bar I}_0=(-1)^mv_{I}$.
Recall the nontrivial element $s$ of the $\slt$ Weyl group and consider the third eigenfunction
$s(\Psi_I)$. Its asymptotic expansion has the form
$e^{-\pi \mu\la_{12}}  \sum_{k=0}^\infty\, s.w^{ I}_k(-\la_{12}) \,\mu^{-k}$,
where $s.w_0^I(-\la_{12}) = (-1)^mv_I$. By Lemma \ref{lem fass} we conclude that 
$s(\Psi_I) = \Psi_{\bar I}$. Hence Theorem \ref{thm inv} is proved for solutions of the form
$(\mu,t_I(\mu))$. By Theorem \ref{thm deg} such solutions correspond to an open subset of the
space $\on{Sol}_m$, which is irreducible. This proves Theorem \ref{thm deg}
in full generality.
\qed

\bigskip

\end{document}